\tikzset{>=stealth}
\newtheorem{all}{Theorem}[section]
\theoremstyle{plain}
\newtheorem{lemma}[all]{Lemma}
\newtheorem{thm}[all]{Theorem}
\theoremstyle{definition}
\newtheorem{rem}[all]{Remark}
\newtheorem{asptn}[all]{Assumption}
\newcounter{counter}
\newcommand{\BI}{{\mathbb{I}}}
\newcommand{\BN}{{\mathbb{N}}}
\newcommand{\BR}{{\mathbb{R}}}
\newcommand{\BS}{{\mathbb{S}}}
\newcommand{\BZ}{{\mathbb{Z}}}
\newcommand{\FT}{{\mathcal{F}}}
\newcommand{\dd}{{\mathrm{d}}}
\newcommand{\p}{\partial}
\newcommand{\sgn}{\mathrm{sgn}}
\newcommand{\eps}{\epsilon}
\newcommand{\com}[1]{}
\newcommand{\tch}{T_c^{\Omega_1}}
\newcommand{\tcf}{T_c^{\Omega_0}}
\newcommand{\tcti}{T_c^0}
\newcommand{\ati}{a_T^0}
\newcommand{\af}{a_T^{\Omega_0}}
\newcommand{\ah}{a_T^1}
\DeclareMathOperator\artanh{artanh}
\DeclareMathOperator\arcoth{arcoth}
\DeclareMathOperator\sech{sech}
\DeclareMathOperator\Si{Si}
\DeclareMathOperator\Cin{Cin}
\numberwithin{equation}{section}
\newcommand{\nr}{\addtocounter{equation}{1}\tag{\theequation}}
\begin{document}
\title{BCS Critical Temperature on Half-Spaces}
\author[1]{Barbara Roos\thanks{barbara.roos@ist.ac.at}}
\author[1]{Robert Seiringer\thanks{robert.seiringer@ist.ac.at}}
\affil[1]{Institute of Science and Technology Austria, Am Campus 1, 3400 Klosterneuburg, Austria}

\date{\today}

\maketitle

\begin{abstract}
We study the BCS critical temperature on half-spaces in dimensions $d=1,2,3$ with Dirichlet or Neumann boundary conditions.
We prove that the critical temperature on a half-space is strictly higher than on $\mathbb{R}^d$, at least at weak coupling in $d=1,2$ and weak coupling and small chemical potential in $d=3$.
Furthermore, we show that the relative shift in critical temperature vanishes in the weak coupling limit.
\end{abstract}

\section*{Statements and Declarations}
Financial support by the Austrian Science Fund (FWF) through project number I 6427-N (as part of the SFB/TRR 352) 
is gratefully acknowledged. 
The authors have no competing interests to declare that are relevant to the content of this article.
Data sharing not applicable to this article as no datasets were generated or analysed during the current study.

\tableofcontents

\section{Introduction and Results}
We study the effect of a boundary on the critical temperature of a superconductor in Bardeen-Cooper-Schrieffer theory.
It was recently observed \cite{barkman_elevated_2022,benfenati_boundary_2021,samoilenka_boundary_2020,samoilenka_microscopic_2021,talkachov_microscopic_2022} that the presence of a boundary may increase the critical temperature.
For a one-dimensional system with $\delta$-interaction, a rigorous mathematical justification was given in \cite{hainzl_boundary_nodate}.
Here, we generalize this result to generic interactions and higher dimensions.
While in dimensions $d=2,3$ the existing numerical works only consider lattice models, our analytic approach allows us to study continuum models.
We compare the half infinite superconductor with shape $\Omega_1=(0,\infty) \times \BR^{d-1}$ to the superconductor on $\Omega_0=\BR^d$ in dimensions $d=1,2,3$.
We impose either Dirichlet or Neumann boundary conditions, and prove that in the presence of a boundary the critical temperature can increase.
The critical temperature can be determined from the spectrum of the two-body operator
\begin{equation}\label{H_original}
H_T^\Omega= \frac{-\Delta_x-\Delta_y-2\mu}{\tanh\left(\frac{-\Delta_x-\mu}{2T}\right)+\tanh\left(\frac{-\Delta_y-\mu}{2T}\right)}-\lambda V(x-y)
\end{equation}
acting in $L_{\rm sym}^2(\Omega\times \Omega)=\{\psi \in L^2(\Omega\times \Omega) \vert \psi(x,y)=\psi(y,x)\ \mathrm{for\ all}\ x,y\in \Omega\}$ with appropriate boundary conditions \cite{frank_bcs_2019}.
Here, $\Delta$ denotes the Dirichlet or Neumann Laplacian on $\Omega$ and the subscript indicates on which variable it acts.
Furthermore, $T$ denotes the temperature, $\mu$ is the chemical potential, $V$ is the interaction and $\lambda$ is the coupling constant.
The first term in $H_T^\Omega$ is defined through functional calculus.

Let us explain how $H_T^\Omega$ relates to the BCS critical temperature of a superconductor.
A mathematical introduction to BCS theory can be found in \cite{hainzl_bardeencooperschrieffer_2016}.
BCS theory describes the state of the system as the minimizer of the BCS functional $\FT$.
The normal state $\Gamma_0$ is the minimizer of $\FT$ among states which do not exhibit any superconductivity.
If  perturbations of $\Gamma_0$ that introduce pairing between electrons decrease the value of $\FT$, the system is superconducting.
It turns out that the normal state is always a critical point of $\FT$ and therefore the behavior of $\FT$ in the vicinity of $\Gamma_0$ is determined by the Hessian, which is exactly $2 H_T^\Omega$, as explained in \cite{frank_bcs_2019}.
Importantly, the normal state is unstable and the system is superconducting if $\inf \sigma (H_T^\Omega) <0$.
For translation invariant systems, i.e.~$\Omega=\BR^d$, with suitable interactions $V$ superconductivity is equivalent to $\inf \sigma (H_T^\Omega) <0$.
This was shown in \cite{hainzl_bcs_2008,hainzl_bardeencooperschrieffer_2016} in the case without symmetry restriction on the Cooper pair wave function and can be adapted to the case with symmetry restriction, as explained in \cite{roos_linear_2024}.
In this case, there is a unique critical temperature $T_c$ determined by $\inf \sigma(H_{T_c}^\Omega) = 0$ which separates the superconducting and the normal phase.
The critical temperatures $\tcf$ and $\tch$ %for $\Omega=\BR^d$ and $\Omega=\Omega_1$, respectively, 
are defined as
\begin{equation}
%T_c^j(\lambda):= \inf\{T\in (0,\infty) \vert \inf \sigma(H_T^{\Omega_j})\geq 0\}.
T_c^{\Omega_j}(\lambda):= \inf\{T\in (0,\infty) \vert \inf \sigma(H_T^{\Omega_j})\geq 0\}.
\end{equation}
In \cite{samoilenka_boundary_2020} an equivalent definition of the critical temperature was used based on the Birman-Schwinger version of $H_T^\Omega$ and the Mittag-Leffler series for $\tanh$.
In Lemma~\ref{H1_esspec} we prove the inequality $\inf \sigma(H_T^{\Omega_1}) \leq \inf \sigma(H_T^{\Omega_0})$.
Therefore, $\tch(\lambda)\geq \tcf(\lambda)$.
Our main concern is to show that the inequality is strict, which means that there is a temperature range for which the system with boundary is superconducting while the system on $\BR^d$ is not.

Our strategy involves proving $\inf \sigma(H_{T}^{\Omega_1})<0$ for $T=\tcf(\lambda)$ using the variational principle.
The idea is to construct a trial state involving the ground state of $H_{T}^{\Omega_0}$ at temperature $T=\tcf(\lambda)$.
However, $H_T^{\Omega_0}$ is translation invariant in the center of mass coordinate and thus has purely essential spectrum.
To obtain a ground state eigenfunction, we remove the translation invariant directions, and instead consider the reduced operator
\begin{equation}\label{def_h0}
H_T^0=\frac{-\Delta-\mu}{\tanh\left(\frac{-\Delta-\mu}{2T}\right)}-\lambda V(r)
\end{equation}
%acting in $L_{\text{s}}^2(\BR^d)$, where $L_{\rm s}^2 (\Omega)=\{\psi \in L^2(\Omega) \vert \psi(r)=\psi(-r)\}$ (c.f. Lemma~\ref{ti_reduction}).
acting in $L^2(\BR^d)$, which corresponds to zero total momentum in $H_T^{\Omega_0}$. % (c.f. Lemma~\ref{ti_reduction}).
At weak enough coupling, the infimum of $\sigma(H_{T}^{\Omega_0})$ for $T=\tcf(\lambda)$ is attained at zero total momentum (c.f. Lemma~\ref{ti_reduction} and Remark~\ref{ti_reduction_fix}).
Our trial state involves the ground state of $H_{T}^0$ at temperature $T=\tcf(\lambda)$.
In the weak coupling limit, $\lambda\to 0$, we can compute the asymptotic form of this ground state provided that $\mu>0$ and the operator $\mathcal{V}_\mu:L^2(\BS^{d-1})\to L^2(\BS^{d-1})$ with integral kernel
\begin{equation}
\mathcal{V}_\mu(p,q)=\frac{1}{(2\pi)^{d/2}}\widehat{V}(\sqrt{\mu}(p-q))
\end{equation}
has a non-degenerate eigenvalue $e_\mu = \sup \sigma(\mathcal{V_\mu})>0$ at the top of its spectrum and the corresponding eigenfunction is even \cite{hainzl_bardeencooperschrieffer_2016,henheik_universality_2023}.
Here, $\widehat{V}(p)=\frac{1}{(2\pi)^{d/2}}\int_{\BR^d} V(r) e^{-i p \cdot r} \dd r$ denotes the Fourier transform of $V$.
For $d=1$, $L^2(\BS^0)$ is a two-dimensional vector space, and $\mathcal{V}_\mu$ has the eigenvalues $\frac{\widehat{V}(0)\pm\widehat{V}(2\sqrt{\mu})}{(2\pi)^{1/2}}$, where the plus and minus sign correspond to an even and odd eigenfunction, respectively.

We make the following assumptions on the interaction potential.
\begin{asptn}\label{aspt_V_halfspace}
Let $d\in\{1,2,3\}$ and $\mu>0$.
Assume that
\begin{enumerate}[(i)]
\item $V\in L^1(\BR^d) \cap L^{p_d}(\BR^d)$, where $p_d=1$ for $d=1$, and $p_d>d/2$ for $d\in \{2,3\}$, \label{aspt.1}
\item $V$ is radial, $V\not \equiv 0$,\label{aspt.2}
\item $\vert \cdot \vert V \in L^1(\BR^d)$,\label{aspt.3}
\item $\widehat{V}(0)>0$,\label{aspt.4}
\item $e_\mu=\sup \sigma(\mathcal{V}_\mu)$ is a non-degenerate eigenvalue and the corresponding eigenfunction is even. \label{aspt.5}
\end{enumerate}
\end{asptn}

\begin{rem}\label{rem:asptn.1}
The assumption $V\in L^1(\BR^d)$ implies that $\widehat V$ is continuous and bounded.
The operator $\mathcal{V_\mu}$ is thus Hilbert-Schmidt and in particular compact.
Due to Assumption~\eqref{aspt.5} we have $e_\mu>0$.
This in turn implies that the critical temperature $\tcf(\lambda)$ for the system on $\BR^d$ is positive for all $\lambda>0$ (cf. Remark~\ref{ti_reduction_fix}).
Furthermore, for $d\geq 2$ radiality of $V$ and \eqref{aspt.5} imply that the eigenfunction corresponding to $e_\mu$ must be rotation invariant, i.e.~the constant function.
Assumption~\eqref{aspt.5} is satisfied  for $d=2,3$ if $\widehat{V}\geq 0$ \cite{hainzl_bardeencooperschrieffer_2016} and for $d=1$ if $\widehat{V}(0),\widehat{V}(2\sqrt{\mu})>0$.
\end{rem}

These assumptions suffice to observe boundary superconductivity in $d=1,2$.
For $d=3$, we need one additional condition.
Let
\begin{equation}\label{jd}
j_d(r;\mu):=\frac{1}{(2\pi)^{d/2}}\int_{\BS^{d-1}} e^{i \omega \cdot r  \sqrt{\mu}} \dd \omega.
\end{equation}
Define
\begin{equation}\label{mtilde}
\widetilde m_3^{D/N}(r;\mu):=\int_{\BR} \left(j_3(z_1,r_2,r_3;\mu)^2 - \vert j_3(z_1,r_2,r_3;\mu) \mp j_3(r;\mu) \vert^2 \chi_{|z_1|<|r_1|}\right)\dd z_1 \mp  \frac{\pi}{\mu^{1/2}} j_3(r;\mu)^2,
\end{equation}
where the indices $D$ and $N$ as well as the upper/lower signs correspond to Dirichlet/Neumann boundary conditions, respectively.
Our main result is as follows:
\begin{thm}\label{thm1}
Let $d\in\{1,2,3\}$, $\mu>0$ and let $V$ satisfy Assumption~\ref{aspt_V_halfspace}.
Assume either Dirichlet or Neumann boundary conditions.
For $d=3$ additionally assume that
\begin{equation}\label{3d_cond}
\int_{\BR^3} V(r) \widetilde m_3^{D/N}(r;\mu) \dd r>0.
\end{equation}
Then there is a $\lambda_1>0$, such that for all $0<\lambda<\lambda_1$, $\tch(\lambda)>\tcf(\lambda)$.
\end{thm}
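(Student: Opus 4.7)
The plan is to combine Lemma~\ref{H1_esspec}, which gives the non-strict inequality $T_c^1(\lambda)\geq T_c^0(\lambda)$, with a variational argument showing that $\inf\sigma(H_{T_c^0(\lambda)}^{\Omega_1})<0$ for all sufficiently small $\lambda>0$. Since $T\mapsto H^{\Omega_1}_T$ is monotone, strict negativity at $T=T_c^0(\lambda)$ immediately upgrades to $T_c^1(\lambda)>T_c^0(\lambda)$.

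For the trial state, let $\alpha_\lambda\in L_{\rm s}^2(\BR^d)$ denote the radial, real, zero-energy ground state of the reduced operator $H^0_{T_c^0(\lambda)}$, whose existence and weak-coupling asymptotics are provided by \cite{hainzl_bardeencooperschrieffer_2016,henheik_universality_2023}. Write $\sigma_1(x_1,x_\parallel)=(-x_1,x_\parallel)$ for the reflection across the boundary and let $g_L$ be a smooth, slowly varying centre-of-mass envelope on scale $L$. I take
\begin{equation*}
\psi_L(x,y) = \bigl[\alpha_\lambda(x-y)\mp\alpha_\lambda(\sigma_1 x-y)\bigr]\,g_L(x+y),
\end{equation*}
with the upper/lower sign for Dirichlet/Neumann. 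Radiality of $\alpha_\lambda$ gives the exchange symmetry $\psi_L(x,y)=\psi_L(y,x)$ and enforces the prescribed boundary condition at both $x_1=0$ and $y_1=0$, so $\psi_L$ lies in the form domain of $H^{\Omega_1}_{T}$ on $L^2_{\rm sym}(\Omega_1\times\Omega_1)$.

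The quadratic form $\langle\psi_L,H_{T_c^0(\lambda)}^{\Omega_1}\psi_L\rangle$ splits into a diagonal piece built from $\alpha_\lambda(x-y)^2$ and a cross piece built from $\alpha_\lambda(x-y)\,\alpha_\lambda(\sigma_1 x-y)$. By Lemma~\ref{ti_reduction}, the diagonal piece would equal $\langle\alpha_\lambda,H^0_{T_c^0(\lambda)}\alpha_\lambda\rangle\|g_L\|_2^2=0$ for constant $g_L$, so it contributes only a centre-of-mass kinetic cost of order $O(L^{-2})$. The cross piece carries the boundary effect: using the spectral calculus of the Dirichlet (resp.\ Neumann) Laplacian on $\Omega_1$, the kinetic and interaction cross contributions combine at $T=T_c^0(\lambda)$, and substituting the weak-coupling approximant $\alpha_\lambda\sim c_\lambda\, j_d(\cdot;\mu)$ of \cite{hainzl_bardeencooperschrieffer_2016,henheik_universality_2023} together with a contour integration in the spectral variable reduces the cross piece, to leading order in $\lambda$, to $-c_\lambda^2\lambda$ times an integral against $V$ of an explicit boundary autocorrelation of $j_d$. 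In $d=3$ that autocorrelation is precisely $\widetilde m_3^{D/N}(r;\mu)$ from~\eqref{mtilde}; in $d=1,2$ one obtains an analogous but simpler quantity.

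For $d=1,2$, I would verify that this leading boundary integral is strictly positive unconditionally, using Assumption~\ref{aspt_V_halfspace}(iv) ($\widehat V(0)>0$) and the explicit sign structure of $j_d$; in $d=3$ the positivity is exactly the hypothesis~\eqref{3d_cond}. Choosing $L=L(\lambda)\to\infty$ slowly enough that the $O(L^{-2})$ envelope cost is dominated by this leading boundary contribution then yields $\langle\psi_L,H_{T_c^0(\lambda)}^{\Omega_1}\psi_L\rangle<0$ for all $\lambda\in(0,\lambda_1)$. I expect the main obstacle to be the quantitative control of the cross piece: because $T_c^0(\lambda)$ is exponentially small in $1/\lambda$, the spectral gap of $H^0_{T_c^0(\lambda)}$ is also small, and the convergence $\alpha_\lambda\to c_\lambda\, j_d(\cdot;\mu)$ must be controlled in a norm strong enough to pair against $V$; this is where Assumption~\ref{aspt_V_halfspace}(iii) ($|\cdot|V\in L^1$) and the Hilbert--Schmidt character of $\mathcal V_\mu$ enter essentially. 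Balancing these errors against the leading boundary term and the envelope cost will pin down the explicit threshold $\lambda_1$.
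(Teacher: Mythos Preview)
Your overall strategy---a variational trial state built from the (anti)symmetrized ground state $\Phi_\lambda$ of $H^0_{T_c^0(\lambda)}$ together with a spreading center-of-mass envelope---is the same as the paper's. But several steps, as written, do not go through.

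First, the claim that ``$T\mapsto H_T^{\Omega_1}$ is monotone'' is false: $K_T(p,q)$ is \emph{not} monotone in $T$ when $p^2-\mu$ and $q^2-\mu$ have opposite signs (try $p^2-\mu=10T$, $q^2-\mu=-T$). What is needed to pass from $\inf\sigma(H_{T_c^0(\lambda)}^{\Omega_1})<0$ to $T_c^1(\lambda)>T_c^0(\lambda)$ is \emph{continuity} of $T\mapsto\inf\sigma(H_T^{\Omega_1})$; the paper proves this separately (Lemma~\ref{H1-Tcont}).

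Second, your decomposition into ``a diagonal piece from $\alpha_\lambda(x-y)^2$'' and ``a cross piece'' is incomplete. Expanding $|\psi_L|^2$ produces \emph{three} groups: the $\alpha_\lambda(x-y)^2$ term, the $\alpha_\lambda(\sigma_1 x-y)^2$ term, and the cross term. The second diagonal term is not negligible---in the paper's computation it becomes $\int V(r)|\Phi_\lambda(z_1,\tilde r)|^2\,\dd r\,\dd z_1$, which is the dominant \emph{positive} contribution in $d=1,2$ (it diverges as $\lambda\to0$) and one of the three pieces of $\widetilde m_3^{D/N}$ in $d=3$. Your sketch does not mention it.

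Third, the claim that the $\alpha_\lambda(x-y)^2$ diagonal piece equals $\langle\alpha_\lambda,H^0_{T}\alpha_\lambda\rangle\|g_L\|_2^2+O(L^{-2})$ invokes Lemma~\ref{ti_reduction}, which is a statement about the \emph{full-space} operator; on the half-space the kinetic operator $K_T^{\Omega_1}$ carries boundary corrections that are $O(1)$, not $O(L^{-2})$. The paper sidesteps this by first passing to the (anti)symmetric extension on $\BR^{2d}$, where the kinetic part becomes the translation-invariant $K_T^{\BR^d}$ and the boundary instead appears as characteristic-function cutoffs in the interaction: $V(r)\chi_{|r_1|<|z_1|}+V(z_1,\tilde r)\chi_{|z_1|<|r_1|}$. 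Only after this reorganization does the ``bulk'' kinetic-minus-interaction piece vanish by the eigenvalue equation; the leftover $\chi$-corrections are precisely the boundary terms in \eqref{aspt_terms}. Relatedly, the paper's envelope is itself (anti)symmetrized (it uses $e^{-\epsilon|z_1|}$ and $e^{-\epsilon|r_1|}$ on the two terms), which is needed for the trial state to lie in the domain of the extended operator; your common factor $g_L(x+y)$ does not have this property.

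Finally, there is no ``contour integration in the spectral variable'': the identification of the leading boundary term uses the eigenvalue equation $\widehat\Phi_\lambda=\lambda B_{T_c^0}\widehat{V\Phi_\lambda}$, the pointwise convergence $\Phi_\lambda\to j_3$ in $L^\infty$ (Lemma~\ref{phi_infinity_conv}, established via the Birman--Schwinger asymptotics of Section~\ref{sec:gs}), and direct real-variable estimates on $B_T$.
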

For $d=3$ we prove that \eqref{3d_cond} is satisfied for small enough chemical potential.
\begin{thm}\label{thm2}
Let $d=3$ and let $V$ satisfy~\ref{aspt_V_halfspace}\eqref{aspt.1}-\eqref{aspt.4}.
For Dirichlet boundary conditions, additionally assume that $\vert \cdot \vert^2 V \in L^1(\BR^3)$ and $\int_{\BR^3} V(r) |r|^2 \dd r >0$.
Then there is a $\mu_0>0$ such that for all $0<\mu<\mu_0$, $\int_{\BR^3} V(r) \widetilde m_3^{D/N}(r;\mu) \dd r>0$.
In particular, if $V$ additionally satisfies~\ref{aspt_V_halfspace}\eqref{aspt.5} for small $\mu$ (e.g.~if $\widehat V\geq 0$), then for small $\mu$ there is a $\lambda_1(\mu)>0$ such that  $\tch(\lambda)>\tcf(\lambda)$ for $0<\lambda<\lambda_1(\mu)$.
\end{thm}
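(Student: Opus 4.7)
The plan is to exploit a scaling identity so the $\mu\to 0$ analysis reduces to Taylor expansion of a fixed function at the origin. Since $j_3(r;\mu) = \tilde j_3(\sqrt{\mu}\, r)$ with $\tilde j_3(\xi) := \sqrt{2/\pi}\,\sin(|\xi|)/|\xi|$, substituting $w_1 = \sqrt{\mu}\, z_1$ in each of the three pieces of $\widetilde m_3^{D/N}$ yields
\begin{equation*}
\widetilde m_3^{D/N}(r;\mu) = \frac{1}{\sqrt{\mu}}\, F^{D/N}(\sqrt{\mu}\, r),
\end{equation*}
where $F^{D/N}(\xi) := H(\xi_2,\xi_3) - K^{D/N}(\xi) \mp \pi\,\tilde j_3(\xi)^2$ with $H(\eta) := \int_\BR \tilde j_3(w_1,\eta_2,\eta_3)^2 \dd w_1$ and $K^{D/N}(\xi) := \int_{|w_1|<|\xi_1|} (\tilde j_3(w_1,\xi_2,\xi_3)\mp\tilde j_3(\xi))^2 \dd w_1$. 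The target integral then equals $\mu^{-1/2}\int V(r)\,F^{D/N}(\sqrt{\mu}\, r)\,\dd r$, and its leading $\mu\to 0$ behaviour is dictated by the small-$\xi$ expansion of $F^{D/N}$.

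Using Plancherel in the $w_1$-variable to rewrite $H(\eta) = \int_{-1}^1 J_0(\sqrt{1-u^2}\,|\eta|)^2 \dd u$, I obtain $H(\eta) = 2 - \tfrac{2}{3}|\eta|^2 + O(|\eta|^4)$ and $\pi\,\tilde j_3(\xi)^2 = 2 - \tfrac{2}{3}|\xi|^2 + O(|\xi|^4)$, together with $K^D(\xi) = O(|\xi|^5)$ near the origin (since $\tilde j_3(w_1,\xi_2,\xi_3) - \tilde j_3(\xi)$ vanishes to second order in both $|w_1|$ and $|\xi_1|$). Consequently $F^N(0) = 4$ and $F^D(\xi) = \tfrac{2}{3}\xi_1^2 + O(|\xi|^4)$. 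For Neumann the candidate leading term is $\tfrac{4}{\sqrt{\mu}}\int V\dd r$, positive by Assumption~\ref{aspt_V_halfspace}\eqref{aspt.4}. For Dirichlet the linear cancellation forces one order further: by radiality $\int V r_1^2 \dd r = \tfrac{1}{3}\int V|r|^2 \dd r$, so the candidate leading term is $\tfrac{2\sqrt{\mu}}{9}\int V|r|^2\,\dd r$, positive by the additional hypothesis.

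What remains is to prove the remainders are of lower order. For Neumann, the coarse estimate $|F^N(\xi)|\le C(1+|\xi_1|)$ (from $|H|,|\pi\tilde j_3^2|\le 2$ and $K^N(\xi) \le 2\int_{|w_1|<|\xi_1|}(\tilde j_3(w_1,\xi_2,\xi_3)^2 + \tilde j_3(\xi)^2)\dd w_1$) makes $V(F^N(\sqrt{\mu}r)-4)$ dominated by $CV(1+|r|)\in L^1$ via Assumption~\ref{aspt_V_halfspace}\eqref{aspt.3}, and dominated convergence sends $\int V(F^N(\sqrt{\mu}r)-4)\dd r$ to $0$. For Dirichlet, set $E(\xi) := F^D(\xi)-\tfrac{2}{3}\xi_1^2$; the global at-most-quadratic growth of $F^D$ combined with $E(\xi)=O(|\xi|^4)$ near the origin shows that $\xi\mapsto |E(\xi)|/|\xi|^2$ is bounded on $\BR^3$ and vanishes at $\xi=0$. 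Applying dominated convergence to $V|r|^2\cdot E(\sqrt{\mu}r)/(\mu|r|^2)$, dominated by $CV|r|^2\in L^1$ thanks to the additional assumption $|r|^2 V\in L^1$, gives $\int V\,E(\sqrt{\mu}r)\,\dd r = o(\mu)$. Dividing by $\sqrt{\mu}$ yields $\int V\,\widetilde m_3^D \dd r = \tfrac{2\sqrt{\mu}}{9}\int V|r|^2\,\dd r + o(\sqrt{\mu}) > 0$ for small $\mu$.

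The main obstacle is the uniform control of the Dirichlet error over all $r$: the naive pointwise Taylor bound $|E(\sqrt{\mu}r)|\lesssim \mu^2|r|^4$ is valid only for $\sqrt{\mu}|r|\lesssim 1$, and using it directly would require $|r|^4 V\in L^1$, which is not among the hypotheses. The resolution, as above, is to bound $|E(\xi)|/|\xi|^2$ instead of $|E(\xi)|$, converting the two-regime estimate $|E(\xi)|\le C\min(|\xi|^4,|\xi|^2)$ into the single factor $|\xi|^2\,\varepsilon(\xi)$ with $\varepsilon$ bounded and vanishing at the origin, which is precisely what DCT can handle with the stated integrability of $|r|^2 V$.
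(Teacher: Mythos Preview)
Your argument is correct and follows the same overall plan as the paper---scaling $\widetilde m_3^{D/N}(r;\mu)=\mu^{-1/2}F^{D/N}(\sqrt{\mu}\,r)$, Taylor expansion at the origin, then dominated convergence---but the technical route for the Taylor step is different. The paper first passes to the spherical average $m_3^{D/N}(|r|;1)$ (using radiality of $V$), derives explicit one-variable formulas $m_3^D=t_1+t_2+t_3+t_4$ involving $\arcoth$, $\Si$, $\Cin$ (Lemma~\ref{lea:t1_expr}), and then proves each $t_j$ is bounded and $C^2$ with $(m_3^D)''(0;1)=\tfrac{4}{9}$ and $m_3^N(0;1)=4$ (Lemma~\ref{lea:tj_reg}). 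You instead keep the three-variable function $F^{D/N}$ and compute its second-order jet directly, recognising $H(\eta)=\int_{-1}^1 J_0(\sqrt{1-u^2}\,|\eta|)^2\,\dd u$ via Plancherel; this bypasses the explicit-formula derivations and the somewhat delicate $C^2$ analysis of $t_1(x)=\tfrac{4}{\pi x}\int_1^\infty \tfrac{\sin^2(xk)}{k}\arcoth(k)\,\dd k$, and makes the coefficient $\tfrac{2}{3}\xi_1^2$ (hence $\tfrac{2}{9}|r|^2$ after angular averaging) transparent. Conversely, the paper's route yields boundedness of the averaged $m_3^N$ for free, so only $V\in L^1$ is used in the Neumann DCT, whereas you invoke the cruder linear bound $|F^N|\le C(1+|\xi_1|)$ together with $|\cdot|V\in L^1$; both are covered by the stated assumptions.

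One small slip: you write ``global at-most-quadratic growth of $F^D$'', but in fact $F^D$ has at most linear growth (since $H,\pi\tilde j_3^2\le 2$ and $K^D\le 2H+C|\xi_1|$); it is $E=F^D-\tfrac{2}{3}\xi_1^2$ that grows at most quadratically. This does not affect the argument: together with the local $O(|\xi|^4)$ estimate and the (implicit but immediate) continuity of $E$, $|E(\xi)|/|\xi|^2$ is indeed bounded on $\BR^3$ and vanishes at the origin, so the dominated convergence step goes through exactly as you describe.
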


\begin{rem}
Numerical evaluation of $\widetilde m_3^D$ suggests that $\widetilde m_3^D\geq 0$ (see Section~\ref{sec:3d_condition}, in particular Figure~\ref{fig:m3}).
Hence, for Dirichlet boundary conditions \eqref{3d_cond} appears to hold under the additional assumption that $V\geq 0$.
We therefore expect that for Dirichlet boundary conditions also in three dimensions boundary superconductivity occurs for all values of $\mu$.
There is no proof so far, however.
\end{rem}

\begin{rem}
One may wonder why in $d=1,2$ no condition like \eqref{3d_cond} is needed.
Actually, in $d=1,2$ the analogous condition is always satisfied if $\widehat{V}(0)>0$.
The reason is that if one defines $\widetilde m_d^{D/N}(r;\mu)$ by replacing $j_3$ by $j_d$ in \eqref{mtilde}, the first term diverges and $\widetilde m_d^{D/N}(r;\mu)=+\infty$.
\end{rem}

Our second main result is that the relative shift in critical temperature vanishes as $\lambda\to0$.
This generalizes the corresponding result for $d=1$ with contact interaction in \cite{hainzl_boundary_nodate}.
\begin{thm}\label{thm3}
Let $d\in\{1,2,3\}$, $\mu>0$ and let $V$ satisfy Assumption~\ref{aspt_V_halfspace} and $V\geq 0$.
Then
\begin{equation}
\lim_{\lambda\to0} \frac{\tch(\lambda)-\tcf(\lambda)}{\tcf(\lambda)}=0.
\end{equation}
\end{thm}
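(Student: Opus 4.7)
The plan is to complement the lower bound $T_c^1(\lambda)\geq T_c^0(\lambda)$ from Lemma~\ref{H1_esspec} with the matching upper bound
\[
T_c^1(\lambda)\leq (1+o(1))\,T_c^0(\lambda) \quad \text{as } \lambda\to 0.
\]
By monotonicity of $T\mapsto\inf\sigma(H_T^{\Omega_1})$, this reduces to showing that for every $\epsilon>0$ there exists $\lambda_0>0$ such that $\inf\sigma(H_T^{\Omega_1})\geq 0$ at $T=(1+\epsilon)T_c^0(\lambda)$ for every $\lambda\in(0,\lambda_0)$. Since $V\geq 0$, the Birman--Schwinger principle rephrases $\inf\sigma(H_T^{\Omega_j})\geq 0$ as the operator-norm bound $\lambda\|B_T^{\Omega_j}\|_{\mathrm{op}}\leq 1$, where $B_T^{\Omega_j}:=V^{1/2}(K_T^{\Omega_j})^{-1}V^{1/2}$ and $K_T^{\Omega_j}$ denotes the positive kinetic factor in~\eqref{H_original}; the critical temperatures are then characterized by $\lambda\|B_{T_c^j(\lambda)}^{\Omega_j}\|_{\mathrm{op}}=1$.

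The core step is a quantitative asymptotic comparison of the two Birman--Schwinger operators in the weak-coupling limit. The weak-coupling analysis of \cite{hainzl_bardeencooperschrieffer_2016,henheik_universality_2023} yields a logarithmic divergence of the top eigenvalue of $B_T^{\Omega_0}$ as $T\to 0$, with leading behavior governed by $e_\mu=\sup\sigma(\mathcal{V}_\mu)$, and the corresponding top eigenfunction concentrates on the Fermi sphere $\{|p|=\sqrt{\mu}\}$. In particular $\|B_{T_c^0(\lambda)}^{\Omega_0}\|_{\mathrm{op}}=1/\lambda$ grows like $1/\lambda$ as $\lambda\to 0$. The plan is then to establish
\[
\|B_T^{\Omega_1}\|_{\mathrm{op}} \leq \|B_T^{\Omega_0}\|_{\mathrm{op}} + R(T,\lambda),\qquad R(T_c^0(\lambda),\lambda)=o(1/\lambda),
\]
which yields the conclusion after inverting the monotone map $T\mapsto\|B_T^{\Omega_j}\|_{\mathrm{op}}$. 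To bound the left-hand side I would use the reflection appropriate to the boundary condition (even extension for Neumann, odd for Dirichlet): this embeds $L^2_{\mathrm{sym}}(\Omega_1\times\Omega_1)$ isometrically, up to a factor $2$, into the reflection-symmetric subspace of $L^2(\mathbb{R}^{2d})$, identifies $K_T^{\Omega_1}$ with the restriction of $K_T^{\Omega_0}$, and introduces an additional \emph{image} interaction $V((x_1+y_1,\hat x-\hat y))$, which is non-negative thanks to $V\geq 0$.

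The main obstacle is the quantitative control on $R(T,\lambda)$. A straightforward use of the reflection identity only yields the weak bound $T_c^1(\lambda)\leq T_c^0(2\lambda)$, which is useless in the weak-coupling regime since the ratio $T_c^0(2\lambda)/T_c^0(\lambda)$ diverges exponentially in $1/\lambda$. A sharper estimate must exploit the Fermi-sphere localization of the near-critical eigenfunction of $B_{T_c^0(\lambda)}^{\Omega_0}$ as $\lambda\to 0$: one expects the image term $V((x_1+y_1,\hat x-\hat y))$ to average out against the rapid oscillations in the center-of-mass direction and to contribute only a bounded correction to the top eigenvalue, hence $o(1/\lambda)$. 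Making this averaging rigorous, uniformly in $\lambda$ and down to $T=T_c^0(\lambda)$, is the crux of the proof; the positivity hypothesis $V\geq 0$ is essential throughout, both for the Birman--Schwinger setup and for signing the image contribution.
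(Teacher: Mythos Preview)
Your overall framework matches the paper's: pass to Birman--Schwinger (using $V\geq 0$), use reflection to identify the image interaction, and show it is lower order. The paper even proves the stronger $a_T^1-a_T^0\to 0$ as $T\to 0$ (not merely $o(1/\lambda)$), where $a_T^j=\sup\sigma(A_T^j)$; this is equivalent to the claim via an elementary reformulation.

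The gap is in your proposed mechanism for controlling $R$. Your heuristic---that the image potential $V(x_1+y_1,\tilde x-\tilde y)$ ``averages out against rapid oscillations in the center-of-mass direction''---does not match what actually happens: the dangerous states are precisely those with \emph{small} center-of-mass momentum $q$, where there are no rapid oscillations to exploit. The paper instead writes, after Fourier transform in the center-of-mass variable, $a_T^0\,\BI-A_T^1=\iota^\dagger F_2^\dagger(E_T\pm G_T)F_2\iota$, where $E_T(q)=a_T^0-\|V^{1/2}B_T(\cdot,q)V^{1/2}\|\geq 0$ is diagonal in $q$ and $G_T$ is the cross term from reflection. The argument is then a dichotomy in $q$: (i) for $|q|>\epsilon$ one shows $E_T(q)\gtrsim\ln(\mu/T)\to\infty$, which requires proving that any nonzero COM momentum strictly suppresses the Fermi-surface singularity of $B_T$ (a nontrivial analysis of the geometry of two shifted Fermi spheres); (ii) for $|q|<\epsilon$ one shows $\|\BI_{\leq\epsilon}G_T\BI_{\leq\epsilon}\|\to 0$ as $\epsilon\to 0$, uniformly in $T$; and (iii) $\sup_T\|G_T\|<\infty$, itself requiring delicate Schur-test bounds. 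These three inputs give $\|(E_T+\delta)^{-1/2}G_T(E_T+\delta)^{-1/2}\|\to 0$ and hence $\inf\sigma(E_T\pm G_T)\geq -\delta$ for every $\delta>0$. None of (i)--(iii) follows from Fermi-surface concentration of the relative-coordinate eigenfunction alone, so your outline stops short of the actual crux.
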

We expect that the additional assumption $V\geq 0$ in Theorem~\ref{thm3} is not  necessary;  it is required in our proof, however.
\begin{rem}
The temperature $\tch(\lambda)$ is the smallest temperature $T$ satisfying $\inf \sigma(H_T^{\Omega_1})=0$.
In principle, there could be other solutions to this equation, defining larger critical temperatures. 
An inspection of our proof shows that it applies equally well to these larger temperatures, i.e.~Theorem~\ref{thm3} also holds if $\tch(\lambda)$ is replaced by any other solution $T$ of  the  equation $\inf \sigma(H_T^{\Omega_1})=0$.
\end{rem}

The rest of the paper is organized as follows.
In Section~\ref{sec:pre} we prove the Lemmas mentioned in the introduction.
In Section~\ref{sec:gs} we use the Birman-Schwinger principle to study the ground state of $H_{T}^0$.
Section~\ref{sec:thm1} contains the proof of Theorem~\ref{thm1}.
Section~\ref{sec:3d_condition} discusses the conditions under which \eqref{3d_cond} holds and in particular contains the proof Theorem~\ref{thm2}.
In Section~\ref{sec:rel_T} we study the relative temperature shift and prove Theorem~\ref{thm3}.
Section~\ref{sec:pf_aux} contains the proof of auxiliary Lemmas from Section~\ref{sec:rel_T}.

\section{Preliminaries}\label{sec:pre}
The following functions will occur frequently
\begin{equation}
K_{T,\mu}(p,q):=\frac{p^2 + q^2 - 2\mu}{\tanh\left(\frac{p^2-\mu}{2T}\right)+\tanh\left(\frac{q^2-\mu}{2T}\right)}
\end{equation}
and
\begin{equation}\label{B_def}
B_{T,\mu}(p,q):=\frac{1}{K_{T,\mu}(p+q,p-q)}.
\end{equation}
We will suppress the subscript $\mu$ and write $K_{T},B_T$ when the $\mu$-dependence is not relevant.
The following estimate \cite[Lemma 2.1]{hainzl_boundary_nodate} will prove useful.
\begin{lemma}\label{KT-Laplace}
For every $T_0>0$ there is a constant $C_1(T_0,\mu)>0$ such that for $T>T_0$, $C_1(T+p^2+q^2)\leq K_{T}(p,q)$.
For every $T>0$ there is a constant $C_2(T,\mu)>0$ such that $K_{T}(p,q)\leq C_2(p^2+q^2+1)$.
\end{lemma}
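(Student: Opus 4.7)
The plan is to establish both inequalities from elementary properties of $\tanh$. Writing $x = p^2-\mu$ and $y = q^2-\mu$, the function reads $K_T(p,q) = (x+y)/(\tanh(x/(2T))+\tanh(y/(2T)))$. The first observation is that numerator and denominator always share a sign — since $\tanh$ is odd and strictly monotone, $\tanh(x/(2T)) + \tanh(y/(2T)) > 0$ iff $x+y > 0$ — so $K_T > 0$, and that the apparent singularity at $x + y = 0$ is removable, with limiting value $2T\cosh^2(x/(2T))$; in particular $K_T$ is continuous on all of $\BR^d\times\BR^d$.

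For the lower bound I would exploit two elementary facts. Since $|\tanh'| = \sech^2 \le 1$, the mean value theorem applied to $\tanh(x/(2T)) - \tanh(-y/(2T))$ gives $|\tanh(x/(2T)) + \tanh(y/(2T))| \le |x+y|/(2T)$, which together with the matching signs yields $K_T \ge 2T$. On the other hand, $|\tanh| \le 1$ gives the bound $K_T \ge |x+y|/2 = |p^2 + q^2 - 2\mu|/2$. Splitting at $p^2+q^2 = 4\mu$ — where the second bound supplies $K_T \ge (p^2+q^2)/4$ — and averaging with the first bound then yields $K_T \ge C_1(T_0,\mu)(T+p^2+q^2)$ with an explicit constant such as $C_1 = \min(1/8,\, 2T_0/(T_0+4\mu))$.

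For the upper bound, the key point is that $q^2 - \mu \ge -\mu$ forces $\tanh((q^2-\mu)/(2T)) \ge -\tanh(\mu/(2T)) > -1$, so the two $\tanh$ terms in the denominator cannot cancel to zero when one of $p^2, q^2$ is much larger than $\mu$. Concretely, if WLOG $p^2 \ge q^2$ and $p^2 \ge 3\mu$, then $\tanh((p^2-\mu)/(2T)) \ge \tanh(\mu/T)$, so the denominator is bounded below by $c(T,\mu) := \tanh(\mu/T) - \tanh(\mu/(2T)) > 0$, giving $K_T \le (p^2+q^2+2\mu)/c$. On the compact region $\{p^2+q^2 \le 3\mu\}$ (or any larger bounded set) $K_T$ is a continuous function and hence bounded. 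Taking the maximum of these two bounds yields $K_T \le C_2(T,\mu)(1+p^2+q^2)$.

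No serious obstacle arises. The only point requiring care is the removable singularity at $p^2+q^2 = 2\mu$, handled by Taylor expansion of $\tanh$ around $0$; everything else is monotonicity and boundedness of $\tanh$, together with the structural constraint $q^2-\mu \ge -\mu$ that prevents catastrophic cancellations in the denominator.
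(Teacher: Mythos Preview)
Your proof is correct. The paper does not prove this lemma itself but cites it from \cite{hainzl_boundary_nodate}, so there is no in-paper argument to compare against; your elementary argument via $|\tanh'|\le 1$ for the lower bound and the strict positivity $\tanh(\mu/T)-\tanh(\mu/(2T))>0$ for the upper bound is exactly the kind of direct proof one would expect. One cosmetic remark: after the WLOG $p^2\ge q^2$ the complementary region is $p^2<3\mu$, hence $p^2+q^2<6\mu$ rather than $3\mu$, but you already covered this with your parenthetical ``or any larger bounded set''.
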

The minimal value of $K_T$ is $2T$.
Since $|\tanh(x)|<1$, we have for all $p,q\in \BR^d$ and $T\geq 0$
\begin{equation}\label{BT_bound}
B_T(p,q)\leq \frac{1}{\max\{|p^2+q^2-\mu|,2T\}} \quad {\rm and} \quad B_T(p,q)\chi_{p^2+q^2>2\mu>0} \leq \frac{C(\mu)}{1+p^2+q^2},
\end{equation}
where $C(\mu)$ depends only on $\mu$.

\begin{rem}
Assumption~\ref{aspt_V_halfspace}\eqref{aspt.1} guarantees that $V$ is infinitesimally form bounded with respect to $-\Delta_x-\Delta_y$ \cite{lieb_analysis_2001,roos_two-particle_2022}.
By Lemma~\ref{KT-Laplace}, $H_T^\Omega$ defines a self-adjoint operator via the KLMN theorem.
Furthermore, $H_T^\Omega$ becomes positive for $T$ large enough and hence the critical temperatures are finite.
\end{rem}

Let $K_{T}^\Omega$ be the kinetic term in $H_T^\Omega$.
The corresponding quadratic form acts as $\langle \psi, K_{T}^\Omega \psi \rangle = \int_{\Omega^4} \overline{\psi(x,y)}K_{T}^\Omega(x,y;x',y') \psi(x',y') \dd x \dd y\dd x'\dd y'$
where $K_{T}^\Omega(x,y;x',y') $ is the distribution
\begin{equation}\label{K_kernel}
K_{T}^\Omega(x,y;x',y')=\int_{\BR^{2d}}\overline{F_\Omega(x,p) F_\Omega(y,q)}K_{T}(p,q) F_\Omega(x',p) F_\Omega(y',q)\dd p\dd q,
\end{equation}
with
\begin{equation}
F_{\BR^d}(x,p)=\frac{e^{-i p \cdot x}}{(2\pi)^{d/2}} \quad {\rm and} \quad F_{\Omega_1}(x,p)=\frac{(e^{-i p_1 x_1}\mp e^{i p_1 x_1})  e^{-i \tilde p \cdot \tilde x}}{2^{1/2}(2\pi)^{d/2}},
\end{equation}
where the $-/+$ sign corresponds to Dirichlet and Neumann boundary conditions, respectively.
Here, $\tilde x$ denotes the vector containing all but the first component of $x$.
(In the case $d=1$, $\tilde x$ is empty and can be omitted.)

\begin{lemma}\label{H1_esspec}
Let $T,\lambda>0$, $d\in\{1,2,3\}$, and let $V$ satisfy \ref{aspt_V_halfspace}\eqref{aspt.1}.
Then $\inf \sigma(H_T^{\Omega_1})\leq \inf \sigma(H_T^{\Omega_0})$.
\end{lemma}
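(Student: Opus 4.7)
The plan is to apply the variational principle on $\Omega_1$ using trial states obtained by translating near-minimizers of $H_T^{\Omega_0}$ far into the interior of the half-space. The intuition is that deep in the bulk the boundary conditions produce only rapidly oscillating corrections to the kinetic form, which wash out by Riemann--Lebesgue. Concretely, fix $\epsilon>0$ and, by density of symmetric $C_c^\infty$ functions in the form domain of $H_T^{\Omega_0}$, pick a symmetric $\psi\in C_c^\infty(\BR^d\times\BR^d)$ with $\|\psi\|=1$ and $\langle\psi,H_T^{\Omega_0}\psi\rangle\le \inf\sigma(H_T^{\Omega_0})+\epsilon/2$. For $a$ larger than the support radius of $\psi$, set $\psi_a(x,y):=\psi(x-ae_1,y-ae_1)$, where $e_1=(1,0,\ldots,0)$. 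Then $\psi_a$ is smooth, symmetric under $x\leftrightarrow y$, of unit norm, and supported in the interior of $\Omega_1\times\Omega_1$, so it lies in the form domain of $H_T^{\Omega_1}$ for either Dirichlet or Neumann boundary conditions.

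The interaction piece is immediate by translation invariance of $V(x-y)$, giving $\langle\psi_a,\lambda V\psi_a\rangle=\langle\psi,\lambda V\psi\rangle$. The main work is to show $\langle\psi_a,K_T^{\Omega_1}\psi_a\rangle\to\langle\psi,K_T^{\Omega_0}\psi\rangle$ as $a\to\infty$. Using the kernel representation \eqref{K_kernel}, let $P$ denote the reflection $(p_1,\tilde p)\mapsto(-p_1,\tilde p)$, so that $F_{\Omega_1}(x,p)=2^{-1/2}(F_{\BR^d}(x,p)\mp F_{\BR^d}(x,Pp))$ (upper/lower sign for Dirichlet/Neumann). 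Substituting and undoing the translation by a change of variables, the relevant half-space ``Fourier coefficient'' of $\psi_a$ becomes a linear combination of the four full-space Fourier transforms $\widehat\psi(p,q)$, $\widehat\psi(Pp,q)$, $\widehat\psi(p,Pq)$, $\widehat\psi(Pp,Pq)$, each multiplied by an oscillatory factor of the form $e^{\pm ia(p_1\pm q_1)}$.

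Squaring and integrating against $K_T(p,q)$, which depends only on $|p|^2$ and $|q|^2$ and is thus invariant under $P$ in either variable, the four $a$-independent ``diagonal'' contributions each reduce to $\tfrac14\langle\psi,K_T^{\Omega_0}\psi\rangle$ after a change of variables, summing to $\langle\psi,K_T^{\Omega_0}\psi\rangle$. The remaining twelve cross terms each carry a nontrivial oscillation in $p_1$ or $q_1$; since $\psi\in C_c^\infty$, $\widehat\psi$ is Schwartz and therefore $K_T(p,q)\,\widehat\psi(\cdot)\,\overline{\widehat\psi(\cdot)}\in L^1(\BR^{2d})$, with Lemma~\ref{KT-Laplace} ensuring that the polynomial growth of $K_T$ is harmless, so the Riemann--Lebesgue lemma sends each cross term to zero as $a\to\infty$. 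The main subtlety is purely bookkeeping, namely organizing these sixteen terms and verifying the $L^1$ bound on each integrand before invoking Riemann--Lebesgue. Combining everything yields $\lim_{a\to\infty}\langle\psi_a,H_T^{\Omega_1}\psi_a\rangle=\langle\psi,H_T^{\Omega_0}\psi\rangle\le \inf\sigma(H_T^{\Omega_0})+\epsilon/2$, so choosing $a$ large enough and applying the variational principle gives $\inf\sigma(H_T^{\Omega_1})\le\inf\sigma(H_T^{\Omega_0})+\epsilon$, and letting $\epsilon\to0$ finishes the proof.
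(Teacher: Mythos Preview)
Your proof is correct and follows essentially the same approach as the paper: translate compactly supported near-minimizers far into the half-space, note that the interaction term is unchanged, expand the half-space kinetic form via the kernel \eqref{K_kernel}, and kill the oscillatory cross terms by Riemann--Lebesgue. The only cosmetic differences are that the paper works with compactly supported $H^1$ functions (using Schwarz and Lemma~\ref{KT-Laplace} to get the $L^1$ bound) rather than $C_c^\infty$, and it exploits the symmetry of $K_T$ in $p_1$ and $q_1$ to collapse your sixteen terms to four before invoking Riemann--Lebesgue.
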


With the following Lemma we may use $H_T^0$ instead of $H_T^{\Omega_0}$ to compute $\tcf(\lambda)$ at weak enough coupling.
\begin{lemma}\label{ti_reduction}
Let $T,\lambda>0$, $d\in\{1,2,3\}$, and let $V$ satisfy \ref{aspt_V_halfspace}\eqref{aspt.1}.
%Then $\inf \sigma(H_T^{\Omega_0}) = \inf \sigma(H_T^0)$.
Let $ \sigma_{\mathrm{s}}(H_T^{0}) $ denote the spectrum of $H_T^0$ restricted to even functions.
Then $ \inf \sigma(H_T^0)\leq \inf \sigma(H_T^{\Omega_0}) \leq \inf \sigma_{\mathrm{s}}(H_T^0)$.
\end{lemma}

\begin{rem}\label{ti_reduction_fix}
Under Assumption~\ref{aspt_V_halfspace}, for all couplings $\lambda>0$ there is a unique $\tcti(\lambda)>0$ satisfying $\inf \sigma(H_{\tcti(\lambda)}^0)=0$ (see \cite[Theorem 3.2]{hainzl_bardeencooperschrieffer_2016} for $d=3$, and \cite[Theorem 2.5]{henheik_universality_2023} for $d=1,2$).
In Section~\ref{sec:gs}, in particular Remark~\ref{gs_symm_lsmall}, we shall show that there is a $\lambda_0>0$ such that the ground state of $H_{\tcti(\lambda)}^0$ is even for couplings $\lambda\leq \lambda_0$.
By Lemma~\ref{ti_reduction}, $\inf \sigma(H_{\tcti(\lambda)}^0)=\inf \sigma(H^{\Omega_0}_{\tcti(\lambda)})=0$.
Furthermore, for $T<\tcti(\lambda)$, due to strict monotonicity of $H_T^0$ in $T$,
\[
\inf \sigma(H_T^{\Omega_0})\leq \inf \sigma_s(H_{T,\lambda}^0)<\inf \sigma(H_{\tcti(\lambda)}^0)=0.
\]
Hence, $\tcf(\lambda)=\tcti(\lambda)$ for $\lambda\leq\lambda_0$.
In particular, the minimum of $\sigma( H_{T}^{\Omega_0})$ for $T=\tcf(\lambda)$ is attained at zero total momentum.
\end{rem}

\begin{rem}
The essential spectrum of $H_T^0$ satisfies $\inf \sigma_{\rm ess}(H_T^0)=2T$ (see e.g.~\cite[Proof of Thm 3.7]{lauritsen_masters_2020}).
Hence, zero is an eigenvalue of $H_{\tcti(\lambda)}^0$.
\end{rem}

\subsection{Proof of Lemma~\ref{H1_esspec}}
\begin{proof}[Proof of Lemma~\ref{H1_esspec}]
Let $S_l$ be the shift to the right by $l$ in the first component, i.e. $S_l \psi (x,y)= \psi(((x_1-l),\tilde x), (y_1-l,\tilde y))$.
Let $\psi$ be a compactly supported function in $H_{\rm sym}^1(\BR^{2d})$, the Sobolev space restricted to functions satisfying $\psi(x,y)=\psi(y,x)$.
For $l$ large enough, $S_l \psi$ is supported on the half-space and satisfies both Dirichlet and Neumann boundary conditions.
The goal is to prove that $\lim_{l\to \infty} \langle S_l \psi, H^{\Omega_1}_T S_l \psi \rangle = \langle \psi, H^{\Omega_0}_T \psi \rangle$.
Then, since compactly supported functions are dense in $ H_{\rm sym}^1(\BR^{2d})$, the claim follows.

Note that $\langle S_l \psi,V S_l \psi \rangle = \langle \psi,V \psi \rangle$.
Furthermore,
using symmetry of $K_T$ in $p_1$ and $q_1$ one obtains
\begin{multline}\label{2.6}
\langle S_l \psi,K_T^{\Omega_1} S_l \psi \rangle
=\int_{\BR^{2d}}\overline{\widehat \psi(p,q)}K_T(p,q)\Big[\widehat\psi(p,q)\mp \widehat\psi((-p_1,\tilde p),q) e^{2 i l p_1}\mp \widehat\psi(p,(-q_1,\tilde q))e^{2 i l q_1}\\
+\widehat\psi((-p_1,\tilde p),(-q_1,\tilde q))e^{2 i l (p_1+q_1)}\Big] \dd p\dd q
\end{multline}
for $l$ large enough such that $\psi$ is supported on the half-space.
The first term is exactly $\langle \psi,K_T^{\Omega_0} \psi \rangle $.
Note that by the Schwarz inequality and Lemma~\ref{KT-Laplace}, the function
\begin{equation}
(p,q)\mapsto \overline{\widehat \psi(p,q)}K_T(p,q)\widehat\psi((-p_1,\tilde p),q)
\end{equation}
is in $L^1(\BR^{2d})$ since $\psi \in H^1(\BR^{2d})$.
By the Riemann-Lebesgue Lemma, the second term in \eqref{2.6} vanishes for $l\to \infty$.
By the same argument, also the remaining terms vanish in the limit.
\end{proof}

\subsection{Proof of Lemma~\ref{ti_reduction}}
First, we prove the following inequality.
\begin{lemma}\label{ineq_p0_inf}
For all $x,y\in \BR$ we have
\begin{equation}\label{eq_p0_inf}
\frac{x+y}{\tanh(x)+\tanh(y)}\geq \frac{1}{2}\left(\frac{x}{\tanh(x)}+\frac{y}{\tanh(y)}\right)
\end{equation}
\end{lemma}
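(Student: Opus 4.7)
The plan is to reduce the inequality to an algebraic identity and then exploit monotonicity of a single auxiliary function. Setting $f(t):=t/\tanh(t)$, extended by continuity to $f(0)=1$, I would first verify by clearing a common denominator that
\begin{equation*}
\frac{x+y}{\tanh x+\tanh y}-\frac{1}{2}\left(\frac{x}{\tanh x}+\frac{y}{\tanh y}\right)=\frac{(f(x)-f(y))(\tanh x-\tanh y)}{2(\tanh x+\tanh y)},
\end{equation*}
valid whenever $x,y\neq 0$ and $\tanh x+\tanh y\neq 0$. This identity reduces the problem to showing that the right-hand side is nonnegative. The remaining boundary cases ($x=0$, $y=0$, or $x=-y$) I will handle by continuity: at $x=0$ the original inequality collapses to $f(y)\geq 1$; at $x=-y$ both sides are continuous limits and the resulting inequality reduces to $\sinh(2x)\geq 2x$.

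Next I will record the two monotonicity facts that drive everything. First, $\tanh$ is odd and strictly increasing on $\mathbb{R}$. Second, $f$ is even, satisfies $f\geq 1$, and is strictly increasing on $(0,\infty)$; this last point follows from computing $f'(t)=(\sinh(2t)/2-t)/\sinh^2(t)>0$ for $t>0$, hence by evenness $f$ is strictly decreasing on $(-\infty,0)$.

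To finish, I would split into two cases according to the sign of $xy$. If $xy\geq 0$, by the invariance of the original inequality under $(x,y)\mapsto(-x,-y)$ I may assume $x,y\geq 0$; then $\tanh x+\tanh y\geq 0$, and since $f$ and $\tanh$ are similarly ordered on $[0,\infty)$ we have $(f(x)-f(y))(\tanh x-\tanh y)\geq 0$. If $xy<0$, WLOG $x>0>y$; the substitution $z=-y>0$ together with $\tanh y=-\tanh z$ and $f(y)=f(z)$ rewrites the identity's right-hand side as
\begin{equation*}
\frac{(f(x)-f(z))(\tanh x+\tanh z)}{2(\tanh x-\tanh z)},
\end{equation*}
and this is nonnegative because $f$ and $\tanh$ increasing on $(0,\infty)$ force $(f(x)-f(z))$ and $(\tanh x-\tanh z)$ to share sign, while $\tanh x+\tanh z>0$.

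The only delicate point is the opposite-sign case $xy<0$: there $\tanh x+\tanh y$ can have either sign and can even vanish, and without a change of variables the sign bookkeeping is awkward. The substitution $z=-y$ using evenness of $f$ and oddness of $\tanh$ makes the signs transparent and turns the argument into the same Chebyshev-type monotonicity statement as in the same-sign case. Apart from this bookkeeping, the proof is purely elementary.
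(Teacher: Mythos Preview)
Your proof is correct and follows essentially the same approach as the paper: both establish that the difference of the two sides equals $\frac{(f(x)-f(y))(\tanh x-\tanh y)}{2(\tanh x+\tanh y)}$ with $f(t)=t/\tanh t$, and then use monotonicity of $f$ on $[0,\infty)$ together with its evenness. The paper's presentation is slightly more economical, reducing directly to the case $x>|y|$ (which simultaneously handles your $xy\geq 0$ and $xy<0$ cases via the symmetries $(x,y)\mapsto(y,x)$ and $(x,y)\mapsto(-x,-y)$), but the content is the same.
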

\begin{proof}[Proof of Lemma~\ref{ineq_p0_inf}]
Suppose $\vert x \vert \neq \vert y \vert$.
Without loss of generality we may assume $x>\vert y \vert$.
Since $\frac{x}{\tanh x}\geq \frac{y}{\tanh y}$,
\begin{equation}
\frac{x}{2 \tanh x} \frac{\tanh x-\tanh y}{\tanh x + \tanh y}\geq \frac{y}{2 \tanh y}\frac{\tanh x-\tanh y}{\tanh x + \tanh y}
\end{equation}
This inequality is equivalent to \eqref{eq_p0_inf}, as can be seen using $\frac{\tanh x-\tanh y}{\tanh x + \tanh y}=\frac{2\tanh x}{\tanh x + \tanh y}-1=1-\frac{2\tanh y}{\tanh x + \tanh y}$ on the left and right side, respectively.
By continuity,  \eqref{eq_p0_inf} also holds in the case $\vert x \vert = \vert y \vert$.
\end{proof}

\begin{proof}[Proof of Lemma~\ref{ti_reduction}]
Let $U$ denote the unitary transform $U\psi(r,z)=\frac{1}{2^{d/2}}\psi((r+z)/2,(z-r)/2)$ for $\psi \in L^2(\BR^{2d})$.
By Lemma~\ref{ineq_p0_inf} we have
\begin{multline}
UH_T^{\Omega_0}U^\dagger= \frac{-\left( \nabla_r+\nabla_z\right)^2-\left(\nabla_r-\nabla_z\right)^2-2\mu}{\tanh\left(\frac{-\left( \nabla_r+\nabla_z\right)^2-\mu}{2T}\right)+\tanh\left(\frac{-\left( \nabla_r-\nabla_z\right)^2-\mu}{2T}\right)}+V(r)\\
\geq \frac{1}{2}\left(\frac{-\left( \nabla_r+\nabla_z\right)^2-\mu}{\tanh\left(\frac{-\left( \nabla_r+\nabla_z\right)^2-\mu}{2T}\right)}+V(r)\right)+ \frac{1}{2}\left(\frac{-\left( \nabla_r-\nabla_z\right)^2-\mu}{\tanh\left(\frac{-\left( \nabla_r-\nabla_z\right)^2-\mu}{2T}\right)}+V(r)\right)
\end{multline}
Both summands are unitarily equivalent to $\frac{1}{2}H_T^0\otimes \BI$, where $\BI$ acts on $L^2(\BR^d)$.
Therefore, $\inf \sigma(H_T^{\Omega_0})\geq \inf\sigma(H_T^0)$.

For the second inequality let $f\in H^1(\BR^d)$ with $f(r)=f(-r)$ and $\psi_{\epsilon}(r,z)=e^{-\epsilon \sum_{j=1}^d\vert z_j \vert} f(r)$.
Note that $\lVert \psi_\eps\rVert^2_2=\frac{1}{\eps^d}\lVert f \rVert_2^2$.
Since the Fourier transform of $e^{-\epsilon \vert r_1 \vert}$ in $L^2(\BR)$ is  $(2/\pi)^{1/2}  {\epsilon}/({\epsilon^2+p_1^2})$, 
% $\sqrt{\frac{2}{\pi}} \frac{\epsilon}{\epsilon^2+p_1^2}$, 
we have $\widehat{\psi_\eps}(p,q)= \widehat{f}(q) (2/\pi)^{d/2}  \prod_{j=1}^d {\epsilon}/{(\epsilon^2+p_j^2)}$.
%$\widehat{\psi_\eps}(p,q)= \frac{2^{d/2}}{\pi^{d/2}} \prod_{j=1}^d \frac{\epsilon}{(\epsilon^2+p_j^2)}\widehat{f}(q)$.
Therefore,
\begin{multline}
\frac{\langle \psi_\epsilon\vert UH_T^{\Omega_0}U^\dagger \psi_\epsilon \rangle}{\lVert \psi_\epsilon\rVert^2}\\=\frac{2^d}{\pi^d \lVert f\rVert^2}\int_{\BR^{2d}} K_{T}(p+q,p-q)\prod_{j=1}^d \frac{\epsilon^3}{(\epsilon^2+p_j^2)^2} |\widehat{f}(q)|^2  \dd p \dd q
+\frac{1}{\lVert f \rVert^2} \int_{\BR^d}V(r) |f(r)|^2\dd r\\
=\frac{2^d}{\pi^d \lVert f\rVert^2}\int_{\BR^{2d}} K_{T}(\eps p+q,\eps p-q)\left(\prod_{j=1}^d \frac{1}{(1+p_j^2)^2}\right) |\widehat{f}(q)|^2  \dd p \dd q+\frac{1}{\lVert f \rVert^2} \int_{\BR^d}V(r) |f(r)|^2\dd r,
\end{multline}
where we substituted $p\to \eps p$ in the second step.
By Lemma~\ref{KT-Laplace},
\begin{equation}
K_{T}(\eps p+q,\eps p-q)\left(\prod_{j=1}^d \frac{1}{(1+p_j^2)^2}\right) |\widehat{f}(q)|^2 \leq C (1+d \eps^2+q^2)\left(\prod_{j=1}^d \frac{1}{1+p_j^2}\right) |\widehat{f}(q)|^2 ,
\end{equation}
which is integrable.
With $\int_{\BR}\frac{1}{(1+p_j^2)^2}  \dd p_j=\pi/2$ it follows by dominated convergence that
\begin{equation}
\lim_{\epsilon \to 0} \frac{\langle \psi_\epsilon\vert UH_T^{\Omega_0}U^\dagger \psi_\epsilon \rangle}{\lVert \psi_\epsilon\rVert^2}=\frac{\langle f \vert H_T^0 f \rangle}{\lVert f\rVert^2}.
\end{equation}
Therefore, $\inf \sigma(H_T^{\Omega_0})\leq \inf\sigma_s(H_T^0)$.
\end{proof}

\section{Ground State of $H_{\tcti(\lambda)}^0$}\label{sec:gs}
Let $\tcti(\lambda)$ be the unique temperature satisfying $\inf \sigma(H_{\tcti(\lambda)}^0)=0$ as in Remark~\ref{ti_reduction_fix}, where $H_T^0$ was defined in \eqref{def_h0}.
To study the ground state of $H_{\tcti(\lambda)}^0$, it is convenient to apply the Birman-Schwinger principle.
For $q\in \BR^d$ let $B_{T}(\cdot,q)$ denote the operator on $L^2(\BR^d)$ which acts as multiplication by $B_{T}(p,q)$ (defined in \eqref{B_def}) in momentum space.
The Birman-Schwinger operator corresponding to $H_{T}^0$ acts on $L^2(\BR^d)$ and is given by
\begin{equation}\label{BS-ti}
A_{T}^0=V^{1/2} B_{T}(\cdot,0) \vert V\vert^{1/2},
\end{equation}
where we use the notation $V^{1/2}(x)=\sgn (V(x))\vert V\vert^{1/2}(x)$.
This operator is compact \cite{hainzl_bardeencooperschrieffer_2016,henheik_universality_2023}.
It follows from the Birman-Schwinger principle that $\sup \sigma(A_{T}^0)=1/\lambda$ exactly for $T=\tcti(\lambda)$ and that the eigenvalue 0 of $H_{\tcti(\lambda)}^0$ has the same multiplicity as the largest eigenvalue of $ A_{\tcti(\lambda)}^0$.

Let $\FT:L^1(\BR^d)\to L^2(\BS^{d-1})$ act as $\FT \psi(\omega)=\widehat \psi (\sqrt{\mu} \omega)$ and define $O_{\mu}=V^{1/2} \FT^\dagger \FT \vert V\vert^{1/2}$ on $L^2(\BR^d)$.
Furthermore, let
\begin{equation}\label{mmu}
m_\mu(T)=\int_0^{\sqrt{2\mu}} B_{T}(t,0)t^{d-1} \dd t.
\end{equation}
Note that $m_\mu(T)= \mu^{d/2-1}\left( \ln\left(\mu/T\right)+c_d\right)+o(1)$ for $T\to0$, where $c_d$ is a number depending only on $d$ \cite[Prop 3.1]{henheik_universality_2023}.

The operator $O_\mu$ captures the singularity of $A_{T}^0$ as $T\to 0$.
The following has been proved in \cite[Lemma 2]{frank_critical_2007} for $d=3$ and in  \cite[Lemma 3.4]{henheik_universality_2023} for $d=1,2$.
\begin{lemma}\label{lea:A0_asy_d}
Let $d\in\{1,2,3\}$ and $\mu>0$ and let $V$ satisfy Assumption~\ref{aspt_V_halfspace}.
Then,
\begin{equation}\label{lea4.3}
\sup_{T\in(0,\infty)} \left \lVert A_{T}^0 -m_\mu(T) O_{\mu}\right \rVert_{\rm HS} <\infty,
\end{equation}
where $\lVert \cdot \rVert_{\rm HS}$ denotes the Hilbert-Schmidt norm.
\end{lemma}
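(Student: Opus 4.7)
The plan is to compute the integral kernel of $A_T^0-m_\mu(T)O_\mu$ in position space and control its Hilbert-Schmidt norm uniformly in $T$. Passing to radial coordinates in the momentum integral, the multiplier $B_T(\cdot,0)$ has convolution kernel $(2\pi)^{-d/2}\int_0^\infty B_T(t,0)\,t^{d-1}\,j_d(r;t)\,\dd t$, while $\FT^\dagger\FT$ acts by convolution with $(2\pi)^{-d/2}j_d(r;\sqrt\mu)$. Combined with the definition $m_\mu(T)=\int_0^{\sqrt{2\mu}}B_T(t,0)t^{d-1}\dd t$, the difference operator has kernel $V^{1/2}(x)\,G_T(x-y)\,|V|^{1/2}(y)$ with
\[
G_T(r) = \frac{1}{(2\pi)^{d/2}}\int_0^{\sqrt{2\mu}} B_T(t,0)\,t^{d-1}\,[j_d(r;t)-j_d(r;\sqrt\mu)]\,\dd t \;+\; \frac{1}{(2\pi)^{d/2}}\int_{\sqrt{2\mu}}^\infty B_T(t,0)\,t^{d-1}\, j_d(r;t)\,\dd t.
\]
The key point of the rewriting is that the Fermi-surface singularity of $B_T$ at $t=\sqrt\mu$ now multiplies a small factor rather than standing alone.

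Next I would estimate $G_T$ uniformly in $T$ by treating the two integrals separately. For the first, the identity $B_T(t,0)|t-\sqrt\mu|\leq 1/(t+\sqrt\mu)$ (from $|\tanh|\leq 1$) together with $|j_d(r;t)-j_d(r;\sqrt\mu)|\leq C\min(1,|r|\,|t-\sqrt\mu|)$, which follows from the Lipschitz bound $|\p_t j_d(r;t)|\leq C|r|$, produces a $T$-uniform integrable majorant. For the tail integral, \eqref{BT_bound} supplies $B_T(t,0)\leq C(\mu)/(1+t^2)$, so the integrand is controlled by $t^{d-3}|j_d(r;t)|$; this is directly integrable for $d=1,2$, whereas for $d=3$ one must exploit the oscillatory form $j_3(r;t)=c\sin(t|r|)/(t|r|)$ (via integration by parts) to harvest additional decay both in $t$ and in $|r|$.

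Finally, the Hilbert-Schmidt norm is
\[
\|A_T^0-m_\mu(T)O_\mu\|_{\rm HS}^2 = \int_{\BR^{2d}} |V(x)|\,|V(y)|\,|G_T(x-y)|^2\,\dd x\,\dd y = \int_{\BR^d}(|V|*|V|)(r)\,|G_T(r)|^2\,\dd r,
\]
which I would bound by feeding the pointwise estimates on $G_T$ into Young's inequality, using $V\in L^1\cap L^{p_d}$ from Assumption~\ref{aspt_V_halfspace} to ensure $|V|*|V|$ lies in a suitable space. The main obstacle is the case $d=3$: the naive pointwise bound on $G_T$ fails because $\int_{\sqrt{2\mu}}^\infty t^{d-3}\dd t$ diverges, so one must genuinely use the oscillations of $j_3(r;t)$ to obtain a decay estimate of the form $|G_T(r)|\leq C(1+|r|)^{-\alpha}$ with sufficient $\alpha$, and then trade this decay against the $L^{p_d}$ integrability of $V$ (with $p_d>3/2$) to close the bound. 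For $d=1,2$ the uniform $L^\infty$-style control on $G_T$ suffices, explaining why these dimensions are simpler.
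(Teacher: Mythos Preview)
The paper does not prove this lemma itself; it cites it as a known result (\cite[Lemma~2]{frank_critical_2007} for $d=3$, \cite[Lemma~3.4]{henheik_universality_2023} for $d=1,2$). Your outline is precisely the standard argument used in those references: write the kernel as $V^{1/2}(x)G_T(x-y)|V|^{1/2}(y)$, split at $|p|=\sqrt{2\mu}$, exploit $B_T(t,0)|t-\sqrt\mu|\leq (t+\sqrt\mu)^{-1}$ against the Lipschitz bound on $j_d$ near the Fermi surface, and use $B_T\leq C(1+t^2)^{-1}$ in the tail.

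One correction: your claim that the tail $\int_{\sqrt{2\mu}}^\infty t^{d-3}|j_d(r;t)|\,\dd t$ is ``directly integrable for $d=1,2$'' is fine for $d=1$ but for $d=2$ the bound $|j_2|\leq C$ only gives $\int t^{-1}\dd t=\infty$; you must already use the Bessel decay $|J_0(x)|\leq C(1+x)^{-1/2}$ to get $t^{-3/2}|r|^{-1/2}$, which is integrable (and the resulting $|r|^{-1/2}$ singularity near $r=0$ is compensated by the $L^{p_d}$ assumption on $V$). So $d=2$ is intermediate between $d=1$ and $d=3$ rather than grouped with $d=1$. Also, the low-momentum piece produces a $\ln(1+|r|)$ growth in $G_T$; this is where the hypothesis $|\cdot|V\in L^1$ (Assumption~\ref{aspt_V_halfspace}\eqref{aspt.3}) enters to make $\int|V(x)||V(y)|(\ln(1+|x-y|))^2\,\dd x\,\dd y$ finite. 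With these refinements your sketch matches the cited proofs.
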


Thus, the asymptotic behavior of $\sup \sigma(A_{T}^0)$ depends on the largest eigenvalue of $O_\mu$.
Note that $O_\mu$ is isospectral to $\mathcal{V}_\mu=\FT V \FT^\dagger$, since both operators are compact.
The eigenfunction of $O_\mu$ corresponding to the eigenvalue $e_\mu$ is
\begin{equation}
\Psi(r):= V^{1/2}(r) j_d(r;\mu),
\end{equation}
where $j_d$ was defined in \eqref{jd}.
Note that
\begin{equation}
j_1(r;\mu)=\sqrt{\frac{2}{\pi}}\cos(\sqrt{\mu} r), \quad
j_2(r;\mu)= J_0(\sqrt{\mu} \vert r \vert), \quad
j_3(r;\mu)=\frac{2}{(2\pi)^{1/2}} \frac{\sin \sqrt{\mu} \vert r \vert}{\sqrt{\mu} \vert r \vert},
\end{equation}
where $J_0$ is the Bessel function of order 0.
Furthermore
\begin{equation}\label{emu}
e_\mu=\frac{1}{(2\pi)^{d/2}} \int_{\BS^{d-1}} \widehat{V}(\sqrt{\mu}((1,0,...,0)-p))\dd p = \frac{1}{\vert \BS^{d-1}\vert}\int_{\BR^d} V(r)j_d(r;\mu)^2 \dd r
\end{equation}

The following asymptotics of $\tcti(\lambda)$ for $\lambda\to 0$ was computed in \cite[Theorem 3.3]{hainzl_bardeencooperschrieffer_2016} and \cite[Theorem 2.5]{henheik_universality_2023}.
\begin{lemma}\label{lea:Tc0_asy}
Let $\mu>0$, $d\in\{1,2,3\}$ and let $V$ satisfy Assumption~\ref{aspt_V_halfspace}. Then
\begin{equation}
\lim_{\lambda \to 0}  \Bigg\vert e_\mu m_\mu(\tcti(\lambda))-\frac{1}{\lambda}\Bigg\vert=\lim_{\lambda \to 0}  \Bigg\vert e_\mu \mu^{d/2-1}\ln \left(\frac{\mu}{\tcti(\lambda)}\right)-\frac{1}{\lambda}\Bigg\vert<\infty.
\end{equation}
\end{lemma}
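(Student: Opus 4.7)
The plan is to combine the Birman--Schwinger principle (which identifies $T_c^0(\lambda)$ via $\sup\sigma(A_T^0)=1/\lambda$) with the uniform perturbative bound in Lemma~\ref{lea:A0_asy_d}, and then invoke the small-$T$ expansion of $m_\mu$ recalled below~\eqref{mmu}. Since the operator norm is dominated by the Hilbert--Schmidt norm, Lemma~\ref{lea:A0_asy_d} gives $\lVert A_T^0-m_\mu(T)O_\mu\rVert \leq C$ uniformly in $T>0$.

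Although $A_T^0$ and $O_\mu$ are written in non-self-adjoint form, each is isospectral on its nonzero part to a self-adjoint compact operator via the standard cyclic rearrangement together with the identity $V^{1/2}|V|^{1/2}=V$: concretely, $A_T^0$ is isospectral to $B_T(\cdot,0)^{1/2}VB_T(\cdot,0)^{1/2}$, and $O_\mu$ is isospectral to $\mathcal{V}_\mu$ (as already noted in the excerpt). In particular $\sup\sigma(O_\mu)=e_\mu$, which is positive by Assumption~\ref{aspt_V_halfspace}\eqref{aspt.5}, so that $\sup\sigma(m_\mu(T)O_\mu)=m_\mu(T)e_\mu$ for all $T>0$. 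The variational characterization of the top eigenvalue of a self-adjoint compact operator, applied to these reductions and combined with the operator-norm bound above, yields
\begin{equation*}
\left|\sup\sigma(A_T^0)-m_\mu(T)e_\mu\right| \leq C'
\end{equation*}
uniformly in $T>0$. Evaluating at $T=T_c^0(\lambda)$ and using Birman--Schwinger gives $|1/\lambda-e_\mu m_\mu(T_c^0(\lambda))|\leq C'$, which is the first bound in the statement.

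For the second equality the triangle inequality reduces matters to showing that $|m_\mu(T)-\mu^{d/2-1}\ln(\mu/T)|$ stays bounded as $T=T_c^0(\lambda)\to 0$, and this is exactly the small-$T$ expansion $m_\mu(T)=\mu^{d/2-1}(\ln(\mu/T)+c_d)+o(1)$ recalled below~\eqref{mmu}. To apply the expansion we need $T_c^0(\lambda)\to 0$ as $\lambda\to 0$. This follows from the first bound, since it forces $m_\mu(T_c^0(\lambda))\to\infty$, whereas~\eqref{mmu} and the pointwise bound~\eqref{BT_bound} together show that $m_\mu$ stays bounded on any set $\{T\geq T_0\}$ with $T_0>0$.

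The main obstacle is the perturbative comparison of top eigenvalues, which requires care because $A_T^0$ and $O_\mu$ are not self-adjoint in their written form: one must first pass to their isospectral self-adjoint reductions in order to apply standard spectral-perturbation estimates. Once this identification is made, the remaining ingredients---Birman--Schwinger, positivity of $e_\mu$, and the small-$T$ behavior of $m_\mu$---combine routinely.
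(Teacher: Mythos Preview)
Your overall strategy is correct and matches the spirit of the references the paper cites: use Birman--Schwinger to identify $1/\lambda$ as the top of $\sigma(A_{T_c^0(\lambda)}^0)$, compare to $m_\mu(T)O_\mu$ via Lemma~\ref{lea:A0_asy_d}, and finish with the small-$T$ expansion of $m_\mu$. The second paragraph (that $T_c^0(\lambda)\to 0$ and hence the two limits in the statement coincide) is fine.

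The gap is in your step from $\lVert A_T^0-m_\mu(T)O_\mu\rVert\leq C$ to $\bigl|\sup\sigma(A_T^0)-m_\mu(T)e_\mu\bigr|\leq C'$. You correctly note that $A_T^0$ and $O_\mu$ are not self-adjoint when $V$ changes sign, and that each is isospectral to a self-adjoint operator ($B_T^{1/2}VB_T^{1/2}$ and $\mathcal V_\mu$, respectively). But these two reductions are obtained by \emph{different} cyclic rearrangements and do not even act on the same Hilbert space, so the operator-norm bound on $A_T^0-m_\mu(T)O_\mu$ does not transfer to a bound on the difference of the self-adjoint reductions. For non-normal operators the inequality $|\sup\sigma(A)-\sup\sigma(B)|\leq\lVert A-B\rVert$ is false in general, and nothing in your argument bridges this. (If $V\geq 0$ the operators are already self-adjoint and your argument goes through as written; the issue is precisely the sign-changing case, which Assumption~\ref{aspt_V_halfspace} allows.)

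The fix is the factorization the paper uses in the proof of Lemma~\ref{psiT_asymptotic_d}: writing $M_T=B_T(\cdot,0)-m_\mu(T)\FT^\dagger\FT$, for small $\lambda$ one has
\[
1-\lambda A_T^0=(1-\lambda V^{1/2}M_T|V|^{1/2})\Bigl(1-\lambda m_\mu(T)\,(1-\lambda V^{1/2}M_T|V|^{1/2})^{-1}V^{1/2}\FT^\dagger\FT|V|^{1/2}\Bigr),
\]
so that $\sup\sigma(A_{T_c^0}^0)=1/\lambda$ becomes the condition that $1$ lies in the spectrum of the genuinely self-adjoint operator $\lambda m_\mu(T_c^0)\,\FT|V|^{1/2}(1-\lambda V^{1/2}M_{T_c^0}|V|^{1/2})^{-1}V^{1/2}\FT^\dagger$ on $L^2(\BS^{d-1})$, which by \eqref{pf_psiT_asy.3} differs from $\lambda m_\mu(T_c^0)\mathcal V_\mu$ by $O(\lambda)$. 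Now the variational argument applies and yields $\lambda m_\mu(T_c^0)e_\mu=1+O(\lambda)$, i.e.\ the claimed bound. This is exactly how the cited references proceed.
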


Lemma~\ref{lea:A0_asy_d} does not only contain information about eigenvalues, but also about the corresponding eigenfunctions.
In the following we prove that the eigenstate corresponding to the maximal eigenvalue of $A_{T}^0$ converges to $\Psi$.
\begin{lemma}\label{psiT_asymptotic_d}
Let $\mu>0$, $d\in\{1,2,3\}$ and let $V$ satisfy Assumption~\ref{aspt_V_halfspace}.
\begin{enumerate}[(i)]
\item There is a $\lambda_0>0$ such that for $\lambda\leq \lambda_0$, the largest eigenvalue of $A_{\tcti(\lambda)}^0$ is non-degenerate.
\item Let $\lambda\leq \lambda_0$ and let $\Psi_{\tcti(\lambda)}$ be the eigenvector of $A_{\tcti(\lambda)}^0$ corresponding to the largest eigenvalue, normalized such that $\lVert \Psi_{\tcti(\lambda)}\rVert_2=\lVert \Psi \rVert_2$.
Pick the phase of $\Psi_{\tcti(\lambda)}$ such that $\langle \Psi_{\tcti(\lambda)},\Psi \rangle \geq 0$.
Then
\begin{equation}
\lim_{\lambda\to 0} \frac{1}{\lambda} \lVert \Psi-\Psi_{\tcti(\lambda)}\rVert_2^2<\infty
\end{equation}
\end{enumerate}
\end{lemma}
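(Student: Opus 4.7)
The plan is a standard self-adjoint perturbation argument built on the representation
\begin{equation*}
A_{T_c^0(\lambda)}^0 = m_\mu(T_c^0(\lambda))\, O_\mu + R_\lambda,
\end{equation*}
where Lemma~\ref{lea:A0_asy_d} gives a uniform bound $\lVert R_\lambda\rVert_{\rm HS}\leq C$ (hence also $\lVert R_\lambda\rVert_{\rm op}\leq C$), while Lemma~\ref{lea:Tc0_asy} together with the stated asymptotics of $m_\mu(T)$ yields $m_\mu(T_c^0(\lambda)) = (e_\mu\lambda)^{-1}+O(1)$ as $\lambda\to 0$. In particular $m_\mu(T_c^0(\lambda))\to\infty$, so dividing by $m_\mu(T_c^0(\lambda))$ gives operator-norm convergence $A_{T_c^0(\lambda)}^0/m_\mu(T_c^0(\lambda))\to O_\mu$.

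For part (i), recall that $O_\mu$ is isospectral to the compact self-adjoint $\mathcal{V}_\mu$, and by Assumption~\ref{aspt_V_halfspace}\eqref{aspt.5} the top eigenvalue $e_\mu$ is simple; since the rest of $\sigma(O_\mu)$ is a countable set accumulating at most at $0$, there is a gap $g>0$ separating $e_\mu$ from $\sigma(O_\mu)\setminus\{e_\mu\}$. By continuity of the Riesz projection onto the disc of radius $g/2$ around $e_\mu$, this projection is still rank one for $A_{T_c^0(\lambda)}^0/m_\mu(T_c^0(\lambda))$ when $\lambda$ is small enough, producing a simple eigenvalue within $g/2$ of $e_\mu$. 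By the Birman--Schwinger identity this eigenvalue equals $1/(\lambda m_\mu(T_c^0(\lambda)))\to e_\mu$, and it is the largest eigenvalue since the remainder of the spectrum stays within distance $g/2$ of $\sigma(O_\mu)\setminus\{e_\mu\}$. This defines $\lambda_0$.

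For part (ii), set $e_\lambda := 1/(\lambda m_\mu(T_c^0(\lambda)))$, so $e_\lambda\to e_\mu$, and rewrite the eigenvalue equation $A_{T_c^0(\lambda)}^0 \Psi_{T_c^0(\lambda)} = \lambda^{-1}\Psi_{T_c^0(\lambda)}$ as
\begin{equation*}
(O_\mu - e_\lambda)\Psi_{T_c^0(\lambda)} = -\frac{1}{m_\mu(T_c^0(\lambda))}\,R_\lambda\Psi_{T_c^0(\lambda)}.
\end{equation*}
Let $P^\perp$ be the orthogonal projection onto $\{\Psi\}^\perp$ in $L_{\rm s}^2(\BR^d)$. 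Since $\Psi$ is an eigenvector of the self-adjoint $O_\mu$, $[O_\mu,P^\perp]=0$ and $O_\mu|_{\ran P^\perp}$ has spectrum bounded away from $e_\mu$ by $g$; hence $(O_\mu-e_\lambda)|_{\ran P^\perp}$ is invertible with uniformly bounded inverse for small $\lambda$. Applying $P^\perp$ to the displayed identity and inverting gives $\lVert P^\perp \Psi_{T_c^0(\lambda)}\rVert \leq C\,m_\mu(T_c^0(\lambda))^{-1}\lVert \Psi_{T_c^0(\lambda)}\rVert = O(\lambda)$. Writing $\Psi_{T_c^0(\lambda)} = a\Psi + P^\perp\Psi_{T_c^0(\lambda)}$, the phase choice makes $a=\langle \Psi,\Psi_{T_c^0(\lambda)}\rangle/\lVert\Psi\rVert^2\geq 0$, and the norm constraint forces $a=\sqrt{1-\lVert P^\perp\Psi_{T_c^0(\lambda)}\rVert^2/\lVert\Psi\rVert^2}$, so $1-a = O(\lambda^2)$ and
\begin{equation*}
\lVert \Psi - \Psi_{T_c^0(\lambda)}\rVert^2 = (1-a)^2 \lVert\Psi\rVert^2 + \lVert P^\perp\Psi_{T_c^0(\lambda)}\rVert^2 = O(\lambda^2),
\end{equation*}
which in fact yields the stronger statement $\lambda^{-1}\lVert \Psi-\Psi_{T_c^0(\lambda)}\rVert^2\to 0$.

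The main obstacle is the bookkeeping in part (i): one must ensure that the simple eigenvalue produced by Riesz-projection continuity is the \emph{largest} eigenvalue of $A_{T_c^0(\lambda)}^0/m_\mu(T_c^0(\lambda))$, rather than some perturbed eigenvalue coming from elsewhere in $\sigma(O_\mu)$. This is handled by applying the same projection argument to any spectral window above $e_\mu-g/2$, which is empty for $O_\mu$ outside $\{e_\mu\}$ and therefore remains a single one-dimensional cluster after small-norm perturbation. Everything else is a direct application of the resolvent identity on $\ran P^\perp$, controlled quantitatively by the uniform Hilbert--Schmidt bound of Lemma~\ref{lea:A0_asy_d} and the rate $m_\mu(T_c^0(\lambda))^{-1}=O(\lambda)$ from Lemma~\ref{lea:Tc0_asy}.
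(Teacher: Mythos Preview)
Your perturbation argument is clean, but there is a genuine gap in part~(ii): you assert that $O_\mu$ is self-adjoint and hence $[O_\mu,P^\perp]=0$. Under Assumption~\ref{aspt_V_halfspace} alone, $V$ may change sign; with the convention $V^{1/2}=\sgn(V)\,|V|^{1/2}$ one has $O_\mu = V^{1/2}\FT^\dagger\FT|V|^{1/2}$ while $O_\mu^* = |V|^{1/2}\FT^\dagger\FT V^{1/2}$, and these differ unless $V\geq 0$. Without self-adjointness the eigenspace of $e_\mu$ is not orthogonal to an $O_\mu$-invariant complement, so one cannot pass from $P^\perp(O_\mu-e_\lambda)\Psi_{T_c^0(\lambda)}$ to $(O_\mu-e_\lambda)P^\perp\Psi_{T_c^0(\lambda)}$, and the inversion step fails as written. (Part~(i) is less affected: the Riesz-projection argument does not require self-adjointness, and $A_T^0$, $O_\mu$ still have real spectrum since they are isospectral to the self-adjoint operators $B_T^{1/2}VB_T^{1/2}$ and $\mathcal{V}_\mu$.)

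The paper avoids this by the factorization $1-\lambda A_T^0 = (1-\lambda V^{1/2}M_T|V|^{1/2})(1-\cdots)$, which transfers the eigenvector analysis to the genuinely self-adjoint $\mathcal{V}_\mu=\FT V\FT^\dagger$ on $L^2(\BS^{d-1})$, where orthogonal-complement arguments are legitimate for any sign of $V$. Your route can be repaired by replacing $P^\perp$ with the (oblique) Riesz projection $I-\Pi$ for $O_\mu$ at $e_\mu$: then $[O_\mu,\Pi]=0$ holds automatically and $(O_\mu-e_\lambda)|_{\ran(I-\Pi)}$ is uniformly invertible, but the final norm computation needs an extra step because $\Pi\Psi_{T_c^0(\lambda)}$ and $(I-\Pi)\Psi_{T_c^0(\lambda)}$ are no longer orthogonal. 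When $V\geq 0$ (as assumed in Theorem~\ref{thm3}) your argument is correct as written and in fact yields the sharper bound $\lVert\Psi-\Psi_{T_c^0(\lambda)}\rVert_2^2=O(\lambda^2)$.
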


\begin{rem}\label{gs_symm_lsmall}
Let $\lambda_0$ be as in Lemma~\ref{psiT_asymptotic_d}. %and let $0<\lambda\leq \lambda_0$.
By the Birman-Schwinger principle, the multiplicity of the largest eigenvalue of $A_{\tcti(\lambda)}^0$ equals the multiplicity of the ground state of $H^0_{\tcti(\lambda)}$.
Hence, $H^0_{\tcti(\lambda)}$ has a unique ground state for $\lambda\leq \lambda_0$.
For $d\geq 2$, since $H^0_{\tcti(\lambda)}$ is rotation invariant, uniqueness of the ground state implies that the ground state is radial.
For $d=1$, since $\Psi$ is even, the second part of Lemma~\ref{psiT_asymptotic_d} implies that $\Psi_{\tcti(\lambda)}$ is even for small enough $\lambda$.
Hence, also the ground state of $H^0_{\tcti(\lambda)}$ is even for small $\lambda$.
\end{rem}
It follows that for $\lambda\leq \lambda_0$ we have $\tcf(\lambda)=\tcti(\lambda)$ as discussed in Remark~\ref{ti_reduction_fix}.

For values of $\lambda$ such that the operator $H_{\tcti(\lambda)}^0$ has a non-degenerate eigenvalue at the bottom of its spectrum let $\Phi_{\lambda}$ be the corresponding eigenfunction, with normalization and phase chosen such that $\Psi_{\tcti(\lambda)}=V^{1/2}\Phi_{\lambda}$.
The following Lemma with regularity and convergence properties of $\Phi_{\lambda}$ will be useful.
\begin{lemma}\label{phi_infinity_bd}
Let $d\in \{1,2,3\}$, $\mu>0$ and let $V$ satisfy Assumption~\ref{aspt_V_halfspace}.
For all $0<\lambda<\infty$ such that $H_{\tcti(\lambda)}^0$ has a non-degenerate ground state $ \Phi_\lambda $, we have
\begin{enumerate}[(i)]
\item $ \vert \widehat{\Phi_\lambda}(p)\vert\leq \frac{C(\lambda)}{1+p^2} \vert \widehat{V \Phi}_\lambda (p)\vert\leq \frac{C(\lambda) \lVert V\rVert_1^{1/2} \lVert\Psi\rVert_2 }{1+p^2} $ for some number $C(\lambda)$ depending on $\lambda$,\label{phi_infty.1}
\item $p \mapsto \widehat\Phi_\lambda(p)$ is continuous,\label{phi_infty.2}
\item $\lVert \widehat \Phi_\lambda \rVert_1<\infty$ and $\lVert \Phi_\lambda \rVert_\infty<\infty$.\label{phi_infty.3}
\setcounter{counter}{\value{enumi}}
\end{enumerate}
Furthermore, in the limit $\lambda\to 0$
\begin{enumerate}[(i)]
\setcounter{enumi}{\value{counter}}
\item  $\lVert \widehat \Phi_\lambda \chi_{p^2>2\mu}\rVert_1=O(\lambda)$,\label{phi_infty.4}
\item $\lVert \widehat \Phi_\lambda \rVert_1=O(1)$,\label{phi_infty.5}
\item and in particular $\lVert \Phi_\lambda \rVert_\infty=O(1)$.\label{phi_infty.6}
\end{enumerate}
\end{lemma}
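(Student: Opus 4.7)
All six parts flow from the momentum-space eigenvalue equation
\[
\widehat{\Phi_\lambda}(p) \;=\; \lambda\, B_{T_c^0(\lambda)}(p,0)\, \widehat{V\Phi_\lambda}(p),
\]
obtained by Fourier-transforming $H_{T_c^0(\lambda)}^0 \Phi_\lambda = 0$ and observing that the kinetic part of $H_T^0$ acts as multiplication by $1/B_T(p,0)$ in momentum space. Parts (i) and (ii) are then straightforward: on $\{p^2 > 2\mu\}$ the bound~\eqref{BT_bound} gives the $T$-uniform estimate $B_T(p,0) \leq C(\mu)/(1+p^2)$, while on the compact complement $B_T(p,0) \leq 1/(2T_c^0(\lambda))$ and $1/(1+p^2)$ is bounded below. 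Combining these yields the first inequality of (i), with $C(\lambda)$ collecting the $T_c^0(\lambda)$-dependence. The second inequality follows from $|\widehat{V\Phi_\lambda}(p)| \leq (2\pi)^{-d/2}\|V\Phi_\lambda\|_1 \leq (2\pi)^{-d/2}\|V\|_1^{1/2}\|\Psi\|_2$ by Cauchy--Schwarz and the normalization $\|\Psi_{T_c^0(\lambda)}\|_2 = \|\Psi\|_2$. For (ii), $V\Phi_\lambda \in L^1$ together with Riemann--Lebesgue gives continuity of $\widehat{V\Phi_\lambda}$, and $B_T(\cdot,0)$ is itself continuous once one checks that the apparent $0/0$ at $p^2 = \mu$ is removable.

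For (iii), the $L^1$ bound on $\widehat{\Phi_\lambda}$ is immediate in $d=1$ since $1/(1+p^2) \in L^1(\BR)$. For $d=2,3$, I split the integral at $p^2 = 2\mu$: the low-frequency part is bounded because the region is compact and $\widehat{\Phi_\lambda}$ is continuous, while the high-frequency part is handled via Cauchy--Schwarz using $\chi_{p^2>2\mu}/(1+p^2) \in L^2(\BR^d)$ (which holds precisely for $d \leq 3$) and $\widehat{V\Phi_\lambda} \in L^2$. The latter follows from $V\Phi_\lambda \in L^2$, obtained via H\"older combined with the pointwise bound from~(i), possibly after a bootstrap step to upgrade integrability. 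Then $\|\Phi_\lambda\|_\infty \leq (2\pi)^{-d/2}\|\widehat{\Phi_\lambda}\|_1 < \infty$ closes (iii).

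The asymptotic statements (iv)--(vi) rely on the $T$-independent bound $B_T(p,0)\chi_{p^2>2\mu} \leq C(\mu)/(1+p^2)$ and on $\lambda\, m_\mu(T_c^0(\lambda)) = O(1)$ from Lemma~\ref{lea:Tc0_asy}. For (iv), the equation gives
\[
\|\widehat{\Phi_\lambda}\chi_{p^2>2\mu}\|_1 \leq \lambda\, C(\mu)\int_{p^2>2\mu}\frac{|\widehat{V\Phi_\lambda}(p)|}{1+p^2}\,\dd p,
\]
and the remaining integral must be bounded uniformly in $\lambda$; this I obtain via the convolution representation $\widehat{V\Phi_\lambda} = (2\pi)^{-d/2}\widehat{V^{1/2}} * \widehat{\Psi_\lambda}$ combined with the fixed normalization $\|\Psi_\lambda\|_2 = \|\Psi\|_2$ and the integrability of $\widehat{V^{1/2}}$ furnished by Assumption~\ref{aspt_V_halfspace}~(i). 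For (v), I split $\|\widehat{\Phi_\lambda}\|_1$ into low- and high-frequency contributions; the latter is $O(\lambda)$ by~(iv), while the former satisfies
\[
\|\widehat{\Phi_\lambda}\chi_{p^2<2\mu}\|_1 \leq \lambda\, |\BS^{d-1}|\, m_\mu(T_c^0(\lambda))\,\|\widehat{V\Phi_\lambda}\|_\infty = O(1),
\]
using Lemma~\ref{lea:Tc0_asy} together with the uniform-in-$\lambda$ bound on $\|\widehat{V\Phi_\lambda}\|_\infty$ already established for~(i). Finally, (vi) is immediate from (v) by Fourier inversion.

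The main obstacle I anticipate is the uniform-in-$\lambda$ control of $\widehat{V\Phi_\lambda}$ required in the high-frequency integral for (iv) in dimensions $d=2,3$, since the cheap bound $\|\widehat{V\Phi_\lambda}\|_\infty = O(1)$ does not suffice to make $\widehat{V\Phi_\lambda}/(1+p^2)$ integrable at infinity. This is precisely the step where Assumption~\ref{aspt_V_halfspace}~(i) with $p_d > d/2$ is used beyond form-boundedness: it provides enough integrability for $\widehat{V^{1/2}}$ to make the convolution representation yield adequate decay of $\widehat{V\Phi_\lambda}$ on the high-frequency region.
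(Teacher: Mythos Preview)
Your treatment of (i), (ii), (v), and (vi) matches the paper's argument essentially verbatim, and for $d=1,2$ your approach to (iii) and (iv) via H\"older is what the paper does. The genuine gap is in (iii) and (iv) for $d=3$, and you correctly flag it in your last paragraph without actually closing it.

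The specific claims that fail are these. First, ``the integrability of $\widehat{V^{1/2}}$ furnished by Assumption~\ref{aspt_V_halfspace}(i)'' is not justified: $V\in L^1\cap L^{p_d}$ gives $V^{1/2}\in L^2\cap L^{2p_d}$ with $2p_d>d\geq 2$, and Hausdorff--Young only yields $\widehat{V^{1/2}}\in L^2$, not $L^1$. So the convolution representation does not give you the decay you need. Second, your Cauchy--Schwarz step for (iii) in $d=3$ requires $V\Phi_\lambda\in L^2$, i.e.\ essentially $V\in L^3$, which is not assumed. More structurally: from Lemma~\ref{lp_prop}(iii) with $t\leq 3/2$ one only gets $\widehat{V^{1/2}\Psi_{T_c^0(\lambda)}}\in L^s$ for $s\geq 6$, while $\frac{1}{1+|\cdot|^2}\in L^p(\BR^3)$ only for $p>3/2$; the H\"older condition $1=1/p+1/s$ would force $p\leq 6/5$, so the direct pairing cannot reach $L^1$.

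The paper's resolution is a genuine bootstrap that you have not supplied. One writes, using the eigenvalue equation repeatedly,
\[
\widehat{\Phi_\lambda}\chi_{p^2>2\mu}=\lambda^n\bigl(\chi_{p^2>2\mu}BFVF^\dagger\bigr)^n\widehat{\Phi_\lambda}\chi_{p^2>2\mu}+\sum_{j=1}^n\lambda^j\bigl(\chi_{p^2>2\mu}BFVF^\dagger\bigr)^j\widehat{\Phi_\lambda}\chi_{p^2<2\mu},
\]
and shows (Lemma~\ref{BVpsi}) that each application of $\chi_{p^2>2\mu}BFVF^\dagger$ maps $L^r$ into $L^{r/(1+c)}$ for a fixed $c>0$ depending only on the $L^{3/2+\eps}$ integrability of $V$. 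Starting from the $L^{7/5}$ control that H\"older does give, finitely many iterations reach $L^1$, and the low-frequency tail terms are already $L^1$ by your (v)-type bound. This iteration is the missing idea; once you have it, both the fixed-$\lambda$ finiteness in (iii) and the $O(\lambda)$ in (iv) follow simultaneously for $d=3$.
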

In three dimensions, because of the additional condition \eqref{3d_cond}, we need to compute the limit of $\Phi_\lambda$.
\begin{lemma}\label{phi_infinity_conv}
 Let $d=3$, $\mu>0$ and let $V$ satisfy Assumption~\ref{aspt_V_halfspace}.  Then $\lVert \Phi_\lambda -j_3 \rVert_\infty=O(\lambda^{1/2})$ as $\lambda\to0$.
\end{lemma}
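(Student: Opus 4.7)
The plan is to use the Birman--Schwinger identity $\widehat{\Phi_\lambda}(p) = \lambda B_{T}(p,0)\widehat{V\Phi_\lambda}(p)$ with $T=T_c^0(\lambda)$ (which follows from $H_T^0\Phi_\lambda = 0$) to expand $\Phi_\lambda$ in Fourier, isolate the sphere-concentrated piece that recovers $j_3$, and control the error uniformly in $r$. By Lemma~\ref{phi_infinity_bd}\eqref{phi_infty.4}, the contribution of $|p|>\sqrt{2\mu}$ is $O(\lambda)$ in $L^\infty_r$, so I can restrict to $|p|<\sqrt{2\mu}$. Since $\Phi_\lambda$ and $V$ are radial (Remark~\ref{gs_symm_lsmall}, applicable for $d=3$), so is $\widehat{V\Phi_\lambda}$, and I would decompose $\widehat{V\Phi_\lambda}(p) = a_\lambda + \eta_\lambda(p)$, where $a_\lambda$ is the value on the sphere $|p|=\sqrt\mu$ and $\eta_\lambda$ vanishes there.

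Two key estimates drive the argument. First, $|a_\lambda - e_\mu| = O(\lambda^{1/2})$, obtained from Cauchy--Schwarz and Lemma~\ref{psiT_asymptotic_d}: $\lVert V(\Phi_\lambda - j_3)\rVert_1 \leq \lVert V\rVert_1^{1/2}\lVert V^{1/2}(\Phi_\lambda-j_3)\rVert_2 = O(\lambda^{1/2})$, combined with $\widehat{Vj_3}|_{|p|=\sqrt\mu} = e_\mu$ which follows from \eqref{emu}. Second, $|\eta_\lambda(p)| \leq C||p|-\sqrt\mu|$ uniformly in $\lambda$, because $\lVert\nabla\widehat{V\Phi_\lambda}\rVert_\infty \leq (2\pi)^{-3/2}\lVert|\cdot|V\rVert_1\lVert\Phi_\lambda\rVert_\infty$ is uniformly bounded by Assumption~\ref{aspt_V_halfspace}\eqref{aspt.3} and Lemma~\ref{phi_infinity_bd}\eqref{phi_infty.6}. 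The $\eta_\lambda$--contribution to the Fourier integral is then $O(\lambda)$ uniformly in $r$, since the pointwise identity $B_T(p,0)||p|-\sqrt\mu| = |\tanh((p^2-\mu)/(2T))|/(|p|+\sqrt\mu) \leq 1/\sqrt\mu$ keeps the integrand uniformly bounded on a bounded domain.

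For the principal term $\lambda a_\lambda I_T(r)$ with $I_T(r) := \int_{|p|<\sqrt{2\mu}}\frac{e^{ip\cdot r}}{(2\pi)^{3/2}}B_T(p,0)\,dp$, the radial formula $I_T(r) = \frac{4\pi}{(2\pi)^{3/2}|r|}\int_0^{\sqrt{2\mu}} t\sin(|r|t)B_T(t,0)\,dt$ together with the change of variables $u = t^2-\mu$ yields $\int_0^{\sqrt{2\mu}} t B_T(t,0)\,dt = \int_0^{\mu/(2T)}\frac{\tanh v}{v}\,dv = \log(\mu/(2T)) + C_0 + o(1)$ as $T\to 0$. Splitting $\sin(|r|t) = \sin(\sqrt\mu|r|) + [\sin(|r|t)-\sin(\sqrt\mu|r|)]$ and combining $|\sin(|r|t)-\sin(\sqrt\mu|r|)| \leq |r||t-\sqrt\mu|$ with the uniform bound $tB_T(t,0)|t-\sqrt\mu| = t|\tanh((t^2-\mu)/(2T))|/(t+\sqrt\mu) \leq 1$ produces a residual $O(|r|)$ uniformly in $T$. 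Lemma~\ref{lea:Tc0_asy} then supplies $\lambda e_\mu\log(\mu/(2T)) = 1/\sqrt\mu+O(\lambda)$, whence $\lambda e_\mu I_T(r) = j_3(r)+O(\lambda)$ uniformly in $r$, and $\lambda(a_\lambda-e_\mu)I_T(r) = O(\lambda^{1/2})\cdot|\lambda I_T(r)| = O(\lambda^{1/2})$ because $\lambda I_T(r) = j_3(r)/e_\mu + O(\lambda)$ is uniformly bounded. Summing the contributions yields $\lVert\Phi_\lambda - j_3\rVert_\infty = O(\lambda^{1/2})$, with the dominant error coming from $|a_\lambda - e_\mu|$. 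The main obstacle is obtaining bounds uniform in both $r$ and $T$, since the peak of $B_T(\cdot,0)$ at $|p|=\sqrt\mu$ sharpens as $\lambda\to 0$ while $\sin(|r|t)$ oscillates at scale $1/|r|$; this interplay is cleanly resolved by the uniform estimate $tB_T(t,0)|t-\sqrt\mu|\leq 1$ paired with the Lipschitz increment of sine.
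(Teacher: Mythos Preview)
Your proof is correct and follows essentially the same strategy as the paper: split off $|p|>\sqrt{2\mu}$ via Lemma~\ref{phi_infinity_bd}\eqref{phi_infty.4}, replace quantities by their values on the sphere $|p|=\sqrt\mu$, invoke Lemma~\ref{psiT_asymptotic_d} for the $O(\lambda^{1/2})$ error, and identify the leading term through Lemma~\ref{lea:Tc0_asy}. The only minor difference is the order of the decomposition---the paper first replaces the phase $e^{ip\cdot(r-r')}$ by $e^{i\sqrt\mu\,p/|p|\cdot(r-r')}$ (controlled via $|\sin x/x-\sin y/y|\le C|x-y|/|x+y|$) and then substitutes $\Psi_{T_c^0(\lambda)}\to\Psi$, whereas you first freeze $\widehat{V\Phi_\lambda}$ at its sphere value using the Lipschitz bound (which additionally invokes $|\cdot|V\in L^1$ and Lemma~\ref{phi_infinity_bd}\eqref{phi_infty.6}) and then split $\sin(|r|t)$; both routes lead to the same conclusion.
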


\subsection{Proof of Lemma~\ref{psiT_asymptotic_d}}
\begin{proof}[Proof of Lemma~\ref{psiT_asymptotic_d}]
{\it(i)}
The proof uses ideas from \cite[Proof of Thm 1]{hainzl_critical_2008}.
Let $M_T=B_T(\cdot,0)-m_\mu(T) \FT^\dagger \FT$.
By Lemma~\ref{lea:A0_asy_d}, for $\lambda$ small enough the operator $1-\lambda V^{1/2} M_T \vert V\vert^{1/2}$ is invertible for all $T$.
Then we can write
\begin{equation}
1-\lambda A_{T}^0 = (1-\lambda V^{1/2} M_T \vert V\vert^{1/2})\Bigg(1-\frac{\lambda m_\mu(T) }{1-\lambda V^{1/2} M_T \vert V\vert^{1/2} }V^{1/2}  \FT^\dagger \FT \vert V\vert^{1/2} \Bigg)
\end{equation}
Recall that the largest eigenvalue of $A_{\tcti(\lambda)}^0$ equals $1/\lambda$.
Hence, 1 is an eigenvalue of
\begin{equation}\label{pf_psiT_asy.1}
\frac{\lambda m_\mu(\tcti(\lambda))  }{1-\lambda V^{1/2} M_{\tcti(\lambda)} \vert V\vert^{1/2} }V^{1/2}  \FT^\dagger \FT \vert V\vert^{1/2}
\end{equation}
and it has the same multiplicity as the eigenvalue $1/\lambda$ of $A_{\tcti(\lambda)}^0$.
This operator is isospectral to the self-adjoint operator
\begin{equation}\label{pf_psiT_asy.2}
\FT \vert V\vert^{1/2}\frac{\lambda m_\mu({\tcti(\lambda)})  }{1-\lambda V^{1/2} M_{\tcti(\lambda)} \vert V\vert^{1/2} }V^{1/2}  \FT^\dagger .
\end{equation}
Note that the operator difference
\begin{equation}\label{pf_psiT_asy.3}
\FT \vert V\vert^{1/2}\frac{1 }{1-\lambda V^{1/2} M_{\tcti(\lambda)} \vert V\vert^{1/2} }V^{1/2}  \FT^\dagger - \mathcal{V}_\mu=\lambda  \FT \vert V\vert^{1/2}\frac{ V^{1/2} M_{\tcti(\lambda)} \vert V\vert^{1/2}}{1-\lambda V^{1/2} M_{\tcti(\lambda)} \vert V\vert^{1/2} }V^{1/2}  \FT^\dagger
\end{equation}
has operator norm of order $O(\lambda)$ according to Lemma~\ref{lea:A0_asy_d}. % and \ref{lea:Tc0_asy}.
By assumption, the largest eigenvalue of $\mathcal{V}_\mu$ has multiplicity one, and $\lambda m_\mu({\tcti(\lambda)}) e_\mu =1+O(\lambda)$ by Lemma~\ref{lea:Tc0_asy}.
Let $\alpha<1$ be the ratio between the second largest and the largest eigenvalue of $\mathcal{V}_\mu$.
The second largest eigenvalue of $\lambda m_\mu({\tcti(\lambda)}) \mathcal{V}_\mu$ is of order $\alpha+O(\lambda)$.
Therefore, the largest eigenvalue of \eqref{pf_psiT_asy.2} must have multiplicity 1 for small enough $\lambda$, and it is of order $1+O(\lambda)$, whereas the rest of the spectrum lies below $\alpha+O(\lambda)$.
Hence, 1 is the maximal eigenvalue of \eqref{pf_psiT_asy.2} and it has multiplicity 1 for small enough $\lambda$.

{\it(ii)}
Note that $\Psi_{\tcti(\lambda)}$ is an eigenvector of \eqref{pf_psiT_asy.1} with eigenvalue 1.
Furthermore, let $\psi_{\lambda}$ be a normalized eigenvector of \eqref{pf_psiT_asy.2} with eigenvalue 1.
Then
\begin{equation}
\tilde \Psi_{\tcti(\lambda)}=\frac{\lVert \Psi \rVert_2}{\lVert \frac{1 }{(1-\lambda V^{1/2} M_{\tcti(\lambda)} \vert V\vert^{1/2}) }V^{1/2}  \FT^\dagger \psi_{\lambda}\rVert_2}\frac{1 }{1-\lambda V^{1/2} M_{\tcti(\lambda)} \vert V\vert^{1/2} }V^{1/2}  \FT^\dagger \psi_{\lambda}
\end{equation}
agrees with $\Psi_{\tcti(\lambda)}$ up to a constant phase.
Since $\lVert \Psi_{\tcti(\lambda)}-\Psi\rVert^2 \leq \lVert \tilde \Psi_{\tcti(\lambda)}-\Psi\rVert^2$, it suffices to prove that the latter is of order $O(\lambda)$ for a suitable choice of phase for $\psi_{\lambda}$.

Let $\psi(p)=\frac{1}{\vert \BS^{d-1}\vert^{1/2}}$.
This is the eigenfunction of $\mathcal{V}_\mu$ corresponding to the maximal eigenvalue, and $\Psi=V^{1/2}  \FT^\dagger \psi$.
In particular, for all $\phi \in L^2(\BS^{d-1})$,
\begin{equation}\label{pf_psiT_asy.5}
\langle \phi, \mathcal{V}_\mu \phi \rangle \leq e_\mu |\langle \phi,  \psi \rangle|^2+\alpha e_\mu (\lVert \phi \rVert_2^2-|\langle \phi,  \psi \rangle|^2)
\end{equation}
We choose the phase of $\psi_{\lambda}$ such that $\langle \psi_{\lambda}, \psi \rangle\geq 0$.
We shall prove that $\lVert \psi_{\lambda}-\psi\rVert_2^2=O(\lambda)$.
We have by \eqref{pf_psiT_asy.3} and \eqref{pf_psiT_asy.5}
\begin{multline}
O(\lambda)=\langle \psi_{\lambda},(1-\lambda m_\mu({\tcti(\lambda)}) \mathcal{V}_\mu)\psi_{\lambda}\rangle \\
\geq 1-\lambda m_\mu({\tcti(\lambda)}) e_\mu \vert \langle \psi_\lambda,\psi\rangle\vert^2-\lambda m_\mu({\tcti(\lambda)})\alpha  e_\mu (1-\vert \langle \psi_\lambda,\psi\rangle\vert^2 )\\
=O(\lambda)+(1-\alpha) (1-\vert \langle \psi_\lambda,\psi\rangle\vert^2 )
\end{multline}
where we used Lemma~\ref{lea:Tc0_asy} for the last equality.
In particular, $1-\vert \langle \psi_\lambda,\psi\rangle\vert^2=O(\lambda)$.
Hence,
\begin{equation}\label{pf_psiT_asy.4}
\lVert \psi-\psi_{\lambda} \rVert_2^2 =2 (1-\langle \psi_{\lambda}, \psi \rangle)= 2 \frac{1-\langle \psi_{\lambda}, \psi \rangle^2}{1+\langle \psi_{\lambda}, \psi \rangle} =O(\lambda).
\end{equation}
Using Lemma~\ref{lea:A0_asy_d} and that $V^{1/2}  \FT^\dagger: L^2(\BS^{d-1})\to L^2(\BR^d)$ is a bounded operator, and subsequently \eqref{pf_psiT_asy.4} we obtain
\begin{equation}
\frac{1 }{1-\lambda V^{1/2} M_{\tcti(\lambda)} \vert V\vert^{1/2} }V^{1/2}  \FT^\dagger \psi_\lambda=V^{1/2}  \FT^\dagger \psi_{\lambda}+O(\lambda)=V^{1/2}  \FT^\dagger \psi+O(\lambda^{1/2}),
\end{equation}
where $O(\lambda)$ here denotes a vector with $L^2$-norm of order $O(\lambda)$.
Furthermore,
\begin{multline}
\left \vert \lVert (1-\lambda V^{1/2} M_{\tcti(\lambda)} \vert V\vert^{1/2})^{-1} V^{1/2}  \FT^\dagger \psi_{\lambda}\rVert_2-\lVert V^{1/2}  \FT^\dagger \psi\rVert_2 \right \vert \\
\leq \lVert (1-\lambda V^{1/2} M_{\tcti(\lambda)} \vert V\vert^{1/2})^{-1}V^{1/2}  \FT^\dagger\psi_{\lambda}-V^{1/2}  \FT^\dagger \psi \rVert_2 =O(\lambda^{1/2}).
\end{multline}
In total, we have
\begin{multline}
\tilde \Psi_{\tcti(\lambda)}= \frac{\lVert \Psi \rVert_2}{\lVert V^{1/2}  \FT^\dagger \psi\rVert_2+O(\lambda^{1/2})}(V^{1/2}  \FT^\dagger \psi+O(\lambda^{1/2}))=\frac{\lVert \Psi \rVert_2}{\lVert V^{1/2}  \FT^\dagger \psi\rVert_2}V^{1/2}  \FT^\dagger \psi+O(\lambda^{1/2})\\
=\Psi+O(\lambda^{1/2})
\end{multline}
\end{proof}

\subsection{Regularity and convergence of $\Phi_\lambda$}
In this section, we prove Lemma~\ref{phi_infinity_bd} and Lemma~\ref{phi_infinity_conv}.
The following standard results (see e.g. \cite[Sections 11.3, 5.1]{lieb_analysis_2001}) will be helpful.
\begin{lemma}\label{lp_prop}
\begin{enumerate}[(i)]
\item Let $V \in L^{p}(\BR^d)$, where $p=1$ for $d=1$, $p>1$ for $d=2$ and $p=3/2$ for $d=3$. Let $\psi \in H^1(\BR^d)$. Then $V^{1/2} \psi \in L^2(\BR^d)$. \label{lp_prop.1}
\item If $V\in L^1(\BR^d)$ and $\psi \in L^2(\BR^d)$, then $V^{1/2} \psi \in L^1(\BR^d)$ and hence $\widehat{V^{1/2} \psi}$ is continuous and bounded. \label{lp_prop.2} %with $\lVert \widehat{V^{1/2} \psi} \rVert_\infty \leq \lVert V \rVert^{1/2} \lVert \psi \rVert_2$.
\item For $1\leq t$, $\lVert \widehat{V^{1/2} \psi} \rVert_s \leq C \lVert V \rVert_t^{1/2} \lVert \psi \rVert_2$, where $s=2t/(t-1)$ and $C$ is some constant independent of $\psi$ and $V$. \label{lp_prop.3}
\item Let $f$ be a radial, measurable function on $\BR^3$ and $p\geq 1$. Then there is a constant $C$ independent of $f$ such that $\sup_{p_1 \in \BR} \lVert f(p_1, \cdot) \lVert_{L^p( \BR^{2})} =  \lVert f(0, \cdot) \lVert_{L^p( \BR^{2})}\leq C(\lVert f \rVert_{L^p(\BR^3)}^p+\lVert f \rVert_{L^\infty(\BR^3)}^p)^{1/p}$.\label{lp_prop.4}
\end{enumerate}
\end{lemma}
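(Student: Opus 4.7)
The plan is to prove the four parts independently, since they are standard but have slightly different flavors. Throughout, write $p' = p/(p-1)$ for the Hölder conjugate.

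For \emph{(i)}, I would start from the identity $\lVert V^{1/2}\psi\rVert_2^2 = \int_{\BR^d} |V(x)| |\psi(x)|^2 \dd x$ and apply Hölder's inequality with exponents $p$ and $p'$ to get $\lVert V^{1/2}\psi\rVert_2^2 \leq \lVert V\rVert_p \lVert \psi\rVert_{2p'}^2$. The conclusion then follows by invoking the Sobolev embeddings: for $d=1$ and $p=1$ one has $2p'=\infty$ and $H^1(\BR)\hookrightarrow L^\infty(\BR)$; for $d=2$ and $p>1$ one has $2p'<\infty$ and $H^1(\BR^2)\hookrightarrow L^q(\BR^2)$ for all $q\in[2,\infty)$; for $d=3$ and $p=3/2$ one has $2p'=6$ and $H^1(\BR^3)\hookrightarrow L^6(\BR^3)$.

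Part \emph{(ii)} is immediate from Cauchy--Schwarz: $\lVert V^{1/2}\psi\rVert_1 \leq \lVert V^{1/2}\rVert_2 \lVert \psi\rVert_2 = \lVert V\rVert_1^{1/2}\lVert \psi\rVert_2$. Then the Riemann--Lebesgue lemma, together with the standard fact that the Fourier transform maps $L^1$ into $C_0$ continuously, yields continuity and boundedness of $\widehat{V^{1/2}\psi}$. For part \emph{(iii)}, I would combine Hausdorff--Young with Hölder. Concretely, apply Hausdorff--Young $\lVert \widehat f\rVert_s\leq C\lVert f\rVert_{s'}$, which is valid for $s\in[2,\infty]$, to $f=V^{1/2}\psi$; note that $s=2t/(t-1)$ corresponds to $s'=2t/(t+1)$. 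Then Hölder with exponents $2t$ and $2$ gives $\lVert V^{1/2}\psi\rVert_{s'}\leq \lVert V^{1/2}\rVert_{2t}\lVert\psi\rVert_2=\lVert V\rVert_t^{1/2}\lVert\psi\rVert_2$, which chains into the desired estimate.

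For \emph{(iv)}, I would use radiality to write $f(p)=g(|p|)$ and pass to radial coordinates in the $(p_2,p_3)$-slice. The substitution $u=\sqrt{p_1^2+r^2}$ (so $r\dd r=u\dd u$) yields
\begin{equation}
\lVert f(p_1,\cdot)\rVert_{L^p(\BR^2)}^p = 2\pi \int_{|p_1|}^\infty |g(u)|^p\, u\,\dd u,
\end{equation}
which is manifestly maximized at $p_1=0$, giving the equality claim. For the bound, split the integral at $u=1$: on $[1,\infty)$ use $u\leq u^2$ to compare with $\lVert f\rVert_{L^p(\BR^3)}^p = 4\pi\int_0^\infty|g(u)|^p u^2\dd u$, and on $[0,1]$ bound $|g(u)|\leq \lVert f\rVert_\infty$ and integrate. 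The main (mild) obstacle is just keeping track of constants in \emph{(iv)}; otherwise each part is a direct application of a classical inequality, which is why the authors cite \cite{lieb_analysis_2001} rather than writing the argument out.
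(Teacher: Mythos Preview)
Your proposal is correct and matches the paper's proof essentially step for step: the paper also cites Sobolev embeddings (via \cite{lieb_analysis_2001}) for (i)--(ii), uses Hausdorff--Young followed by H\"older for (iii), and for (iv) performs the same radial substitution $s=\sqrt{p_1^2+t^2}$ and the same splitting at $s=1$ with $s\leq \max\{1,s^2\}$. There is nothing to add.
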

\begin{proof}
For \eqref{lp_prop.1} and \eqref{lp_prop.2} see e.g.~\cite[Sections 11.3, 5.1]{lieb_analysis_2001}.
For \eqref{lp_prop.3} let $s\geq 2$.
Applying the Hausdorff-Young and H\"older inequality gives
\begin{equation}
\lVert \widehat{V^{1/2} \psi} \rVert_s \leq C \lVert V^{1/2} \psi \rVert_p \leq C \lVert V \rVert_t^{1/2} \lVert \psi \rVert_2,
\end{equation}
where $1=1/p+1/s$ and $1=p/2t+p/2$.
Hence, $s=2t/(t-1)$.

For \eqref{lp_prop.4} we write
\begin{multline}
\lVert f(p_1, \cdot) \lVert_{L^p( \BR^{2})}^p = 2\pi \int_0^\infty |f(\sqrt{p_1^2+t^2})|^p t \dd t =2\pi \int_{|p_1|}^\infty |f(s)|^p s\dd s  \leq \lVert f(0, \cdot) \lVert_{L^p( \BR^{2})}^p \\
\leq 2\pi \int_{0}^1 |f(s)|^p \dd s  +2\pi \int_{0}^\infty |f(s)|^p s^2 \dd s   \leq 2\pi \lVert f \rVert_\infty^p+\frac{1}{2}\lVert f \rVert^p_p,
\end{multline}
where in the second step we substituted $s=\sqrt{p_1^2+t^2}$ and in the third step we used $s\leq \max\{1,s^2\}$.
\end{proof}

\begin{proof}[Proof of Lemma~\ref{phi_infinity_bd}]
The eigenvalue equation $H_{\tcti(\lambda)}^0 \Phi_\lambda =0$ implies that
\begin{equation}\label{eval_eq}
\widehat \Phi_{\lambda}(p)=\lambda B_{\tcti(\lambda)}(p,0)\widehat{V\Phi_\lambda}(p).
\end{equation}
Part \eqref{phi_infty.1} follows with Lemma~\ref{KT-Laplace} and \ref{lp_prop}\eqref{lp_prop.3} and the normalization $\lVert V^{1/2}\Phi_\lambda\rVert_2=\lVert \Psi\rVert_2$.
For part \eqref{phi_infty.2}, note that $p \mapsto B_{T}(p,0)$ is continuous for $T>0$.
Since $\Phi_\lambda \in H^1(\BR^d)$, continuity of $\widehat{V\Phi_\lambda}$ follows by Lemma~\ref{lp_prop}\eqref{lp_prop.1} and \eqref{lp_prop.2}.

Note that $\lVert \Phi_\lambda \rVert_\infty \leq (2\pi)^{-d/2} \lVert \widehat{\Phi_\lambda} \rVert_1 =(2\pi)^{-d/2} (\lVert \widehat{\Phi_\lambda} \chi_{p^2<2\mu}\rVert_1 + \lVert \widehat{\Phi_\lambda} \chi_{p^2>2\mu}\rVert_1 ) $.
In particular, the second part of \eqref{phi_infty.3} and \eqref{phi_infty.6} follow from the first part of \eqref{phi_infty.3} and \eqref{phi_infty.5}, respectively.
Using \eqref{eval_eq} and $ \lVert \Psi_{\tcti(\lambda)}\rVert_2= \lVert \Psi\rVert_2$ we obtain
\begin{equation}\label{pf_wl_4}
\lVert \widehat{\Phi_\lambda} \chi_{p^2<2\mu}\rVert_1 \leq \lambda m_\mu(\tcti(\lambda))|\BS^{d-1}| \lVert \widehat{V^{1/2} \Psi_{\tcti(\lambda)}}\rVert_\infty\leq \lambda m_\mu(\tcti(\lambda))|\BS^{d-1}| \lVert V\rVert_1^{1/2} \lVert \Psi\rVert_2,
\end{equation}
where $m_\mu$ was defined in \eqref{mmu}.
In particular, for fixed $\lambda$, $\lVert \widehat{\Phi_\lambda} \chi_{p^2<2\mu}\rVert_1 <\infty$ and from Lemma~\ref{lea:Tc0_asy} it follows that $\lVert \widehat{\Phi_\lambda} \chi_{p^2<2\mu}\rVert_1 $ is bounded for $\lambda\to 0$.

It only remains to prove that $\lVert \widehat{\Phi_\lambda} \chi_{p^2>2\mu}\rVert_1$ is bounded for fixed $\lambda$ and is $O(\lambda)$ for $\lambda\to 0$.
By \eqref{BT_bound} $B_{T}(p,0)\chi_{p^2>2\mu}\leq C/(1+p^2)$ for some $C$ independent of $T$.
Using \eqref{eval_eq} and applying H\"older's inequality and Lemma~\ref{lp_prop}\eqref{lp_prop.3},
\begin{equation}
\lVert \widehat{\Phi_\lambda} \chi_{p^2>2\mu}\rVert_s \leq C \lambda \left \lVert \frac{1}{1+\vert \cdot \vert^2}\right \rVert_p  \lVert \widehat{V^{1/2} \Psi_{\tcti(\lambda)}} \rVert_q
\leq C \lambda \left \lVert \frac{1}{1+\vert \cdot \vert^2}\right \rVert_p  \lVert V\rVert_t^{1/2} \rVert \Psi \rVert_2
\end{equation}
where $1/s=1/p+1/q$ and $q=2t/(t-1)$.
For $d=1$ the claim follows with the choice $t=p=1$.
For $d=2$, $V\in L^{1+\eps}$ for some $0<\eps\leq 1$. With the choice $t=1+\eps, p=2t/(t+1)>1$ the claim follows.

For $d=3$, we may choose $1 \leq t\leq 3/2$ and $3/2<p\leq \infty$ which gives
\begin{equation}\label{pf_wl_10}
\lVert \widehat \Phi_\lambda \chi_{p^2>2\mu}\rVert_s=O(\lambda)
\end{equation} for all $6/5<s\leq \infty$.
We use a bootstrap argument to decrease $s$ to one.
Let us use the short notation $B$ for multiplication with $B_{T}(p,0)$ in momentum space and $F:L^2(\BR^d)\to L^2(\BR^d)$ the Fourier transform.
Using \eqref{eval_eq} one can find by induction that
\begin{equation}\label{pf_wl_9}
\widehat{\Phi_\lambda} \chi_{p^2>2\mu}= \lambda^{n} (\chi_{p^2>2\mu} B F V F^\dagger)^n \widehat{\Phi_\lambda} \chi_{p^2>2\mu}+\sum_{j=1}^n \lambda^{j}(\chi_{p^2>2\mu} B F VF^\dagger )^j \widehat{\Phi_\lambda}\chi_{p^2<2\mu}
\end{equation}
for any $n\in \BN$.
The strategy is to prove that applying $\chi_{p^2>2\mu} B F VF^\dagger$ to an $L^r$ function will give a function in $L^s\cap L^\infty$, where $s/r< c <1$ for some fixed constant $c$.
For $n$ large enough, the first term will be in $L^1$, while the second term is in $L^1$ for all $n$ since $\widehat{\Phi_\lambda}\chi_{p^2<2\mu}$ is $L^1$.

\begin{lemma}\label{BVpsi}
Let $V\in L^1\cap L^{3/2+\eps}(\BR^3)$ for some $0<\eps\leq 1/2$ and let $1\leq r \leq 3/2$ and $f \in L^r(\BR^3)$.
Let $2\geq q\geq r$ and $3/2<t\leq \infty$.
\begin{enumerate}[(i)]
\item Then,
\begin{equation}
\lVert \chi_{p^2>2\mu} B F V F^\dagger f \rVert_s \leq C(r,q)\left \lVert \frac{1}{1+|\cdot|^2}\right \rVert_t \lVert V \rVert_{q}\lVert f \rVert_{r}
\end{equation}
where $1/s=1/t+1/r-1/q$ and $C(r,q) < \infty$.
(For $s<1$, $\lVert \cdot \rVert_s$ has to be interpreted as $\lVert f \rVert_s=\left(\int_{\BR^3} |f(p)|^s \dd p\right)^{1/s}$.) \label{BVpsi.1}
\item Let $c=\frac{\eps}{(3+\eps)(3+2\eps)}>0$ and let $r/(1+c)\leq s \leq \infty$.
Then $\lVert \chi_{p^2>2\mu} B F V F^\dagger f \rVert_s \leq C(r,s) \lVert f \rVert_{r}$ for $C(r,s)<\infty$. \label{BVpsi.2}
\end{enumerate}
\end{lemma}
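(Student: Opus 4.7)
The plan is to write $FVF^\dagger f = (2\pi)^{-3/2}\widehat V * f$ and to bound the $L^s$-norm of the pointwise product $\chi_{p^2>2\mu}B_T(p,0)\cdot (\widehat V * f)(p)$ by chaining three standard inequalities. For part (i), generalized H\"older gives
\begin{equation*}
\lVert \chi_{p^2>2\mu} B(\widehat V * f)\rVert_s \leq \lVert \chi_{p^2>2\mu} B\rVert_t \lVert \widehat V * f\rVert_a, \qquad \tfrac{1}{s}=\tfrac{1}{t}+\tfrac{1}{a}.
\end{equation*}
The first factor is finite iff $t>3/2$, with the explicit bound $C\lVert(1+|\cdot|^2)^{-1}\rVert_t$ from \eqref{BT_bound}. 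Young's convolution inequality then yields $\lVert \widehat V * f\rVert_a\leq \lVert \widehat V\rVert_b\lVert f\rVert_r$ with $1+1/a=1/b+1/r$, and Hausdorff--Young converts this to $\lVert \widehat V\rVert_b\leq C\lVert V\rVert_q$ for $q\in[1,2]$ and $b=q/(q-1)$. Eliminating the auxiliary exponents $a$ and $b$ leaves the advertised relation $1/s=1/t+1/r-1/q$; the hypothesis ranges force $s\geq 6/7$, and choosing $t>3/2$ close enough to $3/2$ keeps $a=(1/s-1/t)^{-1}$ at least $1$, so that Young's applies.

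For part (ii), I would invoke part (i) with two specific parameter choices and interpolate. For the $L^\infty$ endpoint, take $t=\infty$ and $q=r$; the H\"older step degenerates to the pointwise estimate $\lVert \widehat V * f\rVert_\infty\leq \lVert \widehat V\rVert_{r'}\lVert f\rVert_r$, giving $\lVert \chi_{p^2>2\mu}BFVF^\dagger f\rVert_\infty\leq C(r)\lVert f\rVert_r$. For an $L^{s_0}$ bound with $s_0<r$, saturate the regularity of $V$ by setting $q=3/2+\eps$ and take $t=3/2+\eps/2$; a direct computation yields $1/s_0-1/r=1/t-1/q=2\eps/((3+\eps)(3+2\eps))=2c$, so $s_0=r/(1+2rc)\leq r/(1+c)$ for $r\geq 1$. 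The log-convexity bound $\lVert g\rVert_s\leq\lVert g\rVert_{s_0}^{s_0/s}\lVert g\rVert_\infty^{1-s_0/s}$ then extends the two endpoint bounds to all $s\in[r/(1+c),\infty]$.

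The main obstacle, more a matter of bookkeeping than substance, is to verify that the three-step chain of part (i) remains rigorous in the quasi-norm regime $s<1$ allowed by the statement. In that regime H\"older still applies via its standard proof with exponents $t/s,\,a/s\geq 1$ (both automatically $\geq 1$ from $1/s=1/t+1/a$ with $t,a>0$), and a choice of $t$ close to $3/2$ keeps Young's applicable. Matching the explicit constant $c$ in part (ii) reduces to the elementary identity displayed above.
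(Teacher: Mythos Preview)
Your argument is correct and for part (i) matches the paper's proof exactly: bound $\chi_{p^2>2\mu}B$ by $C(1+p^2)^{-1}$, then chain Young, Hausdorff--Young and H\"older. One small redundancy: your worry about ``choosing $t$ close to $3/2$ to keep $a\geq 1$'' is unnecessary, since $1/a=1/r-1/q\in[0,1/2]$ from the hypotheses $r\leq q\leq 2$, so $a\geq 2$ automatically and Young's applies regardless of $t$.

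For part (ii) your route differs slightly from the paper's. The paper simply observes that as $(q,t)$ ranges over $[r,3/2+\eps]\times[3/2+\eps/2,\infty]$, the exponent $s=(1/t+1/r-1/q)^{-1}$ sweeps out all of $[r/(1+c),\infty]$, so part (i) applies directly for each such $s$. You instead fix the two extreme choices $(q,t)=(r,\infty)$ and $(q,t)=(3/2+\eps,\,3/2+\eps/2)$ to get the endpoints $s=\infty$ and $s=s_0=r/(1+2rc)\leq r/(1+c)$, and then fill in via the log-convexity of $\lVert\cdot\rVert_s$ (which indeed holds for all $0<s_0\leq s\leq\infty$, not just $s_0\geq 1$). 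Both arguments are equally short; the paper's avoids the interpolation step, while yours makes the role of the constant $c$ slightly more transparent.
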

\begin{proof}[Proof of Lemma~\ref{BVpsi}]
\eqref{BVpsi.1}:
Using \eqref{BT_bound} we have $|\chi_{p^2>2\mu} B F V F^\dagger f  (p)|\leq \frac{C}{1+p^2} | \widehat V * f (p)|$.
By the Young and Hausdorff-Young inequalities, the convolution satisfies
\begin{equation}
\lVert \widehat V * f \rVert_p \leq C(q,r) \lVert V \rVert_q \rVert f \rVert_r
\end{equation}
for some finite constant $ C(q,r)$, where $1/p=1/r-1/q$.
The claim follows from H\"older's inequality.

\eqref{BVpsi.2}:
For fixed $r$ and choosing $q,t$ in the range $r \leq q \leq 3/2+\epsilon$ and $3/2+\eps/2 \leq t \leq \infty$, $s=(1/t+1/r-1/q)^{-1}$ can take all values in $[r/(1+c), \infty]$.
The claim follows immediately from \eqref{BVpsi.1}.
\end{proof}

Let $n$ be the smallest integer such that $\frac{7}{5} \frac{1}{(1+c)^n}\leq 1$.
To bound the first term in \eqref{pf_wl_9}, recall from \eqref{pf_wl_10} that $\lVert \widehat \Phi_\lambda \chi_{p^2>2\mu}\rVert_{s}=O(\lambda)$ for $s=7/5$.
We apply the second part of Lemma~\ref{BVpsi} $n$ times.
After the $j$th step, we have $\lVert (\chi_{p^2>2\mu} B F V F^\dagger)^j \widehat{\Phi_\lambda} \chi_{p^2>2\mu}\rVert_s=O(\lambda)$ for $s=\frac{7}{5} \frac{1}{(1+c)^j}$.
In the $n$th step we pick $s=1$ and obtain $\lVert (\chi_{p^2>2\mu} B F V F^\dagger)^n \widehat{\Phi_\lambda} \chi_{p^2>2\mu}\rVert_1=O(\lambda)$.
To bound the second term in \eqref{pf_wl_9} recall that $\lVert \widehat{\Phi_\lambda}\chi_{p^2<2\mu}\rVert_1=O(1)$.
Applying the first part of Lemma~\ref{BVpsi} with $r=1, t=q=3/2+\eps$ implies that
\begin{multline}
\left \lVert \sum_{j=1}^n \lambda^{j}(\chi_{p^2>2\mu} B F V F^\dagger)^j \widehat{\Phi_\lambda}\chi_{p^2<2\mu}\right\rVert_1 \\
\leq \sum_{j=1}^n \lambda^{j}\left(C(1,3/2+\eps)\left \lVert \frac{1}{1+|\cdot|^2}\right \rVert_{3/2+\eps} \lVert V \rVert_{3/2+\eps}\right)^j\lVert \widehat{\Phi_\lambda}\chi_{p^2<2\mu} \rVert_{1} =O(\lambda).
\end{multline}
It follows that $\lVert \widehat \Phi_\lambda \chi_{p^2>2\mu}\rVert_1$ is finite and $O(\lambda)$ for $d=3$.
\end{proof}

\begin{proof}[Proof of Lemma~\ref{phi_infinity_conv}]
Using the eigenvalue equation \eqref{eval_eq}, we write
\begin{multline}\label{phi_infinity_conv_1}
\Phi_\lambda(r)= \int_{\vert p \vert>\sqrt{2\mu}}\frac{e^{i p \cdot  r}}{(2\pi)^{3/2}} \widehat{\Phi}_\lambda(p) \dd p \\
+\lambda \int_{\vert p \vert<\sqrt{2\mu}}\frac{e^{i p \cdot (r-r')}-e^{i \sqrt{\mu}\frac{p}{|p|} \cdot (r-r')}}{(2\pi)^3} B_{\tcti(\lambda)}(p,0)|V|^{1/2} (r') \Psi_{\tcti(\lambda)}(r')  \dd p \dd r'\\
+\lambda \int_{\vert p \vert<\sqrt{2\mu}}\frac{e^{i \sqrt{\mu}\frac{p}{|p|} \cdot (r-r')}}{(2\pi)^3} B_{\tcti(\lambda)}(p,0)|V|^{1/2} (r') (\Psi_{\tcti(\lambda)}(r')-V^{1/2} (r') j_3(r'))  \dd p \dd r'\\
+\lambda \int_{\vert p \vert<\sqrt{2\mu}}\frac{e^{i \sqrt{\mu}\frac{p}{|p|} \cdot (r-r')}}{(2\pi)^3} B_{\tcti(\lambda)}(p,0)V(r')j_3(r')  \dd p \dd r'
\end{multline}
We prove that the first three terms have $L^\infty$-norm of order $O(\lambda^{1/2})$.
For the first term this follows from Lemma~\ref{phi_infinity_bd}.
For the second term in \eqref{phi_infinity_conv_1}, we proceed as in the proof of \cite[Lemma 3.1]{hainzl_bardeencooperschrieffer_2016}.
First, integrate over the angular variables
\begin{multline} \label{phi_infinity_conv_2}
\int_{\vert p \vert<\sqrt{2\mu}}\left[e^{i p \cdot (r-r')}-e^{i \sqrt{\mu}\frac{p}{|p|} \cdot (r-r')}\right] B_{\tcti(\lambda)}(p,0)\dd p\\
=\int_{\vert p \vert<\sqrt{2\mu}}\left[\frac{\sin \vert p\vert  \vert r-r'\vert}{ \vert p\vert  \vert r-r'\vert}-\frac{\sin \sqrt{\mu}  \vert r-r'\vert}{ \sqrt{\mu} \vert r-r'\vert}\right] B_{\tcti(\lambda)}(\vert p\vert,0)  |p|^2 \dd \vert p\vert,
\end{multline}
where we slightly abuse notation writing $B_T(|p|,0)$ for the radial function $B_T(p,0)$.
Bounding the absolute value of this using $\vert \sin x/x-\sin y/y|<C |x-y|/|x+y|$ and $B_{T}(p,0)\leq 1/|p^2-\mu|$ gives
\begin{equation}
\eqref{phi_infinity_conv_2} \leq C\int_{\vert p \vert<\sqrt{2\mu}}\frac{|p|^2}{( \vert p\vert  +\sqrt{\mu})^2} \dd \vert p\vert=:\tilde C<\infty.
\end{equation}
In particular, the second term in \eqref{phi_infinity_conv_1} is bounded uniformly in $r$ by
\begin{equation}
\lambda \frac{\tilde C}{(2\pi)^3} \lVert V\rVert_1^{1/2} \lVert \Psi_{\tcti(\lambda)} \rVert_2
\end{equation}
which is of order $O(\lambda)$.

To bound the absolute value of the third term in \eqref{phi_infinity_conv_1}, we pull the absolute value into the integral, carry out the integration over $p$ and use the Schwarz inequality in $r'$. This results in the bound
\begin{equation}
\lambda \frac{\vert \BS^2 \vert}{(2\pi)^3} m_\mu(\tcti(\lambda)) \lVert V\rVert_1^{1/2} \lVert \Psi_{\tcti(\lambda)} -\Psi \rVert_2.
\end{equation}
By Lemma~\ref{lea:Tc0_asy}, $\lambda m_\mu(\tcti(\lambda))$ is bounded and by Lemma~\ref{psiT_asymptotic_d}, $\lVert\Psi_{\tcti(\lambda)}- \Psi \rVert_2$ decays like $\lambda^{1/2}$ for small $\lambda$.

The fourth term in \eqref{phi_infinity_conv_1} equals $\lambda m_\mu(\tcti(\lambda)) \FT^\dagger \FT V j_3$, where we carried out the radial part of the $p$ integration.
Recall that $j_3=\FT^\dagger 1_{\BS^2}$ and $ \mathcal{V}_\mu 1_{\BS^2} =e_\mu 1_{\BS^2}  $, where $1_{\BS^2} $ is the constant function with value 1 on $\BS^2$.
Hence, $ \FT^\dagger \FT V j_3=\FT^\dagger \mathcal{V}_\mu 1_{\BS^2} = e_\mu j_3$ and the fourth term in \eqref{phi_infinity_conv_1} equals $\lambda m_\mu(\tcti(\lambda)) e_\mu  j_3$.
By Lemma~\ref{lea:Tc0_asy}, $\lambda m_\mu(\tcti(\lambda)) e_\mu= 1+O(\lambda)$ as $\lambda \to 0$.
Thus, $ \lVert \Phi_\lambda -j_3 \rVert_\infty= \vert \lambda m_\mu(\tcti(\lambda)) e_\mu-1 \vert \lVert j_3 \rVert_\infty+O(\lambda)=O(\lambda) $.
\end{proof}

\section{Proof of Theorem~\ref{thm1}}\label{sec:thm1}
Instead of directly looking at $H_T^{\Omega_1}$, we extend the domain to $L^2(\BR^{2d})$ by extending the wavefunctions (anti)symmetrically across the boundary.
Recall that $\tilde x$ denotes the vector containing all but the first component of $x$.
The half-space operator $H_T^{\Omega_1}$ with Dirichlet/Neumann boundary conditions is unitarily equivalent to
\begin{equation}
H^{\rm ext}_{T}=K_{T}^{\BR^d}-\lambda V(x-y) \chi_{\vert x_1-y_1 \vert <\vert x_1+y_1 \vert}-\lambda V(x_1+y_1,\tilde x-\tilde y) \chi_{\vert x_1+y_1 \vert<\vert  x_1-y_1 \vert}
\end{equation}
on $L^2(\BR^{d}\times \BR^{d})$ restricted to functions antisymmetric/symmetric under swapping $x_1\leftrightarrow -x_1$ and symmetric under exchange of $x \leftrightarrow y$.
Next, we express $H^{\rm ext}_{T}$ in relative and center of mass coordinates $r=x-y$ and $z=x+y$.
Let $U$ be the unitary on $L^2(\BR^{2d})$ given by $U\psi(r,z)=2^{-d/2} \psi((r+z)/2,(z-r)/2)$.
Then
\begin{equation}
H^1_{T}:= U H^{\rm ext}_{T}U^\dagger=UK_{T}^{\BR^d}U^\dagger-\lambda V(r) \chi_{\vert r_1 \vert <\vert z_1 \vert}-\lambda V(z_1,\tilde r) \chi_{\vert z_1 \vert<\vert  r_1 \vert}
\end{equation}
on $L^2(\BR^{2d})$ restricted to functions antisymmetric/symmetric under swapping $r_1\leftrightarrow z_1$ and symmetric in $r$.
The spectra of $H^1_{T}$ and $H_T^{\Omega_1}$ agree.

For an upper bound on $\inf \sigma(H_{T}^1)$, we restrict $H_{T}^1$ to zero momentum in the translation invariant center of mass directions and call the resulting operator $\tilde H_{T}^1$.
The operator $\tilde H_{T}^1$ acts on $\{\psi \in L^2(\BR^d\times \BR)| \psi(r,z_1)=\psi(-r,z_1)=\mp \psi((z_1,\tilde r),r_1)\}$.
The kinetic part of $\tilde H_{T}^1$ reads
\begin{equation}
\tilde K_{T}(r,z_1;r',z_1')=\int_{\BR^{d+1}} \frac{e^{i p(r-r')+iq_1(z_1-z_1')}}{(2\pi)^{d+1}} B_{T}^{-1}(p,(q_1,\tilde 0)) \dd p \dd q_1.
\end{equation}

An important property is the continuity of $\inf \sigma(H_T^1)$, proved in Section~\ref{sec:pf_H1-Tcont}.
\begin{lemma}\label{H1-Tcont}
Let $d\in\{1,2,3\}$ and let $V$ satisfy Assumption~\ref{aspt_V_halfspace}.
Then $\inf \sigma(H_T^0)$, $\inf \sigma(H_T^{\Omega_0})$ and $\inf \sigma(H_T^1)$ depend continuously on $T$ for $T>0$.
\end{lemma}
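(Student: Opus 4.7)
I would prove continuity of $\inf\sigma(H_T^0)$ and $\inf\sigma(H_T^1)$ in $T>0$ by showing that both operators depend norm-resolvent continuously on $T$; since norm-resolvent convergence implies continuity of the infimum of the spectrum, this gives the result.

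The key input is the uniform pointwise estimate
\begin{equation*}
\sup_{p,q\in\BR^d}\frac{|K_T(p,q)-K_{T_0}(p,q)|}{1+p^2+q^2}\longrightarrow 0 \qquad\text{as}\ T\to T_0>0.
\end{equation*}
For $p^2+q^2\leq R$ this is uniform continuity of $(T,p,q)\mapsto K_T(p,q)$ on the compact set $[T_0/2,2T_0]\times\{p^2+q^2\leq R\}$. For $p^2+q^2\geq R$ one splits according to the signs of $p^2-\mu$ and $q^2-\mu$; in each case the corresponding hyperbolic tangents approach $\pm 1$ exponentially fast and uniformly in $T$ near $T_0$, so the denominator in $K_T$ is close to a $T$-independent (or nearly $T$-independent) limit, and one obtains $|K_T-K_{T_0}|\leq \varepsilon(R,T_0)(1+p^2+q^2)$ with $\varepsilon(R,T_0)\to 0$ as $R\to\infty$. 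Choosing $R$ large first and then $T$ close to $T_0$ yields the uniform smallness. The same argument applies to $K_T(p,0)$ (the symbol of the kinetic part of $H_T^0$) and to $K_T(p+q,p-q)$ (the symbol of the kinetic part of $H_T^1$ in relative and center-of-mass coordinates).

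With this in hand, norm-resolvent continuity follows by a standard form-perturbation argument. By Lemma~\ref{KT-Laplace} together with the infinitesimal form-boundedness of $V$ with respect to $-\Delta$ (Assumption~\ref{aspt_V_halfspace}\eqref{aspt.1}), the form domains of $H_T^0$ and $H_T^1$ coincide with the corresponding $H^1$-Sobolev spaces with norm equivalences uniform in $T$ in a neighborhood of $T_0$, and both operators are uniformly bounded below there by some $-C_0$. For any $C>C_0$,
\begin{equation*}
(H_T^j+C)^{-1}-(H_{T_0}^j+C)^{-1}=(H_T^j+C)^{-1}(K_{T_0}-K_T)(H_{T_0}^j+C)^{-1},
\end{equation*}
and the uniform estimate above converts into the form inequality $\pm(K_T-K_{T_0})\leq \eta(T)(-\Delta+1)$ with $\eta(T)\to 0$. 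Composing with the two resolvents, whose ranges lie in $H^1$ with uniform norm control, gives operator-norm convergence to zero, which is norm-resolvent continuity.

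The main obstacle is establishing the uniform pointwise estimate, in particular in the ``mixed'' momentum region where one of $p^2,q^2$ is close to $\mu$ while the other is large; the denominator of $K_T$ is then not close to $2$, so one cannot simply Taylor-expand around the asymptotic $K_T\sim (p^2+q^2-2\mu)/2$. However, because the small variable stays bounded in that region, the relevant $\tanh$-factor depends smoothly on $T$ and one controls $K_T-K_{T_0}$ by $|T-T_0|$ times a bounded multiple of $1+p^2+q^2$. Once this estimate is in place the remainder of the argument is routine.
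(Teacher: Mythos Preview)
Your proposal is correct and follows a sound route, but the paper's proof is more direct in two respects.

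First, for the key estimate the paper simply differentiates: one computes
\[
\frac{\partial}{\partial T} K_T(p,q)=\frac{K_T(p,q)}{2T^2}\cdot \frac{\sech^2\!\big(\tfrac{p^2-\mu}{2T}\big)(p^2-\mu)+\sech^2\!\big(\tfrac{q^2-\mu}{2T}\big)(q^2-\mu)}{\tanh\!\big(\tfrac{p^2-\mu}{2T}\big)+\tanh\!\big(\tfrac{q^2-\mu}{2T}\big)},
\]
observes that the second factor is bounded (the apparent singularity at $p^2+q^2=2\mu$ is removable), and invokes Lemma~\ref{KT-Laplace} to bound $K_T$ by $C(1+p^2+q^2)$. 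This gives the Lipschitz bound $|K_T-K_{T'}|\leq C_{T_0,T_1}|T-T'|(1+p^2+q^2)$ on any compact $T$-interval in one line, without your case splitting by momentum regions.

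Second, instead of passing through norm-resolvent continuity, the paper argues directly with quadratic forms: take near-minimizers $\psi_T$ with $\langle\psi_T,H_T^j\psi_T\rangle\leq\inf\sigma(H_T^j)+\epsilon$, use the lower bound from Lemma~\ref{KT-Laplace} together with infinitesimal form-boundedness of $V$ to get a uniform $H^1$-bound on $\{\psi_T\}$, and then the form inequality $|\langle\psi,(K_T-K_{T'})\psi\rangle|\leq C|T-T'|\lVert\psi\rVert_{H^1}^2$ immediately yields $|\inf\sigma(H_T^j)-\inf\sigma(H_{T'}^j)|\leq C|T-T'|+2\epsilon$.

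Your approach buys a stronger conclusion (norm-resolvent continuity, hence continuity of the whole spectrum), at the cost of a heavier argument; the paper's approach is more elementary and tailored exactly to what is needed.
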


To prove Theorem~\ref{thm1} we show that there is a $\lambda_1>0$ such that for $\lambda\leq \lambda_1$, $\inf \sigma(H_{\tcf(\lambda)}^1)\leq \inf \sigma(\tilde H_{\tcf(\lambda)}^1)<0$.
For all $T<\tcf(\lambda)$ we have by Lemma~\ref{H1_esspec} that $\inf \sigma(H_{T}^1)\leq \inf \sigma(H_{T}^{\Omega_0})<0$.
By continuity (Lemma~\ref{H1-Tcont}) there is an $\epsilon>0$ such that $\inf \sigma(H_{T}^1)<0$ for all $T<\tcf(\lambda)+\epsilon$.
Therefore, $\tch(\lambda)> \tcf(\lambda)$.

To prove that $\inf \sigma(\tilde H_{\tcf(\lambda)}^1)<0$ for small enough $\lambda$, we pick a suitable family of trial states $\psi_\eps(r,z_1)$.
Let $\lambda$ be such that $\tcf(\lambda)=\tcti(\lambda)$ and $H_{\tcti(\lambda)}^0$ has a unique and radial ground state $\Phi_\lambda$.
According to Remark~\ref{gs_symm_lsmall}, this is the case for $0<\lambda\leq \lambda_0$.
We choose the trial states
\begin{equation}
\psi_\epsilon(r,z_1)=\Phi_{\lambda}(r)e^{-\epsilon \vert z_1 \vert}\mp \Phi_{\lambda}(z_1, \tilde r)e^{-\epsilon\vert r_1 \vert},
\end{equation}
with the $-$ sign for Dirichlet and $+$ for Neumann boundary conditions.
Since $\Phi_{\lambda}(r)=\Phi_{\lambda}(-r)=\Phi_{\lambda}(-r_1,\tilde r)$, these trial states satisfy the symmetry constraints and lie in the form domain of $\tilde H_T^1$.
The norm of $\psi_\eps$ diverges as $\eps\to0$.
\begin{rem}
%Note that the trial state is the projection of $\Phi_{\lambda}(r)e^{-\epsilon \vert z_1 \vert}$ onto the domain of $\tilde H_T^{1}$.
The trial state is the (anti-)symmetrization of $\Phi_{\lambda}(r)e^{-\epsilon \vert z_1 \vert}$, i.e.~the projection of $\Phi_{\lambda}(r)e^{-\epsilon \vert z_1 \vert}$ onto the domain of $\tilde H_T^{1}$.
The intuition behind our choice is that, as we will see in Section~\ref{sec:rel_T}, at weak coupling the Birman-Schwinger operator corresponding to $H_T^{\Omega_1}$  approximately looks like $A^0_T$ (defined in \eqref{BS-ti}) on a restricted domain.
This is why we want our trial state to look like the ground state $\Phi_{\lambda}$ of $H_T^{0}$ projected onto the domain of  $\tilde H_T^{1}$.
\end{rem}
We shall prove that $\lim_{\eps\to0} \langle \psi_\eps, \tilde H^1_{\tcf(\lambda)} \psi_\epsilon \rangle$ is negative for weak enough coupling.
This is the content of the next two Lemmas, which are proved in Sections~\ref{pf:plugin_triaL_state} and \ref{pf:claim_weak_lambda_asy}, respectively.
\begin{lemma}\label{plugin_triaL_state}
Let $d\in\{1,2,3\}$, $\mu>0$ and let $V$ satisfy Assumption~\ref{aspt_V_halfspace}.
Let $\lambda$ be such that $H_{\tcti(\lambda)}^0$ has a unique ground state $\Phi_\lambda$.
Then,
\begin{multline}\label{aspt_terms}
\lim_{\epsilon \to 0} \langle \psi_\epsilon , \tilde H^1_{\tcti(\lambda)} \psi_\epsilon \rangle = -2\lambda \Bigg(\int_{\BR^{d+1}} V(r) \Bigg[ -\vert \Phi_\lambda(r)\mp \Phi_\lambda(z_1, \tilde r) \vert^2 \chi_{\vert z_1 \vert< \vert r_1 \vert}
 +\vert \Phi_\lambda(z_1, \tilde r)\vert ^2 \Bigg] \dd r \dd z_1  \\
\mp 2\pi \int_{\BR^{d-1}} \overline{ \widehat \Phi_{\lambda}(0,\tilde p)} \widehat{V\Phi_{\lambda}}(0,\tilde p) \dd \tilde p\Bigg),
\end{multline}
where the upper signs correspond to Dirichlet and the lower signs to Neumann boundary conditions.
For $d=1$, the last term in \eqref{aspt_terms} is to be understood as $ \mp 2\pi \overline{ \widehat \Phi_{\lambda}(0)} \widehat{V\Phi_{\lambda}}(0)$.
\end{lemma}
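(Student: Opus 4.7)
The plan is to split the trial state as $\psi_\epsilon=\phi_\epsilon\mp\phi_\epsilon^{\rm sw}$ with $\phi_\epsilon(r,z_1):=\Phi_\lambda(r)e^{-\epsilon|z_1|}$ and $\phi_\epsilon^{\rm sw}(r,z_1):=\Phi_\lambda(z_1,\tilde r)e^{-\epsilon|r_1|}$, and to exploit the $(r_1\leftrightarrow z_1)$-swap symmetry throughout. Since $\psi_\epsilon$ is (anti)symmetric under this swap, so are the two interaction terms paired with one another, and the kernel of $\tilde K_T$ is also swap-symmetric. Together this yields
\begin{equation*}
\langle\psi_\epsilon,\tilde H^1_{T_c^0(\lambda)}\psi_\epsilon\rangle=2\langle\phi_\epsilon,\tilde K_T\phi_\epsilon\rangle\mp 2\mathrm{Re}\langle\phi_\epsilon,\tilde K_T\phi_\epsilon^{\rm sw}\rangle-2\lambda\int_{\BR^{d+1}}V(r)\chi_{|r_1|<|z_1|}|\psi_\epsilon|^2\dd r\dd z_1,
\end{equation*}
and reduces the problem to analysing these three pieces. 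The central input throughout is the eigenvalue equation $K_T^0\Phi_\lambda=\lambda V\Phi_\lambda$ at $T=T_c^0(\lambda)$.

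For the diagonal pieces I pass to momentum space using $\widehat\phi_\epsilon(p,q_1)=\widehat\Phi_\lambda(p)\sqrt{2/\pi}\,\epsilon/(\epsilon^2+q_1^2)$, substitute $q_1=\epsilon s$, and Taylor expand $K_T((p_1+\epsilon s,\tilde p),(p_1-\epsilon s,\tilde p))$ around $\epsilon=0$; the linear term vanishes by the parity $s\to -s$, yielding $\langle\phi_\epsilon,\tilde K_T\phi_\epsilon\rangle=\epsilon^{-1}\lambda\int V\Phi_\lambda^2\dd r+O(\epsilon)$. Expanding $|\psi_\epsilon|^2$ into its three natural summands and computing $\int_{|z_1|>|r_1|}e^{-2\epsilon|z_1|}\dd z_1=e^{-2\epsilon|r_1|}/\epsilon$ on the $\Phi_\lambda(r)^2e^{-2\epsilon|z_1|}$ summand gives $\epsilon^{-1}\int V\Phi_\lambda^2\dd r-2\int V\Phi_\lambda^2|r_1|\dd r+O(\epsilon)$, the last integral being finite by Assumption~\ref{aspt_V_halfspace}\eqref{aspt.3} together with $\lVert\Phi_\lambda\rVert_\infty<\infty$ from Lemma~\ref{phi_infinity_bd}\eqref{phi_infty.6}. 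The $1/\epsilon$ singularities then cancel, leaving a net diagonal contribution $+4\lambda\int V\Phi_\lambda^2|r_1|\dd r$, which accounts for the $-\Phi_\lambda(r)^2\chi_{|z_1|<|r_1|}$ part of \eqref{aspt_terms}.

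The other two summands of $|\psi_\epsilon|^2$ are handled by dominated convergence using $\lVert\Phi_\lambda\rVert_\infty<\infty$ and $V\in L^1$; writing $\chi_{|r_1|<|z_1|}=1-\chi_{|z_1|\leq|r_1|}$ produces the free $\Phi_\lambda(z_1,\tilde r)^2$-integral and the $\chi_{|z_1|<|r_1|}$-restricted integrals of \eqref{aspt_terms}, plus an unconstrained leftover piece $\pm 4\lambda\int V(r)\Phi_\lambda(r)\Phi_\lambda(z_1,\tilde r)\dd r\dd z_1$. I rewrite this leftover via the Parseval-type identity
\begin{equation*}
\int V(r)\Phi_\lambda(r)\Phi_\lambda(z_1,\tilde r)\dd r\dd z_1=2\pi\int_{\BR^{d-1}}\overline{\widehat\Phi_\lambda(0,\tilde p)}\widehat{V\Phi_\lambda}(0,\tilde p)\dd\tilde p,
\end{equation*}
obtained by inserting the inverse Fourier representation of $\Phi_\lambda(z_1,\tilde r)$ and integrating out $z_1$ to produce $2\pi\delta(k_1)$. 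For the cross kinetic term, the product of Poisson kernels $(2/\pi)[\epsilon/(\epsilon^2+p_1^2)][\epsilon/(\epsilon^2+q_1^2)]$ is an approximate $2\pi\delta(p_1)\delta(q_1)$, so continuity of $p\mapsto\widehat\Phi_\lambda(p)$ (Lemma~\ref{phi_infinity_bd}\eqref{phi_infty.2}) combined with the eigenvalue equation gives $\langle\phi_\epsilon,\tilde K_T\phi_\epsilon^{\rm sw}\rangle\to 2\pi\lambda\int\overline{\widehat\Phi_\lambda(0,\tilde p)}\widehat{V\Phi_\lambda}(0,\tilde p)\dd\tilde p$. Combining this with the leftover piece produces precisely the boundary term $\mp 2\pi\int\overline{\widehat\Phi_\lambda(0,\tilde p)}\widehat{V\Phi_\lambda}(0,\tilde p)\dd\tilde p$ of \eqref{aspt_terms}.

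The main technical obstacle is providing an $\epsilon$-uniform integrable dominator for the cross kinetic integrand after the scaling $p_1=\epsilon s$, $q_1=\epsilon t$: one must bound
\begin{equation*}
\frac{|\widehat\Phi_\lambda(\epsilon s,\tilde p)\widehat\Phi_\lambda(\epsilon t,\tilde p)|\,K_T((\epsilon(s+t),\tilde p),(\epsilon(s-t),\tilde p))}{(1+s^2)(1+t^2)}
\end{equation*}
integrably in $(s,t,\tilde p)$ by combining the $1/(1+p^2)$ decay of $\widehat\Phi_\lambda$ from Lemma~\ref{phi_infinity_bd}\eqref{phi_infty.1} with the polynomial growth of $K_T$ from Lemma~\ref{KT-Laplace}. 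This requires particular care in $d=3$, where $\tilde p$ lives in a two-dimensional space and the naive bound $K_T/[(1+|\tilde p|^2)^2]\sim 1/(1+|\tilde p|^2)$ is not $L^1(\BR^2)$; the remedy is to split the $(s,t)$-domain so that on the set where one of $\epsilon s,\epsilon t$ is large the extra $\widehat\Phi_\lambda$-decay gives an additional factor, and on the bounded part dominated convergence applies directly.
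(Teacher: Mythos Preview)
Your overall strategy---exploit the swap symmetry, use the eigenvalue equation to cancel the $1/\epsilon$ divergence, and pass to Fourier for the cross pieces---is exactly the paper's, and your algebraic bookkeeping is correct. The gaps are all in the dominated-convergence justifications, and in $d=3$ they are genuine.

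First, the claim that the $\Phi_\lambda(z_1,\tilde r)^2$ and cross summands are ``handled by dominated convergence using $\lVert\Phi_\lambda\rVert_\infty<\infty$ and $V\in L^1$'' is wrong: after your split $\chi_{|r_1|<|z_1|}=1-\chi_{|z_1|\le|r_1|}$, the free part requires $|V(r)|\Phi_\lambda(z_1,\tilde r)^2\in L^1(\BR^{d+1})$, and the naive bound $\lVert\Phi_\lambda\rVert_\infty^2|V(r)|$ is not integrable in $z_1$. The paper proves this integrability via Fourier (equation~\eqref{pf_e0_1}) and, in $d=3$, needs Lemma~\ref{lea:pf_plugin_1}. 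Worse, your ``unconstrained leftover'' $\int V(r)\Phi_\lambda(r)\Phi_\lambda(z_1,\tilde r)\,\dd r\,\dd z_1$ is \emph{not} an absolutely convergent integral (generically $\Phi_\lambda\notin L^1$ in the $z_1$-variable; think of $j_d$), so the split $\chi=1-\chi$ followed by DCT is invalid here. The paper never writes that integral without the regularizer: it keeps the $e^{-\epsilon}$-factors, passes to Fourier as in~\eqref{4.105}, and for $d=3$ invokes Lemma~\ref{pf_e0_3d_fn} to justify the limit. The same lemma is what actually handles the $d=3$ cross-kinetic term; your proposed remedy (split the $(s,t)$-domain and use extra $\widehat\Phi_\lambda$-decay when $\epsilon s$ is large) does not close, because trading $\tilde p$-decay for $s$-decay still leaves $K_T/(1+|\tilde p|^2)\sim C$ non-integrable over $\tilde p\in\BR^2$. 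The paper instead integrates out $\tilde p$ \emph{first} and shows the resulting function of $(p_1,q_1)$ is bounded and continuous. Finally, the diagonal-kinetic ``Taylor expand to $O(\epsilon)$'' is under-justified for the same large-$s$ reason; the paper controls this piece via Lemma~\ref{Bq-B0-bound} (a uniform bound on $|B_T(p,q)-B_T(p,0)|/|q|$) rather than a Taylor remainder.
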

For small $\lambda$ we shall prove that the expression in the round bracket in \eqref{aspt_terms} is positive.
\begin{lemma}\label{claim_weak_lambda_asy}
Let $d\in\{1,2,3\}$, $\mu>0$ and  let $V$ satisfy Assumption~\ref{aspt_V_halfspace}.
Let $\lambda_0$ be as in Remark~\ref{gs_symm_lsmall}.
Assume Dirichlet or Neumann boundary conditions.
For $d=3$ assume that $\int_{\BR^3} V(r) \widetilde m_3^{D/N}(r) \dd r>0$, where $\widetilde m_3^{D/N}$ was defined in \eqref{mtilde}.
Then there is a $\lambda_0\geq \lambda_1>0$ such that for $\lambda\leq \lambda_1$ the right hand side in \eqref{aspt_terms} is negative.
\end{lemma}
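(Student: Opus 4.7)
Write the expression in the round bracket on the right-hand side of~\eqref{aspt_terms} as $X(\lambda)=I(\lambda)+B(\lambda)$, where $I(\lambda)$ is the first double integral involving $V$ and $\Phi_\lambda$, and $B(\lambda)$ is the boundary term $\mp 2\pi\int \overline{\widehat\Phi_\lambda(0,\tilde p)}\widehat{V\Phi_\lambda}(0,\tilde p)\,d\tilde p$. We need $X(\lambda)>0$ for small $\lambda$. The strategy depends on dimension: in $d=3$ we compute the finite limit and invoke the hypothesis~\eqref{3d_cond}, whereas in $d=1,2$ we identify a positively divergent leading contribution in $I(\lambda)$ and show the rest is bounded.

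For $d=3$, the plan is to establish
$$\lim_{\lambda\to 0}X(\lambda)=\int_{\BR^3}V(r)\,\widetilde m_3^{D/N}(r;\mu)\,dr,$$
which is positive by assumption. For $I(\lambda)$, use the uniform convergence $\|\Phi_\lambda-j_3\|_\infty=O(\lambda^{1/2})$ from Lemma~\ref{phi_infinity_conv} and apply dominated convergence. The two natural majorants are $4\|\Phi_\lambda\|_\infty^2\,|V(r)|\cdot 2|r_1|$ for the $\chi_{|z_1|<|r_1|}$ piece (integrable because $|\cdot|V\in L^1$) and $|V(r)|\sup_{\tilde r}\int|\Phi_\lambda(z_1,\tilde r)|^2 dz_1$ for the positive piece, where the supremum is finite uniformly in $\lambda$ via $\|\Phi_\lambda\|_\infty=O(1)$ (Lemma~\ref{phi_infinity_bd}\eqref{phi_infty.6}) and the explicit bound $\sup_{\tilde r}\int j_3(z_1,\tilde r;\mu)^2 dz_1<\infty$. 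This yields the first two terms of $\widetilde m_3^{D/N}$ in the limit.

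For $B(\lambda)$ in $d=3$, use the eigenvalue equation~\eqref{eval_eq} to write
$$B(\lambda)=\mp 2\pi\lambda\int_{\BR^2} B_{T_c^0(\lambda)}((0,\tilde p),0)\,|\widehat{V\Phi_\lambda}(0,\tilde p)|^2\,d\tilde p.$$
By Remark~\ref{gs_symm_lsmall}, $\Phi_\lambda$ is radial, so the integrand depends only on $|\tilde p|$ and the integral reduces to a radial one. Since $\|\Phi_\lambda-j_3\|_\infty\to 0$ and $V\in L^1$, $\widehat{V\Phi_\lambda}\to\widehat{Vj_3}$ uniformly, and a short calculation expressing $\widehat{Vj_3}$ as the convolution of $\widehat V$ with the uniform measure on the Fermi sphere gives $\widehat{Vj_3}(\sqrt{\mu}\hat\omega)=e_\mu$ (compare \eqref{emu}). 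The singularity of $B_{T_c^0(\lambda)}$ on the Fermi sphere together with Lemma~\ref{lea:Tc0_asy} ($\lambda m_\mu(T_c^0(\lambda))e_\mu\to 1$) then yields
$$\lim_{\lambda\to 0}B(\lambda)=\mp\frac{\pi}{\sqrt{\mu}}\int_{\BR^3}V(r)\,j_3(r;\mu)^2\,dr,$$
matching the remaining term in $\widetilde m_3^{D/N}$ after using $\int V j_3^2=|\BS^2|e_\mu=4\pi e_\mu$.

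For $d\in\{1,2\}$, show $X(\lambda)\to +\infty$, so $X(\lambda)>0$ for all sufficiently small $\lambda$. The positive piece $\int V(r)|\Phi_\lambda(z_1,\tilde r)|^2 dr\,dz_1$ of $I(\lambda)$ diverges: in $d=1$ it equals $\sqrt{2\pi}\,\widehat V(0)\|\Phi_\lambda\|_2^2$, and from~\eqref{eval_eq} and the Fermi-point singularity of $B_{T_c^0(\lambda)}$ one has $\|\Phi_\lambda\|_2^2\gtrsim c/T_c^0(\lambda)\to\infty$; with $\widehat V(0)>0$ this diverges positively. For $d=2$, after partial Fourier transform in $z_1$ and concentrating near the Fermi circle, the same mechanism (now a log divergence from $\int j_2(z_1,\tilde r)^2 dz_1=\infty$, tracked via $\int B_{T_c^0(\lambda)}(p,0)^2$) produces a positive divergence, multiplied by $\int V(r)dr=(2\pi)^{d/2}\widehat V(0)>0$ after the radial integration. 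The subtracted piece with $\chi_{|z_1|<|r_1|}$ is $O(1)$ by $\|\Phi_\lambda\|_\infty=O(1)$ and $|\cdot|V\in L^1$. The boundary term is $O(\lambda)$ in $d=1$ (since $B_{T_c^0(\lambda)}(0,0)\to 1/\mu$ and $\widehat{V\Phi_\lambda}(0)=O(1)$ force $|\widehat\Phi_\lambda(0)|=O(\lambda)$) and $O(1)$ in $d=2$ (analogous to the $d=3$ boundary-term analysis, with $\lambda\log(\mu/T_c^0(\lambda))\to 1/e_\mu$). In both dimensions the diverging positive contribution dominates.

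The main obstacle is the $d=3$ limit of $B(\lambda)$: extracting the precise coefficient $\pi/\sqrt\mu$ requires carefully localizing the singular Fermi-surface contribution of $B_{T_c^0(\lambda)}((0,\tilde p),0)$ and correctly matching it against Lemma~\ref{lea:Tc0_asy} to identify $\widehat{Vj_3}(\sqrt{\mu}\hat\omega)=e_\mu$. A secondary difficulty is a clean proof of the divergence in $d=2$, where the positivity of the leading term is not as transparent as in $d=1$ and one must pass through a partial Fourier transform and exploit radiality to avoid cancellations.
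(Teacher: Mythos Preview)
Your high-level decomposition into $I(\lambda)+B(\lambda)$ and the strategy (compute the limit in $d=3$; show divergence in $d=1,2$) matches the paper. The treatment of the $\chi_{|z_1|<|r_1|}$ piece and of $B(\lambda)$ is essentially correct and parallels the paper's argument (the paper's computation of $B(\lambda)$ in $d=3$, equation~\eqref{pf_wl_7}, is more elaborate than you indicate, but your outline is sound).

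The genuine gap is your handling of the ``positive piece'' $\int V(r)\,|\Phi_\lambda(z_1,\tilde r)|^2\,dr\,dz_1$. In $d=3$ your dominated-convergence majorant is not valid: from $\|\Phi_\lambda\|_\infty=O(1)$ alone you get no integrability in $z_1$, and splitting $\Phi_\lambda=j_3+(\Phi_\lambda-j_3)$ produces a remainder that is merely bounded, hence not in $L^1(dz_1)$. In $d=2$ your claim that the leading contribution is ``multiplied by $\int V(r)\,dr$'' is wrong: the integral does \emph{not} factor, since $|\Phi_\lambda(z_1,\tilde r)|^2$ still depends on $\tilde r=r_2$. In $d=1$ your sketch is closer to correct (the integral does factor as $(\int V)\,\|\Phi_\lambda\|_2^2$), but you still need to show $\lambda^2\|\Phi_\lambda\|_2^2\to\infty$, which requires a lower bound on $|\widehat{V\Phi_\lambda}|$ near the Fermi surface; Lemma~\ref{phi_infinity_conv} is stated only for $d=3$, and the paper explicitly notes its analogue fails in $d=1$.

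The paper circumvents all of this by rewriting the positive piece in momentum space as $(2\pi)^{1-d/2}\lambda^2\langle\Psi_{T_c^0(\lambda)},D_{T_c^0(\lambda)}\Psi_{T_c^0(\lambda)}\rangle$ for an explicit operator $D_T$ (equation~\eqref{pf_wl_2}), then decomposing via $\Psi_{T_c^0(\lambda)}=\Psi+(\Psi_{T_c^0(\lambda)}-\Psi)$ and using the $L^2$ Birman--Schwinger convergence of Lemma~\ref{psiT_asymptotic_d}. Two technical lemmas then do the work: Lemma~\ref{DTmax} bounds $\|D_T\|$ from above ($C/T$ in $d=1$, $C(\ln\mu/T)^3$ in $d=2$, $C(\ln\mu/T)^2$ in $d=3$), and Lemma~\ref{DTj} bounds $\langle\Psi,D_T\Psi\rangle$ from below by matching rates in $d=1,2$ and computes the exact limit in $d=3$. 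This approach (a) avoids any pointwise majorant in $z_1$, (b) works for $V$ of arbitrary sign, and (c) handles all three dimensions uniformly. Your proposal would need to be replaced by something of this kind for the positive piece.
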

Therefore, for small enough $\eps$, $\langle \psi_\eps, \tilde H^1_{\tcti(\lambda)} \psi_\epsilon \rangle<0$.
Since $\tcf$ and $\tcti$ coincide at weak coupling, this proves that $\inf \sigma(\tilde H_{\tcf(\lambda)}^1)<0$ at weak coupling.
This concludes the proof of Theorem~\ref{thm1}.

\begin{rem}
The additional condition $\int_{\BR^3} V(r) \widetilde m_3^{D/N}(r) \dd r>0$ for $d=3$ is exactly the limit of the terms in the round brackets in \eqref{aspt_terms} for $\lambda\to0$.
Taking the limit amounts to replacing $\Phi_\lambda$ by $j_3$ (cf. Lemma~\ref{phi_infinity_conv}).
\end{rem}

\subsection{Proof of Lemma~\ref{H1-Tcont}}\label{sec:pf_H1-Tcont}
\begin{proof}[Proof of Lemma~\ref{H1-Tcont}]
Let $0<T_0<T_1<\infty$.
We claim that there exists a constant $C_{T_0,T_1}$ such that $|K_T(p,q)-K_{T'}(p,q)|\leq C_{T_0,T_1} |T-T'|(1+p^2+q^2)$ for all $T_0\leq T,T'\leq T_1$.
To see this, compute
\begin{equation}
\frac{\p}{\p T} K_T(p,q)=\frac{K_T(p,q)}{2T^2} \frac{\sech\left(\frac{p^2-\mu}{2T}\right)^2(p^2-\mu)+\sech\left(\frac{q^2-\mu}{2T}\right)^2(q^2-\mu)}{\tanh\left(\frac{p^2-\mu}{2T}\right)+\tanh\left(\frac{q^2-\mu}{2T}\right)}.
\end{equation}
$K_T$ can be estimated using Lemma~\ref{KT-Laplace} and the remaining term is bounded.

The kinetic part $K_T^0$ of $H_T^0$ acts as multiplication by $K_T(p,0)$ in momentum space.
For $T_0<T,T'<T_1$ and $\psi$ in the Sobolev space $H^1(\BR^{d})$, therefore
\begin{equation}
\langle \psi, (K_T^{0}-K_{T'}^{0} )\psi \rangle \leq C_{T_0,T_1} |T-T'| \lVert \psi\rVert_{H^1(\BR^{d})}.
\end{equation}
Similarly, for $T_0<T,T'<T_1$ and $\psi \in H^1(\BR^{2d})$,
\begin{equation}
\langle \psi, (K_T^{\BR^d}-K_{T'}^{\BR^d} )\psi \rangle \leq C_{T_0,T_1} |T-T'|  \lVert \psi\rVert_{H^1(\BR^{2d})}.
\end{equation}

Set $D_0:=H^1(\BR^{d})$, $D_{\Omega_0}:=\{\psi \in H^1(\BR^{2d}) \vert \psi(x,y)=\psi(y,x)\}$ and $D_1:=\{\psi \in H^1(\BR^{2d}) \vert \psi(x,y)=\psi(y,x)=\mp\psi((-x_1,\tilde x) ,y)\}$, where $-/+$ corresponds to Dirichlet/Neumann boundary conditions, respectively.
Let $j\in\{0,1, \Omega_0\}$ and $\epsilon>0$.
There is a family $\{\psi_T \}$ of functions in $D_j$ such that $\lVert \psi_T \rVert_2=1$ and $\langle \psi_T,H_T^j \psi_T \rangle \leq \inf \sigma(H_T^j) +\epsilon$.

We first argue that there is a constant $C>0$ such that for all $T\in[T_0,T_1]: \lVert \psi_T \rVert_{H^1}<C$.
Recall that $2T$ lies in the essential spectrum of $H_T^0$ restricted to even functions.
Together with Lemmas~\ref{H1_esspec} and \ref{ti_reduction}, $\langle \psi_T,H_T^j \psi_T \rangle \leq 2T_1 +\epsilon$.
Furthermore, by Lemma~\ref{KT-Laplace}, the kinetic part of $H_T^j$ is bounded below by some constant $C_1(T_0)(1-\Delta)$, where $\Delta$ denotes the Laplacian in all variables.
Since the interaction is infinitesimally form bounded with respect to the Laplacian, there is a finite constant $C_2(T_0)$, such that for all $\psi \in D_j$ with $\lVert \psi \rVert_2=1$,
$\langle \psi,H_T^j \psi \rangle \geq \frac{C_1(T_0)}{2}\langle \psi, (1-\Delta)\psi \rangle -C_2(T_0)=\frac{C_1(T_0)}{2} \lVert \psi\rVert_{H^1}-C_2(T_0)$.
In particular, $\lVert \psi_T\rVert_{H^1} \leq \frac{2}{C_1(T_0)}( 2T_1 +\epsilon+C_2(T_0))=:C$.

Let $T,T'\in [T_0,T_1]$.
Then
\begin{multline}
 \inf \sigma(H_T^j) +\epsilon \geq \langle \psi_T,H_T^j \psi_T \rangle=\langle \psi_T,H_{T'}^j \psi_T \rangle+\langle \psi_T, (K_T-K_{T'}) \psi_T \rangle\\
\geq  \inf \sigma(H_{T'}^j) -|T-T'| C_{T_0,T_1} C  .
\end{multline}
Swapping the roles of $T,T'$, we obtain
\begin{equation}
 \inf \sigma(H_T^j) -\epsilon - |T-T'| C_{T_0,T_1} C\leq  \inf \sigma(H_{T'}^j) \leq  \inf \sigma(H_T^j) +\epsilon +|T-T'| C_{T_0,T_1} C
\end{equation}
and thus
\begin{equation}
 \inf \sigma(H_T^j) -\epsilon  \leq  \lim_{T'\to T} \inf \sigma(H_{T'}^j) \leq  \inf \sigma(H_T^j) +\epsilon .
\end{equation}
Since $\epsilon$ was arbitrary, equality follows.
Hence $ \inf \sigma(H_T^j) $ is continuous in $T$ for $T>0$.
\end{proof}

\subsection{Proof of Lemma~\ref{plugin_triaL_state}}\label{pf:plugin_triaL_state}
The following technical lemma will be helpful for $d=3$.
\begin{lemma}\label{lea:pf_plugin_1}
Let $V,W \in L^1\cap L^{3/2}(\BR^3)$, let $W$ be radial and let $\psi\in L^2(\BR^3)$.
Then
\begin{multline}
\int_{\BR^5} \vert \widehat{V^{1/2} \psi}(p) \vert \frac{1}{1+p^2} \widehat{W}(0,\tilde p-\tilde q) \frac{1}{1+p_1^2 +\tilde q^2} \vert \widehat{V^{1/2} \psi}  (p_1,\tilde q)\vert \dd p \dd \tilde q\\
\leq C \lVert \widehat{W}(0,\cdot) \rVert_{L^3(\BR^2)} \lVert V\rVert_{3/2} \lVert \psi \rVert_2^2<\infty
\end{multline}
for some constant $C$ independent of $V$, $W$ and $\psi$.
\end{lemma}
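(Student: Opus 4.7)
The plan is to view the inner two-fold integration over $\tilde p,\tilde q\in\BR^2$ as a convolution against $\widehat W(0,\cdot)$ on $\BR^2$ at fixed $p_1\in\BR$, and to integrate over $p_1$ separately at the end. The whole estimate is then a cascade of Young's convolution inequality and Hölder's inequality, with exponents chosen so that the two pieces of regularity available from Assumption~\ref{aspt_V_halfspace} feed in at the right places: Lemma~\ref{lp_prop}\eqref{lp_prop.3} at $t=3/2$ gives $\widehat{V^{1/2}\psi}\in L^6(\BR^3)$, and the radiality of $W$ combined with Hausdorff--Young supplies $\widehat W(0,\cdot)\in L^3(\BR^2)$.

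Setting $F(p):=\lvert\widehat{V^{1/2}\psi}(p)\rvert/(1+p^2)$ on $\BR^3$, for each fixed $p_1$ the $\tilde p,\tilde q$ integral equals $\langle F(p_1,\cdot),\widehat W(0,\cdot)\ast F(p_1,\cdot)\rangle_{L^2(\BR^2)}$. I would bound it by combining Young ($L^{6/5}\ast L^3\hookrightarrow L^6$ on $\BR^2$) with the $L^6$--$L^{6/5}$ Hölder pairing to get
\begin{equation*}
\int_{\BR^4} F(p_1,\tilde p)\,\widehat W(0,\tilde p-\tilde q)\,F(p_1,\tilde q)\,\dd\tilde p\,\dd\tilde q \leq \lVert\widehat W(0,\cdot)\rVert_{L^3(\BR^2)}\,\lVert F(p_1,\cdot)\rVert_{L^{6/5}(\BR^2)}^{2}.
\end{equation*}
Once one fixes $L^3$ for $\widehat W(0,\cdot)$, the exponent $6/5$ for $F(p_1,\cdot)$ is forced by these two inequalities.

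To bound $\lVert F(p_1,\cdot)\rVert_{L^{6/5}(\BR^2)}$ I would apply Hölder on $\BR^2$ with exponents $5$ and $5/4$ to separate $|\widehat{V^{1/2}\psi}(p_1,\cdot)|^{6/5}\in L^5(\BR^2)$ from $(1+p_1^2+|\cdot|^2)^{-6/5}\in L^{5/4}(\BR^2)$. The second factor is an explicit radial integral, $\int_{\BR^2}(1+p_1^2+\tilde p^2)^{-3/2}\,\dd\tilde p=2\pi(1+p_1^2)^{-1/2}$, which yields $\lVert F(p_1,\cdot)\rVert_{L^{6/5}(\BR^2)}\leq C\lVert\widehat{V^{1/2}\psi}(p_1,\cdot)\rVert_{L^6(\BR^2)}(1+p_1^2)^{-1/3}$. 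Integrating the square over $p_1$ with a further Hölder in exponents $3$ and $3/2$ converts the $\widehat{V^{1/2}\psi}$ factor into $\lVert\widehat{V^{1/2}\psi}\rVert_{L^6(\BR^3)}^{2}$ and leaves the convergent integral $\int_{\BR}(1+p_1^2)^{-1}\,\dd p_1=\pi$. Lemma~\ref{lp_prop}\eqref{lp_prop.3} with $t=3/2$ upgrades this into $C\lVert V\rVert_{3/2}\lVert\psi\rVert_2^{2}$, which is exactly the asserted bound.

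Finally, for the finiteness claim I need $\lVert\widehat W(0,\cdot)\rVert_{L^3(\BR^2)}<\infty$. Since $W$ is radial, $\widehat W$ is radial, so Lemma~\ref{lp_prop}\eqref{lp_prop.4} controls the slice norm $\lVert\widehat W(0,\cdot)\rVert_{L^3(\BR^2)}$ by a combination of $\lVert\widehat W\rVert_{L^3(\BR^3)}$ and $\lVert\widehat W\rVert_{L^\infty(\BR^3)}$. Both are finite: Hausdorff--Young together with $W\in L^{3/2}$ gives the first, and $W\in L^1$ gives the second. No step is conceptually deep; the only real obstacle is lining up the exponents $6/5$, $3$, $5$, $5/4$, $3$, $3/2$ in the three nested Hölder/Young applications so that the final bound is precisely $\lVert\widehat W(0,\cdot)\rVert_{L^3(\BR^2)}\lVert V\rVert_{3/2}\lVert\psi\rVert_2^{2}$.
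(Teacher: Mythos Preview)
Your proof is correct and follows essentially the same route as the paper: Young's inequality on $\BR^2$ with $L^{6/5}\ast L^3\hookrightarrow L^6$ paired via H\"older, then H\"older in $\tilde p$ with exponents $5,5/4$ to split off $\lVert\widehat{V^{1/2}\psi}(p_1,\cdot)\rVert_{L^6(\BR^2)}$ from the explicit weight $(1+p^2)^{-3/2}$, and finally H\"older in $p_1$ with exponents $3,3/2$; the finiteness of $\lVert\widehat W(0,\cdot)\rVert_{L^3(\BR^2)}$ via Lemma~\ref{lp_prop}\eqref{lp_prop.4} is exactly how the paper handles it as well.
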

\begin{proof}[Proof of Lemma~\ref{lea:pf_plugin_1}]
By Lemma~\ref{lp_prop}\eqref{lp_prop.4}, $\widehat{W}(0, \cdot) \in L^3(\BR^2)\cap L^\infty(\BR^2)$.
By Young's inequality, the integral is bounded by
\begin{equation}
C \lVert \widehat{W}(0,\cdot) \rVert_{L^3(\BR^2)} \int_\BR \left \vert \int_{\BR^2} \left\vert\frac{1}{1+p^2}\widehat{V^{1/2} \psi}(p) \right\vert^{6/5} \dd \tilde p \right \vert^{5/3} \dd p_1
\end{equation}
By Lemma~\ref{lp_prop}\eqref{lp_prop.3}, $\lVert \widehat{V^{1/2} \psi}\rVert_6 \leq C \lVert V \rVert_{3/2}^{1/2} \lVert \psi \rVert_2$.
Applying H\"older's inequality in the $\tilde p$ variables, we obtain the bound
\begin{equation}
C \lVert \widehat{W}(0,\cdot) \rVert_{L^3(\BR^2)} \int_\BR \left \vert \int_{\BR^2} \frac{1}{(1+p^2)^{3/2}}\dd \tilde p \right \vert^{4/3} \left \vert \int_{\BR^2} |\widehat{V^{1/2} \psi}(p)|^6\dd \tilde p \right \vert^{1/3}\dd p_1
\end{equation}
Applying H\"older's inequality in $p_1$, we further obtain
\begin{equation}
C \lVert \widehat{W}(0,\cdot) \rVert_{L^3(\BR^2)} \left(\int_\BR \left \vert \int_{\BR^2} \frac{1}{(1+p^2)^{3/2}}\dd \tilde p \right \vert^{2} \dd p_1\right)^{2/3} \lVert \widehat{V^{1/2}\psi}\rVert_6^2
\end{equation}
The remaining integral is finite.
\end{proof}

\begin{proof}[Proof of Lemma~\ref{plugin_triaL_state}]
Plugging in the trial state and regrouping terms we obtain
\begin{multline}\label{H1_exp}
\langle \psi_\epsilon , \tilde H^1_{\tcti(\lambda)} \psi_\epsilon \rangle =
2\int_{\BR^{2d+2}}  \Bigg [ \overline{\Phi_{\lambda}(r)}e^{-\eps \vert z_1 \vert}(K_{T}(r,z_1;r',z_1')-\lambda V(r)\delta(r-r' )\delta(z_1-z_1'))\Phi_{\lambda}(r')e^{-\eps \vert z_1' \vert}\\
\mp\overline{\Phi_{\lambda}(r)}e^{-\eps \vert z_1 \vert}(K_{T}(r,z_1;r',z_1')-\lambda V(r) \delta(r-r' )\delta(z_1-z_1'))e^{-\eps \vert r_1' \vert}\Phi_{\lambda}(z_1',\tilde r')\Bigg]  \dd r\dd z_1 \dd r' \dd z_1' \\
+2\int_{\BR^{d+1}}  \Bigg [  \lambda V(r)\chi_{\vert z_1 \vert<\vert r_1 \vert}|\Phi_{\lambda}(r)|^2 e^{-2\eps \vert z_1 \vert}
\mp\overline{\Phi_{\lambda}(r)}e^{-\eps \vert z_1 \vert} \lambda V(r)\chi_{\vert z_1 \vert<\vert r_1 \vert}e^{-\eps \vert r_1 \vert}\Phi_{\lambda}(z_1,\tilde r)\\
+ \lambda V(z_1, \tilde r)\chi_{\vert r_1 \vert<\vert z_1 \vert}|\Phi_{\lambda}(r)|^2 e^{-2\eps \vert z_1 \vert}
\mp\overline{\Phi_{\lambda}(r)}e^{-\eps \vert z_1 \vert} \lambda  V(z_1, \tilde r)\chi_{\vert z_1 \vert>\vert r_1 \vert}e^{-\eps \vert r_1 \vert}\Phi_{\lambda}(z_1,\tilde r) \\
- \lambda V(z_1, \tilde r)|\Phi_{\lambda}(r)|^2 e^{-2\eps \vert z_1 \vert}\pm\overline{\Phi_{\lambda}(r)}e^{-\eps \vert z_1 \vert} \lambda  V(z_1, \tilde r)e^{-\eps \vert r_1 \vert}\Phi_{\lambda}(z_1,\tilde r)
\Bigg]  \dd r\dd z_1
\end{multline}
We will prove that the first integral vanishes due to the eigenvalue equation $H^0_{\tcti(\lambda)}\Phi_\lambda =0$ as $\eps\to 0$.
For the second integral in \eqref{H1_exp}, we will show that it is bounded as $\epsilon \to 0$ and argue that it is possible to interchange limit and integration.
The limit of the second integral is exactly the right hand side of $\eqref{aspt_terms}$.

The first two terms in the integrand of the second integral in \eqref{H1_exp} can be bounded by $\lambda \lVert\Phi_{\lambda}\rVert_\infty^2  |V(r)|\chi_{|z_1| <| r_1 |}$.
This is an $L^1$ function, since $\vert \cdot \vert V\in L^1$ and $\lVert\Phi_{\lambda}\rVert_\infty< \infty$ by Lemma~\ref{phi_infinity_bd}.
The same argument applies to the next two terms as well.

For the fifth term in the second integral, we can interchange limit and integration by dominated convergence if $\int_{\BR^{d+1}} \vert V(r)\vert \vert \Phi_{\lambda}(z_1, \tilde r) \vert^2 \dd r \dd z_1<\infty$.
Observe that
\begin{equation}\label{pf_e0_1}
\int_{\BR^{d+1}} \vert V(r)\vert \vert \Phi_{\lambda}(z_1, \tilde r) \vert^2 \dd r \dd z_1 = (2\pi)^{1-d/2} \int_{\BR^{2d-1}} \overline{\widehat\Phi_{\lambda}(p)}\widehat {\vert V\vert }(0,\tilde p - \tilde q) \widehat \Phi_{\lambda}(p_1,\tilde q) \dd p \dd \tilde q
\end{equation}
According to Lemma~\ref{phi_infinity_bd}\eqref{phi_infty.1} the latter is bounded by
\begin{equation}
C  \int_{\BR^{2d-1}}\vert \widehat{V^{1/2} \Psi_{\tcti(\lambda)}}(p) \vert \frac{1}{1+p^2} \widehat{\vert V\vert}(0,\tilde p-\tilde q) \frac{1}{1+p_1^2 +\tilde q^2} \vert \widehat{V^{1/2} \Psi_{\tcti(\lambda)}}  (p_1,\tilde q)\vert \dd p \dd \tilde q
\end{equation}
For $d=1,2$ we bound this by
\begin{equation}\label{pf_e0_2}
C \lVert V \rVert_1^2 \lVert \Psi\rVert_2^2 \int_{\BR^{2d-1}} \frac{1}{(1+p_1^2+\tilde p^2)(1+p_1^2+\tilde q^2)} \dd p \dd \tilde q,
\end{equation}
which is finite.
For $d=3$, \eqref{pf_e0_2} is finite by Lemma~\ref{lea:pf_plugin_1} since $W=|V|$ is radial and in $L^1\cap L^{3/2}$. % and Lemma~\ref{lp_prop}\eqref{lp_prop.4}.
Hence, limit and integration can be interchanged for the fifth term in the second integral in \eqref{H1_exp}.

For the last term in \eqref{H1_exp} we have
\begin{multline}\label{4.105}
\int_{\BR^{d+1}} \overline{\Phi_{\lambda}(r)}e^{-\eps \vert z_1 \vert}  V(z_1,\tilde r)e^{-\eps \vert r_1 \vert}\Phi_{\lambda}(z_1,\tilde r) \dd z_1 \dd r\\
=\frac{2}{\pi}\int_{\BR^{d+1}} \overline{ \widehat \Phi_{\lambda}(p)} \frac{\eps}{\eps^2+q_1^2} \frac{\eps}{\eps^2+p_1^2} \widehat{V\Phi_{\lambda}}(q_1,\tilde p) \dd p  \dd q_1\\
=\frac{2}{\pi}\int_{\BR^{d+1}} \overline{ \widehat \Phi_{\lambda}(\eps p_1,\tilde p)} \frac{1}{1+q_1^2} \frac{1}{1+p_1^2} \widehat{V\Phi_{\lambda}}(\eps q_1,\tilde p) \dd p  \dd q_1.
\end{multline}
According to Lemma~\ref{phi_infinity_bd}\eqref{phi_infty.1} and Lemma~\ref{lp_prop}\eqref{lp_prop.3}, the integrand is bounded by $ \frac{C(\lambda) }{1+\tilde p^2} \frac{ \lVert V\rVert_1 \lVert \Psi \rVert_2^2}{(1+q_1^2)(1+p_1^2)}$.
For $d=1,2$ this is integrable, so by dominated convergence and since $\int_\BR \frac{1}{1+x^2} \dd x =\pi$, this term converges to the last term in \eqref{aspt_terms}.
For $d=3$, the following result will be useful.
\begin{lemma}\label{pf_e0_3d_fn}
Let $\lambda,T,\mu>0$ and $d=3$ and let $V$ satisfy~\ref{aspt_V_halfspace}.
The functions
\begin{equation}
f(p_1,q_1)=\int_{\BR^{2}}\overline{ \widehat \Phi_{\lambda}(p)} \widehat{V\Phi_{\lambda}}(q_1,\tilde p) \dd \tilde p
\end{equation}
and
\begin{equation}
g(p_1,q_1)= \int_{\BR^{2}} B^{-1}_{T}(( p_1, \tilde p),( q_1,\tilde 0)) \overline{\widehat\Phi_{\lambda}( p_1, \tilde p)}  \widehat\Phi_{\lambda}( q_1,\tilde p) \dd \tilde p
\end{equation}
are bounded and continuous.
\end{lemma}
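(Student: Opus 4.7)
The plan is to exploit the eigenvalue equation $K_{T_c^0(\lambda)}(p,p)\,\widehat\Phi_\lambda(p)=\lambda\,\widehat{V\Phi_\lambda}(p)$ (which follows from $H^0_{T_c^0(\lambda)}\Phi_\lambda=0$ together with $B_{T_c^0(\lambda)}(p,0)=1/K_{T_c^0(\lambda)}(p,p)$) to replace factors of $\widehat\Phi_\lambda$ by the better-behaved $\widehat{V\Phi_\lambda}$, after which both $f$ and $g$ can be controlled via slice-wise H\"older inequalities in $\tilde p$. The central ingredient is the uniform bound
\[
\bar R(p_1,q_1,\tilde p) := \frac{K_T((p_1+q_1,\tilde p),(p_1-q_1,\tilde p))}{K_{T_c^0(\lambda)}((p_1,\tilde p),(p_1,\tilde p))\,K_{T_c^0(\lambda)}((q_1,\tilde p),(q_1,\tilde p))} \leq \frac{C}{1+\tilde p^2},
\]
valid for all $(p_1,q_1)\in\BR^2$ and $\tilde p\in\BR^2$ with $C$ depending on $T, T_c^0(\lambda), \mu$. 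This follows from Lemma~\ref{KT-Laplace}, the lower bound $K_{T_c^0(\lambda)}(w,w)\geq c(1+|w|^2)$ (combining $K_T\geq 2T$ with the quadratic growth at infinity), and the algebraic splitting $2p_1^2+2q_1^2+2\tilde p^2+1 \leq 2(1+p_1^2+\tilde p^2)+2(1+q_1^2+\tilde p^2)$, which cancels the two quadratically growing denominator factors against the polynomial growth of the numerator.

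For boundedness of $g$, I apply the eigenvalue equation to both $\widehat\Phi_\lambda$ factors and write $g=\lambda^2\int_{\BR^2} \bar R\,\overline{\widehat{V\Phi_\lambda}(p_1,\tilde p)}\,\widehat{V\Phi_\lambda}(q_1,\tilde p)\,\dd\tilde p$, and then apply H\"older's inequality with three exponents equal to $3$. Since $V\Phi_\lambda\in L^1\cap L^{3/2}(\BR^3)$ (from $V\in L^1\cap L^{3/2}$ and $\Phi_\lambda\in L^\infty$ via Lemma~\ref{phi_infinity_bd}\eqref{phi_infty.3}), the Hausdorff--Young inequality yields $\widehat{V\Phi_\lambda}\in L^3\cap L^\infty(\BR^3)$; radiality (guaranteed by uniqueness of the ground state combined with rotation invariance) together with Lemma~\ref{lp_prop}\eqref{lp_prop.4} gives $\sup_{q_1}\lVert\widehat{V\Phi_\lambda}(q_1,\cdot)\rVert_{L^3(\BR^2)}<\infty$, and $\lVert (1+\tilde p^2)^{-1}\rVert_{L^3(\BR^2)}$ is finite by a direct radial computation. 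For $f$ I use H\"older with exponents $3/2$ and $3$; the first factor $\lVert\widehat\Phi_\lambda(p_1,\cdot)\rVert_{L^{3/2}(\BR^2)}$ is bounded uniformly in $p_1$ by direct integration of Lemma~\ref{phi_infinity_bd}\eqref{phi_infty.1}, since $\int_{\BR^2}(1+p_1^2+\tilde p^2)^{-3/2}\,\dd\tilde p = 2\pi/\sqrt{1+p_1^2}$.

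For continuity I split $g(p_1,q_1)-g(p_1^0,q_1^0)$ and $f(p_1,q_1)-f(p_1^0,q_1^0)$ into three pieces via the triangle inequality and apply H\"older to each. Pointwise convergence of each integrand follows from continuity of $K_T$ together with continuity of $\widehat\Phi_\lambda$ and $\widehat{V\Phi_\lambda}$ (Lemma~\ref{phi_infinity_bd}\eqref{phi_infty.2} and $V\Phi_\lambda\in L^1$). To upgrade pointwise to $L^3$- respectively $L^{3/2}$-convergence of the slices, I invoke the Brezis--Lieb lemma: by radiality, $\lVert\widehat{V\Phi_\lambda}(p_1,\cdot)\rVert_{L^3(\BR^2)}^3 = 2\pi\int_{|p_1|}^\infty|H(s)|^3 s\,\dd s$ is a continuous function of $|p_1|$ (and analogously for $\widehat\Phi_\lambda$ in $L^{3/2}$), so pointwise convergence combined with norm convergence yields norm convergence of the difference. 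For $\bar R$ itself, dominated convergence in $L^3(\BR^2)$ applies since the uniform majorant $C/(1+\tilde p^2)$ belongs to $L^3(\BR^2)$. The main obstacle I anticipate is establishing the uniform bound on $\bar R$: without it, naive estimates only produce polynomial growth of $g$ in $(p_1,q_1)$ and global boundedness would fail; once this estimate is secured, the remainder is standard bookkeeping of slice norms via Hausdorff--Young, the radiality lemma~\ref{lp_prop}\eqref{lp_prop.4}, and Brezis--Lieb.
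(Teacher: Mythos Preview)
Your proof is correct and, at the structural level, follows the same scheme as the paper: use the eigenvalue equation to replace $\widehat\Phi_\lambda$ by $\widehat{V\Phi_\lambda}$, establish the bound $D_g\leq C/(1+|\tilde p|^2)$ (your $\bar R$), and then control the $\tilde p$-integral by H\"older. The differences lie in the H\"older splitting and in the continuity argument. The paper uses an $L^\infty\times L^1$ split: it pulls out $\lVert\widehat{V\Phi_\lambda}\rVert_\infty$ and bounds the remaining $L^1(\BR^2)$-norm by writing $\widehat{V\Phi_\lambda}=\widehat V*\widehat\Phi_\lambda$ and invoking the slice bound on $\widehat V$ from Lemma~\ref{lp_prop}\eqref{lp_prop.4} together with $\lVert\widehat\Phi_\lambda\rVert_1<\infty$ (this is the auxiliary Lemma~\ref{pf_e0_6}). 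You instead use a symmetric $L^3\times L^3\times L^3$ split for $g$ (and $L^{3/2}\times L^3$ for $f$), obtaining the slice $L^3$-bound on $\widehat{V\Phi_\lambda}$ directly via Hausdorff--Young from $V\Phi_\lambda\in L^{3/2}$ and Lemma~\ref{lp_prop}\eqref{lp_prop.4}; this is arguably cleaner and avoids routing through the convolution. For continuity, the paper exploits the moment condition $|\cdot|V\in L^1$ of Assumption~\ref{aspt_V_halfspace}\eqref{aspt.3} to get the uniform Lipschitz bound $|\widehat{V\Phi_\lambda}(p_1+\eps,\tilde p)-\widehat{V\Phi_\lambda}(p_1,\tilde p)|\leq C\eps$, which immediately gives continuity in each variable. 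Your Brezis--Lieb/Riesz route (pointwise convergence plus convergence of the slice $L^3$-norms, the latter from radiality) is a valid alternative that does not use the moment condition; it is slightly less elementary but equally rigorous. One small remark: for the $\widehat\Phi_\lambda$-factor in $f$, you do not even need Brezis--Lieb, since the pointwise majorant $C/(1+|\tilde p|^2)$ already lies in $L^{3/2}(\BR^2)$ and dominated convergence suffices.
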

Its proof can be found after the end of the current proof.

We write the term in \eqref{4.105} as
\begin{equation}\label{pf_e0_3}
\frac{2}{\pi}\int_{\BR^{2}} \frac{f(\eps p_1,\eps q_1)}{(1+q_1^2)(1+p_1^2)}\dd p_1  \dd q_1.
\end{equation}
By Lemma~\ref{pf_e0_3d_fn} we can exchange limit and integration by dominated convergence and \eqref{pf_e0_3} converges to the last term in \eqref{aspt_terms}.

For the second summand in the first integral in \eqref{H1_exp} we also want to argue using dominated convergence.
The interaction term agrees with \eqref{4.105}.
The kinetic term can be written as
\begin{multline}
\frac{4}{\pi} \int_{\BR^{d+1}} \frac{1}{(1+q_1^2)(1+p_1^2)}B^{-1}_{T}((\epsilon p_1, \tilde p),(\eps q_1,\tilde 0)) \overline{\widehat\Phi_{\lambda}(\epsilon p_1, \tilde p) } \widehat\Phi_{\lambda}(\epsilon q_1,\tilde p) \dd p \dd q_1 \\
=\frac{4}{\pi} \int_{\BR^{2}} \frac{1}{(1+q_1^2)(1+p_1^2)} g(\epsilon p_1,\eps q_1) \dd p_1 \dd q_1
\end{multline}
For $d=3$, we can apply dominated convergence according to Lemma~\ref{pf_e0_3d_fn}.
For $d=1,2$ note that by Lemma~\ref{phi_infinity_bd} and Lemma~\ref{KT-Laplace},
\begin{multline}\label{pf_e0_5}
B^{-1}_{T}(p,(q_1,\tilde 0)) \vert \widehat\Phi_{\lambda}(p) \vert\vert \widehat\Phi_{\lambda}(q_1, \tilde p)\vert\leq C_{T,\mu,\lambda}\frac{1+p^2+q_1^2}{(1+p^2)(1+q_1^2+\tilde p^2)}\lVert V\rVert_1 \lVert \Psi \rVert_2^2 \\
\leq 2 C_{T,\mu,\lambda}\frac{\lVert V\rVert_1 \lVert \Psi \rVert_2^2 }{1+\tilde p^2}
\end{multline}
Therefore, the integrand is bounded by $\frac{C \lVert V\rVert_1 \lVert \Psi \rVert_2^2 }{(1+q_1^2)(1+p_1^2)(1+\tilde p^2)}$.
For $d=1,2$ this is integrable and we can apply dominated convergence.
We conclude
that the limit of the second summand in the first integral in \eqref{H1_exp} as $\epsilon \to 0$ equals
\begin{equation}
4 \pi \int_{\BR^{d-1}}\left(  \frac{\vert \widehat\Phi_{\lambda}(0,\tilde p) \vert^2}{B_{T}((0,\tilde p),0)}-\lambda \overline{ \widehat\Phi_{\lambda}(0,\tilde p)} \widehat{ V \Phi_{\lambda}}(0,\tilde p) \right) \dd \tilde p= 0
\end{equation}
where we used that $\int_\BR \frac{1}{1+x^2} \dd x =\pi$ and \eqref{eval_eq}.

To see that the first summand in the first integral in \eqref{H1_exp} vanishes as $\eps\to0$, we use \eqref{eval_eq} to obtain
\begin{equation}
\frac{2}{\epsilon}\lambda  \int_{\BR^d} V(r) \vert\Phi_{\lambda}(r)\vert^2\dd r = \frac{2}{\eps} \int_{\BR^d} B_{T}^{-1}(p,0)\vert\widehat\Phi_{\lambda}(p)\vert^2 \dd p
= \frac{4}{\pi} \int_{\BR^{d+1}} \frac{\eps^2}{(\eps^2+q_1^2)^2} B_{T}^{-1}(p,0)\vert\widehat\Phi_{\lambda}(p)\vert^2 \dd p  \dd q_1.
\end{equation}
Hence, we need to prove that
\begin{equation}
\lim_{\epsilon \to 0} \int_{\BR^{d+1}} \frac{\epsilon^2}{(\epsilon^2+q_1^2)^2}  (B^{-1}_{T}(p,(q_1, \tilde 0))- B^{-1}_{T}(p,0))\vert \widehat\Phi_{\lambda}(p) \vert^2 \dd p \dd q_1=0
\end{equation}
We split the integration into two regions with $\vert q_1 \vert>C_1$ and $\vert q_1 \vert<C_1$, respectively.
By Lemma~\ref{KT-Laplace}, we have $B^{-1}_{T}(p,q)\leq C_2 (1+p^2+q^2)$. %and $\phi_\lambda(p)=B_{T}(p,0)\leq C_2(1+p^2)^{-1}$.
Together with $\Phi_{\lambda} \in H^1(\BR^d)$ therefore
\begin{multline}
\int_{\BR^{d+1}, \vert q_1 \vert>C_1}\frac{\epsilon^2}{(\epsilon^2+q_1^2)^2}  \vert B^{-1}_{T}(p,(q_1,\tilde 0))- B^{-1}_{T}(p,0)\vert \vert \widehat\Phi_{\lambda}(p) \vert^2 \dd p \dd q_1\\
\leq 2C_2  \int_{\BR^2, \vert q _1\vert>C_1} \frac{\eps^2(1+p^2+q_1^2)\vert \widehat\Phi_{\lambda}(p) \vert^2 }{q_1^4}  \dd p \dd q_1 <  C_3  \eps^2 \lVert \Phi_\lambda \rVert_{H^1}^2,
\end{multline}
which vanishes in the limit $\epsilon \to 0$.
For the case $\vert q_1 \vert<C$, the following Lemma is useful.
Its proof can be found at the end of this Section.
\begin{lemma}\label{Bq-B0-bound}
Let $T,\mu>0$, $d\in\{1,2,3\}$.
The function
\begin{equation}
k(p,q):=\frac{1}{\vert q\vert }(B_{T}(p,q)- B_{T}(p,0))
\end{equation}
is continuous at $q=0$ and satisfies $k(p,0)=0$ for all $p\in \BR^d$.
Furthermore, there is a constant $C$ depending only on $T,\mu,d$ such that $\vert k(p,q) \vert < \frac{C}{1+p^2}$ for all $p,q\in \BR^d$.
\end{lemma}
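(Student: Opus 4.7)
The approach rests on two basic facts. First, $B_T$ is a smooth function on $\BR^{2d}$: since $K_T(p+q,p-q)\geq 2T>0$, any apparent singularity of $K_T$ where its numerator $(p+q)^2+(p-q)^2-2\mu$ vanishes is removable in the ratio, and $B_T=1/K_T$ is smooth. Second, $B_T(p,-q)=B_T(p,q)$, because exchanging $q\leftrightarrow -q$ simply swaps the two symmetric arguments of $K_T$.

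Smoothness together with the evenness of $B_T$ in $q$ gives $\nabla_q B_T(p,0)=0$, so Taylor's theorem with integral remainder yields
\begin{equation}
B_T(p,q)-B_T(p,0) = \int_0^1 (1-s)\, q^\top H_q B_T(p,sq)\, q\, \dd s,
\end{equation}
where $H_q B_T$ denotes the Hessian in the $q$-variables. In particular $B_T(p,q)-B_T(p,0)=O(|q|^2)$ as $q\to 0$, so $k(p,q)\to 0$, and setting $k(p,0):=0$ provides a continuous extension.

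For the uniform bound the plan is to split into cases. If $|q|\geq 1$, the triangle inequality combined with $B_T(p,q)\leq C/(1+p^2+q^2)$ (which follows from \eqref{BT_bound} together with the crude bound $B_T\leq 1/(2T)$) gives
\begin{equation}
|k(p,q)|\leq \frac{B_T(p,q)+B_T(p,0)}{|q|}\leq \frac{C}{1+p^2}.
\end{equation}
If $|q|\leq 1$, the Taylor formula above gives $|k(p,q)|\leq |q|\sup_{|q'|\leq 1}\lVert H_q B_T(p,q')\rVert$, so it suffices to show $\lVert H_q B_T(p,q')\rVert\leq C/(1+p^2)^2$ uniformly for $|q'|\leq 1$.

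This Hessian bound is the main technical step and the place where the estimate is tight. For $p$ in a bounded set it is automatic from smoothness of $B_T$ and compactness. For $|p|$ large one has $(p\pm q')^2-\mu\geq c\,p^2$ whenever $|q'|\leq 1$, and using $1-\tanh(x)=O(e^{-2x})$ as $x\to+\infty$ one can replace both $\tanh$-factors in $K_T$ by $1$ with an error exponentially small in $p^2$. In this regime $B_T(p,q')=(p^2+q'^2-\mu)^{-1}+O(e^{-cp^2})$, and differentiating the explicit part twice in $q'$ produces second derivatives bounded by $C/(1+p^2)^2$, while the exponential remainder and its derivatives contribute negligibly. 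Combining the two $p$-regimes gives the required Hessian estimate and completes the proof.
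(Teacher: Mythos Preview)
Your argument is correct and takes a genuinely different route from the paper. The paper expands $B_T$ via the Matsubara partial-fraction identity
\[
B_T(p,q)=2T\sum_{n\in\BZ}\frac{1}{((p+q)^2-\mu-iw_n)((p-q)^2-\mu+iw_n)},\qquad w_n=(2n+1)\pi T,
\]
computes the difference $B_T(p,q)-B_T(p,0)$ term by term, and then bounds the resulting sum by estimating the numerator against two of the four resolvent factors and summing the remaining $((p^2-\mu)^2+w_n^2)^{-1}$ by comparison with an integral. You instead exploit smoothness and the evenness $B_T(p,-q)=B_T(p,q)$ to obtain a second-order Taylor remainder, and then control the Hessian by splitting into bounded and large $|p|$ and using $\tanh(x)=1+O(e^{-2x})$ in the large-$|p|$ regime. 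Your Hessian bound $\lVert H_qB_T(p,q')\rVert\leq C/(1+p^2)^2$ is in fact a bit stronger than what the lemma asks for, and yields the extra factor $|q|$ on the right-hand side for $|q|\leq 1$. The paper's approach has the advantage that the Matsubara expansion is a standard workhorse in BCS analysis and gives explicit algebraic control in one stroke; your approach is more elementary in that it avoids the series identity, at the cost of having to track that the $q'$-derivatives of the exponentially small correction $1-\tanh$ pick up only polynomial factors in $p$ (this is routine but should be stated). Either way the lemma follows.
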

Since $ B^{-1}_{T}(p,q)- B^{-1}_{T}(p,0)= -\frac{ \vert q \vert k(p,q)}{B_{T}(p,q) B_{T}(p,0)}$, we have
\begin{multline}
\int_{\BR^{d+1}, \vert q_1 \vert<C_1}\frac{\epsilon^2}{(\epsilon^2+q_1^2)^2}  ( B^{-1}_{T}(p,(q_1,\tilde 0))- B^{-1}_{T}(p,0)) \vert \widehat\Phi_{\lambda}(p) \vert^2 \dd p \dd q_1\\
=-\int_{\BR^{d+1}}\frac{ \vert q_1 \vert \chi_{\vert q_1 \vert<C_1/\epsilon}}{(1+q_1^2)^2} \frac{ k(p,(\epsilon q_1,\tilde 0))}{B_{T}(p,(\epsilon q_1,\tilde 0)) B_{T}(p,0)} \vert \widehat\Phi_{\lambda}(p) \vert^2 \dd p \dd q_1
\end{multline}
By Lemma~\ref{KT-Laplace} and Lemma~\ref{Bq-B0-bound}, we can bound the absolute value of the integrand by
\begin{equation}
C \frac{ \vert q_1 \vert \chi_{\vert q_1 \vert<C_1/\epsilon}}{(1+q_1^2)^2}  (1+p^2+\eps^2 q_1^2) \vert \widehat\Phi_{\lambda}(p) \vert^2
\leq C \frac{ \vert q_1 \vert }{(1+q_1^2)^2}  (1+p^2+C_1^2)\vert \widehat\Phi_{\lambda}(p) \vert^2
\end{equation}
The latter is integrable since $\Phi_\lambda \in H^1(\BR^d)$.
Thus, by dominated convergence and since $k(p,0)=0$, the integral vanishes in the limit $\epsilon \to 0$.
\end{proof}

\begin{proof}[Proof of Lemma~\ref{pf_e0_3d_fn}]
For convenience, we introduce the notation $D_f(p,q_1)=\lambda B_{T}(p,0)$ and
\begin{equation}
D_g(p,q_1)=\lambda^2 B_{T}(p,0) B_{T}(p,(q_1,\tilde0))^{-1}B_{T}((q_1,\tilde p),0).
\end{equation}
For $h\in\{f,g\}$, $D_f(p,q_1),D_g(p,q_1)\leq \frac{C}{1+\tilde p^2}$ by Lemma~\ref{KT-Laplace} and \eqref{BT_bound}.
Furthermore,
\begin{equation}
h(p_1,q_1)=\int_{\BR^{d-1}}\overline{ \widehat{V\Phi_{\lambda}}(p_1,\tilde p) }D_h(p,q_1) \widehat{V\Phi_{\lambda}}(q_1,\tilde p) \dd \tilde p
\end{equation}
using \eqref{eval_eq}.

\begin{lemma}\label{pf_e0_6}
For $h\in\{f,g\}$,
\begin{equation}
\sup_{p_1,q_1,w_1 \in \BR} \lVert D_h((p_1,\cdot),q_1)\widehat{V\Phi_{\lambda}}(w_1,\cdot)\rVert_{L^1(\BR^2)}\leq  \sup_{w_1 \in \BR} \left\lVert \frac{C}{1+\vert \cdot \vert^2}\widehat{V\Phi_{\lambda}}(w_1,\cdot)\right\rVert_{L^1(\BR^2)}<\infty.
\end{equation}
\end{lemma}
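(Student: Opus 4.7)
The first inequality is immediate: the pointwise estimate $D_h((p_1,\tilde p),q_1)\le C/(1+\tilde p^2)$ noted just above (combining Lemma~\ref{KT-Laplace} and \eqref{BT_bound}) is uniform in $p_1$ and $q_1$, so inserting it inside the $L^1(\BR^2)$ norm over $\tilde p$ and taking $\sup_{p_1,q_1}$ yields exactly the right-hand side.

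To prove finiteness, the plan is to apply H\"older's inequality with a pair of conjugate exponents $p,p'\in(1,\infty)$:
\begin{equation*}
\left\lVert \frac{1}{1+|\cdot|^2}\,\widehat{V\Phi_\lambda}(w_1,\cdot)\right\rVert_{L^1(\BR^2)} \le \left\lVert \frac{1}{1+|\cdot|^2}\right\rVert_{L^p(\BR^2)}\,\lVert \widehat{V\Phi_\lambda}(w_1,\cdot)\rVert_{L^{p'}(\BR^2)}.
\end{equation*}
The first factor is finite for every $p>1$. To handle the second factor uniformly in $w_1$, I would use that $V\Phi_\lambda$ is radial on $\BR^3$: $V$ is radial by Assumption~\ref{aspt_V_halfspace}\eqref{aspt.2}, and $\Phi_\lambda$ is radial because it is a unique ground state in the regime of Lemma~\ref{plugin_triaL_state} (cf.\ Remark~\ref{gs_symm_lsmall}). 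Hence $\widehat{V\Phi_\lambda}$ is radial on $\BR^3$, and Lemma~\ref{lp_prop}\eqref{lp_prop.4} reduces $\sup_{w_1}\lVert\widehat{V\Phi_\lambda}(w_1,\cdot)\rVert_{L^{p'}(\BR^2)}$ to a combination of $\lVert\widehat{V\Phi_\lambda}\rVert_{L^{p'}(\BR^3)}$ and $\lVert\widehat{V\Phi_\lambda}\rVert_{L^\infty(\BR^3)}$.

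Both global norms of $\widehat{V\Phi_\lambda}$ are then controlled by standard ingredients: $V\Phi_\lambda\in L^1(\BR^3)$ (from $V\in L^1$ and $\Phi_\lambda\in L^\infty$, using Lemma~\ref{phi_infinity_bd}\eqref{phi_infty.3}) gives the $L^\infty$ bound on its Fourier transform, while factoring $V\Phi_\lambda=V^{1/2}\cdot(V^{1/2}\Phi_\lambda)$ with $V^{1/2}\Phi_\lambda=\Psi_{T_c^0(\lambda)}\in L^2$ and invoking Lemma~\ref{lp_prop}\eqref{lp_prop.3} with $t=p_3$ yields the $L^{p'}(\BR^3)$ bound for $p'=2p_3/(p_3-1)$. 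The one point that requires care — and what I take to be the main step — is checking the compatibility of exponents: the conjugate exponent $p=2p_3/(p_3+1)$ strictly exceeds $1$ precisely because $p_3>1$, and Assumption~\ref{aspt_V_halfspace}\eqref{aspt.1} gives $p_3>3/2$ in dimension three, comfortably enough to make $1/(1+|\cdot|^2)\in L^p(\BR^2)$. No deeper obstruction appears; the argument is a routine combination of H\"older, Hausdorff--Young (packaged in Lemma~\ref{lp_prop}\eqref{lp_prop.3}), and the radial slice inequality of Lemma~\ref{lp_prop}\eqref{lp_prop.4}.
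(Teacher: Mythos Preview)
Your argument is correct, but it is organized differently from the paper's. Both proofs begin with the pointwise bound $D_h\le C/(1+\tilde p^2)$ and then apply H\"older in the $\tilde p$ variable; the difference is what gets peeled off. The paper first expands $\widehat{V\Phi_\lambda}=(2\pi)^{-3/2}\widehat V*\widehat\Phi_\lambda$ and applies H\"older to separate $1/(1+\tilde p^2)$ from the \emph{slice of $\widehat V$}; the remaining $k$-integral is then controlled by $\sup_{k_1}\lVert\widehat V(k_1,\cdot)\rVert_{L^s(\BR^2)}$ (finite by Lemma~\ref{lp_prop}\eqref{lp_prop.4} applied to the radial function $\widehat V$) together with $\lVert\widehat\Phi_\lambda\rVert_{L^1(\BR^3)}$ from Lemma~\ref{phi_infinity_bd}. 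You instead apply H\"older directly to separate $1/(1+\tilde p^2)$ from the \emph{slice of $\widehat{V\Phi_\lambda}$}, and then invoke Lemma~\ref{lp_prop}\eqref{lp_prop.4} on $\widehat{V\Phi_\lambda}$ itself, which requires $V\Phi_\lambda$ to be radial. That radiality indeed holds under the standing hypothesis of Lemma~\ref{plugin_triaL_state} (unique ground state, Remark~\ref{gs_symm_lsmall}), so the step is legitimate. Your route is a bit shorter and avoids the convolution unpacking; the paper's route is slightly more robust in that it only uses radiality of $V$, not of $\Phi_\lambda$. The exponent check you flag ($p=2p_3/(p_3+1)>1$ because $p_3>3/2$) is exactly the constraint that makes $1/(1+|\cdot|^2)\in L^p(\BR^2)$, and it matches the paper's requirement $r>1$.
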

\begin{proof}
Using H\"older's inequality,
\begin{multline}\label{pf_e0_7}
\lVert D_h((p_1,\cdot),q_1)\widehat{V\Phi_{\lambda}}(w_1,\cdot)\rVert_{L^1(\BR^2)} \leq \left\lVert \frac{C}{1+\vert \cdot \vert^2}\widehat{V\Phi_{\lambda}}(w_1,\cdot)\right\rVert_{L^1(\BR^2)}\\
\leq \frac{1}{(2\pi)^{3/2}}\int_{\BR^{2}}\int_{\BR^3}\frac{C}{1+\tilde p^2} \vert \widehat{V}((w_1,\tilde p)-k)\vert \vert \widehat{\Phi_{\lambda}}(k)\vert \dd k  \dd \tilde p \\
\leq C  \left \Vert \frac{1}{1+\vert \cdot \vert^2} \right \Vert_{L^r(\BR^{2})} \int_\BR \left(\int_{\BR^2}\vert \widehat{V}(w_1-k_1,\tilde p)\vert^s \dd \tilde p\right)^{1/s} \left(\int_{\BR^2}\vert \widehat{\Phi_{\lambda}}(k)\vert \dd \tilde k\right)\dd k_1 \\
\leq C \left \Vert \frac{1}{1+\vert \cdot \vert^2} \right \Vert_{L^r(\BR^{2})} \sup_{k_1} \lVert \widehat{V}(k_1,\cdot)\rVert_s  \lVert \widehat \Phi_{\lambda}\rVert_1 ,
\end{multline}
where $1=1/r+1/s$.
For this to be finite we need $r>1$, i.e. $s<\infty$.
By Lemma~\ref{lp_prop}\eqref{lp_prop.4}, $\sup_{q_1} \lVert \widehat{V}(q_1,\cdot)\rVert_3 <\infty$.
Furthermore $\lVert \widehat \Phi_{\lambda} \rVert_1 $ is bounded by Lemma~\ref{phi_infinity_bd}.
\end{proof}

The functions $f$ and $g$ are bounded, as can be seen using that $\lVert \widehat{V \Phi_{\lambda}}\rVert_\infty\leq C \lVert V\rVert_1^{1/2} \lVert \Psi_{\tcti(\lambda)} \rVert_2$ by Lemma~\ref{lp_prop}\eqref{lp_prop.3} and  $\lVert \Psi_{\tcti(\lambda)} \rVert_2=1$, hence we get that for $h\in \{f,g\}$
\begin{equation}
|h(p_1,q_1)|\leq C  \lVert V\rVert_1^{1/2} \sup_{p_1,q_1} \lVert D_h((p_1,\cdot),q_1)\widehat{V\Phi_{\lambda}}(q_1,\cdot)\rVert_{L^1(\BR^2)},
\end{equation}
which is finite by Lemma~\ref{pf_e0_6}.
To see continuity, we write for $h\in \{f,g\}$
\begin{multline}\label{pf_e0_4}
|h(p_1+\epsilon_1,q_1+\epsilon_2)-h(p_1,q_1)|\leq \\
\Bigg[ \Bigg| \int_{\BR^2}\overline{(\widehat{V\Phi_{\lambda}}(p_1+\eps_1,\tilde p)- \widehat{V \phi_{T,\lambda}}(p))} D_h((p_1+\eps_1,\tilde p),q_1+\eps_2) \widehat{V\Phi_{\lambda}}(q_1+\eps_2,\tilde p)\dd \tilde p\Bigg|\\
+ \Bigg|\int_{\BR^2}\overline{ \widehat{V\Phi_{\lambda}}(p)} D_h((p_1+\eps_1,\tilde p),q_1+\eps_2) (\widehat{V\Phi_{\lambda}}(q_1+\eps_2,\tilde p)-\widehat{V\Phi_{\lambda}}(q_1,\tilde p)) \dd \tilde p \dd k \Bigg|\\
+ \Bigg|\int_{\BR^2}\overline{ \widehat{V \Phi_{\lambda}}(p)} (D_h((p_1+\eps_1,\tilde p),q_1+\eps_2)-D_h(p,q_1))\widehat{V\Phi_{\lambda}}(q_1,\tilde p) \dd \tilde p\Bigg| \Bigg]
\end{multline}
Observe that
\begin{equation}
|\widehat{V \Phi_{\lambda}}(p_1+\eps_1,\tilde p)- \widehat{V \Phi_{\lambda}}(p)|
 \leq \frac{1}{(2\pi)^{d/2}}\int_{\BR^d} | e^{i \eps_1 r_1}-1| |V(r)||\Phi_{\lambda}(r)| \dd r\\
% \leq \frac{\eps_1 \lVert\Phi_{\lambda}\rVert_\infty}{(2\pi)^{d/2}}\int_{\BR^d} |r_1 |V(r)|\dd r
 \leq \frac{\eps_1 \lVert\Phi_{\lambda}\rVert_\infty \lVert |\cdot | V \rVert_1}{(2\pi)^{d/2}}
\end{equation}
With Lemma~\ref{pf_e0_6} and Lemma~\ref{phi_infinity_bd}, we bound the first two terms in \eqref{pf_e0_4} by $C \eps_1$ and $C\eps_2$, respectively.
Hence they vanish as $\eps_1,\eps_2\to0$.
The absolute value of the integrand in the last term in \eqref{pf_e0_4} is bounded by $\lVert \widehat{V \Phi_{\lambda}}\rVert_\infty \frac{2C}{1+\tilde p^2} \widehat{V\Phi_{\lambda}}(q_1,\tilde p) $.
By Lemma~\ref{pf_e0_6}, this is an $L^1$ function.
Hence, when taking the limit $\eps_1,\eps_2\to0$, we are allowed to pull the limit into the integral by dominated convergence, showing that also the last term vanishes.
Therefore, the functions $f$ and $g$ are continuous.
\end{proof}

\begin{proof}[Proof of Lemma~\ref{Bq-B0-bound}]
This Lemma is a generalization of \cite[Lemma 3.2]{hainzl_boundary_nodate} and its proof follows the same ideas.
For $\vert q \vert>1$, Lemma~\ref{KT-Laplace} implies the bound $\vert k(p,q) \vert < \frac{C}{1+p^2}$.
For $\vert q \vert<1$, we use the partial fraction expansion (see \cite[(2.2)]{hainzl_boundary_nodate})
\begin{equation}
k(p,q)=2T\sum_{n\in \BZ} \frac{\vert q \vert (2\mu  -q^2-2p^2+4  (p \cdot \frac{q}{\vert q \vert})^2)- 4 i w_n p \cdot \frac{q}{\vert q \vert}}{\left(\left(p+q\right)^2-\mu-iw_n\right)\left(\left(p-q\right)^2-\mu+iw_n\right)\left(p^2-\mu-iw_n\right)\left(p^2-\mu+iw_n\right)}
\end{equation}
where $w_n=(2n+1)\pi T$.
Continuity of $k$ follows e.g.~using the Weierstrass M-test.
Noting that $w_n=-w_{-n-1}$, it is easy to see that $k(p,0)=0$.

With the estimates
\begin{multline}
\sup_{(p,q)\in \BR^{2d}, \vert q \vert<1} \left \vert\frac{\vert q \vert (2\mu  -q^2-2p^2+4  (p \cdot \frac{q}{\vert q \vert})^2)}{ \left(\left(p+q\right)^2-\mu-iw_n\right)\left(\left(p-q\right)^2-\mu+iw_n\right)}\right \vert\\
\leq \sup_{(p,q)\in \BR^{2d}, \vert q \vert<1} \frac{ \vert q \vert (2\mu  +q^2+6p^2) }{\sqrt{\left[\left(p+q\right)^2-\mu\right]^2+w_0^2}\sqrt{\left[\left(p-q\right)^2-\mu\right]^2+w_0^2}}=:c_1 <\infty
\end{multline}
and
\begin{multline}
\sup_{(p,q)\in \BR^{2d}, \vert q \vert<1}\left \vert \frac{4 i w_n p}{\left(\left(p+q\right)^2-\mu-iw_n\right)\left(\left(p-q\right)^2-\mu+iw_n\right)}\right \vert\\
\leq \sup_{(p,q)\in \BR^{2d}, \vert q \vert<1} \frac{ 4\vert p\vert }{\sqrt{\left[\left( p  + q\right)^2-\mu\right]^2+w_0^2}}=:c_2 <\infty
\end{multline}
one obtains
\begin{equation}
\vert k(p,q) \vert \leq  2T(c_1+c_2)\sum_{n\in \BZ} \frac{1}{\left(p^2-\mu \right)^2+w_n^2}
\end{equation}
Using that the summands are decreasing in $n$, we can estimate the sum by an integral
\begin{multline}
\vert k(p,q) \vert \leq4T(c_1+c_2)\left[ \frac{1}{\left(p^2-\mu \right)^2+w_0^2} +\int_{1/2}^\infty  \frac{1}{\left(p^2- \mu\right)^2+4\pi^2T^2 x^2} \dd x \right]\\
=4T(c_1+c_2) \left[ \frac{1}{\left(p^2-\mu \right)^2+w_0^2} +\frac{\arctan \left(\frac{\vert p^2- \mu\vert}{\pi T}\right)}{2\pi T \vert p^2- \mu\vert}\right] <C \frac{1}{1+p^2}
\end{multline}
for some constant $C$ independent of $p$ and $q$.
\end{proof}

\subsection{Proof of Lemma~\ref{claim_weak_lambda_asy}}\label{pf:claim_weak_lambda_asy}
\begin{proof}[Proof of Lemma~\ref{claim_weak_lambda_asy}]
Recall that $\Psi_{\tcti(\lambda)} =V^{1/2}\Phi_\lambda$ with normalization $\lVert \Psi_{\tcti(\lambda)}  \rVert_2^2=\lVert \Psi  \rVert_2^2=\int_{\BR^d} V(r)j_d(r)^2 \dd r $, where $j_d$ was defined in \eqref{jd}.
Recall from \eqref{aspt_terms} that
\begin{multline}\label{pf_wl_1}
-\frac{1}{2\lambda}\lim_{\epsilon \to 0} \langle \psi_\epsilon , H^1_{\tcti(\lambda),\lambda} \psi_\epsilon \rangle = \int_{\BR^{d+1}} V(r) \vert \Phi_\lambda(z_1, \tilde r)\vert^2 \dd r \dd z_1\\
- \int_{\BR^{d+1}} V(r)\vert \Phi_\lambda(z_1, \tilde r)\mp\Phi_\lambda(r) \vert^2 \chi_{\vert z_1 \vert< \vert r_1 \vert}\dd r \dd z_1
\mp 2\pi \int_{\BR^{d-1}} \overline{ \widehat \Phi_{\lambda}(0,\tilde p)} \widehat{V\Phi_{\lambda}}(0,\tilde p) \dd \tilde p
\end{multline}
The claim follows, if we prove that the right hand side is positive in the limit $\lambda\to 0$.
For $d\in \{1,2\}$ we prove that the terms on the second line are bounded and the first term diverges as $\lambda\to0$.
For $d=3$ the first term is bounded too, so we need to compute the limit of all terms.
The idea is that in the limit, one would like to replace $\Phi_\lambda$ by $j_3$ using Lemmas~\ref{psiT_asymptotic_d} and \ref{phi_infinity_conv}.
We consider each of the three summands in \eqref{pf_wl_1} separately.

{\bf Second term:}
The second term is bounded by $4\lVert |\cdot | V \rVert_1 \lVert \Phi_\lambda \rVert_\infty^2$, which is bounded for small $\lambda$ by Lemma~\ref{phi_infinity_bd}.
For $d=3$ we want to compute the limit.
By Lemma~\ref{phi_infinity_conv} the integrand is bounded by $8 |V(r)| \lVert j_3 \rVert_\infty^2 \chi_{|z_1|<|r_1|}$ for $\lambda$ small enough, which is integrable.
By dominated convergence, the term thus converges to
\begin{equation}
-\int_{\BR^4} V(r) \vert j_3(z_1,\tilde r) \mp j_3(r) \vert^2 \chi_{|z_1|<|r_1|} \dd r \dd z_1.
\end{equation}

{\bf Third term:}
Using \eqref{eval_eq} the third term in \eqref{pf_wl_1} equals
\begin{equation}\label{pf_wl_8}
 \mp  2\pi \lambda \int_{\BR^{d-1}} \vert \widehat{V^{1/2} \Psi_{\tcti(\lambda)} }(0,\tilde p)\vert^2 B_{\tcti(\lambda)} ((0,\tilde p ),0) \dd \tilde p
\end{equation}
For $d=1$, this is bounded by $2\pi \lambda B_{\tcti(\lambda) } (0,0)  \lVert \widehat{V^{1/2} \Psi_{\tcti(\lambda) } }\rVert_\infty^2$.
By Lemma~\ref{lp_prop}\eqref{lp_prop.3} and since $\sup_T B_{T}(0,0)=\frac{1}{\mu}$, this is $O(\lambda)$ as $\lambda\to0$.
For $d=2$ we use \eqref{BT_bound} to bound \eqref{pf_wl_8} by
\begin{equation}\label{pf_wl_5}
2\pi \lambda \int_{\vert \tilde p \vert^2 <2\mu} B_{\tcti(\lambda)}((0,\tilde p ),0) \dd \tilde p  \lVert \widehat{V^{1/2} \Psi_{\tcti(\lambda)} }\rVert_\infty^2
+C\lambda \int_{\vert \tilde p \vert^2 >2\mu}\frac{1}{1+\tilde p^2}\dd \tilde p \lVert \widehat{V^{1/2} \Psi_{\tcti(\lambda) }}\rVert_\infty^2,
\end{equation}
where $C$ is independent of $\lambda$.
By Lemma~\ref{lp_prop}\eqref{lp_prop.3} $ \lVert \widehat{V^{1/2} \Psi_{\tcti(\lambda)} }\rVert_\infty$ is bounded as $\lambda\to0$.
The second term in \eqref{pf_wl_5} thus vanishes as $\lambda\to 0$.
For the first term, recall from \eqref{mmu} that $\int_{\vert \tilde p \vert^2 <2\mu} B_{\tcti,\mu}((0,\tilde p ),0) \dd \tilde p = 2 \pi m_\mu^{d=2}(\tcti(\lambda))$.
By Lemma~\ref{lea:Tc0_asy} the first term is bounded for small $\lambda$.
For $d=3$, we rewrite \eqref{pf_wl_8} as
\begin{multline}\label{pf_wl_7}
\mp  2\pi \lambda \int_{\tilde p^2>2\mu} \vert \widehat{V^{1/2} \Psi_{\tcti(\lambda)}}(0,\tilde p)\vert^2 B_{\tcti} ((0,\tilde p ),0) \dd \tilde p\\
\mp  \lambda \int_{\tilde p^2<2\mu}\int_{\BR^6} \overline{V^{1/2}\Psi_{\tcti(\lambda)}(x) } \frac{e^{i\tilde p\cdot(\tilde x-\tilde y)}-e^{i\sqrt{\mu} \frac{\tilde p}{|\tilde p|}\cdot(\tilde x-\tilde y)}}{(2\pi)^2}B_{\tcti}((0,\tilde p ),0)V^{1/2} \Psi_{\tcti(\lambda)}(y)\dd x \dd y\dd \tilde p\\
\mp  \lambda \int_{\tilde p^2<2\mu}\int_{\BR^6} \overline{\left(V^{1/2}\Psi_{\tcti(\lambda)}(x)-Vj_3(x) \right)} \frac{e^{i\sqrt{\mu} \frac{\tilde p}{|\tilde p|}\cdot(\tilde x-\tilde y)}}{(2\pi)^2}B_{\tcti}((0,\tilde p ),0) V^{1/2} \Psi_{\tcti(\lambda)} (y)\dd x \dd y\dd \tilde p\\
\mp  \lambda \int_{\tilde p^2<2\mu}\int_{\BR^6} V(x)j_3(x) \frac{e^{i\sqrt{\mu} \frac{\tilde p}{|\tilde p|}\cdot(\tilde x-\tilde y)}}{(2\pi)^2}B_{\tcti}((0,\tilde p ),0) \left(V^{1/2} \Psi_{\tcti(\lambda)}(y)-Vj_3(y) \right) \dd x \dd y\dd \tilde p\\
\mp  \lambda \int_{\tilde p^2<2\mu}\int_{\BR^6} V(x)j_3(x)  \frac{e^{i\sqrt{\mu} \frac{\tilde p}{|\tilde p|}\cdot(\tilde x-\tilde y)}}{(2\pi)^2}B_{\tcti} ((0,\tilde p ),0) V(y)j_3(y)\dd x \dd y\dd \tilde p.
\end{multline}
We prove that the first four integrals vanish as $\lambda \to 0$ and compute the limit of the expression in the last line.

Using \eqref{BT_bound}, Lemma~\ref{lp_prop}\eqref{lp_prop.3} and $ \Psi_{\tcti(\lambda)}=V^{1/2}\Phi_{\lambda}$ the first term in \eqref{pf_wl_7} is bounded by
\begin{equation}
 C \lambda \lVert V\rVert_1^{1/2} \lVert \Psi_{\tcti(\lambda)}\rVert_2   \left\lVert \frac{1}{1+\vert \cdot \vert^2}\widehat{V\Phi_{\lambda}}(0,\cdot)\right\rVert_{L^1(\BR^2)}
\end{equation}
where $C$ is independent of $\lambda$.
By \eqref{pf_e0_7},
\begin{equation}
  \left\lVert \frac{1}{1+\vert \cdot \vert^2}\widehat{V\Phi_{\lambda}}(0,\cdot)\right\rVert_{L^1(\BR^2)}\leq  \left \Vert \frac{1}{1+\vert \cdot \vert^2} \right \Vert_{L^{3/2}(\BR^{2})} \sup_{k_1} \lVert \widehat{V}(k_1,\cdot)\rVert_3  \lVert \widehat \Phi_{\lambda}\rVert_1
\end{equation}
By Lemma~\ref{lp_prop}\eqref{lp_prop.4}, $\sup_{k_1} \lVert \widehat{V}(k_1,\cdot)\rVert_3 <\infty$.
Furthermore $\lVert \widehat \Phi_\lambda \rVert_1 $ is bounded uniformly in $\lambda$ by Lemma~\ref{phi_infinity_bd}.
In total, the first term in \eqref{pf_wl_7} is $O(\lambda)$ as $\lambda\to0$.

For the second line of \eqref{pf_wl_7} we use that
\begin{equation}
\sup_{\lambda>0} \sup_{\tilde x,\tilde y \in \BR^2} \left \vert \int_{\BR^{2},\tilde p^2<2\mu}\frac{e^{i\tilde p\cdot(\tilde x-\tilde y)}-e^{i\sqrt{\mu} \frac{\tilde p}{|\tilde p|}\cdot(\tilde x-\tilde y)}}{(2\pi)^3}B_{\tcti(\lambda)}((0,\tilde p ),0)\dd \tilde p \right \vert <\infty,
\end{equation}
as was shown in the proof of \cite[Lemma 3.4]{henheik_universality_2023}.
Applying the Schwarz inequality, the second line is bounded by $C \lambda \lVert V \rVert_1\lVert \Psi_{\tcti(\lambda)} \rVert_2^2$ for some constant $C$ and vanishes for $\lambda\to0$.

We bound the third line of \eqref{pf_wl_7} by
\begin{multline}
 \frac{\lambda}{(2\pi)^2} \int_{\BR^{2},\tilde p^2<2\mu}\int_{\BR^6} \vert \overline{\left(V^{1/2} \Psi_\lambda(x)-Vj_3(x) \right)}\vert B_{\tcti(\lambda)}((0,\tilde p ),0) \vert V^{1/2} \Psi_{\tcti(\lambda)} (y) \vert \dd x \dd y\dd \tilde p\\
\leq \lambda\frac{\vert \BS^1 \vert }{(2\pi)^2} m_\mu^{d=2}(\tcti(\lambda))\lVert V \rVert_1 \lVert \Psi_{\tcti(\lambda)} \rVert_2\lVert \Psi_{\tcti(\lambda)}-\Psi\rVert_2,
\end{multline}
where in the second step we carried out the $\tilde p $ integration and used the Schwarz inequality in $x$ and $y$.
By Lemma~\ref{lea:Tc0_asy}, $\lambda m_\mu^{d=2}(\tcti(\lambda))$ is bounded and by Lemma~\ref{psiT_asymptotic_d}, $\lVert \Psi_{\tcti(\lambda)} -\Psi \rVert_2$ decays like $\lambda^{1/2}$.
Hence, this vanishes for $\lambda\to0$.
Similarly, the fourth integral in \eqref{pf_wl_7} is bounded by
\begin{equation}
\lambda\frac{\vert \BS^1 \vert }{(2\pi)^2} m_\mu^{d=2}(\tcti(\lambda))\lVert V \rVert_1 \lVert V^{1/2}j_3  \rVert_2\lVert \Psi_{\tcti(\lambda)} -\Psi \rVert_2,
\end{equation}
which vanishes for $\lambda\to0$.

For the last line of \eqref{pf_wl_7} we first carry out the integration over $x,y$ and the radial part of $\tilde p$, and then use that $\widehat{Vj_3}$ is a radial function. This way we obtain
\begin{equation}
\mp  \lambda m_\mu^{d=2}(\tcti(\lambda)) 2\pi \int_{\BS^{1}}\vert \widehat{Vj_3}(0,\sqrt{\mu}w)\vert^2\dd w=\mp  \lambda m_\mu^{d=2}(\tcti(\lambda)) \pi \int_{\BS^{2}}\vert \widehat{Vj_3}(\sqrt{\mu}w)\vert^2\dd w
\end{equation}
The latter integral equals $\langle \vert V\vert^{1/2}j_3,O_\mu V^{1/2} j_3\rangle = e_\mu \int_{\BR^3} V(x)j_3(x)^2 \dd x$.
By Lemma~\ref{lea:Tc0_asy},
\begin{equation}
\lim_{\lambda\to 0}\lambda m_\mu^{d=2}(\tcti(\lambda))e_\mu =\lim_{\lambda\to 0} \lambda \ln(\mu/\tcti(\lambda))e_\mu =\frac{1}{\mu^{1/2}}.
\end{equation}
Therefore, the limit of the last line of \eqref{pf_wl_7} for $\lambda\to 0$ equals
\begin{equation}
\mp  \frac{\pi}{\mu^{1/2}} \int_{\BR^3} V(x)j_3(x)^2 \dd x.
\end{equation}

{\bf First term:}
It remains to consider the first term in \eqref{pf_wl_1}.
If $V\geq 0$, one could argue directly using the convergence of $\Phi_\lambda$ in Lemma~\ref{phi_infinity_conv} for $d=3$.
However, the analogue of Lemma~\ref{phi_infinity_conv} does not hold for $d=1$.
Instead, the strategy is to use the $L^2$-convergence of the ground state in the Birman-Schwinger picture, Lemma~\ref{psiT_asymptotic_d}.
This approach also allows us to treat $V$ that take negative values.

Switching to momentum space and using the eigenvalue equation \eqref{eval_eq}, we rewrite the first term in \eqref{pf_wl_1} as
\begin{equation}\label{pf_wl_2}
(2\pi)^{1-\frac{d}{2}}\int_{\BR^{2d-1}} \overline{\widehat \Phi_\lambda (p)} \widehat V(0,\tilde p-\tilde q)\widehat\Phi_\lambda (p_1,\tilde q) \dd p \dd \tilde q
=(2\pi)^{1-\frac{d}{2}} \lambda^2\langle \Psi_{\tcti(\lambda)}, D_{\tcti(\lambda)} \Psi_{\tcti(\lambda)} \rangle ,
\end{equation}
where $D_T$ is the operator given by
\begin{equation}
\langle \psi, D_T \psi \rangle=\int_{\BR^{2d-1}} \overline {\widehat{|V|^{1/2} \psi} (p)}  B_{T} (p,0) \widehat V(0,\tilde p-\tilde q) B_{T}((p_1,\tilde q),0)\widehat{|V|^{1/2} \psi}(p_1,\tilde q) \dd p \dd \tilde q
\end{equation}
for $\psi \in L^2(\BR^d)$.
We decompose \eqref{pf_wl_2} as
\begin{multline}\label{pf_wl_6}
(2\pi)^{1-\frac{d}{2}} \lambda^2\langle \Psi_{\tcti(\lambda)}, D_{\tcti(\lambda)} \Psi_{\tcti(\lambda)} \rangle = (2\pi)^{1-\frac{d}{2}} \lambda^2 \Big(\langle \Psi_{\tcti(\lambda)}-\Psi, D_{\tcti(\lambda)} \Psi_{\tcti(\lambda)} \rangle  \\
+\langle \Psi, D_{\tcti(\lambda)} ( \Psi_{\tcti(\lambda)}- \Psi) \rangle+\langle \Psi, D_{\tcti(\lambda)} \Psi \rangle  \Big).
\end{multline}
Recall that by Lemma~\ref{psiT_asymptotic_d}, $\lVert  \Psi_{\tcti} - \Psi\rVert_2=O(\lambda^{1/2})$.
The strategy is to prove that $\lVert D_T \rVert$ and $\langle \Psi, D_{T} \Psi \rangle$ are of the same order for $T\to 0$.
Then, the positive term $\langle \Psi, D_{\tcti(\lambda)} \Psi \rangle $ will be the leading order term in \eqref{pf_wl_6} as $\lambda\to0$.
The asymptotic behavior of  $\lVert D_T \rVert$ and $\langle \Psi, D_{T} \Psi \rangle$ is the content of the following two Lemmas.
These asymptotics strongly depend on the dimension and this is where the different treatment of $d=3$ versus $d\in\{1,2\}$ in Theorem~\ref{thm1} originates.

It will be convenient to introduce the operator $D_T^{<}$ as
%Let us denote
\begin{equation}\label{DT<}
\langle \psi, D_T^{<} \psi \rangle=\int_{\vert p|^2<2\mu, |(p_1,\tilde q)|^2<2\mu, p_1^2<\mu} \overline {\widehat{|V|^{1/2} \psi} (p)}  B_{T} (p,0) \widehat V(0,\tilde p-\tilde q) B_{T}((p_1,\tilde q),0)\widehat{|V|^{1/2} \psi}(p_1,\tilde q) \dd p \dd \tilde q
\end{equation}
for $\psi \in L^2(\BR^d)$.
Furthermore, for $d=2$ we define for $0<\delta<\mu$ the operator $D_T^\delta$ as
\begin{equation}\label{DTd}
\langle \psi, D_T^{\delta} \psi \rangle=\int_{ \mu-\delta<p_1^2<\mu,p_2^2<2\delta, q_2^2<2\delta} \overline {\widehat{|V|^{1/2} \psi} (p)}  B_{T} (p,0) \widehat V(0,p_2- q_2) B_{T}((p_1,\tilde q),0)\widehat{|V|^{1/2} \psi}(p_1, q_2) \dd p \dd q_2
\end{equation}
for $\psi \in L^2(\BR^2)$.
%\color{black}
\begin{lemma}\label{DTmax}
Let $\mu>\delta>0$ and let $V$ satisfy~\ref{aspt_V_halfspace}.
There are constants $C,T_0>0$ such that for all $0<T<T_0$ for $d=1$ $\lVert D_T \rVert \leq C/T$, for $d=2$ $\lVert D_T \rVert \leq C(\ln \mu/T)^3$ and $\lVert D_T-D_T^\delta \rVert \leq C (\ln \mu/T)^2$, and for $d=3$ $\lVert D_T \rVert \leq C(\ln \mu/T)^2$ and $\lVert D_T-D_T^< \rVert \leq C \ln \mu/T$.
%\color{black}
\end{lemma}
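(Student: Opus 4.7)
The plan is to reduce the bound on $\|D_T\|$ to an integral of $B_T((p_1,\tilde p),0)$ over the transverse momentum. Set $\phi=\widehat{|V|^{1/2}\psi}$ and apply the two a priori bounds $\|\phi\|_\infty \leq (2\pi)^{-d/2}\|V\|_1^{1/2}\|\psi\|_2$ (a special case of Lemma~\ref{lp_prop}\eqref{lp_prop.3}) and $|\widehat V(0,\tilde p-\tilde q)|\leq (2\pi)^{-d/2}\|V\|_1$. Since $D_T$ is self-adjoint, combining these gives
\begin{equation*}
\|D_T\|\leq C\|V\|_1^2\int_{\BR} J(p_1)^2\,\dd p_1,\qquad J(p_1):=\int_{\BR^{d-1}} B_T((p_1,\tilde p),0)\,\dd \tilde p,
\end{equation*}
so the problem becomes asymptotic analysis of this integral in each dimension.

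For $d=1$ there is no $\tilde p$ and $J(p_1)=B_T(p_1,0)$. Using $B_T(p_1,0)\leq 1/\max(|p_1^2-\mu|,2T)$ from \eqref{BT_bound} and substituting $u=p_1^2-\mu$ on each half-line gives $\int_\BR B_T(p_1,0)^2\,\dd p_1\leq C/T$, with the dominant contribution from the shell $|p_1^2-\mu|<2T$.

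For $d=2$ I would split $J=J_<+J_>$ according to whether $p_1^2+p_2^2<2\mu$. The tail satisfies $J_>(p_1)\leq C(1+p_1^2)^{-1/2}$ by \eqref{BT_bound}, contributing $O(1)$ to $\int J_>^2\,\dd p_1$. For the bulk, substituting $u=p_1^2+p_2^2-\mu$ with Jacobian $1/\sqrt{u+\mu-p_1^2}$ yields the pointwise bound $J_<(p_1)\leq C(1+|\ln(|\mu-p_1^2|/T)|)/\sqrt{|\mu-p_1^2|}$ near $|p_1|=\sqrt\mu$. Setting $a=\mu-p_1^2$, the integral $\int_T^\mu \ln^2(a/T)\,\dd a/a\leq C\ln^3(\mu/T)$ supplies the dominant contribution.

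For $d=3$ the full $J(p_1)$ diverges because $B_T\sim|\tilde p|^{-2}$ at large $|\tilde p|$ is not integrable in $\tilde p\in\BR^2$. This is precisely why the decomposition $D_T=D_T^<+(D_T-D_T^<)$ is required. For $D_T^<$ the same $\|\phi\|_\infty$-based strategy applies with $J$ replaced by the truncated $\tilde J(p_1):=\int_{p_1^2+\tilde p^2<2\mu} B_T((p_1,\tilde p),0)\,\dd \tilde p$. Passing to polar coordinates in $\tilde p$ followed by $u=p_1^2+r^2-\mu$ reduces $\tilde J(p_1)$ to $\pi\int_{p_1^2-\mu}^{\mu}\tanh(u/(2T))/u\,\dd u$, which is $\leq C\ln(\mu/T)$ uniformly in $|p_1|<\sqrt\mu$ (the integrand is even in $u$ and bounded by $1/\max(|u|,2T)$). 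Hence $\int_{p_1^2<\mu}\tilde J(p_1)^2\,\dd p_1\leq C\ln^2(\mu/T)$.

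For the complementary tail $D_T-D_T^<$ in $d=3$, the integration domain decomposes into three pieces on which at least one of $|p|^2>2\mu$, $|(p_1,\tilde q)|^2>2\mu$, or $p_1^2>\mu$ holds; on each, \eqref{BT_bound} provides polynomial decay $C/(1+|q|^2)$ for the $B_T$ factor at the large momentum $q$. A Cauchy-Schwarz estimate distributing the two $\phi$-factors against the improved factor, combined with $|\widehat V(0,\cdot)|\leq C\|V\|_1$, should leave only a single logarithmic $B_T$-integral in the remaining variable and deliver $\|D_T-D_T^<\|\leq C\ln(\mu/T)$.

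The main obstacle is precisely this $d=3$ tail estimate: the naive $\|\phi\|_\infty$ approach fails because the full $J(p_1)$ diverges at infinity, so one must rearrange Cauchy-Schwarz so that the improved decay $(1+|q|^2)^{-1}$ in one $B_T$ factor absorbs the otherwise non-integrable transverse integral, while the other $B_T$ contributes only a single $\ln(\mu/T)$ from its dominant shell $|p|^2\approx\mu$.
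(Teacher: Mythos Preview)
Your strategy for $d=1,2$ and for $D_T^<$ in $d=3$ is essentially the paper's: bound $\phi=\widehat{|V|^{1/2}\psi}$ in $L^\infty$, replace $\widehat V(0,\cdot)$ by $\|V\|_1$, and reduce to $\int_{\BR} J(p_1)^2\,\dd p_1$ with $J(p_1)=\int_{\BR^{d-1}}B_T((p_1,\tilde p),0)\,\dd\tilde p$ (truncated to $p^2<2\mu$ in $d=3$). The asymptotic analysis you sketch for these pieces is correct and matches the paper's computations.

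The genuine gap is in the $d=3$ tail $D_T-D_T^<$. Your plan to use only $|\widehat V(0,\cdot)|\leq C\|V\|_1$ together with the decay $C/(1+|\cdot|^2)$ of one $B_T$ factor cannot close: in $d=3$ the transverse integral $\int_{\BR^2}(1+p_1^2+\tilde p^2)^{-1}\,\dd\tilde p$ diverges, and no rearrangement of Cauchy--Schwarz that keeps the $\phi$-factors helps, because the resulting bound then involves $\|\phi\|_{L^2}=\||V|^{1/2}\psi\|_{L^2}$, which is not controlled by $\|\psi\|_{L^2}$ for general $V$. (Also, on the piece $p_1^2>\mu$ with both $|p|^2,|(p_1,\tilde q)|^2<2\mu$, neither $B_T$ has polynomial decay; this piece is compact and $O(1)$, but not for the reason you give.)

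What the paper actually uses, and what your sketch is missing, is the integrability $\widehat V(0,\cdot)\in L^3(\BR^2)$ coming from $V\in L^{3/2}(\BR^3)$ (Lemma~\ref{lp_prop}\eqref{lp_prop.4}). For the mixed piece (one large, one small momentum) this enters via Young's inequality, pairing $(1+\tilde p^2)^{-1}\in L^{3/2}(\BR^2)$ against $\widehat V(0,\cdot)\in L^3(\BR^2)$ to get a finite constant times $m_\mu(T)=O(\ln\mu/T)$. For the piece where both momenta are large, the paper invokes Lemma~\ref{lea:pf_plugin_1}, which again needs $\|\widehat V(0,\cdot)\|_{L^3(\BR^2)}$ and yields an $O(1)$ bound. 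Without this $L^3$ input the tail estimate fails.
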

\begin{lemma}\label{DTj}
Let $\mu>0$ and let $V$ satisfy~\ref{aspt_V_halfspace}.
Recall that $\Psi=V^{1/2}j_d$.
There are constants $C,T_0>0$ such that for all $0<T<T_0$, $\langle \Psi, D_T  \Psi \rangle \geq C/T$ for $d=1$ and $\geq C(\ln \mu/T)^3$ for $d=2$.
For $d=3$, $\lim_{\lambda \to 0} (2\pi)^{-1/2} \lambda^2\langle \Psi, D_{\tcti(\lambda)} \Psi \rangle = \int_{\BR^4} V(r) j_3(z_1,\tilde r;\mu)^2 \dd r \dd z_1$.
\end{lemma}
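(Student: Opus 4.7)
The plan is to treat $d=1,2$ and $d=3$ separately: for $d=1,2$ a lower bound obtained by restricting the integral to a neighborhood of the Fermi surface suffices, while for $d=3$ I extract the precise limit using the identity between $\langle\Psi,D_T\Psi\rangle$ and the position-space integral $\int V|\Phi_\lambda|^2\,dr\,dz_1$. In the common setup, $|V|^{1/2}\Psi=Vj_d\in L^1$, so $\widehat{Vj_d}$ is continuous, bounded, radial, and by~\eqref{emu} satisfies $\widehat{Vj_d}(p)=e_\mu>0$ whenever $|p|=\sqrt\mu$; similarly $\widehat V(0,\cdot)$ is continuous on $\BR^{d-1}$ with $\widehat V(0,0)=\widehat V(0)>0$. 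These facts let me bound the integrand of $\langle\Psi,D_T\Psi\rangle$ from below by a positive constant on a sufficiently small neighborhood of the Fermi surface.

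For $d=1$, $\tilde p$ and $\tilde q$ are empty, so $\langle\Psi,D_T\Psi\rangle=\widehat V(0)\int_{\BR}|\widehat{Vj_1}(p_1)|^2 B_T(p_1,0)^2\,dp_1$. Using $\tanh x\ge x/2$ for $|x|\le 1$ I get $B_T(p_1,0)\ge 1/(4T)$ on the thermal shell $|p_1^2-\mu|\le 2T$; combined with $|\widehat{Vj_1}|^2\ge e_\mu^2/2$ near $\pm\sqrt\mu$, this gives $\langle\Psi,D_T\Psi\rangle\ge C/T$ for small $T$. For $d=2$ I restrict to $\{|p_1|<\sqrt{\mu-cT},\ \tilde p,\tilde q\in[\sqrt{\mu-p_1^2}-\delta,\sqrt{\mu-p_1^2}+\delta]\}$ for a small fixed $\delta$; on this region continuity bounds $|\widehat{Vj_2}|^2$ and $\widehat V(0,\tilde p-\tilde q)$ below by positive constants. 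The substitution $x=\sqrt{\mu-p_1^2}\,s/T$ in the $\tilde p$-integral gives $\int B_T(p_1,\tilde p;0)\,d\tilde p\gtrsim\ln(\mu/T)/\sqrt{\mu-p_1^2}$ over the chosen window; squaring produces $(\ln\mu/T)^2/(\mu-p_1^2)$ and the remaining $\int_0^{\sqrt{\mu-cT}}dp_1/(\mu-p_1^2)$ contributes another $\ln(\mu/T)/\sqrt\mu$, for a total bound $C(\ln\mu/T)^3$.

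For $d=3$, Plancherel together with the eigenvalue equation $\widehat{\Phi_\lambda}(p)=\lambda B_T(p,0)\widehat{V\Phi_\lambda}(p)$ at $T=T_c^0(\lambda)$ yields the identity
\[
\int_{\BR^4}V(r)|\Phi_\lambda(z_1,\tilde r)|^2\,dr\,dz_1=(2\pi)^{-1/2}\lambda^2\langle\Psi_{T_c^0(\lambda)},D_{T_c^0(\lambda)}\Psi_{T_c^0(\lambda)}\rangle.
\]
Since $\lVert D_T\rVert\le C(\ln\mu/T)^2$ (Lemma~\ref{DTmax}), $\lVert\Psi_{T_c^0(\lambda)}-\Psi\rVert_2=O(\lambda^{1/2})$ (Lemma~\ref{psiT_asymptotic_d}), and $\lambda\ln(\mu/T_c^0(\lambda))=O(1)$ (Lemma~\ref{lea:Tc0_asy}), replacing $\Psi_{T_c^0(\lambda)}$ by $\Psi$ on the right-hand side costs only $O(\lambda^{1/2})$. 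It therefore remains to prove $\int V|\Phi_\lambda|^2\,dr\,dz_1\to\int Vj_3^2\,dr\,dz_1$.

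The main obstacle is this last convergence: Lemma~\ref{phi_infinity_conv} gives $\lVert\Phi_\lambda-j_3\rVert_\infty=O(\lambda^{1/2})$, but direct dominated convergence fails because $\int|V|\,dr\,dz_1$ and even $\int|V||j_3|\,dr\,dz_1$ are both infinite over $\BR^4$ (the $z_1$ integration being unbounded). My plan is to reuse the four-term decomposition~\eqref{phi_infinity_conv_1} of $\Phi_\lambda$: the fourth term equals $\lambda m_\mu(T_c^0(\lambda))e_\mu\,j_3\to j_3$, while each of the first three correction terms is $O(\lambda^{1/2})$ in $L^\infty$ and inherits spatial decay from its convolution structure, either through the subtracted plane wave $e^{ip\cdot r}-e^{i\sqrt\mu\hat p\cdot r}$ or through $\widehat{\Phi_\lambda}\chi_{p^2>2\mu}\in L^1$ with $O(\lambda)$ norm from Lemma~\ref{phi_infinity_bd}. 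Expanding $\int V|\Phi_\lambda|^2$ using this decomposition, the cross terms and error squares become absolutely convergent with an extra $|r|^{-1}$ gain matching the decay of $j_3$, and produce $o(1)$ contributions. As a consistency check one can analyse the Fourier-side limit directly, using $\lVert D_T-D_T^<\rVert=O(\ln\mu/T)$ to reduce to the bounded region $D_T^<$ and then exploiting that for each fixed $p_1\in(-\sqrt\mu,\sqrt\mu)$ the measure $\lambda B_T(p_1,\cdot;0)\,d\tilde p$ concentrates on the circle $|\tilde p|^2=\mu-p_1^2$ with total mass converging to $|\BS^1|/(e_\mu\sqrt\mu)$; matching the resulting Fermi-circle integral with $(2\pi)^{1/2}\int Vj_3^2\,dr\,dz_1$ via the identity $j_3(r)^2=(2\pi)^{-3}\int_{\BS^2\times\BS^2}e^{i\sqrt\mu(\omega+\omega')\cdot r}\,d\omega\,d\omega'$ confirms the value.
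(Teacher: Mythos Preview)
Your treatment of $d=1,2$ follows the same idea as the paper: restrict to a neighbourhood of the Fermi surface where all three factors $\widehat{Vj_d}$, $\widehat V(0,\cdot)$ and $B_T$ are positive, then estimate the remaining integral explicitly. The parameterization differs (you let $p_1$ range over $(0,\sqrt{\mu-cT})$ and place $\tilde p,\tilde q$ in a $\delta$-window around $\sqrt{\mu-p_1^2}$, whereas the paper fixes $p_1$ near $\sqrt\mu$ and $p_2,q_2$ near $0$), but the mechanism and final count of logarithms are identical.

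For $d=3$ you take a genuinely different route. Via the identity $\int V|\Phi_\lambda|^2=(2\pi)^{-1/2}\lambda^2\langle\Psi_{T_c^0(\lambda)},D_{T_c^0(\lambda)}\Psi_{T_c^0(\lambda)}\rangle$ and the replacement $\Psi_{T_c^0(\lambda)}\to\Psi$ (justified by $\lVert D_T\rVert=O((\ln\mu/T)^2)$ and $\lVert\Psi_{T_c^0(\lambda)}-\Psi\rVert_2=O(\lambda^{1/2})$), you reduce the statement to the position-space convergence $\int_{\BR^4}V(r)|\Phi_\lambda(z_1,\tilde r)|^2\,dr\,dz_1\to\int_{\BR^4}V(r)j_3(z_1,\tilde r)^2\,dr\,dz_1$. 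The paper never proves this directly; it computes $\lambda^2\langle\Psi,D_{T_c^0(\lambda)}\Psi\rangle$ entirely on the Fourier side, first replacing $D_T$ by $D_T^<$ and then peeling off the $e^{ip\cdot r}-e^{i\sqrt\mu\hat p\cdot r}$ corrections one by one.

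The gap in your argument is the claimed $|r|^{-1}$ spatial decay of the error terms $I_1,I_2,I_3$ in the decomposition~\eqref{phi_infinity_conv_1}. For $I_1$ this is simply not what $\widehat{\Phi_\lambda}\chi_{p^2>2\mu}\in L^1$ gives you: an $L^1$ Fourier transform yields $L^\infty$ smallness, not pointwise decay, so the bound $|I_1(z_1,\tilde r)|\lesssim\lambda/|(z_1,\tilde r)|$ is unjustified. Without it, neither $\int V\,j_3\,I_1\,dr\,dz_1$ nor $\int V\,|I_1|^2\,dr\,dz_1$ can be controlled by pointwise estimates, since $\int_\BR|j_3(z_1,\tilde r)|\,dz_1=\infty$. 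These terms \emph{can} be shown to be $O(\lambda)$, but only by going back to momentum space and invoking bounds of the type in Lemma~\ref{lea:pf_plugin_1} (for the square) and the $L^3(\BR^2)$ integrability of $\widehat V(0,\cdot)$ paired with $\frac{1}{1+|\tilde q|^2}\in L^{3/2}(\BR^2)$ (for the cross term against the surface measure $\widehat{j_3}$). Once you do that, you are essentially executing the paper's Fourier-side analysis, which you relegate to a ``consistency check''. In short: your reduction is clean, but the position-space decay mechanism you propose does not close the argument, and the actual work happens on the Fourier side exactly as in the paper.
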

For $\lambda \to 0$, by Lemma~\ref{lea:Tc0_asy}, $\ln(\mu/\tcti(\lambda))$ is of order $1/\lambda$, hence the last term in \eqref{pf_wl_6} diverges for $d=1,2$.
For $d=3$ we get the desired constant by Lemma~\ref{DTj}.
\end{proof}

\begin{proof}[Proof of Lemma~\ref{DTmax}]
Assume that $T/\mu<1/2$.
We treat the different dimensions $d$ separately.

{\bf Dimension one:} Note that
\begin{equation}
|\langle \psi, D_T \psi \rangle |= |\widehat{V}(0)| \int_\BR B_{T} (p,0)^2 \vert \widehat{|V|^{1/2} \psi }(p) \vert^2 \dd p \leq \lVert V \rVert_1^2 \int_\BR B_{T} (p,0)^2 \dd p \lVert \psi \rVert_2^2,
\end{equation}
 where we used Lemma~\ref{lp_prop}.
Recall from \eqref{BT_bound} that $B_{T} (p,0) \leq \min\left\{\frac{1}{\vert p^2-\mu \vert}, \frac{1}{2T}\right\}$.
We estimate the integral
\begin{multline}
\int_\BR B_{T} (p,0)^2 \dd p \leq  \int_{\sqrt{\mu}-\frac{T}{\sqrt{\mu}}< \vert p \vert<\sqrt{\mu}+\frac{T}{\sqrt{\mu}} }\frac{1}{4T^2} \dd p+\int_{\BR} \frac{\chi_{\vert p \vert< \sqrt{\mu}-\frac{T}{\sqrt{\mu}}}+\chi_{  \sqrt{\mu}+\frac{T}{\sqrt{\mu}}<p<2\sqrt{\mu}}}{\mu  (\vert p \vert-\sqrt{\mu} )^2} \dd p \\
+ \int_{ p>2\sqrt{\mu}} \frac{1}{( p^2-\mu )^2} \dd p
\end{multline}
The first term equals $(\sqrt{\mu}T)^{-1}$.
The last term is a finite constant independent of $T$.
In the second term we substitute $\vert \vert p \vert -\sqrt{\mu}\vert$ by $x$ and get the bound
\begin{equation}
2 \int_{\frac{T}{\sqrt{\mu}}}^{\sqrt{\mu}} \frac{1}{\mu x^2 } \dd x = \frac{2}{\sqrt{\mu}}(1/T-1/\mu)
\end{equation}

{\bf Dimension two:}
Using the Schwarz inequality we have
\begin{equation}
\langle \psi, D_T \psi \rangle\leq C \lVert V\rVert_1^2 \int_{\BR^{3}}  B_{T,\mu} (p,0) B_{T,\mu}((p_1,\tilde q),0)\dd p \dd \tilde q \lVert \psi\rVert_2^2
\end{equation}
The integral can be rewritten as
\begin{equation}\label{2dBB}
\int_\BR \left(\int_\BR B_{T, \mu-p_1^2}(p_2, 0)  \dd p_2 \right)^2 \dd p_1,
\end{equation}
where $B_{T,\mu}$ here is understood as the function on $\BR \times \BR$ instead of $\BR^2\times \BR^2$.
Similarly,
\begin{multline}\label{2dBBd}
\vert\langle \psi, (D_T-D_T^\delta )\psi \rangle \vert\leq C \lVert V\rVert_1^2 \int_{\BR^{3}}(1-\chi_{\mu-\delta<p_1^2<\mu} \chi_{p_2^2<2\delta}\chi_{p_2'^2<2\delta})  B_{T,\mu} (p,0) B_{T,\mu}((p_1,\tilde q),0)\dd p \dd \tilde q \lVert \psi\rVert_2^2
\end{multline}
We prove that \eqref{2dBB} and \eqref{2dBBd} are of order $O(\ln(\mu/T)^3)$ and $O(\ln(\mu/T)^2)$ for $T\to0$, respectively.
To bound the integrals we consider three regimes, $p_1^2<\mu-T$, $\mu-T<p_1^2<\mu+T$, and $\mu+T<p_1^2$.
Corresponding to these regimes, we need to understand $\int_\BR B_{T, \mu}(p, 0)  \dd p$ for $T/\mu<1$, $-1<\mu/T <1$, and $\mu/T<-1$.

In the first regime, there is a constant $C_1$, such that for all $T/\mu<1$
\begin{equation}\label{intB_est_1}
\left | \sqrt{\mu} \int_{\BR} B_{T,\mu}(p,0)\chi_{p^2<2\mu} \dd p - 2 \ln \frac{\mu}{T} \right |+\left | \sqrt{\mu} \int_{\BR} B_{T,\mu}(p,0)\chi_{p^2>2\mu} \dd p \right |\leq C_1
\end{equation}
This follows from rescaling $ \sqrt{\mu} \int_\BR B_{T,\mu}(p,0) \dd p = \int_\BR B_{T/\mu,1} (p,0) \dd p$ and applying \cite[Lemma 3.5]{hainzl_boundary_nodate}.
For the second regime, we rewrite
\begin{equation}\label{intB_est_4}
\int_\BR B_{T,\mu}(p,0) \dd p = \frac{1}{\sqrt{T}}\int_\BR \frac{\tanh((p^2-\mu/T)/2)}{p^2-\mu/T} \dd p
\end{equation}
Since $\tanh (x)/x \leq \min\{1,1/\vert x\vert\}$ the latter integral is uniformly bounded for $\vert \mu/T \vert<1$,
\begin{equation}\label{intB_est_2}
\int_\BR B_{T,\mu}(p,0) \dd p \leq \frac{C_2}{\sqrt{T}}.
\end{equation}
For the third regime, it follows from \eqref{intB_est_4} that
\begin{equation}
\int_\BR B_{T,\mu}(p,0) \dd p\leq  \frac{1}{\sqrt{T}}\int_\BR \frac{1}{p^2-\mu/T} \dd p= \frac{1}{\sqrt{-\mu}} \int_\BR \frac{1}{p^2+1} \dd p=:\frac{C_3}{\sqrt{-\mu}}.
\end{equation}
Combining the bounds in the three regimes, we bound \eqref{2dBB} from above by
\begin{equation}\label{intB_est_5}
\int_{\vert p_1 \vert<\sqrt{\mu-T}} \frac{\left(2 \ln \left(\frac{\mu-p_1^2}{T}\right)+C_1\right)^2}{\mu-p_1^2} \dd p_1+\int_{\sqrt{\mu-T}<\vert p_1 \vert<\sqrt{\mu+T}} \frac{C_2^2}{T} \dd p_1+\int_{\sqrt{\mu+T}<\vert p_1 \vert} \frac{C_3^2}{p_1^2-\mu} \dd p_1
\end{equation}
The first integral is bounded above by
\begin{equation}
\left(2 \ln \left(\frac{\mu}{T}\right)+C_1\right)^2 \int_{\vert p_1 \vert<\sqrt{\mu-T}} \frac{1}{\mu-p_1^2} \dd p_1.
\end{equation}
Since
\begin{equation}\label{intB_est_6}
\int_{\vert p_1 \vert<\sqrt{\mu-T}} \frac{1}{\mu-p_1^2} \dd p_1=\frac{1}{\sqrt{\mu}} \ln \left(\frac{2\mu-T+\sqrt{\mu(\mu-T)}}{T}\right)=O(\ln (\mu/T)),
\end{equation}
the first integral in \eqref{intB_est_5} is of order $O(\ln (\mu/T)^3)$.
In the second integral, the size of the integration domain is $2T/\sqrt{\mu}+O(T^2)$, so the integral is bounded as $T\to 0$.
The third integral equals
\begin{equation}
\frac{C_3^2}{\sqrt{\mu}} \ln \left(\frac{2\mu+T+\sqrt{\mu(\mu+T)}}{T}\right) =O(\ln \mu/T).
\end{equation}
In total \eqref{2dBB} is of order $O(\ln (\mu/T)^3)$.

For the integral in \eqref{2dBBd} we obtain the upper bound similar to \eqref{intB_est_5}.
The main difference is that in the regime $\sqrt{\mu-\delta}<\vert p_1 \vert<\sqrt{\mu-T}$, at least one of the variables $p_2,p_2'$ is constrained to absolute values larger than $\sqrt{2\delta}\geq \sqrt{2(\mu-p_1^2})$, and thus for the integration over this variable there will be no $\ln \left(\frac{\mu-p_1^2}{T}\right)$ contribution from \eqref{intB_est_1}.
The upper bound for \eqref{2dBBd} is
\begin{multline}\label{intB_est_5d}
\int_{\vert p_1 \vert<\sqrt{\mu-\delta}} \frac{\left(2 \ln \left(\frac{\mu-p_1^2}{T}\right)+C_1\right)^2}{\mu-p_1^2} \dd p_1
+\int_{\sqrt{\mu-\delta}<\vert p_1 \vert<\sqrt{\mu-T}}  \frac{2\left(2 \ln \left(\frac{\mu-p_1^2}{T}\right)+C_1\right) C_1}{\mu-p_1^2} \dd p_1\\
+\int_{\sqrt{\mu-T}<\vert p_1 \vert<\sqrt{\mu+T}} \frac{C_2^2}{T} \dd p_1
+\int_{\sqrt{\mu+T}<\vert p_1 \vert} \frac{C_3^2}{p_1^2-\mu} \dd p_1
\end{multline}
We have already seen above that the last two integrals are of order $O(1)$ and $O(\ln \mu/T)$, respectively.
The first integral in \eqref{intB_est_5d} is bounded above by
$\left(2 \ln \left(\frac{\mu}{T}\right)+C_1\right)^2 \int_{\vert p_1 \vert<\sqrt{\mu-\delta}} \frac{1}{\mu-p_1^2} \dd p_1=O(\ln (\mu/T)^2).$
Similarly, the second integral in \eqref{intB_est_5d} is of order $O(\ln (\mu/T)^2)$ by \eqref{intB_est_6}.

{\bf Dimension three:}
For $d=3$, we first prove that $\lVert D_T^<\rVert=O(\ln(\mu/T)^2)$.
We bound \eqref{DT<} using the Schwarz inequality
\begin{equation}\label{intB_est_3d.1}
\langle \psi, D_T^< \psi \rangle\leq \lVert V \rVert^2_1 \lVert \psi\rVert_2^2 \int_{\BR^{5}}\chi_{|p|^2<2\mu, |(p_1,\tilde q)|<2\mu, p_1^2<\mu} B_{T,\mu} (p,0) B_{T,\mu}((p_1,\tilde q),0)\dd p \dd \tilde q.
\end{equation}
The integral can be rewritten as
\begin{equation}\label{3dBB}
4\pi^2 \int_0^{\sqrt{\mu}} \left(\int_{0}^{\sqrt{2\mu-p_1^2}} B_{T, \mu-p_1^2}(t, 0) t \dd t \right)^2 \dd p_1
\end{equation}
Substituting $s=(t^2+p_1^2-\mu)/T$ gives
\begin{equation}\label{3dBB.1}
\pi^2 \int_0^{\sqrt{\mu}} \left(\int_{-(\mu-p_1^2)/T}^{\mu/T} \frac{\tanh(s)}{s} \dd s \right)^2 \dd p_1 \leq \sqrt{\mu}\pi^2 \left(\int_{-\mu/T}^{\mu/T} \frac{\tanh(s)}{s} \dd s \right)^2
\end{equation}
Since $\tanh (x)/x \leq \min\{1,1/\vert x\vert\}$, this is bounded by
\begin{equation}\label{3dBB.2}
\sqrt{\mu}4\pi^2 \left(1+\ln(\mu/T) \right)^2.
\end{equation}

To bound $\Vert D_T-D_T^<\rVert$, we distinguish the cases were $p^2$ and $ (p_1,\tilde q)^2$ are larger or smaller than $2\mu$.
Using the bound on $B_T$ given in \eqref{BT_bound} we estimate
\begin{multline}\label{intB_est_3d}
\vert\langle \psi, (D_T-D_T^< )\psi \rangle\vert\leq \lVert V \rVert^2_1 \lVert \psi\rVert_2^2 \int_{\BR^{5}}\chi_{|p|^2<2\mu, |(p_1,\tilde q)|<2\mu,p_1^2>\mu} B_{T,\mu} (p,0) B_{T,\mu}((p_1,\tilde q),0)\dd p \dd \tilde q\\
+2\lVert V \rVert_1 \lVert \psi\rVert_2^2 \int_{\BR^{5}}  \frac{C}{\tilde p^2+1}\vert \widehat V(0,\tilde p-\tilde q)\vert B_{T,\mu}((p_1,\tilde q),0)\chi_{|(p_1,\tilde q)|^2<2\mu}\dd p \dd \tilde q\\
+\int_{\BR^{5}}  \overline {\widehat{|V|^{1/2} \psi} (p)}  \frac{C}{p^2+1}\vert \widehat V(0,\tilde p-\tilde q) \vert\frac{C }{p_1^2+\tilde q^2+1}\widehat{|V|^{1/2} \psi}(p_1,\tilde q) \dd p \dd \tilde q,
\end{multline}
where $C$ is a constant independent of $T$.
For the first term, proceeding similarly to \eqref{3dBB}--\eqref{3dBB.2}, the integral equals
\begin{equation}
\pi^2 \int_{\sqrt{\mu}}^{\sqrt{2\mu}} \left(\int_{(p_1^2-\mu)/T}^{\mu/T} \frac{\tanh(s)}{s} \dd s \right)^2 \dd p_1 \leq \pi^2 \int_{\sqrt{\mu}}^{\sqrt{2\mu}} \ln \left(\frac{\mu}{p_1^2-\mu}\right)^2 \dd p_1<\infty
\end{equation}
For the second term in \eqref{intB_est_3d} we apply Young's inequality to bound the integral by
\begin{equation}
C \left\Vert \frac{1}{1+|\cdot|^2}\right \Vert_{L^{3/2}(\BR^2)} \lVert \widehat V(0,\cdot) \rVert_{L^3(\BR^2)} |\BS^2| m_\mu(T)
\end{equation}
which is $O(\ln \mu/T)$.
The third term in \eqref{intB_est_3d} is bounded by $C \lVert \psi \rVert_2^2$ by Lemma~\ref{lea:pf_plugin_1}.
\end{proof}

\begin{proof}[Proof of Lemma~\ref{DTj}]
By assumption, $0<e_\mu= \frac{1}{(2\pi)^{d/2}}\int_{\BS^{d-1}} \widehat V(p- \sqrt{\mu} \omega) \dd \Omega(\omega)=\widehat{V j_d}(\vert p\vert=\sqrt{\mu})$.
By continuity of $\widehat{Vj_d}(p)$ in $p$, there is an $\eps>0$ such that $\widehat{V j_d}(p)>\frac{1}{2}\widehat{V j_d}(\vert p\vert=\sqrt{\mu})>0$ for all $\sqrt{\mu}-\eps < \vert p\vert< \sqrt{\mu}+\eps$.
In the following we treat the different dimensions separately.

{\bf Dimension one:}
Suppose $T<\eps$.
Since $\widehat{V}(0) >0$,
\begin{equation}
\langle V^{1/2} j_1, D_T V^{1/2} j_1\rangle = \widehat{V}(0) \int_\BR B_{T} (p,0)^2 \vert \widehat{V j_1}(p) \vert^2 \dd p \geq  \frac{1}{4}\widehat{V}(0)  \vert \widehat{V j_1}(\sqrt{\mu}) \vert^2 \int_{\sqrt{\mu}+T}^{\sqrt{\mu}+\eps} B_{T} (p,0)^2 \dd p
\end{equation}
For $p\in[\sqrt{\mu}+T,\sqrt{\mu}+\eps]$, $B_{T} (p,0)\geq \frac{\tanh(\sqrt{\mu})}{p^2-\mu} \geq \frac{\tanh(\sqrt{\mu})}{(2\sqrt{\mu}+\eps)(p-\sqrt{\mu})}$.
Since $ \int_{\sqrt{\mu}+T}^{\sqrt{\mu}+\eps} \frac{1}{(p-\sqrt{\mu})^2}\dd p = 1/T-1/\eps$, we obtain the lower bound
\begin{equation}
\langle V^{1/2} j_1, D_T V^{1/2} j_1\rangle \geq  \frac{1}{4}\widehat{V}(0)  \vert \widehat{V j_1}(\sqrt{\mu}) \vert^2 \frac{\tanh(\sqrt{\mu})^2}{(2\sqrt{\mu}+\eps)^2} \left(\frac{1}{T}-\frac{1}{\eps}\right)
\end{equation}
and the claim follows.

{\bf Dimension two:}
Since $\widehat{V}(0)> 0$, by continuity also $\widehat{V}(p)>0$ for small $|p|$.
Therefore, there are constants $0<\delta<\mu$ and $C>0$ such that for all $\sqrt{\mu-\delta}<p_1\leq \sqrt{\mu}$ and $|p_2|,|q_2|<(2\delta)^{1/2}$
\begin{equation}
\overline {\widehat{V j_2} (p_1,p_2)} \widehat V(0,p_2-q_2) \widehat{V j_2}(p_1,q_2) >C.
\end{equation}
By Lemma~\ref{DTmax}, we have  $\langle V^{1/2} j_2, D_T V^{1/2} j_2 \rangle= \langle V^{1/2} j_2, D_T^\delta V^{1/2} j_2 \rangle +O((\ln \mu/T)^2)$.
It hence suffices to show that $\langle V^{1/2} j_2, D_T^\delta V^{1/2} j_2 \rangle$ grows like $(\ln \mu/T)^3$.
Let $A:=\{(p_1,p_2,q_2)\in \BR^3 | \sqrt{\mu-\delta}<p_1<\sqrt{\mu}, 0<p_2,q_2<\delta^{1/2}, p_1^2+p_2^2>\mu+T,p_1^2+q_2^2>\mu+T\}$.
This is a subset of the support in $D_T^\delta$.
Using that all terms in the integrand of $\langle V^{1/2} j_2, D_T^\delta V^{1/2} j_2\rangle$ are positive, we estimate %\color{black}
\begin{equation}\label{DTj_2d_1}
\langle V^{1/2} j_2, D_T^\delta V^{1/2} j_2 \rangle\geq C \int_{A}  B_{T} (p,0) B_{T}((p_1, q_2),0) \dd p \dd q_2.
\end{equation}
For $(p_1,p_2,q_2)\in A$ we have $ p_1^2+p_2^2-\mu >T$ and thus
\begin{equation}
B_{T} (p,0)\geq \frac{\tanh\left(\frac{1}{2}\right)}{ p_1^2+p_2^2-\mu }
\end{equation}
For $p_1^2>\mu+T-\delta$
\begin{equation}
\int_{\sqrt{\mu+T-p_1^2}}^{\delta^{1/2}} \frac{1}{p_1^2+p_2^2-\mu}\dd p_2 =\frac{1}{\sqrt{\mu-p_1^2}}\left[\artanh\left(\sqrt{1-\frac{T}{\mu+T-p_1^2}}\right)-\artanh\left(\sqrt{\frac{\mu-p_1^2}{\delta}}\right)\right].
\end{equation}
Hence, the integral in \eqref{DTj_2d_1} is bounded below by
\begin{equation}\label{DTj_2d_2}
\tanh\left(\frac{1}{2}\right)^2 \int_{\sqrt{\mu+T-\delta}}^{\sqrt{\mu}} \frac{1}{\mu-p_1^2}\left[\artanh\left(\sqrt{1-\frac{T}{\mu+T-p_1^2}}\right)-\artanh\left(\sqrt{\frac{\mu-p_1^2}{\delta}}\right)\right]^2 \dd p_1
\end{equation}
Assume that $T<\delta/2$.
For a lower bound, we further restrict the $p_1$-integration to the interval $\left(\sqrt{\mu-\delta/2}, \sqrt{\mu-\mu^{1/2}T^{1/2}}\right).$
For these values of $p_1$, we have
\begin{equation}
\artanh\left(\sqrt{\frac{\mu-p_1^2}{\delta}}\right) \leq \artanh\left(\frac{1}{\sqrt{2}}\right) \leq \artanh\left(\sqrt{1-\frac{T^{1/2}}{\mu^{1/2}}}\right)  \leq \artanh\left(\sqrt{1-\frac{T}{\mu+T-p_1^2}}\right).
\end{equation}
Furthermore,
\begin{equation}
\int_{\sqrt{\mu-\delta/2}}^{\sqrt{\mu-\mu^{1/2}T^{1/2}}} \frac{1}{\mu-p_1^2} \dd p_1 = \frac{1}{\sqrt{\mu}}\artanh \left(1-\frac{(\sqrt{\mu}/a+1)(1-b/\sqrt{\mu})}{\sqrt{\mu}/a-b/\sqrt{\mu}}\right),
\end{equation}
where $a=\sqrt{\mu-\delta/2}$ and $b=\sqrt{\mu-\mu^{1/2}T^{1/2}} \leq \sqrt{\mu}$.
This is bounded below by
\begin{equation}
\frac{1}{\sqrt{\mu}}\artanh \left(1-\frac{(\sqrt{\mu}/a+1)(1-b/\sqrt{\mu})}{\sqrt{\mu}/a-1}\right).
\end{equation}
In total, \eqref{DTj_2d_2} is bounded from below by
\begin{multline}\label{DTj_2d_3}
\frac{1}{\sqrt{\mu}}\tanh\left(\frac{1}{2}\right)^2 \left( \artanh\left(\sqrt{1-\frac{T^{1/2}}{\mu^{1/2}}}\right) -\artanh\left(\frac{1}{\sqrt{2}}\right) \right)^2\times \\
\artanh \left(1-\frac{(\sqrt{\mu}/a+1)(1-\sqrt{1-(T/\mu)^{1/2}})}{\sqrt{\mu}/a-1}\right)
\end{multline}
With $\artanh(1-x) = \frac{1}{2} \ln 2/x+o(1)$ as $x\to 0$, we obtain that for $T\to0$
\begin{equation}
\artanh\left(\sqrt{1-\frac{T^{1/2}}{\mu^{1/2}}}\right) =\frac{1}{4}\ln\left(16\frac{\mu}{T}\right) +o(1)
\end{equation}
and
\begin{equation}
\artanh \left(1-\frac{(\sqrt{\mu}/a+1)(1-\sqrt{1-(T/\mu)^{1/2}})}{\sqrt{\mu}/a-1}\right)=\frac{1}{4}\ln\left(16\left(\frac{\sqrt{\mu}/a-1}{\sqrt{\mu}/a+1}\right)^2\frac{\mu}{T}\right) +o(1)
\end{equation}
In particular, we obtain
\begin{equation}
\langle V^{1/2} j_2, D_T V^{1/2} j_2 \rangle \geq \frac{C }{\sqrt{\mu}} \ln\left(\frac{\mu}{T}\right)^3+O\left(\ln\left(\frac{\mu}{T}\right)^2\right)
\end{equation}
for some $C>0$, which implies the claim.

{\bf Dimension three:}
Using that $\lVert D_T-D_T^<\rVert\leq C \ln \mu/T$ according to Lemma~\ref{DTmax} and  that $\ln \mu/\tcti(\lambda)\sim 1/\lambda$ by Lemma~\ref{lea:Tc0_asy},
\begin{equation}
\lim_{\lambda \to 0} \lambda^2\langle V^{1/2}j_3, D_{\tcti(\lambda)} V^{1/2}j_3 \rangle = \lim_{\lambda \to 0}\lambda^2\langle V^{1/2}j_3, D_{\tcti(\lambda)}^< V^{1/2}j_3 \rangle .
\end{equation}
By integrating out the angular variables $\int_{\BR^{3}} V(r) j_3(r;\mu) \frac{e^{i \sqrt{\mu}r \cdot p/\vert p \vert }}{(2\pi)^{3/2}} \dd r = \frac{1}{|\BS^2|}\int_{\BR^3} V(r) j_3(r;\mu)^2=e_\mu$.
Therefore, we can write
\begin{multline}\label{DTj_3d.2}
\langle V^{1/2}j_3, D_{\tcti(\lambda)}^< V^{1/2}j_3  \rangle
=\frac{1}{(2\pi)^{3}}\int_{\BR^{11}; \tilde p^2, \tilde q^2<2\mu-p_1^2, p_1^2<\mu}  \Big( V j_3(r;\mu)(e^{i r \cdot p}-e^{i \sqrt{\mu}r \cdot p/\vert p \vert })\times \\
 B_{\tcti(\lambda)} (p,0) \widehat V(0,\tilde p-\tilde q) B_{\tcti(\lambda)}((p_1,\tilde q),0)e^{-i p \cdot r'} V j_3(r';\mu) \\
+ V j_3(r;\mu)e^{i \sqrt{\mu}r \cdot p/\vert p \vert } B_{\tcti(\lambda)} (p,0) \widehat V(0,\tilde p-\tilde q) B_{\tcti(\lambda)}((p_1,\tilde q),0)(e^{-i p \cdot r'}-e^{-i \sqrt{\mu}r' \cdot p/\vert p \vert}) V j_3(r';\mu)  \Big)\dd p \dd \tilde q \dd r \dd r'\\
+e_\mu^2 \int_{\BR^{8}; \tilde p^2, \tilde q^2<2\mu-p_1^2, p_1^2<\mu} B_{\tcti(\lambda)} (p,0) \frac{e^{i (\tilde p-\tilde q)\tilde r}}{(2\pi)^{3/2}}V(r) B_{\tcti(\lambda)}((p_1,\tilde q),0)\dd p \dd \tilde q \dd r
\end{multline}
By \cite[Proof of Lemma 3.1]{hainzl_bardeencooperschrieffer_2016}
\begin{equation}
\left \vert \int_{\BS^2} e^{i  \vert r \vert w \cdot p}-e^{i \sqrt{\mu} |r| w \cdot p/\vert p \vert }\dd w \right  \vert \leq C \frac{|p|-\sqrt{\mu}}{|p|+\sqrt{\mu}}.
\end{equation}
Furthermore, note that $B_{T} (p,0) \frac{|p|-\sqrt{\mu}}{|p|+\sqrt{\mu}} \leq \frac{1}{\mu}$.
Hence, the first integral in \eqref{DTj_3d.2} is bounded by
\begin{equation}
\frac{C}{\mu}\lVert V j_3 \rVert_1^2  \lVert \widehat V \rVert_\infty \int_{p_1^2+\tilde q^2<2\mu, \tilde p^2<2\mu} B_{\tcti(\lambda)}((p_1,\tilde q),0) \dd p_1 \dd \tilde p \dd \tilde q
\leq C\lVert V j_3 \rVert_1^2  \lVert \widehat V \rVert_\infty m_\mu(\tcti(\lambda)),
\end{equation}
which is of order $1/\lambda$ by Lemma~\ref{lea:Tc0_asy}.

Changing to angular coordinates for the $\tilde p$ and $\tilde q$ integration, the integral on the last line of \eqref{DTj_3d.2} can be rewritten as
\begin{multline}\label{DTj_3d.3}
2 \int_{\BR^3} \dd r\int_0^{\sqrt{\mu}} \dd p_1  \int_0^{\sqrt{2\mu-p_1^2}} \dd t  \int_0^{\sqrt{2\mu-p_1^2}} \dd s \int_{\BS^1} \dd w \int_{\BS^1} \dd w' B_{\tcti(\lambda)} (\sqrt{p_1^2+t^2},0) t \frac{e^{i (t w-s w')\cdot\tilde r}}{(2\pi)^{3/2}}\times \\
V(r)B_{\tcti(\lambda)}(\sqrt{p_1^2+s^2},0) s\\
=2 \int_{\BR^3} \dd r \int_0^{\sqrt{\mu}} \dd p_1  \int_{p_1}^{\sqrt{2\mu}} \dd x  \int_{p_1}^{\sqrt{2\mu}} \dd y \int_{\BS^1} \dd w \int_{\BS^1} \dd w' B_{\tcti(\lambda)} (x,0) x \frac{e^{i (\sqrt{x^2-p_1^2} w-\sqrt{y^2-p_1^2} w')\cdot\tilde r}}{(2\pi)^{3/2}}\times \\
V(r) B_{\tcti(\lambda)}(y,0) y
\end{multline}
where we substituted $x=\sqrt{p_1^2+t^2}, y=\sqrt{p_1^2+s^2}$.
Next, we want to replace the $x^2$ and $y^2$ in the exponent by $\mu$.
We rewrite \eqref{DTj_3d.3} as
\begin{multline}\label{DTj_3d.1}
2\int B_{\tcti(\lambda)} (x,0) x \frac{\left(e^{i \sqrt{x^2-p_1^2} w\cdot\tilde r}-e^{i \sqrt{\mu-p_1^2} w\cdot\tilde r}\right)}{(2\pi)^{3/2}}V(r)e^{-i \sqrt{y^2-p_1^2} w'\cdot\tilde r} B_{\tcti(\lambda)}(y,0) y\dd p_1\dd r \dd x \dd y  \dd w\dd w'\\
+2 \int B_{\tcti(\lambda)} (x,0) x e^{i \sqrt{\mu-p_1^2} w\cdot\tilde r}V(r)\frac{\left(e^{-i \sqrt{y^2-p_1^2} w'\cdot\tilde r}-e^{i \sqrt{\mu-p_1^2} w'\cdot\tilde r}\right)}{(2\pi)^{3/2}} B_{\tcti(\lambda)}(y,0) y\dd p_1\dd r \dd x \dd y  \dd w\dd w'\\
+2 \int B_{\tcti(\lambda)} (x,0) x \frac{e^{i \sqrt{\mu-p_1^2} (w-w')\cdot\tilde r}}{(2\pi)^{3/2}}V(r) B_{\tcti(\lambda)}(y,0) y\dd p_1\dd r \dd x \dd y  \dd w\dd w'
\end{multline}
By \cite[Proof of Lemma 3.4]{henheik_universality_2023}
\begin{equation}
\left \vert \int_{\BS^1} \frac{e^{i  \sqrt{x-p_1^2} w \cdot \tilde r}-e^{i \sqrt{\mu-p_1^2} w \cdot \tilde r}}{(2\pi)^{2}} \dd w \right \vert \leq C \left|\sqrt{x^2-p_1^2}-\sqrt{\mu-p_1^2}\right|^{1/3}\left |(x^2-p_1^2)^{-1/6}+(\mu-p_1^2)^{-1/6}\right|
\end{equation}
We bound this further by $ C \left|x^2-\mu\right|^{1/3}\left ((x^2-p_1^2)^{-1/3}+(\mu-p_1^2)^{-1/3}\right)$. %\frac{1}{(x^2-p_1^2)^{1/6}(\mu-p_1^2)^{1/6}}\right).
Using that $B_{\tcti(\lambda)}(x,0)\leq 1/|x^2-\mu|$ by \eqref{BT_bound} and recalling the definition of $m_\mu$ in \eqref{mmu} we bound the first two lines in \eqref{DTj_3d.1} by
\begin{equation}
C \lVert V \rVert_1 m_{\mu}^{d=2}(\tcti(\lambda)) \int_{0}^{\sqrt{\mu}} \dd p_1 \int_{p_1}^{\sqrt{2\mu}} \dd x \frac{1}{\vert x-\sqrt{\mu} \vert^{2/3} (x+\sqrt{\mu})^{2/3}}\left (\frac{1}{(x^2-p_1^2)^{1/3}}+\frac{1}{(\mu-p_1^2)^{1/3}}\right)
\end{equation}
The integral is bounded by
\begin{equation}
\sqrt{\mu} \int_{0}^{\sqrt{2}} \dd x \int_{0}^{x} \dd p_1 \frac{1}{\vert x-1 \vert^{2/3}}\left (\frac{1}{x^{1/3}(x-p_1)^{1/3}}+\frac{1}{(1-p_1)^{1/3}}\right)<\infty
\end{equation}
Hence, the first two lines in \eqref{DTj_3d.1} are of order $O(1/\lambda)$ by Lemma~\ref{lea:Tc0_asy}.
For the third line we carry out the $r$-integration and obtain
\begin{equation}
2 \int_0^{\sqrt{\mu}}\left( \int_{p_1}^{\sqrt{2\mu}} B_{\tcti(\lambda)} (x,0) x  \dd x \right)^2 \left(\int_{\BS^1} \int_{\BS^1}\widehat V\left(0,\sqrt{\mu-p_1^2} (w-w')\right) \dd w\dd w'\right) \dd p_1.
\end{equation}
Note that $ \int_{p_1}^{\sqrt{2\mu}} B_{\tcti(\lambda)} (x,0) x  \dd x =m_\mu^{d=2}(\tcti(\lambda)) -\int_{0}^{p_1} B_{\tcti(\lambda)} (x,0) x  \dd x$ and
\begin{equation}
\int_{0}^{p_1} B_{\tcti(\lambda)} (x,0) x  \dd x= \frac{1}{2} \int_{(\mu-p_1^2)/\tcti(\lambda)}^{\mu/\tcti(\lambda)} \frac{\tanh s}{s } \dd s \leq \frac{1}{2} \ln \frac{\mu}{\mu-p_1^2}
\end{equation}
where we substituted $s=(\mu-x^2)/\tcti(\lambda)$.
In particular,
\begin{multline}
\Bigg \vert 2 \int_0^{\sqrt{\mu}}\left[\left( \int_{p_1}^{\sqrt{2\mu}} B_{\tcti(\lambda)} (x,0) x  \dd x \right)^2 -m_\mu^{d=2}(\tcti(\lambda)) ^2\right]\times\\
\left(\int_{\BS^1} \int_{\BS^1}\widehat V\left(0,\sqrt{\mu-p_1^2} (w-w')\right) \dd w\dd w'\right) \dd p_1 \Bigg \vert\\
\leq 2|\BS^1|^2 \lVert \widehat V \rVert_\infty  \int_0^{\sqrt{\mu}}\left( \frac{1}{4} \left(\ln \frac{\mu}{\mu-p_1^2}\right)^2+\ln \frac{\mu}{\mu-p_1^2} m_\mu^{d=2}(\tcti(\lambda))\right)\dd p_1 \leq C(1+m_\mu^{d=2}(\tcti(\lambda)))
\end{multline}
which is of order $O(1/\lambda)$ by Lemma~\ref{lea:Tc0_asy}.
In total, we thus obtain
\begin{multline}
\lim_{\lambda \to 0} \lambda^2\langle V^{1/2}j_3, D_{\tcti(\lambda)} V^{1/2}j_3 \rangle\\
=\lim_{\lambda \to 0}2\lambda^2 m_\mu^{d=2}(\tcti) ^2 \sqrt{\mu}e_\mu^2 \int_0^{1}\left(\int_{\BS^1} \int_{\BS^1}\widehat V\left(0,\sqrt{\mu}\sqrt{1-p_1^2} (w-w')\right) \dd w\dd w'\right) \dd p_1
\end{multline}
By writing out the definition of $j_3$ and then switching to spherical coordinates and carrying out the $r$ integration, we have
\begin{multline}
\int_{\BR^4} V(r) j_3(z_1,\tilde r;\mu)^2 \dd r \dd z_1=\int_{\BS^2} \dd u \int_{\BS^2} \dd v \int_{\BR^7} \dd p \dd r \dd z_1\frac{e^{i p \cdot r}\widehat{V}(p)}{(2\pi)^{3/2}} \frac{e^{i\sqrt{\mu} (z_1,\tilde r)\cdot(u-v)}}{(2\pi)^3}
= \frac{1}{(2\pi)^{3/2}}\times\\
\int_\BR \left( \int_0^\pi \sin \theta \dd \theta \int_0^\pi \sin \theta' \dd \theta' \int_{\BS^1} \dd w \int_{\BS^1} \dd w' \widehat{V}(0,\sqrt{\mu} (\sin \theta w-\sin \theta' w') e^{i \sqrt{\mu} z_1(\cos \theta-\cos \theta')}\right)\dd z_1\\
=\frac{1}{\sqrt{\mu}(2\pi)^{1/2}} \int_{-1}^1 \dd t \int_{-1}^1\dd s \int_{\BS^1} \dd w \int_{\BS^1} \dd w' \widehat{V}(0,\sqrt{\mu} (\sqrt{1-t^2} w-\sqrt{1-s^2} w')\delta(s-t),
\end{multline}
where in the last step we substituted $t=\cos \theta, s=\cos \theta'$ and carried out the $z_1$ integration.
Furthermore, according to Lemma~\ref{lea:Tc0_asy}, $\lim_{\lambda \to 0} \lambda m_\mu^{d=2}(\tcti) e_\mu = \frac{1}{\sqrt{\mu}}$.
This gives the desired
\begin{equation}
\lim_{\lambda \to 0} \lambda^2\langle V^{1/2}j_3, D_{\tcti(\lambda)} V^{1/2}j_3 \rangle = (2\pi)^{1/2}\int_{\BR^4} V(r) j_3(z_1,\tilde r;\mu)^2 \dd r \dd z_1
\end{equation}
\end{proof}

\section{Boundary Superconductivity in 3d}\label{sec:3d_condition}
In this section we shall prove Theorem~\ref{thm2}, which provides sufficient conditions for \eqref{3d_cond} to hold.
Due to rotation invariance, we consider the spherical average of $\tilde m^{D/N}_3$ (defined in \eqref{mtilde}).
With
\begin{equation}\label{md}
m^{D/N}_3(\vert r\vert;\mu):=\frac{1}{4\pi }\int_{\BS^2} \tilde m^{D/N}_3(\vert r\vert \omega;\mu) \dd \omega
\end{equation}
we have $\int_{\BR^3} V(r) \tilde m^{D/N}_3(r;\mu)\dd r =\int_{\BR^3} V(r) m^{D/N}_3(\vert r\vert;\mu) \dd r$.
Furthermore, we have the scaling property
\begin{equation}\label{md_scale}
m_3^{D/N}(\vert r\vert ;\mu)=\frac{1}{\sqrt{\mu}} m_3^{D/N}(\sqrt{\mu}\vert r\vert ;1).
\end{equation}
We shall derive the following, more explicit, expression for $m^{D/N}_3$ in Section~\ref{sec:pf_t1_expr}.
\begin{lemma}\label{lea:t1_expr}
For $x\geq0$ we can write $m^{D}_3(x;1)=\sum_{j=1}^4 t_j(x)$ and $m^{N}_3(x;1)=\sum_{j=1}^2 t_j(x)-\sum_{j=3}^4t_j(x)$, where
\begin{align*}
t_1(x)&=\frac{4}{\pi x} \int_1^\infty \frac{\sin^2(x k)}{k} \arcoth(k) \dd k\\
t_2(x)&=- \frac{2}{\pi} \frac{\sin^2(x)}{x}\\
t_3(x)&=- 2\frac{\sin^2( x)}{x^2}\\
t_4(x)&= \frac{4\sin x}{ \pi x^2}\left(\sin  x\Si 2  x-\cos x \Cin 2 x\right) \\
&= \frac{\sin x}{2\pi^{3} x}\int_{\BS^2}\int_{\BS^{2}} \frac{ \sin(x \omega_1 \vert \omega'_1 \vert) e^{-i x \tilde \omega \cdot \tilde \omega ' }}{\omega_1}  \dd \omega \dd \omega'
\end{align*}
with $\Cin(x)=\int_0^x \frac{1-\cos t}{t} \dd t $ and $\Si (x)=\int_0^x \frac{\sin t}{t} \dd t$.
\end{lemma}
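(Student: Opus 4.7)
The plan is to take the spherical average in \eqref{md} piece by piece, after expanding the square in the definition of $\widetilde m_3^{D/N}$. By the scaling \eqref{md_scale} it suffices to treat $\mu=1$. Setting $r=x\omega$ with $\omega\in\BS^2$ and expanding
\begin{equation*}
\vert j_3(z_1,\tilde r;1)\mp j_3(r;1)\vert^2 = j_3(z_1,\tilde r;1)^2 \mp 2 j_3(z_1,\tilde r;1) j_3(r;1) + j_3(r;1)^2
\end{equation*}
in \eqref{mtilde} writes $\widetilde m_3^{D/N}(r;1)$ as a sum of four contributions: $\int_\BR j_3(z_1,\tilde r;1)^2 \chi_{\vert z_1\vert\geq\vert r_1\vert}\dd z_1$, $\pm 2 j_3(r;1)\int_\BR j_3(z_1,\tilde r;1)\chi_{\vert z_1\vert<\vert r_1\vert}\dd z_1$, $-2\vert r_1\vert j_3(r;1)^2$, and $\mp\pi j_3(r;1)^2$. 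Using $\tfrac{1}{4\pi}\int_{\BS^2}\vert\omega_1\vert\dd\omega=1/2$ and $j_3(x;1)^2=\tfrac{2}{\pi}\sin^2 x/x^2$, the spherical averages of the last two pieces immediately produce $-xj_3(x;1)^2=t_2(x)$ and $\mp\pi j_3(x;1)^2=\pm t_3(x)$, accounting for the $t_2$ contribution and the $\pm t_3$ sign pattern in the claim.

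For the first two pieces I would apply a common change of variables. Substituting $z_1=xu$ and writing $s=\omega_1\in[-1,1]$, one has $j_3(xu,x\tilde\omega;1)^2=\tfrac{2}{\pi}\sin^2(xk)/(x^2k^2)$ with $k=\sqrt{u^2+1-s^2}$. The Jacobian $\dd u=k\dd k/\sqrt{k^2-1+s^2}$ (two sign branches of $u$) turns the region $\vert u\vert\geq\vert s\vert$ into $k\geq 1$. Combined with the elementary identity $\int_{-1}^1 \dd s/\sqrt{k^2-1+s^2}=2\sinh^{-1}(1/\sqrt{k^2-1})=2\arcoth(k)$, this yields
\begin{equation*}
\frac{1}{4\pi}\int_{\BS^2}\int_\BR j_3(z_1,x\tilde\omega;1)^2\chi_{\vert z_1\vert\geq x\vert\omega_1\vert}\dd z_1\dd\omega=\frac{4}{\pi x}\int_1^\infty \frac{\sin^2(xk)}{k}\arcoth(k)\dd k=t_1(x).
\end{equation*}
On the complementary region $\vert u\vert<\vert s\vert$ the same substitution gives $k\in[\sqrt{1-s^2},1]$; swapping the order of integration and using $\int_{\sqrt{1-k^2}}^1 \dd s/\sqrt{s^2-(1-k^2)}=\artanh(k)$ turns the cross term into $\pm\tfrac{8\sin x}{\pi x}\int_0^1\sin(xk)\artanh(k)\dd k$.

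The identity
\begin{equation*}
\int_0^1 \sin(xk)\artanh(k)\dd k=\frac{1}{2x}\bigl(\sin x\,\Si(2x)-\cos x\,\Cin(2x)\bigr)
\end{equation*}
is the main technical obstacle because of the logarithmic singularity of $\artanh$ at $k=1$. I would integrate by parts on $[0,1-\eps]$ with $u=\artanh(k)$ and $\dd v=\sin(xk)\dd k$, split $1/(1-k^2)=\tfrac12(1/(1-k)+1/(1+k))$, and substitute $t=x(1\mp k)$ in the two pieces to expose integrals of $\cos t/t$ and $\sin t/t$ on $[x\eps,x]$ and $[x,2x]$. The divergent boundary contribution $(\cos x/2x)\ln\eps$ cancels the matching $-\ln\eps$ produced by $\int_{x\eps}^x\cos t/t\,\dd t=-\ln\eps-\Cin(x)+o(1)$, and the remaining finite terms reorganize using $\Ci(2x)-\Ci(x)=\ln 2+\Cin(x)-\Cin(2x)$ into $\sin x\,\Si(2x)-\cos x\,\Cin(2x)$. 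This identifies the cross term with $\pm t_4$, completing $m_3^D=\sum_{j=1}^4 t_j$ and $m_3^N=t_1+t_2-t_3-t_4$.

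Finally, to obtain the alternative $\BS^2\times\BS^2$ representation of $t_4$, I would return to the cross term and instead insert the Fourier formula $j_3(y;1)=(2\pi)^{-3/2}\int_{\BS^2}e^{i\omega'\cdot y}\dd\omega'$ before performing any change of variables. The $z_1$-integral reduces to $\int_{-x\vert\omega_1\vert}^{x\vert\omega_1\vert}e^{i\omega'_1 z_1}\dd z_1=2\sin(x\vert\omega_1\vert\omega'_1)/\omega'_1$; collecting the prefactor $2j_3(x;1)\cdot(4\pi)^{-1}\cdot2(2\pi)^{-3/2}=\sin x/(2\pi^3 x)$, then relabelling $\omega\leftrightarrow\omega'$ and applying the parity $\tilde\omega\to-\tilde\omega$ (a symmetry of $\BS^2$ fixing $\omega_1$) flips the sign in the exponent and swaps the roles of $\omega_1$ and $\vert\omega'_1\vert$, producing the stated double-sphere formula.
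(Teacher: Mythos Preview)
Your proof is correct and follows essentially the same route as the paper's: the same four-piece decomposition of $\widetilde m_3^{D/N}$, the same change of variables $z_1=xu$, $s=\omega_1$, $k=\sqrt{u^2+1-s^2}$ producing the $\arcoth$ and $\artanh$ integrals, and the same Fourier-integral derivation of the alternative $\BS^2\times\BS^2$ expression for $t_4$. The only notable difference is that the paper simply states the identity $\int_0^1\sin(xk)\artanh(k)\,\dd k=\tfrac{1}{2x}(\sin x\,\Si 2x-\cos x\,\Cin 2x)$ without justification, whereas you supply an integration-by-parts argument with the correct cancellation of the logarithmic divergence; your relabelling $\omega\leftrightarrow\omega'$ at the end is just the paper's choice of which sphere variable carries the absolute value.
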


To determine for which interactions $\int_{\BR^3} V(r)  m^{D/N}_3(\vert r\vert;\mu)\dd r >0$ holds, we need to understand $m^{D/N}_3(\vert r\vert;\mu)$.
In Figures~\ref{fig:m3} and~\ref{fig:m3N1} we plot $m_3^D$ and $m_3^N$ for $\mu=1$, respectively.
\begin{figure}[h]
\includegraphics[width=\textwidth]{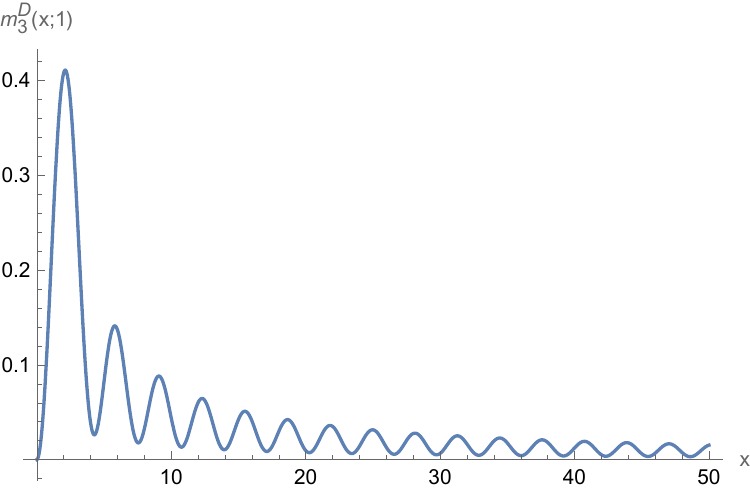}
\caption{Plot of $m_3^D$ for $\mu=1$, created using \cite{inc_mathematica_2022}. }
\label{fig:m3}
\end{figure}
\begin{figure}[h]
\includegraphics[width=\textwidth]{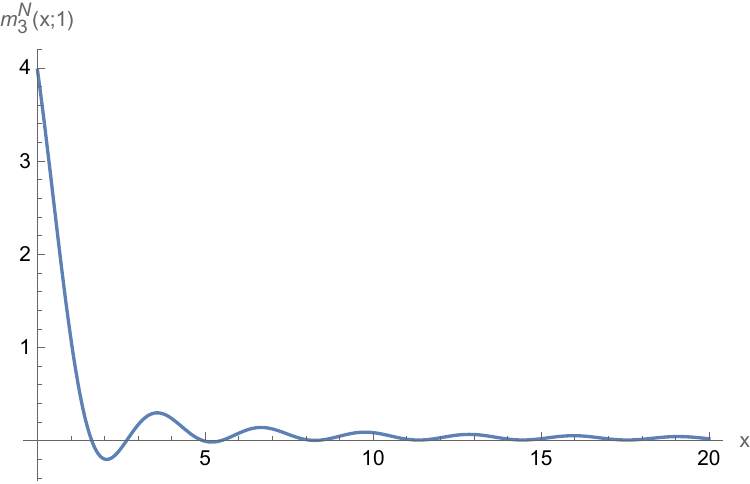}
\caption{Plot of $m_3^N$ for $\mu=1$, created using \cite{inc_mathematica_2022}. }
\label{fig:m3N1}
\end{figure}
The function $m_3^D$ seems to be nonnegative.
If one could prove that $m_3^D\geq0$, then Theorem~\ref{thm1} would apply to all $V\geq 0$ satisfying Assumption~\ref{aspt_V_halfspace}.
Unfortunately, this is beyond our reach.
On the other hand, the function $m_3^N$ changes sign, but is positive in a neighborhood of zero.

\begin{rem}
To create the plots, it is computationally more efficient to use the first expression for $t_4$, whereas for the following analytic computations the second expression is more convenient.
\end{rem}

Intuitively, if we let $\mu\to 0$, due to the scaling \eqref{md_scale} the sign of $\int_{\BR^3} V(r) m^{D/N}_3(\vert r\vert;\mu)\dd r$ is determined by the values of $m_3^{D/N}(\vert r\vert;1)$ for $r$ in the vicinity of zero.
To obtain Theorem~\ref{thm2}, we prove that both functions $m_3^{D/N}(\vert r\vert;1)$ are non-negative in a neighborhood of zero.

The following is proved in Section~\ref{sec:pf_tj_reg}.
\begin{lemma}\label{lea:tj_reg}
The functions $t_j$ for $j=1,2,3,4$ are bounded and twice continuously differentiable.
The values of the functions and their derivatives at zero are listed in Table~\ref{m3_derivatives}.
\begin{table}[h]
\centering
\begin{tabular}{|l|cccc|cc|}
\hline
$f$    & $t_1$ & $t_2$ & $t_3$ & $t_4$ & $m^D_3(\cdot;1)$ & $m^N_3(\cdot;1)$ \\ \hline
$f(0)$ &  $2$     &  $0$     &  $-2$     &   $0$    &     $0$  &     $4$  \\
$f' (0)$   & $ - 2/\pi$    &  $- 2/\pi$     &    $0$   &    $4/\pi$   &   $0$    &       \\
 $f'' (0)$ &   $-8/9$    &    $0$   &    $4/3$   &    $0$   &   $4/9$   &   \\ \hline
\end{tabular}
\caption{Values of the functions $t_j$ and $m^{D/N}_3$ and their derivatives at zero. The missing entries are not needed.}
\label{m3_derivatives}
\end{table}
\end{lemma}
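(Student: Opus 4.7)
The functions $t_2,t_3,t_4$ pose no real difficulty: each is an explicit combination of the real-analytic functions $\sin,\cos,\Si,\Cin$ divided by a power of $x$ whose zero is matched by the vanishing order of the numerator, so inserting power series at $0$ yields $C^\infty$-regularity on $[0,\infty)$ and produces the listed values at the origin. For instance, for $t_4$ one computes $\sin x\,\Si(2x)-\cos x\,\Cin(2x)=x^2-x^4/9+O(x^6)$, whence $t_4(x)=4x/\pi+O(x^3)$. Boundedness on all of $[0,\infty)$ is immediate since $\sin^2 x/x$, $\sin^2 x/x^2$, and $\sin x\,\Cin(2x)/x$ (with $\Cin(2x)=O(\log x)$ at infinity) are bounded.

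The genuine work is in $t_1$. My plan is to write $\sin^2(xk)=(1-\cos(2xk))/2$ and then use the Laurent expansion at infinity $\arcoth k=k^{-1}+(3k^3)^{-1}+r(k)$, where $r(k)=O(k^{-5})$ as $k\to\infty$ while remaining logarithmically integrable near $k=1$. This decomposes $g(x):=\int_1^\infty \sin^2(xk)\,\arcoth(k)/k\,dk$ as $G_1+G_2+G_3$, with $G_1,G_2$ proportional to $\int_1^\infty (1-\cos 2xk)/k^n\,dk$ for $n=2,4$ and $G_3$ carrying $r$. For $G_1$ and $G_2$ I would evaluate the integrals in closed form by iterated integration by parts, reducing them to $\int_1^\infty \sin(2xk)/k\,dk=\pi/2-\Si(2x)$; this gives polynomials in $x,\sin(2x),\cos(2x),\Si(2x)$, hence $C^\infty$ functions on $[0,\infty)$ whose Taylor coefficients at $0$ can be read off directly.

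For $G_3$, $C^3$-regularity on $[0,\infty)$ follows by differentiation under the integral sign: the $j$-th $x$-derivative of the integrand is pointwise bounded by $2k^{j-1}|r(k)|$, which is integrable on $[1,\infty)$ for $j\le 3$ thanks to the $O(k^{-5})$ decay at infinity and the log-integrability at $k=1$. In particular $G_3(0)=G_3'(0)=0$ and $G_3''(0)=2\int_1^\infty k\,r(k)\,dk$. The one nontrivial constant is this last integral, which I would compute using the antiderivative $\int k\,\arcoth k\,dk=\frac{1}{2}(k^2-1)\arcoth k+k/2$ together with the leading asymptotics $\arcoth k\sim k^{-1}+1/(3k^3)$; the result is $\int_1^\infty \bigl(k\,\arcoth k-1-1/(3k^2)\bigr)\,dk=1/6$.

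Assembling these pieces, $g$ is smooth on $[0,\infty)$ and vanishes at $0$, so $t_1(x)=(4/(\pi x))g(x)$ is $C^2$ (in fact smoother), and its Taylor coefficients at $0$ match the table entries $t_1(0)=2$, $t_1'(0)=-2/\pi$, $t_1''(0)=-8/9$. The remaining entries of the table for $m_3^{D/N}(\cdot;1)$ follow by linear combination according to the sign convention in Lemma~\ref{lea:t1_expr}. The main obstacle is not conceptual but organizational: one must track enough Taylor terms in each of $G_1,G_2,G_3$ to reach order $x^3$ in $g$, since this is what determines $t_1''(0)$.
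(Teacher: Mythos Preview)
Your argument is correct, and it takes a genuinely different route from the paper for the two interesting cases $t_1$ and $t_4$.

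For $t_4$ you work with the closed-form $\Si/\Cin$ expression and expand in power series; the paper instead uses the second representation $t_4(x)=\frac{\sin x}{2\pi^3 x}\int_{\BS^2\times\BS^2}\frac{\sin(x\omega_1|\omega_1'|)}{\omega_1}e^{-ix\tilde\omega\cdot\tilde\omega'}\,d\omega\,d\omega'$, where the compact domain and smooth integrand make differentiation under the integral sign and evaluation at $0$ immediate. Both are straightforward; the paper's choice simply avoids tracking several Taylor coefficients.

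For $t_1$ the contrast is sharper. You Laurent-expand $\arcoth k=k^{-1}+\tfrac{1}{3}k^{-3}+r(k)$ at infinity, integrate the slowly decaying pieces explicitly by parts down to $\pi/2-\Si(2x)$, and handle the $O(k^{-5})$ remainder by differentiation under the integral sign; the Taylor coefficients of $t_1$ then drop out of the resulting closed-form expressions. The paper instead writes $t_1(x)=\int_1^\infty x\,g(xk)\,f(1/k)\,dk$ with $g(u)=\sin^2 u/u^2$ and $f(u)=\tfrac{4}{\pi u}\artanh u$, splits at a finite cutoff $c>1$, substitutes in the tail, and analyzes each derivative separately via dominated and monotone convergence, obtaining for instance $t_1'(0)=\int_1^\infty(f(1/k)-f(0))\,dk-f(0)$ and $t_1''(0)=\bigl(\int_0^\infty g'(y)/y\,dy\bigr)\cdot\lim_{x\to0}f'(x)/x$. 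Your approach is more explicitly computational and yields the whole Taylor expansion at once; the paper's is more structural and avoids the bookkeeping of tracking terms through repeated integration by parts. One small omission: you should say explicitly that $t_1$ is bounded on $[0,\infty)$, which follows since $g(x)\le\int_1^\infty\arcoth(k)/k\,dk<\infty$ gives $t_1(x)=O(1/x)$ at infinity and continuity at $0$ handles the rest.
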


\begin{proof}[Proof of Theorem~\ref{thm2}]
We start with the case of Neumann boundary condition.
By \eqref{md_scale}, it suffices to prove that $\lim_{\mu\to 0} \int_{\BR^3} V(r) m_3^{N}(\sqrt{\mu}\vert r\vert ;1)\dd r>0$.
With $V\in L^1$ and Lemma~\ref{lea:tj_reg} it follows by dominated convergence that  $\lim_{\mu\to 0} \int_{\BR^3} V(r) m_3^{N}(\sqrt{\mu}\vert r\vert ;1)\dd r=m_3^{N}(0 ;1) \int_{\BR^3} V(r) \dd r=4 \int_{\BR^3} V(r) \dd r$.
Since $\widehat V(0)>0$ by assumption, this is positive.

For Dirichlet boundary conditions, according to Lemma~\ref{lea:tj_reg}, $m_3^{D}(0 ;1)$ and its first derivative vanish.
Thus, we consider $I(\sqrt{\mu}):=\frac{1}{\mu}\int_{\BR^3}  m_3^D(\sqrt{\mu} \vert r\vert;1) V(r) \dd r$.
Since $m_3^D(\cdot;1)$ is bounded, $I$ is continuous away from $0$.
It suffices to prove that $\lim_{\mu\to0} I(\sqrt{\mu})>0$.
According to Lemma~\ref{lea:tj_reg} and Taylor's theorem,
we have $m_3^D(x;1) =\frac{1}{2} (m_3^D)''(0;1) x^2+R(x)$, where $R$ is continuous with $\lim_{x\to 0}\frac{\vert R(x)\vert}{x^2} =0$.
Let $\eps>0$ and $c:=\sup_{0\leq x< \eps} \frac{\vert R(x)\vert}{x^2}<\infty$.
One can bound
\begin{multline}
\left \vert \frac{1}{\mu}m_3^D(\sqrt{\mu} \vert r\vert;1)  V(r) \right \vert \leq \chi_{\sqrt{\mu}  \vert r \vert<\eps} \Big(\frac{1}{2} (m_3^D)''(0;1)+c\Big) \vert |r|^2 V(r) \vert + \chi_{\sqrt{\mu} \vert r \vert>\eps} \frac{\lVert m_3^D \rVert_\infty}{\eps^2} \vert |r|^2 V(r) \vert \\
\leq \Big(\frac{1}{2} (m_3^D) ''(0;1)+c+\frac{\lVert m_3^D \rVert_\infty}{\eps^2} \Big) \vert |r|^2 V(r) \vert,
\end{multline}
which is integrable by the assumptions on $V$.
By dominated convergence
\begin{equation}
\lim_{\mu\to0} I(\sqrt{\mu})=\int_{\BR^3} \lim_{\mu\to0} \frac{m_3^D(\sqrt{\mu} \vert r\vert;1) }{\mu \vert r\vert^2} V(r) \vert r \vert^2 \dd r= \frac{1}{2}\int_{\BR^3}  (m_3^D)''(0;1) V(r) \vert r \vert^2 \dd r=\frac{2}{9}\int_{\BR^3} V(r) \vert r \vert^2 \dd r,
\end{equation}
which is positive by assumption.
\end{proof}

\subsection{Proof of Lemma~\ref{lea:t1_expr}}\label{sec:pf_t1_expr}
\begin{proof}[Proof of Lemma~\ref{lea:t1_expr}]
With
\begin{align*}
\tilde t_1(r)&=\int_{\BR} j_3(z_1,r_2,r_3;1)^2 \chi_{|z_1|>|r_1|} \dd z_1 \\
\tilde t_2(r)&=-  j_3(r;1)^2 \int_{\BR} \chi_{|z_1|<|r_1|}\dd z_1 \\
\tilde t_3(r)&= \mp  \pi j_3(r;1)^2\\
\tilde t_4(r)&=\pm 2 j_3(r;1) \int_{\BR} j_3(z_1,r_2,r_3;1)\chi_{|z_1|<|r_1|}\dd z_1
\end{align*}
one can write $\widetilde m^{D}_3(r;1)=\sum_{j=1}^4 \tilde t_j(r)$ and $\widetilde m^{N}_3(r;1)=\sum_{j=1}^2 \tilde t_j(r)-\sum_{j=3}^4 \tilde t_j(r)$.
Let $t_j(\vert r\vert)=\frac{1}{4\pi }\int_{\BS^2} \tilde t_j^{D/N}(\vert r\vert \omega;\mu) \dd \omega.$
The following explicit computations show that the $t_j$ agree with the claimed expressions.

Recall that $j_3(r;1)=\sqrt{\frac{2}{\pi}} \frac{\sin\vert r \vert}{ \vert r \vert}$.
For $t_1$ we write out the integral in spherical coordinates and substitute $z_1=x y$ and $s=\cos \theta$
\begin{multline}
t_1(x)
= \frac{1}{\pi} \frac{2\pi}{4\pi} \int_{0}^\pi \int_{\BR} \frac{\sin^2 \sqrt{z_1^2+(x \sin \theta)^2}}{  z_1^2+(x \sin \theta)^2 } \chi_{|z_1|>x| \cos \theta|}  \sin \theta \dd z_1 \dd \theta\\
= \frac{1}{\pi x} \int_{-1}^1 \int_{\BR} \frac{\sin^2 x \sqrt{y^2+1-s^2}}{y^2+1-s^2} \chi_{|y|>\vert s\vert }  \dd y \dd s
\end{multline}
Next, we use the reflection symmetry of the integrand in $s$ and $y$, substitute $y$ by $k=\sqrt{y^2+1-s^2}$ and then carry out the $s$ integration to obtain
\begin{equation}
t_1(x)= \frac{4}{\pi x} \int_{0}^1 \int_{1}^\infty \frac{\sin^2 x k}{k \sqrt{k^2+s^2-1}} \dd k \dd s =  \frac{4}{\pi x} \int_{1}^\infty \frac{\sin^2 x k}{k } \arcoth(k) \dd k.
\end{equation}
For $t_2$, we have
\begin{equation}
t_2(x)=-  \frac{2}{\pi} \frac{\sin^2 x }{ x^2} \frac{1}{4\pi} \int_{\BS^2} 2 x \vert \omega_1 \vert \dd \omega=- \frac{2}{\pi} \frac{\sin^2 x}{ x}.
\end{equation}
Since $\tilde t_3$ is radial, we have $ t_3=\tilde t_3$.
For $t_4$ we want to derive two expressions.
For the first, we perform the same substitutions as for $t_1$
\begin{multline}
t_4(x)=  \frac{4}{\pi} \frac{\sin x}{ x}\frac{2\pi}{4\pi} \int_{0}^\pi \int_{\BR} \frac{\sin \sqrt{z_1^2+(x \sin \theta)^2}}{ \sqrt{ z_1^2+(x \sin \theta)^2 }} \chi_{|z_1|<x| \cos \theta|}  \sin \theta \dd z_1 \dd \theta\\
=  \frac{2}{\pi} \frac{\sin x}{ x} \int_{-1}^1 \int_{\BR} \frac{\sin x \sqrt{y^2+1-s^2}}{ \sqrt{y^2+1-s^2}} \chi_{|y|<| s|} \dd y \dd s
= \frac{8}{\pi} \frac{\sin x}{ x} \int_{0}^1 \int_{0}^1 \frac{\sin x k}{\sqrt{k^2+s^2-1}} \chi_{k^2+s^2>1}\dd k \dd s\\
= \frac{8}{\pi} \frac{\sin x}{ x} \int_{0}^1\sin x k \artanh k \dd k = \frac{4\sin x}{ \pi x^2}\left(\sin  x \Si 2  x-\cos x \Cin 2 x \right)
\end{multline}
To obtain the second expression for $t_4$, note that $\int_\BR e^{-i  \omega_1 z_1} \chi_{\vert z_1 \vert<\vert r_1 \vert} \dd z_1=\frac{2 \sin  \omega_1 |r_1|}{ \omega_1}$.
Therefore,
\begin{multline}
t_4(x)=2 \sqrt{\frac{2}{\pi}} \frac{\sin x}{ x} \frac{1}{4\pi}\int_{\BS^2} \int_\BR \int_{\BS^2}\frac{e^{-i  \omega \cdot (z_1,x \tilde \omega' )}}{(2\pi)^{3/2}} \chi_{\vert z_1 \vert<x \vert \omega_1' \vert}\dd \omega\dd z_1 \dd \omega'  \\
= \frac{1}{2\pi^{3}} \frac{\sin x}{ x}\int_{\BS^2} \int_{\BS^2}\frac{ \sin x \omega_1 \vert \omega_1'\vert}{ \omega_1} e^{-i x \tilde \omega \cdot \tilde \omega' }\dd \omega\dd \omega'
\end{multline}

\end{proof}

\subsection{Proof of Lemma~\ref{lea:tj_reg}}\label{sec:pf_tj_reg}
\begin{proof}[Proof of Lemma~\ref{lea:tj_reg}]
Since $\sin(x)/x$ is a bounded and smooth function, also $t_2$ and $t_3$ are bounded and smooth.
Elementary computations give the entries in Table~\ref{m3_derivatives}.

For $t_4$ use the second expression in Lemma~\ref{lea:t1_expr}.
Since the integrand is bounded and smooth and the domain of integration is compact, the integral is bounded and we can exchange integration and taking limits and derivatives.
In particular, $t_4$ is bounded and smooth and it is then an elementary computation to verify the entries in Table~\ref{m3_derivatives}.
For instance,
\begin{equation}
t_4'(0)=\frac{1}{2 \pi^3} \int_{\BS^2}\int_{\BS^{2}}  \vert \omega_1' \vert \dd \omega \dd \omega'= \frac{4}{\pi}.
\end{equation}

To study $t_1$ we define auxiliary functions $f(x)=\frac{4}{\pi x} \artanh(x)$ and $g(x)=\frac{\sin(x)^2}{x^2}$.
Note that $f(x)$ diverges logarithmically for $x\to1$ and is continuous otherwise with $f(0)=\frac{4}{\pi}$.
Furthermore, $f(x)$ is increasing on $[0,1)$ and for every $0<\eps<1$, $\sup_{0\leq x<\eps} \frac{f'(x)}{x}=\frac{f'(\eps)}{\eps}<\infty$ since all coefficients in the Taylor series of $\artanh(x)$ are positive.

We can write
\begin{equation}
t_1(x)=\int_1^\infty x g(x k) f(1/k) \dd k =\int_1^c x g(x k) f(1/k) \dd k +\int_{c x}^\infty g(k) f(x/k) \dd k
\end{equation}
for any constant $c>1$.
The first integrand is bounded by $Cx \arcoth(k)$, the second one by $C\frac{1}{k^2}$ (since $f$ is bounded on the integration domain).
By dominated convergence we obtain that $t_1$ is continuous and $t_1(0)=\frac{4}{\pi} \int_0^\infty g(k) \dd k =2$.

For $x>0$ we compute the derivative
\begin{multline}
t_1'(x)=\int_1^c (g(x k)+x k g'(x k)) f(1/k) \dd k -cg(cx)f(1/c)+\int_{c x}^\infty g(k) f'(x/k) \frac{1}{k}\dd k\\
=\int_1^c (g(x k) +x k g'(xk))f(1/k) \dd k  -cg(cx)f(1/c)+\int_{c}^\infty g(k x) f'(1/k) \frac{1}{k}\dd k,
\end{multline}
where we could apply the Leibnitz integral rule since $f'(1/k)$ decays like $1/k$ for $k\to \infty$.
By dominated convergence, $ t_1'$ is continuous for $x>0$.
By continuity of $t_1$ and the mean value theorem, $t_1'(0)=\lim_{x\to0}\frac{t_1(x)-t_1(0)}{x}=\lim_{x\to0}\lim_{y\to 0}\frac{t_1(x)-t_1(y)}{x-y}=\lim_{x\to0}t_1'(x)$.
We evaluate
\begin{multline}
t_1'(0)=\int_1^c f(1/k) \dd k -c f(1/c)+\int_{c}^\infty f'(1/k) \frac{1}{k}\dd k\\
=\int_1^c \left(f(1/k) -f(1/c)\right)\dd k-f(1/c)+\int_{c}^\infty f'(1/k) \frac{1}{k}\dd k
\end{multline}
This is a number independent of $c$.
To compute the number, we let $c\to \infty$, and by monotone convergence
\begin{equation}
t_1'(0)=\int_1^\infty \left(f(1/k) -f(0)\right)\dd k-f(0)=\frac{2}{\pi}-\frac{4}{\pi}=-\frac{2}{\pi}.
\end{equation}

Note that $g'(k)=2(\cos k -\frac{\sin k}{k})\frac{\sin k}{k^2}$ has a zero of order one at $k=0$.
Therefore, $ \left | g'(k x) f'(1/k)\right|<\frac{C}{x^2 k^3}$ and for $x>0$ the second derivative is
\begin{multline}
 t_1''(x)=\int_1^c (2x g'(x k)+x k^2 g''(x k)) f(1/k) \dd k  -c^2 g'(cx)f(1/c)+\int_{c}^\infty g'(k x) f'(1/k)\dd k\\
= \int_1^c (2 x g'(x k)+x k^2 g''(x k)) f(1/k) \dd k -c^2 g'(cx)f(1/c)+\int_{cx}^\infty \frac{g'(y)}{y} \frac{f'(x/y)}{x/y}\dd y
\end{multline}
We can bound  $\frac{g'(y)}{y}\leq \frac{C}{1+y^3}$ and $\sup_y |\frac{f'(x/y)}{x/y}\chi_{y>cx}|=c f'(1/c)<\infty$.
By dominated convergence, the function above is continuous (also at zero).
We have
\begin{equation}
t_1''(0)= \int_{0}^\infty \frac{g'(y)}{y} \dd y \lim_{x\to0} \frac{f'(x)}{x}
\end{equation}
Since $\int_{0}^\infty \frac{g'(y)}{y} \dd y =-\frac{\pi}{3}$ and $\lim_{x\to0} \frac{f'(x)}{x}=\frac{8}{3\pi}$ we obtain
\begin{equation}
t_1''(0)=-\frac{8}{9}.
\end{equation}

\end{proof}

\section{Relative Temperature Shift}\label{sec:rel_T}
In this section we shall prove Theorem~\ref{thm3}, which states that the relative temperature shift vanishes in the weak coupling limit.
We proceed similarly to the $\delta$-interaction case in one dimension analyzed in \cite{hainzl_boundary_nodate}.
For this, we switch to the Birman-Schwinger formulation.
Let $\tilde \Omega_1=\{(r,z)\in \BR^{2d}\vert \vert r_1 \vert < z_1 \}$.
Define the operator $A_T^1$ on $\psi \in L_{\rm s}^2(\tilde \Omega_1)=\{\psi \in L^2(\tilde \Omega_1) \vert \psi(r,z)=\psi(-r,z)\}$ via
\begin{multline}
\langle \psi, A_{T}^1 \psi \rangle
=\int_{\BR^{4d+2(d-1)}} \dd r \dd r' \dd p \dd q \dd \tilde z \dd \tilde z' \int_{\vert r_1 \vert<z_1} \dd z_1   \int_{\vert r_1 '\vert<z_1'} \dd z'_1  \frac{1}{(2\pi)^{2d}}\overline{\psi(r,z)}  V(r)^{1/2} e^{i(p \cdot z+q \cdot r)}\times \\
B_{T}\left(p,q\right) \Bigg(e^{-i(p_1 z'_1+q_1 r'_1)}+e^{i(p_1 z'_1+q_1 r'_1)}\mp e^{-i(q_1 z'_1+p_1 r'_1)}\mp e^{i(q_1 z'_1+p_1 r'_1)}\Bigg) e^{-i(\tilde p \cdot  \tilde z'+\tilde q \cdot \tilde r')}\vert V(r')\vert^{1/2}\psi(r',z'),
\end{multline}
where the upper signs correspond to Dirichlet and the lower signs to Neumann boundary conditions, respectively.
It follows from a computation analogous to \cite[Lemma 2.4]{hainzl_boundary_nodate} that the operator $A_T^1$ is the Birman-Schwinger operator corresponding to $H_T^{\Omega_1}$ in relative and center of mass variables.
The Birman-Schwinger principle implies that $\sgn \inf \sigma(H_T^{\Omega_1}) = \sgn (1/\lambda-\sup \sigma(A_T^{1}))$, where we use the convention that $\sgn\, 0  =0 $.

Recall the Birman-Schwinger operator $A_T^0$ corresponding to $H_T^0$ from \eqref{BS-ti}.
Similarly, the Birman-Schwinger operator $ A_{T}^{\Omega_0}$ corresponding to $H_T^{\Omega_0}$ in relative and center of mass variables is defined on $\psi(r,z)\in L^2(\BR^d\times \BR^d)$ with $\psi(r,z)=\psi(-r,z)$ and satisfies
\begin{equation}
\langle \psi, A_{T}^{\Omega_0} \psi \rangle
=\int_{\BR^{6d}} \dd r \dd r' \dd p \dd q \dd  z \dd  z' \overline{\psi(r,z)}  V(r)^{1/2} \frac{e^{i(p \cdot (z-z')+q \cdot (r-r'))}}{(2\pi)^{2d}}%\times \\
B_{T}\left(p,q\right)  \vert V(r')\vert^{1/2}\psi(r',z').
\end{equation}

Let $a_T^j=\sup \sigma (A_T^j)$.
Let us first observe that there is a $T_0>0$ such that $\af=\ati$ for $T<T_0$.
Let $\lambda_0>0$ such that $\tcf(\lambda)=\tcti(\lambda)$ for $\lambda\leq \lambda_0$, see Remark~\ref{ti_reduction_fix}.
Choose $T_0=\tcf(\lambda_0)=\tcti(\lambda_0)$ and let $T<T_0$.
Due to strict monotonicity of $H_T^0$ in $T$, $T=\tcti(\lambda)$ for some $\lambda<\lambda_0$.
By choice of $\lambda_0$ also $\tcf(\lambda)=T$.
The Birman-Schwinger principle implies $\af=\lambda=\ati$.

For $T\to0$, the asymptotics of $\af$ thus agrees with the asymptotics of $\ati$, i.e.~$\af=e_\mu \mu^{d/2-1}\ln(\mu/T) +O(1)$ \cite[Theorem 3.3]{hainzl_bardeencooperschrieffer_2016} and \cite[Theorem 2.5]{henheik_universality_2023}.
One can reformulate the claim of Theorem~\ref{thm3} in terms of the Birman-Schwinger operators.
Then
\begin{equation}
\lim_{\lambda \to 0} \frac{\tch(\lambda)-\tcf(\lambda)}{\tcf(\lambda)} =0 \Leftrightarrow \lim_{T\to 0} \left(\af-\ah\right)=0.
\end{equation}
This is a straightforward generalization of \cite[Lemma 4.1]{hainzl_boundary_nodate} and we refer to \cite[Lemma 4.1]{hainzl_boundary_nodate} for its proof.

\begin{proof}[Proof of Theorem~\ref{thm3}]
First we will argue that $\af\leq \ah$.
If $\inf \sigma(K_T^{\Omega_0}-\lambda V)<2T$, then $\inf \sigma(K_T^{\Omega_0}-\lambda' V)<\inf \sigma(K_T^{\Omega_0}-\lambda V)$ for all $\lambda'>\lambda$.
Furthermore, $\inf \sigma(K_T^{\Omega_0}-(\af)^{-1}V)=0=\inf \sigma(K_T^{\Omega_1}-(\ah)^{-1}V)\leq \inf \sigma(K_T^{\Omega_0}-(\ah)^{-1}V)$, where we used Lemma~\ref{H1_esspec} in the last step.
In particular, $\af\leq \ah$.

It remains to show that $\lim_{T\to 0} \left(\af-\ah\right)\geq 0$.
Let $\iota: L^2(\tilde \Omega_1)\to  L^2(\BR^{2d})$ be the isometry
\begin{equation}
\iota \psi(r_1,\tilde r,z_1,\tilde z)
= \frac{1}{\sqrt{2}} (\psi(r_1,\tilde r,z_1,\tilde z) \chi_{\tilde \Omega_1}(r,z) +\psi(-r_1,\tilde r,-z_1,\tilde z) \chi_{\tilde \Omega_1}(-r_1,\tilde r,-z_1,\tilde z)).
\end{equation}
Let $F_2$ denote the Fourier transform in the second variable
$F_2 \psi (r, q)=\frac{1}{(2\pi)^{d/2}} \int_{\BR^d} e^{-iq\cdot z} \psi(r,z) \dd z$
and $F_1$ the Fourier transform in the first variable
$F_1 \psi (p,q)=\frac{1}{(2\pi)^{d/2}} \int_{\BR^d} e^{-ip\cdot r} \psi(r,q) \dd r.$
Recall that by assumption $V\geq 0$ and for functions $\psi \in L^2(\BR^d\times \BR^d)$ we have $V^{1/2}\psi(r,q)=V^{1/2}(r)\psi(r,q)$.
We define self-adjoint operators $\tilde E_T$ and $G_T$ on $L^2(\BR^{2d})$ through
\begin{equation}
\langle \psi, \tilde E_{T} \psi \rangle= \af \lVert \psi \rVert_2^2 -\int_{\BR^{2d}}  B_{T}(p,q) |F_1 V^{1/2}\psi(p,q)|^2\dd p \dd q
\end{equation}
and
\begin{equation}
\langle \psi, G_{T} \psi \rangle=\int_{\BR^{2d}} \overline{F_1 V^{1/2}\psi((q_1,\tilde p),(p_1,\tilde q))} B_{T}(p,q) F_1 V^{1/2}\psi(p,q) \dd p \dd q.
\end{equation}
With this notation, we have $\af \BI-A_T^1=\iota^\dagger F_2^\dagger (\tilde E_T\pm G_T) F_2 \iota$, where $\BI$ denotes the identity operator on $L_{\rm s}^2(\tilde \Omega_1)$.
In particular,
\begin{equation}
\af-\ah=\inf_{\psi \in L^2_{\rm s}(\tilde \Omega_1), \lVert \psi \rVert_2=1} \langle F_2\iota \psi,(\tilde E_{T}\pm G_{T} )F_2\iota \psi \rangle \geq \inf_{ \psi \in L^2_{\rm s}(\BR^{2d}), \lVert \psi \rVert_2=1} \langle \psi,(\tilde E_{T}\pm G_{T}) \psi \rangle,
\end{equation}
where we used that $\lVert F_2 \iota \psi \lVert_2= \lVert \psi \lVert_2$.
Define the function
\begin{equation}
E_{T}(q)= \af-\lVert V^{1/2}B_{T}(\cdot ,q) V^{1/2} \rVert_{\textrm{s}},
\end{equation}
where $\lVert \cdot \rVert_{\textrm{s}}$ denotes the operator norm of the operator restricted to even functions.
Since $\af=\sup_q \lVert V^{1/2}B_{T}(\cdot ,q) V^{1/2} \rVert_{\textrm{s}}$, we have $E_T(q)\geq 0$ for all $T$.
Let $ E_{T}$ act on $L^2(\BR^{2d})$ as $ E_{T} \psi (r,q)=  E_{T}(q)\psi (r,q)$.
Then
\begin{equation}
\af-\ah\geq \inf_{ \psi \in L^2_{\rm s}(\BR^{2d}), \lVert  \psi \rVert_2=1} \langle  \psi,( E_{T}\pm G_{T} ) \psi \rangle.
\end{equation}
It thus suffices to prove that $\lim_{T\to 0} \inf \sigma ( E_{T}\pm G_{T})\geq 0$.
With the following three Lemmas, which are proved in the next sections, the claim follows completely analogously to the proof of \cite[Theorem 1.2 (ii)]{hainzl_boundary_nodate}.
For completeness, we provide a sketch of the argument in \cite[Theorem 1.2 (ii)]{hainzl_boundary_nodate} after the statement of the Lemmas.
\begin{lemma}\label{3d_w_lea3}
Let $\mu>0$, $d\in\{1,2,3\}$ and let $V\geq 0$ satisfy Assumption \ref{aspt_V_halfspace}\eqref{aspt.1}. Then $\sup_{T>0} \lVert G_{T} \rVert <\infty$.
\end{lemma}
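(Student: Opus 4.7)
Writing $G_T=V^{1/2}F_1^{-1}U^*M_{B_T}F_1 V^{1/2}$, where $U$ is the unitary swap $(Uf)(p_1,\tilde p,q_1,\tilde q)=f(q_1,\tilde p,p_1,\tilde q)$ and $M_{B_T}$ is multiplication by $B_T$, I first observe that $B_T$ is invariant under $U$ (it depends on $p_1,q_1$ only through $(p_1\pm q_1)^2$), so $U$ and $M_{B_T}$ commute. Consequently, the naive Cauchy--Schwarz bound $|\langle\psi,G_T\psi\rangle|=|\langle U\chi,M_{B_T}\chi\rangle|\leq\int B_T|\chi|^2\,\dd p\,\dd q$ with $\chi:=F_1 V^{1/2}\psi$ only reduces the problem to $a_T^0\|\psi\|_2^2$, which blows up as $T\to 0$. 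The swap structure must therefore be used nontrivially, and the strategy is to split $B_T$ into a high-momentum piece supported on $\{p^2+q^2\geq 2\mu\}$ and a low-momentum piece on the complement.

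On the high-momentum region, \eqref{BT_bound} gives $B_T\leq C(\mu)(1+p^2+q^2)^{-1}$. Writing $\|V^{1/2}F_1^{-1}U^*(1+p^2+q^2)^{-1}F_1 V^{1/2}\|\leq \|V^{1/2}F_1^{-1}(1+p^2+q^2)^{-1/2}\|^2$ via Cauchy--Schwarz and unitarity of $U$, the right-hand side is the operator norm of $(1-\Delta_r+q^2)^{-1/2}V(1-\Delta_r+q^2)^{-1/2}$ on the fibers $L^2(\BR^d)$, which is bounded uniformly in $q$ by the infinitesimal form-boundedness of $V$ with respect to $-\Delta$ (Assumption~\ref{aspt_V_halfspace}\eqref{aspt.1}). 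This gives a $T$-uniform bound on the high-momentum contribution.

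The subtle part is the low-momentum region $\{p^2+q^2<2\mu\}$, where $B_T$ can be as large as $1/(2T)$. A pointwise estimate $|\chi(p,q)|\leq (2\pi)^{-d/2}\|V\|_1^{1/2}\|\psi(\cdot,q)\|_2$ coming from $V\in L^1$, combined with Cauchy--Schwarz and the swap invariance of $B_T$, reduces matters to bounding $\int_{\BR^d}\|\psi(\cdot,q)\|_2^2\bigl(\int_{p^2+q^2<2\mu}B_T(p,q)\,\dd p\bigr)\dd q$. Evaluating the inner integral by case-splitting $|p^2+q^2-\mu|\lessgtr 2T$ in the spirit of the estimates for $D_T$ in Lemma~\ref{DTmax} produces a factor bounded by $C(\mu-q^2)^{(d-2)/2}\log(\mu/T)$ for $q^2<\mu$, which is logarithmically divergent in $T$ and therefore insufficient. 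The main obstacle is thus to cancel this divergence: I would decompose $\chi=\chi_++\chi_-$ into its parts even and odd under the swap $U$ and exploit $\langle\psi,G_T\psi\rangle=\|\chi_+\|_{B_T}^2-\|\chi_-\|_{B_T}^2$, together with the fact that the leading $T\to 0$ singularity of $B_T$ concentrates on the Fermi set $\{(p+q)^2=\mu=(p-q)^2\}$, on which $\chi_+$ and $\chi_-$ contribute comparable leading-order quantities that cancel in the difference. Making this cancellation quantitative is the technical heart of the proof.
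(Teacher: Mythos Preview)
Your high-momentum treatment is fine and close to the paper's (which splits at $|\tilde p|=2\sqrt\mu$ rather than $p^2+q^2=2\mu$, but uses the same mechanism via \eqref{BT_bound} and form-boundedness of $V$).

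The gap is in the low-momentum part. After the pointwise bound $|\chi(p,q)|\leq C\lVert V\rVert_1^{1/2}\lVert\psi(\cdot,q)\rVert_2$, your Cauchy--Schwarz in $(p_1,q_1)$ collapses the problem to the scalar $\int B_T(p,q)\,\dd p$, which genuinely diverges logarithmically --- this step discards precisely the structure that keeps $G_T$ bounded. The cancellation mechanism you then propose is not workable as stated: the decomposition $\langle\psi,G_T\psi\rangle=\lVert\chi_+\rVert_{B_T}^2-\lVert\chi_-\rVert_{B_T}^2$ is correct, but each term can individually be of order $a_T^0\lVert\psi\rVert^2\sim\ln(\mu/T)$, and extracting an $O(1)$ difference would require controlling both to leading order simultaneously for every $\psi$, which is no easier than bounding $\lVert G_T\rVert$ directly.

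The paper's route is to \emph{not} apply Cauchy--Schwarz in $(p_1,q_1)$. After the pointwise bound one arrives at (cf.\ \eqref{pf:3d_2_lea3.3})
\[
\lVert G_T^{<}\rVert\;\leq\;C\lVert V\rVert_1\sup_{\tilde q}\ \sup_{\phi\in L^2(\BR),\,\lVert\phi\rVert=1}\ \int_{\BR^{d+1}}\overline{\phi(p_1)}\,B_T(p,q)\,\chi_{|\tilde p|<2\sqrt\mu}\,\phi(q_1)\,\dd p\,\dd q_1,
\]
i.e.\ the \emph{operator norm} on $L^2(\BR)$ of the kernel $(p_1,q_1)\mapsto\int B_T\,\chi\,\dd\tilde p$. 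This operator norm is $O(1)$ uniformly in $T,\tilde q$ even though the kernel is not integrable. Using \eqref{pf:3d_2_lea3.4} reduces this to the kernel $D_{\mu_1,\mu_2}(p_1,q_1)=2/\bigl(|(p_1+q_1)^2-\mu_1|+|(p_1-q_1)^2-\mu_2|\bigr)$, whose norm is bounded by a Schur test (Lemma~\ref{Dmu1mu2_bound}), and the resulting bound is integrable in $\tilde p$ over the compact region (Lemma~\ref{mu1mu2_bound}). The swap structure enters not through sign cancellation but by leaving a one-dimensional integral operator whose \emph{norm}, not whose trace, is what must be bounded.
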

\begin{lemma}\label{3d_w_lea4}
Let $\mu>0$, $d\in\{1,2,3\}$ and let $V\geq 0$ satisfy Assumption \ref{aspt_V_halfspace}\eqref{aspt.1}.
Let $\BI_{\leq \epsilon}$ act on $L^2(\BR^{2d})$ as $\BI_{\leq \epsilon} \psi(r,p)=\psi(r,p)\chi_{\vert p\vert\leq \epsilon}$.
Then $\lim_{\epsilon \to 0} \sup_{T>0} \lVert \BI_{\leq \epsilon} G_{T} \BI_{\leq \epsilon} \lVert =0$.
\end{lemma}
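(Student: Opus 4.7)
The plan is to reduce the operator-norm estimate to a family of fixed-$q$ Birman--Schwinger operators via Cauchy--Schwarz, then to separate $B_T(p,q)$ into its $q=0$ part plus a uniformly small remainder, and finally to bound the main term using the Fourier smoothing of the restriction $|p_1|\leq\eps$.

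The crucial symmetry is $B_T((q_1,\tilde p),(p_1,\tilde q))=B_T(p,q)$, which follows because $K_T(a,b)$ depends only on $a^2$ and $b^2$, so the swap $p_1\leftrightarrow q_1$ leaves both $(p+q)^2$ and $(p-q)^2$ invariant. Applying Cauchy--Schwarz to the bilinear form $\langle\phi,\BI_{\leq\eps}G_T\BI_{\leq\eps}\psi\rangle$ and using the measure-preserving change of variables $(p,q)\mapsto((q_1,\tilde p),(p_1,\tilde q))$ to match the two factors gives
\[
\|\BI_{\leq\eps}G_T\BI_{\leq\eps}\|\leq\sup_{|q|\leq\eps}\|A_T(q)\|,\qquad A_T(q):=V^{1/2}P_\eps\,B_T(\cdot,q)\,P_\eps V^{1/2},
\]
where $P_\eps:=F_1^\dagger\chi_{|p_1|\leq\eps}F_1$ projects $L^2(\BR^d)$ onto functions whose Fourier transform is supported in $\{|p_1|\leq\eps\}$. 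It therefore suffices to prove $\sup_{T>0,\,|q|\leq\eps}\|A_T(q)\|\to 0$ as $\eps\to 0$.

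By Lemma~\ref{Bq-B0-bound}, $B_T(p,q)=B_T(p,0)+R(p,q;T)$ with $|R|\leq C|q|/(1+p^2)$ uniformly in $T$, so the $R$-contribution to $\|A_T(q)\|$ is $O(\eps)$ by the form-boundedness of $V$ against $-\Delta$. For the main operator $A_T(0)=V^{1/2}P_\eps B_T(\cdot,0)P_\eps V^{1/2}$, the smoothing bound $\|P_\eps\phi\|_{L^\infty(\BR)}\leq\sqrt{\eps/\pi}\|\phi\|_{L^2(\BR)}$ (from Young's inequality applied to convolution in $r_1$ with the Dirichlet kernel $K_\eps(x)=\sin(\eps x)/(\pi x)$ of $L^2$-norm $\sqrt{\eps/\pi}$) together with the $L^p$-assumptions on $V$ (in particular Lemma~\ref{lp_prop}\eqref{lp_prop.4}) yields $\|V^{1/2}P_\eps\|^2=O(\eps)$. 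Using Lemma~\ref{lea:A0_asy_d} to split $V^{1/2}B_T(\cdot,0)V^{1/2}=m_\mu(T)O_\mu+\tilde R_T$ with $\sup_T\|\tilde R_T\|_{\rm HS}<\infty$, the remainder $\tilde R_T$-contribution to $A_T(0)$ is $O(\eps)$ uniformly.

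I expect the main obstacle to be the remaining singular piece $m_\mu(T)\cdot V^{1/2}P_\eps\FT^\dagger\FT P_\eps V^{1/2}/(2\sqrt\mu)$: the crude bound $\|V^{1/2}P_\eps\|^2=O(\eps)$ only yields $O(\eps\,m_\mu(T))\sim\eps\ln(\mu/T)$, which is not uniform in $T$. The resolution uses that $\FT P_\eps V^{1/2}\psi$ is supported on the equatorial strip $\{|\omega_1|\leq\eps/\sqrt{\mu}\}\subset\BS^{d-1}$, which is empty for $d=1$ once $\eps<\sqrt{\mu}$ and has surface measure $O(\eps)$ for $d\geq 2$. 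A finer interpolation between the pointwise bound $|\FT V^{1/2}\psi(\omega)|\leq(2\pi)^{-d/2}\|V\|_1^{1/2}\|\psi\|_2$ and the $L^2(\BS^{d-1})$-boundedness $\|\FT V^{1/2}\|^2=\|O_\mu\|=e_\mu$ should yield an improved bound on the strip restriction sufficient to absorb the $\ln(\mu/T)$ growth of $m_\mu(T)$ and produce the required uniform smallness as $\eps\to 0$.
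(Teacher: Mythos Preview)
Your reduction via Cauchy--Schwarz is correct as stated, but it throws away exactly the structure that makes the lemma true. The bound $\|\BI_{\leq\eps}G_T\BI_{\leq\eps}\|\leq\sup_{|q|\leq\eps}\|A_T(q)\|$ with $A_T(q)=V^{1/2}P_\eps B_T(\cdot,q)P_\eps V^{1/2}$ is useless in $d\geq 2$, because $\sup_{T>0}\|A_T(0)\|=\infty$ for every fixed $\eps>0$. To see this, test with $\psi=V^{1/2}j_d$: since $\widehat{Vj_d}$ is continuous and equals $e_\mu>0$ on the Fermi sphere, there is a ball of radius $\delta\leq\eps$ around $(0,\sqrt\mu,0,\dots)\in\{|p_1|\leq\eps\}$ on which $|\widehat{Vj_d}|\geq e_\mu/2$, and a direct computation of $\int_{|p-(0,\sqrt\mu,\dots)|<\delta}B_T(p,0)\,\dd p$ gives a contribution $\gtrsim\delta\,\mu^{(d-3)/2}\ln(\mu/T)$. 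Hence $\|A_T(0)\|\gtrsim c(\eps)\ln(\mu/T)\to\infty$ as $T\to 0$. Your ``finer interpolation'' cannot repair this: the $\eps\ln(\mu/T)$ behavior you identify as the obstacle is not an artifact of a crude upper bound but the actual size of $\|A_T(0)\|$. (Separately, your invocation of Lemma~\ref{Bq-B0-bound} ``uniformly in $T$'' is also wrong: the constant there depends on $T$ through $w_0=\pi T$ and blows up as $T\to 0$, as one sees by taking $(p\pm q)^2=\mu$ simultaneously.)

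The paper avoids this loss by \emph{not} diagonalizing via Cauchy--Schwarz. After bounding $\widehat{V^{1/2}\psi}$ in $L^\infty$ (and splitting off the region $|\tilde p|>2\sqrt\mu$ via \eqref{BT_bound}), one is left with a genuine integral operator on $L^2(\BR)$ in the pair $(p_1,q_1)$, with kernel $\int_{|\tilde p|<2\sqrt\mu}\frac{2\chi_{|p_1|,|q_1|<\eps}}{|(p+q)^2-\mu|+|(p-q)^2-\mu|}\,\dd\tilde p$. Its Hilbert--Schmidt norm involves integration over the \emph{two}-dimensional box $\{|p_1|,|q_1|<\eps\}$, and the paper shows (Lemma~\ref{lea:IGIto0}) that this vanishes uniformly in $\tilde q$ as $\eps\to 0$. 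The off-diagonal $(p_1,q_1)$ coupling --- which your Cauchy--Schwarz step collapses --- is precisely what produces the extra smallness.
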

\begin{lemma}\label{3d_w_lea2}
Let $\mu>0$, $d\in\{1,2,3\}$ and let $V\geq 0$ satisfy Assumption \ref{aspt_V_halfspace}.
Let $0<\epsilon<\sqrt{\mu}$.
There are constants $c_1,c_2,T_1>0$ such that for $0<T<T_1$ and $|q|>\eps$ we have $ E_{T}(q)>c_1 \vert \ln(c_2/T)\vert$.
\end{lemma}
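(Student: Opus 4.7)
The plan is to split $E_T(q) = a_T^0 - \|V^{1/2} B_T(\cdot,q) V^{1/2}\|$ and show that $a_T^0$ grows like $|\ln T|$ while $\|V^{1/2} B_T(\cdot,q) V^{1/2}\|$ stays uniformly bounded in $T$ for $|q| \geq \epsilon$. For the lower bound on $a_T^0$, I will use Lemma~\ref{lea:A0_asy_d} together with Assumption~\ref{aspt_V_halfspace}\eqref{aspt.5} to obtain $a_T^0 \geq e_\mu m_\mu(T) - C$ (by testing on the eigenvector of $O_\mu$ associated to $e_\mu$); combined with the asymptotics $m_\mu(T) = \mu^{d/2-1}(\ln(\mu/T) + c_d) + o(1)$ stated after \eqref{mmu}, this yields $a_T^0 \geq \tfrac{1}{2} e_\mu \mu^{d/2-1}\ln(\mu/T)$ for $T$ small enough.

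For the upper bound on the operator norm, the main tool will be the closed form
\begin{equation*}
B_T(p,q) = \frac{\sinh(u/T)}{u\,(\cosh(u/T) + \cosh(v/T))}, \qquad u = p^2 + q^2 - \mu,\quad v = 2 p\cdot q,
\end{equation*}
obtained from \eqref{B_def} via the identity $\tanh x + \tanh y = 2\sinh(x+y)/(\cosh(x+y) + \cosh(x-y))$. For $q=0$ this reduces to $\tanh((p^2-\mu)/2T)/(p^2-\mu)$, whose integration along the sphere $\{|p| = \sqrt{\mu}\}$ drives the logarithmic growth of $m_\mu(T)$. For $|q|\geq\epsilon$, however, $B_T(p,q)$ attains values of order $1/T$ only on the codimension-2 set $\{p \cdot q = 0,\ p^2 + q^2 = \mu\}$; on the rest of the pair sphere $\{u=0\}$ the factor $\cosh(v/T)$ in the denominator suppresses $B_T$ exponentially in $1/T$. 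Because a codimension-2 singularity is integrable against Lebesgue measure in $\BR^d$ (in contrast to the codimension-1 sphere for $q=0$), a tube-coordinate parametrization around this set shows that $\int_{|p-p_\ast|<\delta} B_T(p,q)\,|\widehat{V^{1/2}\psi}(p)|^2\,dp$ is bounded by $C \|\psi\|_2^2$ uniformly in $T$ and in $|q|$ on compact subsets of $(\epsilon,\infty)\setminus\{\sqrt\mu\}$; the exceptional value $|q|=\sqrt\mu$ will be treated separately since there the singular set collapses to the origin. Away from the singular set, $B_T(p,q) \leq C(\mu)/(1+p^2)$ by \eqref{BT_bound}, and the far-field contribution in $d=2,3$ is controlled by combining Lemma~\ref{lp_prop}\eqref{lp_prop.3} (applied to $\widehat{V^{1/2}\psi}$) with H\"older's inequality, using that $V \in L^{p_d}$ with $p_d > d/2$ to overcome the non-integrability of $|p|^{-2}$ at infinity.

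Putting the two estimates together yields $E_T(q) \geq \tfrac{1}{2} e_\mu \mu^{d/2-1}\ln(\mu/T) - C(\epsilon,\mu,d)$, which after adjusting constants is $\geq c_1 \ln(c_2/T)$ for $T$ sufficiently small. The main obstacle is the uniform-in-$q$ control of the operator norm: the geometry of the singular set depends nontrivially on $q$---the pair sphere has radius $\sqrt{\mu-q^2}$ for $|q|<\sqrt\mu$, collapses to a point at $|q|=\sqrt\mu$, and disappears for $|q|>\sqrt\mu$---so the tube-coordinate analysis must be executed with enough uniformity to cover all three regimes, and particular care is needed in a neighbourhood of $|q|=\sqrt\mu$ where the codimension-2 picture degenerates.
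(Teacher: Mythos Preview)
Your high-level structure coincides with the paper's: both split $E_T(q)$ and reduce to showing $\sup_{T>0}\sup_{|q|>\epsilon}\|V^{1/2}B_T(\cdot,q)V^{1/2}\|<\infty$, invoking the known logarithmic divergence of $a_T^0$. The lower bound on $a_T^0$ via $O_\mu$ and Lemma~\ref{lea:A0_asy_d} is correct.

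The technical routes diverge at the operator-norm bound. You exploit the closed hyperbolic form of $B_T$ and the observation that its $1/T$-singularity is concentrated on the codimension-two set $\{u=v=0\}$, arguing integrability via tube coordinates. The paper instead proceeds dimension by dimension: for $d=1$ it bounds the Hilbert--Schmidt norm by $\lVert V\rVert_1^2(\int B_T(p,q)\,dp)^2$ and cites an external lemma for the finiteness of that integral; for $d=2,3$ it proves the stronger quantitative estimate of Lemma~\ref{B_q_aspt}, namely $\langle\psi,V^{1/2}B_T(\cdot,q)V^{1/2}\psi\rangle\leq \mu^{d/2-1}\langle\psi,O_\mu\psi\rangle\, f(\max\{T/\mu,|q|/\sqrt\mu\})+C$, via an explicit decomposition $B_T=Q_T+(B_T-Q_T)$ where $Q_T$ is analyzed in polar coordinates through the geometry of the two intersecting spheres $(p\pm q)^2=\mu$, and the remainder is controlled in five separate regions (Lemma~\ref{B-Q_gen}). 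Your geometric picture is cleaner, but beware that the crude bound $B_T\leq 1/\max\{|u|,|v|\}$ is \emph{not} sharp enough for uniformity---at $d=1$, $|q|=\sqrt\mu$ it gives $1/(2\sqrt\mu|p|)$, which is not integrable; you genuinely need the $\cosh(v/T)$ suppression from the closed form there. The paper sidesteps this by using exponential decay only on the ``shaded'' region $((p+q)^2-\mu)((p-q)^2-\mu)<0$ (part~\eqref{B-Q_3_g} of Lemma~\ref{B-Q_gen}).

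The difficulty you flag near $|q|=\sqrt\mu$ is real: your tube parametrization degenerates there, and you need uniformity in an open neighbourhood, not just at the point itself. In the paper this is absorbed into case~\eqref{B-Q_2_g} of Lemma~\ref{B-Q_gen} (handling $|q|>\sqrt\mu/2$ by a four-part splitting of the white region with explicit lower bounds on $|p^2+q^2-\mu|$) together with case~\eqref{B-Q_4_g} (the $Q_T$-analysis for $|q|<\sqrt\mu/2$). Your approach should work once you carry out these details, but expect the execution to be comparably involved.
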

Since $E_{T}(q)\geq 0$, we can write
\begin{equation}\label{E+G+d}
E_{T}\pm G_{T}+\delta=\sqrt{E_{T}+\delta}\left(\BI \pm \frac{1}{\sqrt{E_{T}+\delta}}G_{T}\frac{1}{\sqrt{E_{T}+\delta}}\right)\sqrt{E_{T}+\delta}
\end{equation}
for any $\delta>0$.
It suffices to prove that for all $\delta>0$
\begin{equation}\label{eq:EBEto0}
\lim_{T\to 0} \left\lVert \frac{1}{\sqrt{E_{T}+\delta}}G_{T}\frac{1}{\sqrt{E_{T}+\delta}}\right\rVert=0 \,.
\end{equation}
To prove \eqref{eq:EBEto0}, with the notation introduced in  Lemma~\ref{3d_w_lea4} we have for all $0<\epsilon<\sqrt{\mu}$
\begin{multline}
\left\lVert \frac{1}{\sqrt{E_{T}+\delta}}G_{T}\frac{1}{\sqrt{E_{T}+\delta}}\right\rVert \leq
\left\lVert \BI_{\leq\epsilon} \frac{1}{\sqrt{E_{T}+\delta}}G_{T}\frac{1}{\sqrt{E_{T}+\delta}}\BI_{\leq\epsilon}\right\rVert\\
+\left\lVert \BI_{\leq\epsilon} \frac{1}{\sqrt{E_{T}+\delta}}G_{T}\frac{1}{\sqrt{E_{T}+\delta}}\BI_{>\epsilon}\right\rVert
+\left\lVert \BI_{>\epsilon} \frac{1}{\sqrt{E_{T}+\delta}}G_{T}\frac{1}{\sqrt{E_{T}+\delta}}\right\rVert\,.
\end{multline}
With $E_{T}\geq 0$ and Lemma~\ref{3d_w_lea2} we obtain
\begin{equation}
\lim_{T\to 0}\left\lVert \frac{1}{\sqrt{E_{T}+\delta}}G_{T}\frac{1}{\sqrt{E_{T}+\delta}}\right\rVert \leq \sup_{T>0} \frac{1}{\delta}\left\lVert \BI_{\leq\epsilon} G_{T} \BI_{\leq\epsilon}\right\rVert+\lim_{T\to 0} \frac{2}{(\delta  c_1 \vert \ln(c_2/T)\vert)^{1/2}} \lVert G_{T} \rVert.
\end{equation}
The second term vanishes by Lemma~\ref{3d_w_lea3} and the first term can be made arbitrarily small by Lemma~\ref{3d_w_lea4}.
Hence \eqref{eq:EBEto0} follows.
\end{proof}
\begin{rem}
The variational argument above relies on $A_T^1$ being self-adjoint.
This is why we assume $V\geq 0$ in Theorem \ref{thm3}.
\end{rem}

\subsection{Proof of Lemma~\ref{3d_w_lea3}}
\begin{proof}[Proof of Lemma~\ref{3d_w_lea3}]
We have $\lVert G_T \rVert \leq \lVert G_T^< \rVert+\lVert G_T^> \rVert$, where for $d\in\{2,3\}$
\begin{equation}\label{pf:3d_2_lea3.1}
\langle \psi, G_{T}^< \psi \rangle=\int_{\BR^{2d}} \overline{F_1 V^{1/2}\psi((q_1,\tilde p),(p_1,\tilde q))} B_{T}(p,q)\chi_{|\tilde p|<2\sqrt{\mu}} F_1 V^{1/2}\psi(p,q) \dd p \dd q,
\end{equation}
and for $ G_{T}^>$ change $\chi_{|\tilde p|<2\sqrt{\mu}}$ to $\chi_{|\tilde p|>2\sqrt{\mu}}$.
For $d=1$ set $G_T^{<}=G_T$ and $G_T^>=0$.
We will prove that $G_T^<$ and $G_T^>$ are bounded uniformly in $T$.

To bound $G_T^>$ in $d=2,3$ we use the Schwarz inequality in $p_1,q_1$ to obtain
 \begin{equation}\label{pf:3d_2_lea3.2}
\lVert G_T^>\rVert\leq  \sup_{ \psi \in L^2(\BR^{2d}), \lVert \psi \rVert=1} \int_{\BR^{2d}}  B_{T}\left(p,q\right)\chi_{\vert \tilde p\vert >2 \sqrt{\mu}}\vert F_1 V^{1/2}\psi(p,q) \vert^2 \dd q \dd p
\end{equation}
The right hand side defines a multiplication operator in $q$.
By \eqref{BT_bound} there is a constant $C>0$ independent of $T$ such that $\lVert G_T^>\rVert\leq C \lVert M\rVert$, where $M:=  V^{1/2} \frac{1}{1-\Delta} V^{1/2}$ on $L^2(\BR^d)$.
It follows from the Hardy-Littlewood-Sobolev and the H\"older inequalities that $M$ is a bounded operator \cite{lieb_analysis_2001,henheik_universality_2023,hainzl_bardeencooperschrieffer_2016}.

To bound $G_T^<$ note that for fixed $q$, $\lVert F_1 V^{1/2}\psi (\cdot, q )\rVert_\infty \leq C \lVert V \rVert_1^{1/2} \lVert \psi(\cdot, q ) \rVert_2$ by Lemma~\ref{lp_prop}\eqref{lp_prop.3}.
Therefore, we estimate
\begin{equation}\label{pf:3d_2_lea3.3}
\lVert G_T^< \rVert \leq C^2 \lVert V \rVert_1 \sup_{ \psi \in L^2(\BR^{2d}), \lVert \psi \rVert=1} \int_{\BR^{2d}}  \lVert \psi(\cdot, (p_1,\tilde q) ) \rVert_2 B_{T}(p,q)\chi_{\tilde p^2<2\mu} \lVert \psi(\cdot, q ) \rVert_2\dd p \dd q
\end{equation}
Since the right hand side defines a multiplication operator in $\tilde q$, we obtain
\begin{equation}
\lVert G_T^< \rVert \leq C^2 \lVert V \rVert_1 \sup_{\tilde q \in \BR^{d-1}} \sup_{ \psi \in L^2(\BR), \lVert \psi \rVert=1} \int_{\BR^{d+1}}  \overline{ \psi(p_1)} B_{T}(p,q)\chi_{\tilde p^2<2\mu}  \psi(q_1 ) \dd p \dd q_1,
\end{equation}
where for $d=1$ the supremum over $\tilde q$ is absent.
For $d=1$, the operator with integral kernel $B_{T}(p,q)$ is bounded uniformly in $T$ according to \cite[Lemma 4.2]{hainzl_boundary_nodate}, and thus  the claim follows.
For $d\in\{2,3\}$ we need to prove that the operators with integral kernel $\int_{\BR^{d-1}} B_{T}(p,q) \chi_{\vert \tilde p \vert <2 \sqrt{\mu}} \dd \tilde p$ are bounded uniformly in $\tilde q$ and $T$.
We apply the bound \cite[Lemma 4.6]{hainzl_boundary_nodate}
\begin{equation}\label{pf:3d_2_lea3.4}
B_{T}(p,q)\leq \frac{2}{\vert (p+q)^2-\mu \vert+ \vert (p-q)^2-\mu \vert}
\end{equation}
Then, we scale out $\mu$ and estimate the expression by pulling the supremum over $\psi$ into the $\tilde p$-integral
\begin{multline}\label{pf:3d_2_lea3.5}
\sup_{\tilde q\in \BR^{d-1}} \sup_{\psi \in L^2(\BR), \lVert \psi \rVert=1} \int_{\BR^{d+1}} \frac{2\chi_{\vert \tilde p \vert<2 \sqrt{\mu}}\overline{\psi(p_1)}\psi(q_1)}{\vert (p+q)^2-\mu \vert+ \vert (p-q)^2 -\mu \vert} \ \dd p\dd q_1 \\
=\mu^{d/2-1}\sup_{\tilde q\in \BR^{d-1}} \sup_{\psi \in L^2(\BR), \lVert \psi \rVert=1} \int_{\BR^{d+1}} \frac{2\chi_{\vert \tilde p \vert<2 }\overline{\psi(p_1)}\psi(q_1)}{\vert (p+q)^2-1\vert+ \vert (p-q)^2 -1 \vert} \ \dd  p \dd q_1 \\
\leq \mu^{d/2-1}\sup_{\tilde q\in \BR^{d-1}}  \int_{\BR^{d-1}} \chi_{\vert \tilde p \vert<2 } \left[ \sup_{\psi \in L^2(\BR), \lVert \psi \rVert=1}  \int_{\BR^2}\frac{2\overline{\psi(p_1)}\psi(q_1)}{\vert (p+q)^2-1\vert+ \vert (p-q)^2 -1 \vert} \ \dd p_1 \dd q_1\right] \dd \tilde p
\end{multline}

Let $\mu_1=1-(\tilde p+\tilde q)^2$ and $\mu_2=1-(\tilde p-\tilde q)^2$.
For fixed $\mu_1,\mu_2$ we need to bound the operator with integral kernel
\begin{equation}\label{Dmu1mu2}
D_{\mu_1,\mu_2}(p_1,q_1)=\frac{2}{\vert (p_1+q_1)^2-\mu_1\vert+ \vert (p_1-q_1)^2 -\mu_2\vert}.
\end{equation}

\begin{lemma}\label{Dmu1mu2_bound}
Let $\mu_1,\mu_2\leq 1$ with $\min\{\mu_1,\mu_2\}\neq 0$.
The operator $D_{\mu_1,\mu_2}$ on $L^2(\BR)$ with integral kernel given by \eqref{Dmu1mu2} satisfies
\begin{equation}
\lVert D_{\mu_1,\mu_2} \rVert \leq C(1+d(\mu_1,\mu_2)\vert \min\{\mu_1,\mu_2\} \vert^{-1/2})
\end{equation}
for some finite $C$ independent of $\mu_1,\mu_2$, where
\begin{equation}
d(\mu_1,\mu_2)=\left\{\begin{matrix} 1+\ln\left(1+\frac{\max\{\mu_1,\mu_2\} }{|\min\{\mu_1,\mu_2\}|}\right)  & {\rm if}\quad \min\{\mu_1,\mu_2\}<0\leq \max\{\mu_1,\mu_2\}, \\
1  & {\rm otherwise.} \end{matrix}\right.
\end{equation}
\end{lemma}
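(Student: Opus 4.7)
I would prove Lemma~\ref{Dmu1mu2_bound} by the Schur test, splitting into two cases according to the signs of $\mu_1, \mu_2$.

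First, if at least one of $\mu_1,\mu_2$ is $\leq 0$, the bound is easy. Assume WLOG $\mu_1 \leq 0$. Then $|(p_1+q_1)^2 - \mu_1| = (p_1+q_1)^2 + |\mu_1|$, giving
\[
D_{\mu_1,\mu_2}(p_1,q_1)\leq \frac{2}{(p_1+q_1)^2 + |\mu_1|},
\]
a reflected convolution kernel on $L^2(\BR)$ with $L^1$-norm (in $s = p_1+q_1$) equal to $2\pi/\sqrt{|\mu_1|}$; by Young's inequality $\|D_{\mu_1,\mu_2}\|\leq 2\pi/\sqrt{|\mu_1|}$. Taking the better of the two analogous bounds when both $\mu_i\leq 0$ gives $\|D_{\mu_1,\mu_2}\|\leq 2\pi/\sqrt{|\min(\mu_1,\mu_2)|}$, which is already stronger than the asserted bound (no logarithmic factor is needed here).

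The interesting case is both $\mu_i>0$. Here a rescaling $p_1 = \sqrt\mu\,P$, $q_1 = \sqrt\mu\,Q$ with $\mu := \min(\mu_1,\mu_2)$ and $c := \max(\mu_1,\mu_2)/\mu \geq 1$ (using the $\mu_1\leftrightarrow\mu_2$ symmetry provided by the unitary $Q\mapsto -Q$) yields
\[
\|D_{\mu_1,\mu_2}\|_{L^2(\BR)\to L^2(\BR)} = \tfrac{2}{\sqrt\mu}\,\|\widetilde D_c\|,
\quad
\widetilde D_c(P,Q):=\frac{1}{|(P+Q)^2-1|+|(P-Q)^2-c|},
\]
so it suffices to prove $\sup_{c\geq 1}\|\widetilde D_c\|<\infty$. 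The singular set of $\widetilde D_c$ consists of the four points where the curves $(P+Q)^2=1$ and $(P-Q)^2=c$ intersect. Near such a point $(P_0,Q_0)$, writing $P = P_0+s$, $Q = Q_0 + t$ and factoring $(P+Q)^2 - 1 = (s+t)(s+t\pm 2)$ and $(P-Q)^2 - c = (s-t)(s-t\pm 2\sqrt c)$, one obtains on the unit neighborhood $\max(|s|,|t|)\leq 1/2$ the bound
\[
\widetilde D_c(P,Q)\leq \frac{1}{|s+t| + \sqrt c\, |s-t|}\leq \frac{1}{|s+t|+|s-t|}=\frac{1}{2\max(|s|,|t|)},
\]
where the second inequality uses $c\geq 1$. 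The operator on $L^2(\BR)$ with kernel $(\max(|s|,|t|))^{-1}$ is bounded with norm $\leq 8$ by the Schur test with weight $h(s)=|s|^{-1/2}$, giving a uniform contribution near each singular point.

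The main obstacle is the uniform-in-$c$ remainder estimate: the part of $\widetilde D_c$ supported outside the four unit neighborhoods must give a bounded contribution. I would subdivide this remainder according to whether $|P\pm Q|$ is large or bounded. In the ``far'' regime where $|P+Q|\gg 1$ or $|P-Q|\gg \sqrt c$, one of the two denominator factors is of order $(P\pm Q)^2$, yielding a convolution-type operator controlled by Young's inequality uniformly in $c$. In the complementary bounded regime (outside unit balls around singular points), $\widetilde D_c$ is pointwise bounded by a $c$-independent constant along the ``generic'' part, while near either singular curve but away from the singular points the orthogonal factor is bounded below by $\sim\sqrt c$, producing a pointwise bound $\lesssim 1/\sqrt c$ that compensates for the $\sqrt c$-scaling of the strip's length. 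Combining these pieces via the Schur test with a suitable weight yields $\sup_{c\geq 1}\|\widetilde D_c\|<\infty$, and undoing the rescaling gives $\|D_{\mu_1,\mu_2}\|\leq C/\sqrt{|\min(\mu_1,\mu_2)|}$, as required.
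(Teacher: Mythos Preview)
Your approach is correct in outline and differs usefully from the paper's, but the remainder estimate for $\mu_1,\mu_2>0$ needs a different justification than the one you sketch.

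For $\min(\mu_1,\mu_2)\leq 0$ your Young-inequality bound $\|D_{\mu_1,\mu_2}\|\leq 2\pi|\min(\mu_1,\mu_2)|^{-1/2}$ is correct and in fact sharper than the stated lemma (no logarithm). The paper instead runs a Schur test on $\{|p_1-q_1|^2<\mu_2\}$ with weight $|p_1|^{1/2}$ and a Hilbert--Schmidt estimate on its complement, which is where the extra factor $\ln(1+\mu_2/|\mu_1|)$ arises; your simpler argument already suffices for the application in Lemma~\ref{3d_w_lea3}.

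For $\mu_1,\mu_2>0$ your rescaling and the local Schur test near each of the four singular points are correct. The remainder heuristic, however, is off: in a Schur test one fixes $P$ and integrates in $Q$, and the near region $\{|P+Q|\leq 2,\,|P-Q|\leq 2\sqrt c\}$ has $Q$-length at most $4$ for every $P$, so there is no ``$\sqrt c$ strip length'' to compensate; moreover, near the curves $P-Q=\pm\sqrt c$ the orthogonal factor $|(P+Q)^2-1|$ is only bounded below by a constant, not by $\sqrt c$. What actually makes the remainder work (with constant Schur weight) is that the diamond exclusion $|v\mp 1|+|u\mp\sqrt c|>1$ (writing $u=P-Q$, $v=P+Q$) forces $|v^2-1|\geq 1-|u-\sqrt c|$ near $u=\sqrt c$ and $|u^2-c|\geq(1-|v-1|)\sqrt c$ near $v=1$; integrating these bounds over the respective unit strips gives contributions $2\ln 2$ and $2(\ln 2)/\sqrt c$, and away from all four curves $|v^2-1|\geq\tfrac34$ on a $v$-set of length at most $2$. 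This yields $\sup_{c\geq 1}\sup_P\int\widetilde D_c(P,Q)\chi_{\rm remainder}\,dQ<\infty$, completing your argument.

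By contrast, the paper avoids any local/remainder split: it uses the single global weight $h(p_1)=\min\{||p_1|-s_1|^{1/2},\,||p_1|-s_2|^{1/2}\}$ vanishing at all four singular $p_1$-values $\pm s_1,\pm s_2$, partitions the square $|p_1|,|q_1|<2$ into ten regions according to the signs of $(p_1\pm q_1)^2-\mu_{1,2}$ and proximity to the singularities, and estimates each Schur integral by hand to obtain an explicit multiple of $\mu_1^{-1/2}$. Your decomposition is more modular and the rescaling makes the $c$-uniformity transparent; the paper's approach is a systematic but laborious case analysis with a single weight throughout.
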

This is a generalization of \cite[Lemma 4.2]{hainzl_boundary_nodate}.
The proof of Lemma~\ref{Dmu1mu2_bound} is based on the Schur test and can be found in Section~\ref{pf:Dmu1mu2_bound}.
Since $\max\{\mu_1,\mu_2\} \leq 1$, it follows from Lemma~\ref{Dmu1mu2_bound} that for any $\alpha>1/2$ one has $\lVert D_{\mu_1,\mu_2} \rVert \leq C\left( 1+ \vert \min\{\mu_1,\mu_2\} \vert^{-\alpha}  \right)$ for a constant $C$ independent of $\mu_1,\mu_2$.
The following Lemma concludes the proof of $\sup_{T>0} \lVert G_T^<\rVert<\infty$.
\begin{lemma}\label{mu1mu2_bound}
Let $d\in\{2,3\}$ and $0\leq \alpha<1$. Let $\mu_1=1-(\tilde p+\tilde q)^2$ and $\mu_2=1-(\tilde p-\tilde q)^2$. Then
\begin{equation}\label{mu1mu2_bound_eq}
\sup_{\tilde q \in \BR^{d-1}} \int_{\BR^{d-1}} \frac{ \chi_{\vert \tilde p \vert<2 }}{\vert \min\{\mu_1,\mu_2\}\vert^{\alpha}}\dd \tilde p<\infty.
\end{equation}
\end{lemma}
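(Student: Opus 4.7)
The first step will be to reduce the problem to controlling a single generic singular integral. Checking all four sign configurations of $(\mu_1,\mu_2)$, one sees that a.e.\ the elementary inequality
\begin{equation}
|\min\{\mu_1,\mu_2\}|^{-\alpha} \leq |\mu_1|^{-\alpha} + |\mu_2|^{-\alpha}
\end{equation}
holds, since $|\min\{\mu_1,\mu_2\}|$ is always bounded below either by $|\mu_1|$ or by $|\mu_2|$. Hence it suffices to bound uniformly in $\tilde q\in\BR^{d-1}$ the two integrals
\begin{equation}
I_{\pm}(\tilde q) := \int_{|\tilde p|<2} \frac{1}{\bigl|\,1-|\tilde p \pm \tilde q|^{2}\,\bigr|^{\alpha}} \, d\tilde p.
\end{equation}
After the translation $\tilde u = \tilde p \pm \tilde q$, both become
\begin{equation}
I_{\pm}(\tilde q) = \int_{|\tilde u \mp \tilde q|<2} \frac{1}{\bigl|\,1-|\tilde u|^{2}\,\bigr|^{\alpha}}\, d\tilde u,
\end{equation}
so $I_{+}(\tilde q)=I_{-}(-\tilde q)$ and only $I_{+}$ needs to be controlled.

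Next I would split on the size of $|\tilde q|$. If $|\tilde q| \geq 4$, then on the domain $\{|\tilde u -\tilde q|<2\}$ one has $|\tilde u|\geq |\tilde q|-2\geq 2$, hence $|1-|\tilde u|^{2}|=|\tilde u|^{2}-1\geq 3$; this gives $I_{+}(\tilde q)\leq 3^{-\alpha}|B_{2}(0)|$. If on the other hand $|\tilde q| < 4$, then $\{|\tilde u-\tilde q|<2\}\subset B_{6}(0)$, so
\begin{equation}
I_{+}(\tilde q) \leq \int_{B_{6}(0)} \bigl|1-|\tilde u|^{2}\bigr|^{-\alpha}\, d\tilde u,
\end{equation}
a quantity independent of $\tilde q$.

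It then remains to check that this last integral is finite for both $d=2$ and $d=3$. Passing to radial coordinates $\tilde u = r\omega$ with $r\in[0,6]$ and $\omega\in\BS^{d-2}$, the angular factor is trivially bounded and the radial integral becomes $\int_{0}^{6} r^{d-2}\,|1-r|^{-\alpha}(1+r)^{-\alpha}\,dr$, whose only singularity is at $r=1$ and is integrable since $\alpha<1$. This step is straightforward; I do not expect any serious obstacle, as the singular set is lower-dimensional and well separated from infinity by the compact integration domain.
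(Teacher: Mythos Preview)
Your proof is correct and takes a genuinely simpler route than the paper. The paper's argument first uses rotational symmetry to reduce to $\tilde q=(q_2,0,\dots)$ with $q_2\geq 0$ and $p_2\geq 0$ (so that $\mu_1\leq\mu_2$), then splits into the two regimes $\mu_1<0$ and $\mu_1>0$ and handles each by explicit one- or two-dimensional integrals with carefully chosen substitutions (e.g.\ $x=p_2+q_2-\sqrt{1-p_3^2}$), treating $d=2$ and $d=3$ separately. Your approach bypasses all of this: the pointwise bound $|\min\{\mu_1,\mu_2\}|^{-\alpha}\leq |\mu_1|^{-\alpha}+|\mu_2|^{-\alpha}$ decouples the two shifted spheres, after which a translation reduces everything to the single $\tilde q$-independent integral $\int_{B_6(0)}|1-|\tilde u|^2|^{-\alpha}\,d\tilde u$, whose finiteness is immediate from $\alpha<1$. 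The paper's computation is more hands-on and yields slightly more explicit control of the constants, but for the stated conclusion your argument is both shorter and dimension-uniform.
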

Lemma~\ref{mu1mu2_bound} follows from elementary computations  carried out in Section~\ref{pf:mu1mu2_bound}.
\end{proof}

\subsection{Proof of Lemma~\ref{3d_w_lea4}}
\begin{proof}[Proof of Lemma~\ref{3d_w_lea4}]
With the notation introduced in the proof of Lemma~\ref{3d_w_lea3} we have $\lVert \BI_{\leq \eps} G_T  \BI_{\leq \eps} \rVert \leq \lVert \BI_{\leq \eps} G_T^<  \BI_{\leq \eps} \rVert +\lVert \BI_{\leq \eps} G_T^>  \BI_{\leq \eps} \rVert $.

For $d=2,3$ we have analogously to \eqref{pf:3d_2_lea3.2}
 \begin{equation}\label{pf:3d_2_lea4.1}
\lVert \BI_{\leq \eps} G_T^>  \BI_{\leq \eps} \rVert\leq  \sup_{ \psi \in L^2(\BR^{2d}), \lVert \psi \rVert=1} \int_{\BR^{2d}}   \chi_{|q|<\eps}\chi_{|(p_1,\tilde q)|<\eps}B_{T}\left(p,q\right)\chi_{\vert \tilde p\vert >2 \sqrt{\mu}}\vert F_1 V^{1/2}\psi(p,q) \vert^2\dd q \dd p  .
\end{equation}
Let $1<t<\infty$ such that $V\in L^t(\BR^d)$.
According to Lemma~\ref{lp_prop}\eqref{lp_prop.2}, for fixed $q$ we have
\begin{equation}
\lVert F_1 V^{1/2}\psi(\cdot, q) \rVert_{L^s(\BR^d)}\leq C \lVert V\rVert_t^{1/2} \lVert \psi(\cdot, q) \rVert_{L^2(\BR^d)},
\end{equation}
where $2\leq s=2t/(t-1)<\infty$.
By \eqref{BT_bound} and H\"older's inequality in $p$, there is a constant $C$ independent of $T$ such that
\begin{multline}
\lVert \BI_{\leq \eps} G_T^>  \BI_{\leq \eps} \rVert\leq  C\sup_{ \psi \in L^2(\BR^{2d}), \lVert \psi \rVert=1} \int_{\BR^{2d}}   \frac{\chi_{|p_1|<\eps}}{1+\tilde p^2}\vert F_1 V^{1/2}\psi(p,q) \vert^2 \dd p \dd q\\
\leq  C \lVert V\rVert_t \left(\int_{\BR^d} \frac{\chi_{|p_1|<\eps}}{(1+\tilde p^2)^t} \dd p\right)^{1/t}.
\end{multline}
In particular, the remaining integral is of order $O(\eps^{1/t})$ and vanishes as $\eps\to0$.

To estimate $\lVert \BI_{\leq \eps} G_T^<  \BI_{\leq \eps} \rVert$ we proceed as in the derivation of the bound on $\lVert G_T^<\rVert$ from \eqref{pf:3d_2_lea3.3} until the first line of \eqref{pf:3d_2_lea3.5} and obtain
\begin{equation}
\lVert \BI_{\leq \eps} G_T^<  \BI_{\leq \eps} \rVert \leq
C \lVert V \rVert_1  \sup_{|\tilde q|<\eps} \sup_{\psi \in L^2(\BR), \lVert \psi \rVert=1} \int_{\BR^{d+1}} \frac{2\chi_{|p_1|, |q_1|<\eps}\chi_{\vert \tilde p \vert<2 \sqrt{\mu}}\overline{\psi(p_1)}\psi(q_1)}{\vert (p+q)^2-\mu \vert+ \vert (p-q)^2 -\mu \vert} \ \dd p\dd q_1
\end{equation}
Hence, we need that the norm of the operator on $L^2(\BR)$ with integral kernel
\begin{equation}
\int_{\BR^{d-1}} \frac{2\chi_{|p_1|, |q_1|<\eps}\chi_{\vert \tilde p \vert<2 \sqrt{\mu}}}{\vert (p+q)^2-\mu \vert+ \vert (p-q)^2 -\mu \vert} \dd \tilde p
\end{equation}
vanishes uniformly in $\tilde q$ as $\eps\to0$.
In $d=1$, the Hilbert-Schmidt norm clearly vanishes as $\eps\to0$.
Similarly for $d=2,3$ the following Lemma implies that the Hilbert-Schmidt norm vanishes uniformly in $\tilde q$ as $\eps\to0$.
\begin{lemma}\label{lea:IGIto0}
Let $d\in\{2,3\}$. Then
\begin{equation}\label{pf_2d_w_lea4.2}
\lim_{\eps\to0} \sup_{|\tilde q|<\eps}\int_{\BR^2} \chi_{|p_1|, |q_1|<\eps} \left[\int_{\BR^{d-1}} \frac{2\chi_{\tilde p^2<2}}{\vert (p+q)^2-1 \vert+ \vert (p-q)^2-1 \vert} \dd \tilde p \right]^2 \dd p_1 \dd q_1=0
\end{equation}
\end{lemma}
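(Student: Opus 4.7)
The plan is to establish a pointwise upper bound on the inner integral
$F(p_1,q_1,\tilde q):=\int_{\BR^{d-1}}\frac{2\,\chi_{\tilde p^2<2}\,\dd\tilde p}{|(p+q)^2-1|+|(p-q)^2-1|}$
of the form $F\le C+C\log(1/|P(p_1,q_1,\tilde q)|)$ for a non-trivially vanishing polynomial $P$, and then apply Hölder's inequality to exploit the shrinking of the outer integration domain.

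I would begin by using the identity $|a|+|b|=\max(|a+b|,|a-b|)$, applied to $a=(p+q)^2-1$ and $b=(p-q)^2-1$, to rewrite the denominator as $2\max(|X|,|Y|)$, where
\begin{equation*}
X=\tilde p^2-(1-A),\quad Y=2(p_1q_1+\tilde p\cdot\tilde q),\quad A=p_1^2+q_1^2+\tilde q^2.
\end{equation*}
For the regime $|p_1|,|q_1|,|\tilde q|<\eps$ small, $\rho:=\sqrt{1-A}\in(1/2,1]$, and $X$ is quadratic in $\tilde p$ with zeros at $\tilde p=\pm\rho$, while $Y$ is affine in $\tilde p$. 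The inner integral can only be large when a zero of $Y$ falls close to $\pm\rho$, i.e.~when $p_1q_1\pm \rho\,\tilde q\cdot\hat e\approx 0$ for some unit direction $\hat e$. Splitting the $\tilde p$-integration by whether $|X|$ or $|Y|$ dominates and analyzing the two neighborhoods of $\tilde p=\pm\rho$ separately, I would obtain for $d=2$ the bound
\begin{equation*}
F\le C+C\log\frac{1}{|p_1^2q_1^2-\tilde q^2(1-A)|},
\end{equation*}
i.e.~$F\le C+C\log(1/|P|)$ with $P=p_1^2q_1^2-\tilde q^2(1-A)$, whose zero set captures the coincidence of the zeros of $X$ and $Y$. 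For $d=3$ I would first rotate so that $\tilde q=(|\tilde q|,0)\in\BR^2$, pass to polar coordinates for $\tilde p$, and combine the standard angular estimate $\int_0^{2\pi}\dd\theta/|a+b\cos\theta|\lesssim 1/\sqrt{|a^2-b^2|}$ (regularized by the cutoff from $|X|$ near $|a|=|b|$) with the radial integration to produce an analogous logarithmic bound in terms of a polynomial of $(p_1,q_1,|\tilde q|)$.

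With such a bound in hand, I would observe that for a polynomial $P$ of bounded degree that is not identically zero, $\log^{2s}(1/|P|)$ is locally integrable, and its $L^1([-1,1]^2)$-norm is bounded uniformly in $|\tilde q|\le 1$ by a Łojasiewicz-type estimate. Hölder's inequality with conjugate exponents $(s,s')$ then yields, for any $s>1$,
\begin{equation*}
\int_{|p_1|,|q_1|<\eps}F^2\,\dd p_1\dd q_1\le (4\eps^2)^{1/s'}\,\|F^2\|_{L^s([-1,1]^2)}\le C\,\eps^{2/s'},
\end{equation*}
which tends to $0$ as $\eps\to 0$, uniformly in $|\tilde q|<\eps$, as required.

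The main obstacle is the pointwise log estimate for $d=3$: the interplay between the radial singularity of $1/|X|$ at $r=\rho$ and the angular integral's $1/\sqrt{|a^2-b^2|}$-singularity (which occurs exactly when the zero set of $Y$ touches the sphere $|\tilde p|=\rho$) must be handled so that the combined divergence is controlled by a single polynomial expression in $(p_1,q_1,\tilde q)$; in $d=2$ the integral is one-dimensional and can be estimated essentially by hand.
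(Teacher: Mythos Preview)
Your approach is correct in outline for $d=2$ but differs substantially from the paper's. The paper proceeds by brute force: after reducing by symmetry to $p_1,q_1,p_2\ge 0$, it partitions the $q_2$-line into nine regions according to the signs of $(p\pm q)^2-1$ and the relative size of $p_2$ versus $(\sqrt{\mu_2}-\sqrt{\mu_1})/2$ (with $\mu_j=1-(p_1\pm q_1)^2$), evaluates the $q_2$-integral explicitly in each region (several cases produce $\artanh$ or $\log$ expressions), and then bounds the resulting $(p_1,q_1)$-integrals one by one, typically via the substitution $z=p_1+q_1$, $r=p_1-q_1$ followed by a further H\"older step to handle logarithmic singularities. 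Your route --- the identity $|a|+|b|=\max(|a+b|,|a-b|)$, a global pointwise bound $F\le C+C\log(1/|P|)$ with $P=p_1^2q_1^2-\tilde q^{\,2}(1-A)$, then a single H\"older inequality over the shrinking box --- is more conceptual and avoids the nine-case split. What the paper's method buys is complete elementarity (pure calculus, no appeal to integrability of $\log^N(1/|P|)$); what yours buys is brevity, at the cost of one non-trivial step you gloss over: the uniform bound $\sup_{|\tilde q|<\eps}\|\log(1/|P(\cdot,\cdot,\tilde q)|)\|_{L^{2s}([-1,1]^2)}<\infty$. A generic ``\L ojasiewicz-type estimate'' for the three-variable polynomial only yields integrability of slices \emph{on average}, not uniformly; here, however, the specific structure $P\approx (p_1q_1)^2-\rho^2\tilde q^{\,2}=(p_1q_1-\rho\tilde q)(p_1q_1+\rho\tilde q)$ allows a direct level-set argument giving $|\{|P|<t\}|\lesssim t^{1/2}\log(1/t)$ uniformly, which suffices. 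For $d=3$, the paper also does not give details, asserting only that the argument is analogous; your honest identification of the radial/angular interplay as the main obstacle is well placed, and neither proof is complete in that case.
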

The proof can be found in Section~\ref{sec:pf_lea:IGIto0}.
We give the proof for $d=2$ only; the one for $d=3$ works analogously and is left to the reader.
\end{proof}

\subsection{Proof of Lemma~\ref{3d_w_lea2}}
\begin{proof}[Proof of Lemma~\ref{3d_w_lea2}]
Since $\af$ diverges like $e_\mu \mu^{d/2-1}\ln(\mu/T) $ as $T\to0$, the claim follows if we prove that $\sup_{T>0} \sup_{|q|>\eps}\lVert V^{1/2} B_{T}(\cdot, q) V^{1/2}\rVert<\infty $.
For $d=1$ we have
\begin{multline}
\lVert V^{1/2} B_{T}(\cdot, q) V^{1/2}\rVert^2 \leq  \lVert V^{1/2} B_{T}(\cdot, q) V^{1/2} \rVert_{\rm{HS}}^2\\
= \int_{\BR^2} V(r) V(r') \left(\int_\BR B_{T}(p,q) \frac{e^{i p(r- r')} }{2\pi}\dd p\right)^2 \dd r \dd r' \leq \frac{1}{(2\pi)^2}\lVert V\rVert_1^2  \left(\int_\BR B_{T}(p,q)\dd p\right)^2
\end{multline}
It was shown in the proof of \cite[Lemma 4.4]{hainzl_boundary_nodate} that $\sup_{T>0,|q|>\epsilon} \int_\BR B_{T}(p,q)\dd p<\infty$ .

For $d\in\{2,3\}$, the claim follows from the following Lemma which is proved below.
\begin{lemma}\label{B_q_aspt}
Let $d\in\{2,3\}$ and $\mu>0$.
Let $V$ satisfy Assumption~\ref{aspt_V_halfspace} and $V\geq 0$.
Recall that $O_\mu=V^{1/2} \FT^\dagger \FT  V^{1/2}$ (defined above \eqref{mmu}).
Let $f(x)=\chi_{(0,1/2)}(x)\ln(1/x)$.
There is a constant $C(d,\mu,V)$ such that for all $T>0$, $q\in \BR^d$, and $\psi \in L^2(\BR^d)$ with $\lVert \psi \rVert_2=1$
\begin{equation}\label{supdeltaVBV}
\langle \psi, V^{1/2} B_{T}(\cdot, q) V^{1/2} \psi \rangle
\leq \mu^{d/2-1} \langle \psi, O_\mu \psi \rangle f(\max\{T/\mu, \vert q \vert/\sqrt{\mu}\}) + C(d,\mu,V).
\end{equation}
\end{lemma}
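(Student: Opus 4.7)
The plan is to decompose $V^{1/2}B_T(\cdot,q)V^{1/2}$ into a singular part proportional to $O_\mu$ (with coefficient capturing the logarithmic divergence) plus a uniformly bounded remainder, generalizing the $q=0$ analysis of Lemma~\ref{lea:A0_asy_d}. By the scaling substitution $p\mapsto\sqrt\mu\,p$, $q\mapsto\sqrt\mu\,q$, $T\mapsto\mu\,T$, $r\mapsto r/\sqrt\mu$, we reduce to the case $\mu=1$, in which we must show $\langle\psi,V^{1/2}B_T(\cdot,q)V^{1/2}\psi\rangle\le f(\max\{T,|q|\})\langle\psi,O_1\psi\rangle+C$ with $C$ depending only on $d$ and the rescaled $V$.

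Next, split $B_T(p,q)=B_T(p,q)\chi_{p^2>2}+B_T(p,q)\chi_{p^2\le 2}$. For the high-momentum part, \eqref{BT_bound} gives $B_T(p,q)\chi_{p^2>2}\le C/(1+p^2)$, and by Assumption~\ref{aspt_V_halfspace}\eqref{aspt.1} together with the Hardy--Littlewood--Sobolev and H\"older inequalities (as in the proofs of Lemma~\ref{3d_w_lea3} and Lemma~\ref{phi_infinity_bd}), the operator $V^{1/2}(1+p^2)^{-1}V^{1/2}$ is bounded on $L^2$ uniformly in $T$ and $q$; this contributes to $C(d,\mu,V)$.

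For the low-momentum part, write (with $\phi=V^{1/2}\psi$ and $p=t\omega$ in spherical coordinates)
\[
\int_{p^2\le 2}B_T(p,q)|\widehat\phi(p)|^2\,dp=\int_{\BS^{d-1}}|\widehat\phi(\omega)|^2\,M_\omega(T,q)\,d\omega+E(T,q,\psi),
\]
where $M_\omega(T,q):=\int_0^{\sqrt 2}B_T(t\omega,q)t^{d-1}\,dt$. Using \eqref{pf:3d_2_lea3.4} together with the uniform bound $B_T\le 1/(2T)$, a case analysis near the Fermi surface $t=1$ shows $M_\omega(T,q)\le f(\max\{T,|q|\})+C$ uniformly in $\omega$: the logarithmic integral from $t\sim 1$ is truncated at scale $T$ (via the $\tanh$ saturation) and at scale $|q|$ (via the displacement of the zeros of $(t\omega\pm q)^2-1$), producing exactly $f(\max\{T,|q|\})$ up to a bounded constant.

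The main technical obstacle is to bound $|E(T,q,\psi)|\le C\lVert\psi\rVert_2^2$ uniformly in $T$ and $q$. The natural tool is the H\"older-$1/2$ continuity of $\widehat\phi$ following from Assumption~\ref{aspt_V_halfspace}\eqref{aspt.3}: by Cauchy--Schwarz, $|r|V\in L^1$ gives $|r|^{1/2}V^{1/2}\in L^2$, whence $|\widehat\phi(p)-\widehat\phi(q)|\le C\lVert\psi\rVert_2|p-q|^{1/2}$. Combined with $\lVert\widehat\phi\rVert_\infty\le C\lVert V\rVert_1^{1/2}\lVert\psi\rVert_2$, this yields $\bigl||\widehat\phi(t\omega)|^2-|\widehat\phi(\omega)|^2\bigr|\le C\lVert\psi\rVert_2^2\min\{1,|t-1|^{1/2}\}$, and the extra factor $|t-1|^{1/2}$ partially tames the singularity of $B_T$ near $t=1$. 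Completing the uniform bound likely requires either a refined pointwise analysis combining this estimate with Lemma~\ref{Bq-B0-bound} to exploit the cancellation $B_T(p,q)-B_T(p,0)\to 0$ at small $q$, or a Hilbert--Schmidt bound on the full remainder $V^{1/2}(B_T(\cdot,q)\chi_{p^2\le 2}-M_\omega(T,q)\FT^\dagger\FT)V^{1/2}$ in the spirit of Lemma~\ref{lea:A0_asy_d}. The proof then concludes via $\int_{\BS^{d-1}}|\widehat\phi(\omega)|^2\,d\omega=\langle\psi,O_1\psi\rangle$ and unscaling.
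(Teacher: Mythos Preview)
Your decomposition is close in spirit to the paper's, but the central claim that $M_\omega(T,q)\le f(\max\{T,|q|\})+C$ \emph{uniformly in $\omega$} is false. Take $\omega\perp q$ (with $\mu=1$): then $(t\omega\pm q)^2=t^2+|q|^2$, so $B_T(t\omega,q)=\tanh\!\bigl((t^2+|q|^2-1)/(2T)\bigr)\big/(t^2+|q|^2-1)$ and the two displaced Fermi surfaces collapse to the single point $t=\sqrt{1-|q|^2}$. The radial integral is then essentially $m_{1-|q|^2}(T)\sim\ln(1/T)$, with \emph{no} truncation at scale $|q|$. More generally, if $e_1:=\omega\cdot q/|q|$, the separation $r_+-r_-$ of the two radial zeros is of order $|e_1|\,|q|$, and the correct bound on the radial integral is $f(\max\{T,|q|\})+\ln(1/|e_1|)+C$, which blows up as $\omega$ becomes perpendicular to $q$.

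The paper repairs this in two steps. First it excises the ``lens'' region $\{((p+q)^2-1)((p-q)^2-1)<0\}$ between the two Fermi spheres, where $B_T$ is shown to be exponentially small (this is part of Lemma~\ref{B-Q_gen}); your outline never isolates this region. On the complement it defines $Q_T(q)$ by replacing $|\widehat\phi(p)|^2$ with $|\widehat\phi(\sqrt\mu\,p/|p|)|^2$ exactly as you propose, proves the non-uniform radial estimate above, and then absorbs the extra $\ln(1/|e_1|)$ using $|\widehat\phi(\omega)|^2\le(2\pi)^{-d}\lVert V^{1/2}\psi\rVert_1^2\le C\lVert V\rVert_1$ together with the angular integrability $\int_{\BS^{d-1}}\ln(1/|\cos\varphi|)\,d\omega<\infty$. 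This contributes to the constant $C(d,\mu,V)$, not to the coefficient of $\langle\psi,O_\mu\psi\rangle$. Your H\"older-$1/2$ idea for the remainder $E$ is plausible but, as you concede, incomplete; the paper's Lemma~\ref{B-Q_gen} carries out this comparison via five separate regime estimates (large $T$, large $|p|$, the lens, large $|q|$, small $|q|$), and in particular the lens-region bound is a genuinely missing ingredient in your sketch.
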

This concludes the proof.
\end{proof}

\begin{proof}[Proof of Lemma~\ref{B_q_aspt}]
Note that if we set $q=0$, and optimize over $\psi$, the left hand side would have the asymptotics $a_{T,\mu}^0\sim e_\mu\mu^{d/2-1}  \ln(1/T) $ as $T \to 0$.
Intuitively, keeping $q$ away from $0$ on a scale larger than $T$ will slow down the divergence.
In the case $q=0$, divergence comes from the singularity on the set $\vert p \vert =\sqrt{\mu}$.
For $\vert q\vert >0$, there will be two relevant sets, $(p+q)^2=\mu$ and $(p-q)^2=\mu$.
These sets are circles or spheres in 2d and 3d, respectively.
The function $B_{T}$ is very small on the region which lies inside exactly one of the disks or balls (see the shaded area in Figure~\ref{B_q_aspt_domains}).
The part lying inside or outside both disks (the white area in Figure~\ref{B_q_aspt_domains}) will be relevant for the asymptotics.
Define the family of operators $Q_{T}(q): L^1(\BR^d)\to L^\infty(\BR^d)$ for $q\in \BR^d$ through
\begin{equation}
\langle \psi, Q_{T}(q) \psi \rangle
=\chi_{\max\left\{\frac{T}{\mu},\frac{\vert q \vert}{\sqrt{\mu}}\right\} < \frac{1}{2}}\int_{\BR^d}\left\vert\widehat \psi \left(\sqrt{\mu}p/\vert p \vert\right) \right\vert^2 B_{T}(p,q)\chi_{((p+q)^2-\mu)((p-q)^2-\mu)>0} \chi_{p^2 < 3\mu} \dd p.
\end{equation}
We claim that $Q_{T}$ captures the divergence of $B_{T}$.
\begin{lemma}\label{B-Q_gen}
Let $d\in\{2,3\}$ and $\mu>0$.
Let $V$ satisfy Assumption~\ref{aspt_V_halfspace}.
Then
\begin{equation}\label{B-Q_eq}
\sup_{T>0} \sup_{q \in \BR^d} \lVert V^{1/2} B_{T}(\cdot, q) |V|^{1/2}- V^{1/2} Q_{T}(q)  |V|^{1/2}\rVert <\infty.
\end{equation}
\end{lemma}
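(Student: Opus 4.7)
Setting $a:=(p+q)^2-\mu$ and $b:=(p-q)^2-\mu$ (so that $p^2+q^2-\mu=(a+b)/2$), the plan is to decompose the quadratic form of $V^{1/2}B_T(\cdot,q)V^{1/2}-V^{1/2}Q_T(q)V^{1/2}$ acting on $\psi\in L^2(\BR^d)$, writing $\widehat\phi:=\widehat{V^{1/2}\psi}$, according to the cutoff $\nu:=\max\{T/\mu,|q|/\sqrt{\mu}\}$ and to bound each piece uniformly in $T>0$ and $q\in\BR^d$.

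When $\nu\geq 1/2$, the definition of $Q_T$ gives $Q_T=0$, so it remains to establish $\sup\lVert V^{1/2}B_T(\cdot,q)V^{1/2}\rVert<\infty$ on this regime. I would combine the two pointwise bounds in \eqref{BT_bound} into $B_T(p,q)\leq C(\mu)(1+p^2)^{-1}$, treating separately the subcase $T\geq\mu/2$ (where $B_T\leq 1/(2T)\leq 1/\mu$ on the bounded set $\{p^2\leq 2\mu\}$ and the decay bound applies on its complement) and the subcase $|q|\geq\sqrt{\mu}/2$ (where $p^2+q^2>\mu/4$ everywhere yields the same decay after enlarging the constant and absorbing the contribution of the lower-dimensional singular set of $B_T$ by the $\sech^2$ representation discussed below). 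The boundedness of $V^{1/2}(1-\Delta)^{-1}V^{1/2}$ provided by Assumption~\ref{aspt_V_halfspace}\eqref{aspt.1}, already exploited in the proof of Lemma~\ref{3d_w_lea3}, concludes this case.

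For $\nu<1/2$ I would split the quadratic form into three pieces: the \emph{tail} $\int_{p^2>3\mu}B_T|\widehat\phi(p)|^2\,\dd p$, the \emph{shaded} $\int_{p^2<3\mu,\,ab<0}B_T|\widehat\phi(p)|^2\,\dd p$, and the \emph{white} $\int_{p^2<3\mu,\,ab>0}B_T\bigl[|\widehat\phi(p)|^2-|\widehat\phi(\sqrt{\mu}p/|p|)|^2\bigr]\,\dd p$. The tail reduces to the HLS argument of the previous paragraph, since $|q|<\sqrt{\mu}/2$ forces $p^2+q^2>2\mu$ on $\{p^2>3\mu\}$. For the shaded piece, the mean-value identity $\tanh(a/(2T))+\tanh(b/(2T))=\sech^2(\xi/(2T))(a+b)/(2T)$ with $\xi$ between $a$ and $b$ gives $B_T(p,q)=\sech^2(\xi/(2T))/(2T)$; the exponential decay of $\sech^2$ confines the significant contribution to a thin neighborhood of the measure-zero set $\{a=b=0\}$ (a $(d-2)$-sphere), and together with the uniform bound $|\widehat\phi|^2\leq\lVert V\rVert_1\lVert\psi\rVert_2^2$ from Lemma~\ref{lp_prop}\eqref{lp_prop.3} this yields a uniform bound in $(T,q)$.

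The white piece is the core of the argument. I would write the bracket as $2\,\mathrm{Re}(\overline{\widehat\phi(p)}\,\delta(p))+|\delta(p)|^2$ with $\delta(p):=\widehat\phi(p)-\widehat\phi(\sqrt{\mu}p/|p|)$ and establish the H\"older-type bound $|\delta(p)|\leq C\bigl||p|-\sqrt{\mu}\bigr|^{1/2}\lVert\psi\rVert_2$ from the elementary inequality $|e^{-ip\cdot r}-e^{-i\sqrt{\mu}(p/|p|)\cdot r}|\leq \sqrt{2}\,\bigl||p|-\sqrt{\mu}\bigr|^{1/2}|r|^{1/2}$, Cauchy--Schwarz in $r$, and Assumption~\ref{aspt_V_halfspace}\eqref{aspt.3}. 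On the white region $ab>0$ implies $B_T\leq 2/(|a|+|b|)$, whose only possible singularity lies on the lower-dimensional set $\{a=b=0\}$, so the task reduces to proving
\begin{equation*}
\sup_{T>0,\,|q|<\sqrt{\mu}/2}\int_{p^2<3\mu,\,ab>0}\frac{\bigl||p|-\sqrt{\mu}\bigr|^{1/2}}{|a|+|b|}\,\dd p<\infty.
\end{equation*}
The \textbf{main obstacle} is this uniform estimate as $|q|\to 0$, when the two spheres $|p\pm q|=\sqrt{\mu}$ coalesce and the singular set degenerates. I expect to handle it by passing to radial and angular coordinates adapted to the family of level sets of $a+b=2(p^2+q^2-\mu)$, exploiting that the boundary of the white region meets the ellipsoid $\{a+b=0\}$ only on the $(d-2)$-sphere where $\bigl||p|-\sqrt{\mu-|q|^2}\bigr|=0$ (so that the H\"older weight $\bigl||p|-\sqrt{\mu}\bigr|^{1/2}$ is itself $O(|q|^{1/2})$ near the singular locus), in the spirit of the singularity analysis for the diagonal case $q=0$ carried out in \cite{frank_critical_2007}.
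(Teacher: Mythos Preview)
Your overall decomposition into tail/shaded/white pieces is essentially the paper's items (ii), (iii), (v), and your treatment of the tail and of the shaded region is correct in spirit---though note that the significant contribution in the shaded region comes from $T$-neighbourhoods of the full $(d-1)$-spheres $\{a=0\}$ and $\{b=0\}$ inside $\{ab<0\}$, not merely of the $(d-2)$-sphere $\{a=b=0\}$.

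The genuine gap is your handling of the subcase $|q|\geq\sqrt{\mu}/2$ with $T<\mu/2$. The pointwise bound $B_T(p,q)\leq C(\mu)(1+p^2)^{-1}$ you claim there is \emph{false}: on the $(d-2)$-sphere $\{p\cdot q=0,\ |p|^2=\mu-|q|^2\}$, which is nonempty for every $|q|<\sqrt{\mu}$, one has $(p\pm q)^2=\mu$ exactly and hence $B_T(p,q)=1/(2T)$, which diverges as $T\to 0$. The $\sech^2$ representation you invoke applies only in the shaded region $ab<0$, whereas this singularity sits on the boundary of the \emph{white} region and is approached from within it. This is precisely the content of the paper's item (iv): one shows $\sup_{T}\sup_{|q|>\sqrt\mu/2}\int_{p^2<3\mu,\,ab>0}B_T(p,q)\,\dd p<\infty$ by bounding $B_T\leq 1/|p^2+q^2-\mu|$ on the white region and splitting it into four explicit subregions relative to the two balls of radius $\sqrt{\mu}$ centred at $\pm q$, exploiting that on each piece $|p^2+q^2-\mu|$ is bounded below by a quantity proportional to $|q|$ or to $\sqrt{\mu-\tilde p^2}$. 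Without this argument (or an equivalent one), your treatment of $\nu\geq 1/2$ is incomplete.

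For the white piece under $\nu<1/2$ your route is genuinely different from the paper's. You work in momentum space, using the H\"older estimate $|\delta(p)|\leq C\bigl||p|-\sqrt\mu\bigr|^{1/2}\lVert\psi\rVert_2$ (from $|\cdot|V\in L^1$) and reducing to the uniform $L^1$ bound $\int_{p^2<3\mu,\,ab>0}\bigl||p|-\sqrt\mu\bigr|^{1/2}/|a+b|\,\dd p<\infty$. The paper instead works in position space: it bounds the integral kernel $S_{T,d}(q)(x,y)$ pointwise via $|e^{ix}-e^{iy}|\leq\min\{|x-y|,2\}$ and a careful angular analysis in the $r_\pm$ coordinates, arriving at $|S_{T,d}(q)(x,y)|\leq C(1+\ln(1+\sqrt\mu|x-y|))$ and concluding by a Hilbert--Schmidt estimate (again using $|\cdot|V\in L^1$). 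Your momentum-space integral does turn out to be uniformly bounded (the potentially dangerous angular region near $\cos\varphi=0$ contributes only lower order in $|q|$ after integration), but this is exactly the step you flag as the main obstacle and do not carry out. Minor algebra: the bracket is $2\,\mathrm{Re}\bigl(\overline{\widehat\phi(\sqrt\mu\,p/|p|)}\,\delta(p)\bigr)+|\delta(p)|^2$, not $2\,\mathrm{Re}\bigl(\overline{\widehat\phi(p)}\,\delta(p)\bigr)+|\delta(p)|^2$.
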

The proof of Lemma~\ref{B-Q_gen} can be found in Section~\ref{sec:pf_B-Q}.
It now suffices to prove that there is a constant $C$ such that for all $T> 0$ and $q\in \BR^d$
\begin{equation}\label{Q-est-5}
\langle \psi, Q_{T}(q) \psi \rangle \leq  \mu^{d/2-1}\langle \psi, \mathcal{F^\dagger}\mathcal{F} \psi\rangle f(\max\{T/\mu, \vert q \vert/\sqrt{\mu}\}) + C \lVert \psi \rVert_1^2.
\end{equation}
Then for all $\psi \in L^2(\BR^d)$ with $\lVert \psi \rVert_2=1$
\begin{equation}
 \langle \psi, V^{1/2} Q_{T}(q) V^{1/2} \psi \rangle \leq  \mu^{d/2-1}\langle \psi, O_\mu \psi \rangle f(\max\{T/\mu, \vert q \vert/\sqrt{\mu}\}) + C \lVert V \rVert_1
\end{equation}
and the claim follows with Lemma~\ref{B-Q_gen}.

We are left with proving \eqref{Q-est-5}.
By the definition of $Q_T$, it suffices to restrict to $\vert q \vert <\sqrt{\mu}/2, T<\mu/2$.
Let $R$ be the rotation in $\BR^d$ around the origin such that $q=R(\vert q \vert, \tilde 0)$.
For $d=2$ the condition $((p+(|q|,0))^2-\mu)((p-(|q|,0))^2-\mu)>0$ holds exactly in the white region sketched in Figure~\ref{B_q_aspt_domains}.
The inner white region is characterized by $(\vert p_1\vert +\vert q\vert)^2+\tilde p^2<\mu$, and the outer region by $(\vert p_1\vert -\vert q\vert)^2+\tilde p^2>\mu$.
Thus,
\begin{equation}\label{Q-est-2}
\langle \psi, Q_{T}(q) \psi \rangle
= \int_{\BR^d}\left\vert\widehat \psi \left(\sqrt{\mu}R p/\vert p \vert\right) \right\vert^2 \left[  \chi_{(\vert p_1\vert +\vert q\vert)^2+\tilde p^2<\mu}+\chi_{(\vert p_1\vert -\vert q\vert)^2+\tilde p^2>\mu}\right] B_{T}(p,(\vert q\vert,\tilde 0))\chi_{p^2 < 3\mu} \dd p,
\end{equation}
where we substituted $p$ by $R p$.

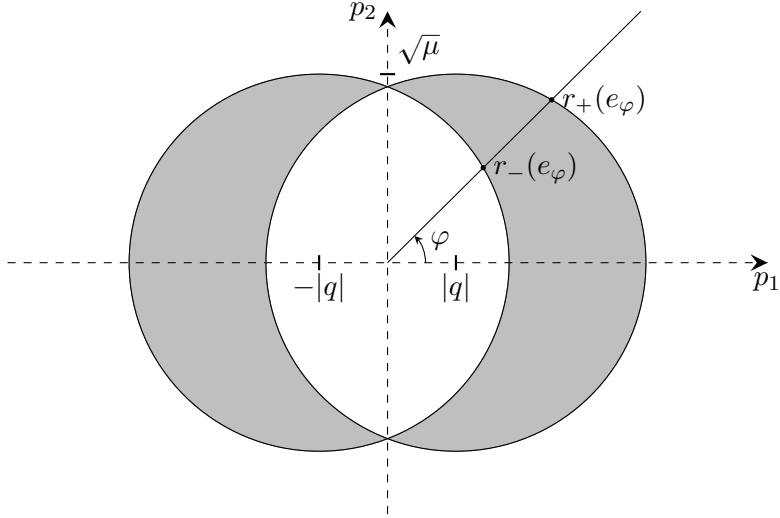
\begin{figure}
\centering
\begin{tikzpicture}[scale=0.5]
\tikzmath{\smu=5;\l=10;\q=1.8;\t=35;\d=0.2;}

\coordinate (end) at (2*\l/3,2*\l/3);
\coordinate (or) at (0,0);
\coordinate (xax) at (1,0);

 \draw[fill, fill opacity=0.5, gray,even odd rule] (-\q,0) circle (\smu) (\q,0) circle (\smu);
\draw[black, name path= c1] (-\q,0) circle (\smu);
\draw[black, name path =c2]  (\q,0) circle (\smu);

\draw (0,2*\l/3) node[left]{$p_2$};
\draw (\l,0) node[below]{$p_1$};
\draw (\q,0) node[below]{$\vert q \vert$};
\draw (-\q,0) node[below]{$-\vert q \vert$};
\draw (0,\smu) node[above right]{$\sqrt{\mu}$};

\draw[black, thick] ($ (\q,0)+(0,\d)$) -- ($ (\q,0)-(0,\d)$);
\draw[black, thick] ($ (-\q,0)+(0,\d)$) -- ($ (-\q,0)-(0,\d)$);
\draw[black, thick] ($ (0,\smu)+(\d,0)$) -- ($ (0,\smu)-(\d,0)$);

 \draw[black, dashed, decoration={markings, mark=at position 1 with {\arrow[scale=2,>=stealth]{>}}},
        postaction={decorate}]
 (-\l,0)--(\l,0);
 \draw[black, dashed, decoration={markings, mark=at position 1 with {\arrow[scale=2,>=stealth]{>}}},
        postaction={decorate}]
 (0,-2*\l/3)--(0,2*\l/3);
 \draw[black, name path= tilt] (or)--(end);

%See https://www.latex4technics.com/?note=997UG3
 \pic [draw, ->, "$\varphi$", angle eccentricity=1.5] {angle = xax--or--end};

\path [name intersections={of=c1 and tilt}];
\coordinate (OP1) at (intersection-1);

\path [name intersections={of=c2 and tilt}];
\coordinate (OP2) at (intersection-1);

\draw (OP1) node[right]{$r_{-}(e_\varphi)$};
\draw (OP2) node[right]{$r_{+}(e_\varphi)$};

\foreach \point in {OP1,OP2}
  \fill (\point) circle (2pt);

\end{tikzpicture}

\caption{Two circles of radius $\sqrt{\mu}$, centered at $(-\vert q \vert,0)$ and $(\vert q \vert,0)$. In $d=2$ the function $B_T(p,(\vert q \vert,0))$ diverges on the two circles as $T\to0$ and approaches zero in the shaded area. Given an angle $\varphi$, the numbers $r_\pm(e_\varphi)$ are the distances between zero and the intersections of the circles with the ray tilted by an angle $\varphi$ with respect to the $p_1$-axis.}
\label{B_q_aspt_domains}

\end{figure}
Let us use the notation $r_{\pm}(e)=\pm \vert e_1 \vert \vert q \vert +\sqrt{\mu-e_2^2 \vert q\vert^2}$ and $e_\varphi=(\cos \varphi,\sin \varphi)$, where the choice of $r_\pm$ is motivated in Figure~\ref{B_q_aspt_domains}.
For $d=2$ rewriting the integral \eqref{Q-est-2} in angular coordinates gives
\begin{equation}\label{Q-est-1}
\int_{0}^{2\pi} \left\vert\widehat \psi\left(\sqrt{\mu}R e_\varphi \vert\right) \right\vert^2 \left[  \int_0^{r_-(e_\varphi)} B_{T}(r e_\varphi ,(\vert q\vert,0)) r\dd r +\int_{r_+(e_\varphi)}^{\sqrt{3\mu}} B_{T}(r e_\varphi ,(\vert q\vert,0)) r\dd r \right] \dd \varphi.
\end{equation}
For $d=3$ with the notation $e_{\varphi,\theta} =(\cos \varphi, \sin \varphi \cos \theta, \sin \varphi \sin \theta)$ and using that $ B_{T}(r e_{\varphi,\theta} ,(\vert q\vert,0,0))= B_{T}(r e_\varphi ,(\vert q\vert,0))$, \eqref{Q-est-2} equals
\begin{equation}\label{Q-est-3}
\int_{0}^{\pi}\left(\int_0^{2\pi} \left\vert\widehat \psi\left(\sqrt{\mu}r e_{\varphi,\theta} \vert\right) \right\vert^2 \dd \theta\right) \left[  \int_0^{r_-(e_\varphi)} B_{T}(r e_\varphi ,(\vert q\vert,0)) r^2\dd r +\int_{r_+(e_\varphi)}^{\sqrt{3\mu}} B_{T}(r e_\varphi ,(\vert q\vert,0)) r^2\dd r \right] \sin \varphi   \dd \varphi.
\end{equation}

We distinguish two cases depending on whether $r$ is within distance $T/\sqrt{\mu}$ to $r_\pm$ or not.
Note that $r_{-}(e) \geq -\vert q\vert+\sqrt{\mu}\geq \frac{\sqrt{\mu}}{2} \geq \frac{T}{\sqrt{\mu}}$ and $r_+(e)+\frac{T}{\sqrt{\mu}}\leq \vert q\vert+\sqrt{\mu}+T\leq 2\sqrt{\mu} $.
If $r$ is close to $r_\pm$ we use that $B_{T}(p,q)\leq 1/2T$.
Otherwise we use \eqref{pf:3d_2_lea3.4}.
The expressions in the square brackets in \eqref{Q-est-1} and \eqref{Q-est-3} are thus bounded by
\begin{equation}\label{Q-est-4}
 \int_0^{r_-(e_\varphi)-\frac{T}{\sqrt{\mu}}} \frac{r^{d-1}}{\mu-r^2-q^2}\dd r + \int_{r_-(e_\varphi)-\frac{T}{\sqrt{\mu}}}^{r_-(e_\varphi)} \frac{r^{d-1}}{2T}\dd r
+ \int_{r_+(e_\varphi)}^{r_+(e_\varphi)+\frac{T}{\sqrt{\mu}}} \frac{r^{d-1}}{2T}\dd r+\int_{r_+(e_\varphi)+\frac{T}{\sqrt{\mu}}}^{\sqrt{3\mu}} \frac{r^{d-1}}{r^2+q^2-\mu}\dd r
\end{equation}
The second and third term are clearly bounded for $T<\mu/2$.
Since $\lVert \widehat \psi \rVert_\infty \leq (2\pi)^{-d/2} \lVert \psi \rVert_1$, they contribute $C \lVert \psi \rVert_1$ to the upper bound on $\langle \psi, Q_{T}(q) \psi \rangle $.

To bound the contributions of the first and the last term in \eqref{Q-est-4} we treat $d=2$ and $d=3$ separately.

{\bf Case $d=2$:}
The sum of the two integrals equals
\begin{equation}\label{Q-est}
 \ln \sqrt{\frac{(\mu-q^2)(2\mu +q^2)}{(\mu-q^2-(r_-(e_\varphi)-\frac{T}{\sqrt{\mu}})^2)((r_+(e_\varphi)+\frac{T}{\sqrt{\mu}})^2+q^2-\mu)}}
\end{equation}
To bound this expression, we first make a few observations.
Note that
\begin{multline}\label{Q-est-6}
\mu-q^2-\left(r_-(e_\varphi)-\frac{T}{\sqrt{\mu}}\right)^2=2 \vert e_1 \vert \vert q \vert (\sqrt{\mu-e_2^2 \vert q \vert^2}-\vert e_1\vert \vert q\vert) + \frac{T}{\sqrt{\mu}}\left (2r_-(e_\varphi) -\frac{T}{\sqrt{\mu}}\right) \\
\geq (\sqrt{3}-1) \sqrt{\mu}\vert e_1 \vert \vert q \vert  + \frac{T}{2},
\end{multline}
where we used that $r_-(e_\varphi)\geq \sqrt{\mu}-\vert q\vert$ and $\vert q\vert, T/\sqrt{\mu}\leq \sqrt{\mu}/2$.
Similarly,
\begin{multline}\label{Q-est-7}
\left(r_+(e_\varphi)+\frac{T}{\sqrt{\mu}}\right)^2+q^2-\mu=2 \vert e_1 \vert \vert q \vert (\sqrt{\mu-e_2^2 \vert q \vert^2}+\vert e_1\vert \vert q\vert) + \frac{T}{\sqrt{\mu}}\left (2r_+(e_\varphi) +\frac{T}{\sqrt{\mu}}\right) \\
\geq \sqrt{3} \sqrt{\mu} \vert e_1 \vert \vert q \vert + \sqrt{3} T
\end{multline}
Furthermore, note that $2\mu +q^2 \leq \frac{5\mu}{4}$.
The expression under the square root in \eqref{Q-est} is therefore bounded above by
\begin{equation}
\frac{ 5\mu^2}{4( (\sqrt{3}-1) \sqrt{\mu}\vert e_1 \vert \vert q \vert+ \frac{T}{2})(\sqrt{3} \sqrt{\mu} \vert e_1 \vert \vert q \vert + \sqrt{3} T)}
\end{equation}
We now bound this from above in two ways.
First we drop the $T$ terms in the denominator, and second we drop the other terms in the denominator, which gives
$\frac{ 5\mu }{4\sqrt{3} (\sqrt{3}-1) \vert e_1 \vert^2 \vert q \vert^2}$ and $\frac{5\mu^2 }{2\sqrt{3}T^2}$, respectively.
Thus, \eqref{Q-est} is bounded above by $ f(\max\{T/\mu, \vert q \vert/\sqrt{\mu}\}) + \ln(1/\vert e_1 \vert)+C$.
The contribution to the upper bound on $\langle \psi, Q_{T}(q) \psi \rangle $ is
\begin{equation}
\int_{0}^{2\pi} \left\vert\widehat \psi \left(\sqrt{\mu} e_\varphi \vert\right) \right\vert^2 f(\max\{T/\mu, \vert q \vert/\sqrt{\mu}\}) \dd \varphi
+(2\pi)^{-2}\lVert \psi \rVert_1^2 \int_{0}^{2\pi}\left( \ln \left(1/\vert \cos \varphi \vert \right)+C \right)\dd \varphi,
\end{equation}
where for the second term we used that $\vert\widehat \psi \left(\sqrt{\mu} e_\varphi \vert\right) \vert^2 \leq (2\pi)^{-2}\lVert \psi \rVert_1^2$.
Note that the first summand equals $\langle \psi, \mathcal{F^\dagger}\mathcal{F} \psi\rangle  f(\max\{T/\mu, \vert q \vert/\sqrt{\mu}\})$ and that the integral in the second summand is finite.
In total, we have obtained \eqref{Q-est-5} for $d=2$.

{\bf Case $d=3$:}
Note that $\frac{\dd}{\dd r} (-r + a \artanh (r/a))= r^2/(a^2-r^2)$ and $\frac{\dd}{\dd r} (r - a \arcoth (r/a))= r^2/(r^2-a^2)$.
The sum of the first and the last integral in \eqref{Q-est-4} hence equals
\begin{multline}
\sqrt{3\mu}-r_+(e_\varphi)-r_-(e_\varphi)-\frac{\sqrt{\mu-q^2}}{2}\ln\left(\frac{(\sqrt{\mu-q^2}+\sqrt{3\mu})}{(\sqrt{3\mu}-\sqrt{\mu-q^2})}\right)\\
+\frac{\sqrt{\mu-q^2}}{2}\ln\left(\frac{(\sqrt{\mu-q^2}+r_-(e_\varphi)-\frac{T}{\sqrt{\mu}})}{(\sqrt{\mu-q^2}-r_-(e_\varphi)+\frac{T}{\sqrt{\mu}})}\frac{(\sqrt{\mu-q^2}+r_+(e_\varphi)+\frac{T}{\sqrt{\mu}})}{(r_+(e_\varphi)+\frac{T}{\sqrt{\mu}}-\sqrt{\mu-q^2})}\right)
\end{multline}
The terms in the first line are bounded.
The argument of the logarithm in the second line equals
\begin{multline}
\frac{(\sqrt{\mu-q^2}+r_-(e_\varphi)-\frac{T}{\sqrt{\mu}})^2}{(\mu-q^2-(r_-(e_\varphi)-\frac{T}{\sqrt{\mu}})^2)}\frac{(\sqrt{\mu-q^2}+r_+(e_\varphi)+\frac{T}{\sqrt{\mu}})^2}{((r_+(e_\varphi)+\frac{T}{\sqrt{\mu}})^2-\mu+q^2)}\\
\leq \frac{C\mu^2}{( (\sqrt{3}-1) \sqrt{\mu}\vert e_1 \vert \vert q \vert+ \frac{T}{2})(\sqrt{3} \sqrt{\mu} \vert e_1 \vert \vert q \vert + \sqrt{3} T))}
\end{multline}
where we used \eqref{Q-est-6} and \eqref{Q-est-7}.
Analogously to the case $d=2$ the contribution to the upper bound on $\langle \psi, Q_{T}(q) \psi \rangle $ is
\begin{multline}
\sqrt{\mu}\int_{0}^{\pi}\left( \int_0^{2\pi} \left\vert\widehat \psi \left(\sqrt{\mu} e_{\varphi,\theta} \vert\right) \right\vert^2\dd \theta \right) f(\max\{T/\mu, \vert q \vert/\sqrt{\mu}\})\sin \varphi \dd \varphi\\
+(2\pi)^{-2}\sqrt{\mu}\lVert \psi \rVert_1^2 \int_{0}^{\pi}\left( \ln \left(1/\vert \cos \varphi \vert \right)+C \right) \sin \varphi \dd \varphi
\end{multline}
and \eqref{Q-est-5} follows.
\end{proof}

\section{Proofs of Auxiliary Lemmas}\label{sec:pf_aux}
\subsection{Proof of Lemma~\ref{Dmu1mu2_bound}}\label{pf:Dmu1mu2_bound}
\begin{proof}[Proof of Lemma~\ref{Dmu1mu2_bound}]
If we write $D_{\mu_1,\mu_2}$ as a sum $D_{\mu_1,\mu_2}=\sum_{j=1}^n D_{\mu_1,\mu_2}^j$ a.e. for some integral kernels $D_{\mu_1,\mu_2}^j$, then $\lVert D_{\mu_1,\mu_2} \rVert \leq \sum_{j=1}^n \lVert D_{\mu_1,\mu_2}^j \rVert$.
We will choose the $D_{\mu_1,\mu_2}^j$ as localized versions of $D_{\mu_1,\mu_2}$ in different regions (by multiplying $D_{\mu_1,\mu_2}$ by characteristic functions).

Let $D_{\mu_1,\mu_2}^1=D_{\mu_1,\mu_2}\chi_{\max\{\vert p_1 \vert, \vert q_1\vert \}>2}$ and $D_{\mu_1,\mu_2}^2=D_{\mu_1,\mu_2}\chi_{\max\{\vert p_1 \vert, \vert q_1\vert \}<2}$.
We first prove that the Hilbert-Schmidt norm of $ D_{\mu_1,\mu_2}^1 $ is bounded uniformly in $\mu_1, \mu_2$.
Note that if $\max\{\vert p_1 \vert, \vert q_1\vert \}>2$, we have $\max\{(p_1\pm q_1)^2\}=(\vert p_1 \vert+\vert q_1 \vert)^2>4$ and $\mu_1,\mu_2\leq1$.
Hence,
\begin{equation}
D_{\mu_1,\mu_2}^1(p_1,q_1)\leq \frac{2\chi_{\max\{\vert p_1 \vert, \vert q_1\vert \}>2}}{(\vert p_1 \vert+\vert q_1 \vert)^2-1}\leq  \frac{2\chi_{\max\{\vert p_1 \vert, \vert q_1\vert \}>2}}{p_1^2+q_1^2-1}.
\end{equation}
For the Hilbert-Schmidt norm we obtain
\begin{equation}
\lVert D_{\mu_1,\mu_2}^1 \rVert_{ \rm HS}^2\leq 4 \int_{\BR^2} \frac{\chi_{\max\{\vert p_1 \vert, \vert q_1\vert \}>2}}{(p_1^2+q_1^2-1)^2} \dd p_1 \dd q_1\leq 8 \pi \int_2^\infty \frac{r}{(r^2-1)^2} \dd r =\frac{4\pi}{3},
\end{equation}
and therefore $\lVert D_{\mu_1,\mu_2}^1 \rVert$ is indeed bounded uniformly in $\mu_1,\mu_2$.

For $D_{\mu_1,\mu_2}^2$ we first observe that $\lVert D_{\mu_2,\mu_1}^2 \rVert =\lVert D_{\mu_1,\mu_2}^2 \rVert $ since $D_{\mu_1,\mu_2}^2(p_1,q_1)=D_{\mu_2,\mu_1}^2(p_1,-q_1)$.
Hence, without loss of generality we may assume $\mu_1\leq\mu_2$ from now on.
To bound the norm of $D_{\mu_1,\mu_2}^2$ we distinguish the cases $\mu_1<0$ and $\mu_1>0$ and continue localizing.

{\bf Case $\mu_1<0$:}
We localize in the regions $\vert p_1-q_1 \vert^2<\mu_2$ and $\vert p_1-q_1 \vert^2>\mu_2$, where the first one only occurs if $\mu_2>0$.
Let $D_{\mu_1,\mu_2}^3=D_{\mu_1,\mu_2}^2 \chi_{\vert p_1-q_1 \vert^2<\mu_2}$ and $D_{\mu_1,\mu_2}^4=D_{\mu_1,\mu_2}^2 \chi_{\vert p_1-q_1 \vert^2>\mu_2}$.

For $D_{\mu_1,\mu_2}^3$ we do a Schur test with test function $h(p_1)=\vert p_1\vert^{1/2}$.
Using the symmetry of the integrand under $(p_1,q_1)\to -(p_1,q_1)$, we have
\begin{multline}
\lVert D_{\mu_1,\mu_2}^3 \rVert \leq\sup_{-2<p_1<2} \vert p_1\vert^{1/2} \int_{-2}^2  \frac{1}{2}\frac{\chi_{\vert p_1-q_1 \vert^2<\mu_2}}{p_1q_1+(\mu_2-\mu_1)/4}\frac{1}{\vert q_1\vert^{1/2}}\dd q_1\\
= \chi_{0<\mu_2}\sup_{0\leq p_1<2} \vert p_1\vert^{1/2} \int_{p_1-\sqrt{\mu_2}}^{p_1+\sqrt{\mu_2}} \frac{1}{2} \frac{1}{p_1q_1+(\mu_2-\mu_1)/4}\frac{1}{\vert q_1\vert^{1/2}}\dd q_1.
\end{multline}
For $\mu_2>0$, carrying out the integration we obtain
\begin{multline}
\lVert D_{\mu_1,\mu_2}^3 \rVert \leq\sup_{0\leq p_1<2} \frac{2}{\sqrt{\mu_2-\mu_1}}\left[\arctan \left(\sqrt{\frac{4p_1 (p_1+\sqrt{\mu_2})}{\mu_2-\mu_1}}\right)\right.\\
\left.-\chi_{p_1>\sqrt{\mu_2}}\arctan \left(\sqrt{\frac{4p_1 (p_1-\sqrt{\mu_2})}{\mu_2-\mu_1}}\right)+\chi_{p_1<\sqrt{\mu_2}}\artanh \left(\sqrt{\frac{4p_1 (\sqrt{\mu_2}-p_1)}{\mu_2-\mu_1}}\right)\right]\\
\leq \frac{2}{\sqrt{\mu_2-\mu_1}}\left[\frac{\pi}{2}+\artanh \left(\sqrt{\frac{\mu_2}{\mu_2-\mu_1}}\right)\right],
\end{multline}
where we used the monotonicity of $\artanh$.
Note that for $x\geq 0$,
\begin{equation}
\artanh\left(\sqrt{\frac{1}{1+x}}\right)=\ln \left(\sqrt{\frac{1}{x}+1}+\sqrt{\frac{1}{x}}\right)\leq \ln \left(2 \sqrt{\frac{1}{x}+1}\right)= \ln(2)+\frac{1}{2}\ln\left(1+\frac{1}{x}\right).
\end{equation}
In total, we obtain
\begin{equation}
\lVert D_{\mu_1,\mu_2}^3\rVert  \leq \frac{C}{\sqrt{-\mu_1}}\left(1+\ln \left(1+\frac{\mu_2}{-\mu_1}\right)\right)
\end{equation}
for some constant $C$.

The Hilbert-Schmidt norm of $D_{\mu_1,\mu_2}^4$ is given by
\begin{equation}
\lVert D_{\mu_1,\mu_2}^4 \rVert_{\rm HS}= \left(\int_{(-2,2)^2} \frac{\chi_{\vert p_1-q_1 \vert^2>\mu_2}}{(p_1^2+q_1^2-\frac{\mu_1+\mu_2}{2})^2} \dd p_1 \dd q_1\right)^{1/2}
\end{equation}
For $\mu_2<0$, we clearly have $\lVert D_{\mu_1,\mu_2}^4 \rVert_{\rm HS}\leq \lVert D_{\mu_1,0}^4 \rVert_{\rm HS}$.
For $\mu_2\geq 0$ observe that the constraint $\vert p_1-q_1 \vert^2>\mu_2$ implies $p_1^2+q_1^2  >\frac{\mu_2}{2}$.
Hence,
\begin{equation}
\lVert D_{\mu_1,\mu_2}^4 \rVert_{\rm HS}\leq \left(2\pi \int_{\sqrt{\frac{\mu_2}{2}}}^\infty \frac{r}{(r^2-\frac{\mu_1+\mu_2}{2})^2} \dd r\right)^{1/2}= \left(\frac{2\pi}{-\mu_1}\right)^{1/2}.
\end{equation}

{\bf Case $\mu_1>0$:}
We are left with estimating $D_{\mu_1,\mu_2}^2$ in the case that $\mu_1>0$.
First we sketch the location of the singularities of $D_{\mu_1,\mu_2}^2(p_1,q_1)$.
On each of the diagonal lines in Figure~\ref{claim4_domains}, one of the two terms $|(p_1+q_1)^2-\mu_1|,|(p_1-q_1)^2-\mu_2|$ in the denominator of $D_{\mu_1,\mu_2}^2(p_1,q_1)$ vanishes.
The function $D_{\mu_1,\mu_2}^2(p_1,q_1)$ thus has four singularities located at the crossings of the diagonal lines in Figure~\ref{claim4_domains}.
The coordinates of the singularities are $(p_1,q_1)\in\{(s_1,-s_2),(s_2,-s_1),(-s_1,s_2),(-s_2,s_1)\}$, where $s_1=\frac{\sqrt{\mu_1}+\sqrt{\mu_2}}{2}$, $s_2=\frac{\sqrt{\mu_2}-\sqrt{\mu_1}}{2}$.
Note that $s_1^2+s_2^2=\frac{\mu_1+\mu_2}{2}$ and $s_1 s_2=\frac{\mu_2-\mu_1}{4}$.

To bound $\lVert D_{\mu_1,\mu_2}^2\rVert$, the idea is to perform a Schur test with test function $h(p_1)=\min\{\vert \vert p_1 \vert -s_1 \vert^{1/2}, \vert \vert p_1 \vert -s_2 \vert^{1/2}\}$.
Since the behavior of $D_{\mu_1,\mu_2}^2(p_1,q_1)$ strongly depends on whether $|p_1+q_1|\gtrless \sqrt{\mu_1},|p_1-q_1|\gtrless \sqrt{\mu_2}$ and which singularity of $D_{\mu_1,\mu_2}^2$ is close to $p_1,q_1$, we distinguish the ten different regions sketched in Figure~\ref{claim4_domains}.
For $5\leq j \leq 14$, we define the operator $D_{\mu_1,\mu_2}^j$ to be localized in region $j$, $D_{\mu_1,\mu_2}^j=D_{\mu_1,\mu_2}^2 \chi_j$.
According to the Schur test,
\begin{equation}
\lVert D_{\mu_1,\mu_2}^j\rVert \leq \sup_{|p_1|<2}h(p_1)^{-1}\int_{-2}^2 D^j_{\mu_1,\mu_2}(p_1,q_1) h(q_1) \dd q_1.
\end{equation}
The bounds on $\lVert D_{\mu_1,\mu_2}^j\rVert$ we obtain from the Schur test are listed in Table~\ref{tab:1}.
In the following we prove all the bounds.

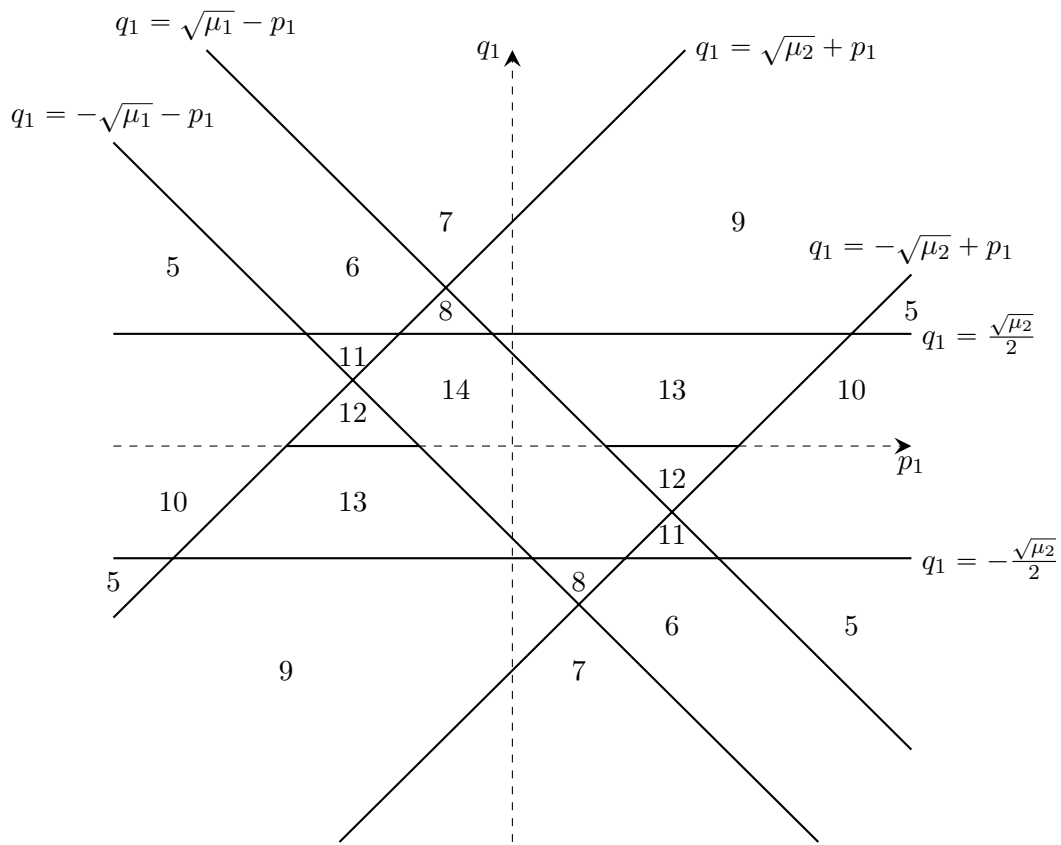
\begin{figure}
\centering
\begin{tikzpicture}[scale=0.35]
\tikzmath{\smu1=3.5;\smu2=8.5;\l=15;\t=0.3;}

\draw (0,\l) node[left]{$q_1$};
\draw (\l,0) node[below]{$p_1$};

%region labels
\draw (-1.5*\smu2,0.8*\smu2) node{5};
\draw (1.5*\smu2,-0.8*\smu2) node{5};
\draw (-\smu2/2-\smu1/2,0.8*\smu2) node{6};
\draw (\smu2/2+\smu1/2,-0.8*\smu2) node{6};
\draw (-\smu2/2+\smu1/2,\smu2) node{7};
\draw (\smu2/2-\smu1/2,-\smu2) node{7};
\draw (-\smu2/2+\smu1/2,0.5*\smu2+0.25*\smu1) node{8};
\draw (\smu2/2-\smu1/2,-0.5*\smu2-0.25*\smu1) node{8};
\draw (\smu2,\smu2) node{9};
\draw (-\smu2,-\smu2) node{9};
\draw (\l,0.5*\smu2+0.25*\smu1) node{5};
\draw (-\l,-0.5*\smu2-0.25*\smu1) node{5};
\draw (-1.5*\smu2,-0.25*\smu2) node{10};
\draw (1.5*\smu2,0.25*\smu2) node{10};
\draw (-\smu2/2-\smu1/2,0.5*\smu2-0.25*\smu1) node{11};
\draw (\smu2/2+\smu1/2,-0.5*\smu2+0.25*\smu1) node{11};
\draw (-\smu2/2-\smu1/2,0.25*\smu2-0.25*\smu1) node{12};
\draw (\smu2/2+\smu1/2,-0.25*\smu2+0.25*\smu1) node{12};
\draw (-\smu2/2-\smu1/2,-0.25*\smu2) node{13};
\draw (\smu2/2+\smu1/2,0.25*\smu2) node{13};
\draw (-0.25*\smu2,0.25*\smu2) node{14};

 \draw[black, dashed, decoration={markings, mark=at position 1 with {\arrow[scale=2,>=stealth]{>}}},
        postaction={decorate}]
 (-\l,0)--(\l,0);
 \draw[black, dashed, decoration={markings, mark=at position 1 with {\arrow[scale=2,>=stealth]{>}}},
        postaction={decorate}]
 (0,-\l)--(0,\l);

%lines
 \draw[black, thick]
(\smu1-\l,\l)--(\l,\smu1-\l)
(-\l,-\smu1+\l)--(-\smu1+\l,-\l)
(\smu2-\l,-\l)--(\l,\l-\smu2)
(-\l,\smu2-\l)--(-\smu2+\l,\l)
 (-\l,\smu2/2)--(\l,\smu2/2)
 (-\l,-\smu2/2)--(\l,-\smu2/2)
(\smu2,0)--(\smu1,0)
(-\smu2,0)--(-\smu1,0);

%line labels
\draw (\l,-\smu2/2) node[right]{$q_1=-\frac{\sqrt{\mu_2}}{2}$};
\draw (\l,\smu2/2) node[right]{$q_1=\frac{\sqrt{\mu_2}}{2}$};
\draw (\smu1-\l,\l) node[above]{$q_1=\sqrt{\mu_1}-p_1$};
\draw (-\l,\l-\smu1) node[above]{$q_1=-\sqrt{\mu_1}-p_1$};
\draw (-\smu2+\l,\l) node[right]{$q_1=\sqrt{\mu_2}+p_1$};
\draw (\l,\l-\smu2) node[above]{$q_1=-\sqrt{\mu_2}+p_1$};
\end{tikzpicture}
\caption{In the proof of Lemma~\ref{Dmu1mu2_bound}, in the case $0<\mu_1\leq\mu_2$ we split the domain of $p_1,q_1$ into ten different regions. The solid lines indicate the boundaries between these regions.}
\label{claim4_domains}
\end{figure}

\begin{table}[ht]
\renewcommand{\arraystretch}{1.4}
\begin{center}
\begin{tabular}{|c|c |c|}
\hline
 Operator & Upper bound & Proof \\
    \hline
    $D^5$ & $\frac{16}{\mu_1^{1/2}}$ &  \eqref{D5a}-\eqref{D5z} \\
    $D^6$ & $\frac{6}{\mu_1^{1/2}}$ &  \eqref{D6a}-\eqref{D6z}  \\
    $D^7$ & $\frac{(6+2\sqrt{2})^{1/2}}{\mu_1^{1/2}}$ & \eqref{D7a}-\eqref{D7z} \\
    $D^8$ & $\frac{2^{1/2} 4}{\mu_1^{1/2}}$  & \eqref{D8a} -\eqref{D8z}\\
    $D^9$ & $\frac{8}{\mu_1^{1/2}}$  & \eqref{D9a}-\eqref{D9z} \\
    $D^{10}$ &  $\frac{4}{\mu_1^{1/2}}$  & \eqref{D10a}-\eqref{D10z}\\
    $D^{11}$ & $ \frac{4}{\mu_1^{1/2}}$&  \eqref{D11a}- \eqref{D11z}\\
    $D^{12}$ & $\frac{2}{\mu_1^{1/2}}$ &  \eqref{D12a}- \eqref{D12z} \\
    $D^{13}$ & $\frac{4(\artanh(1/\sqrt{2})+\pi)}{\mu_1^{1/2}}$  &  \eqref{D13a}- \eqref{D13z} \\
    $D^{14}$ & $\frac{4(\sqrt{3}+1)}{\mu_1^{1/2}}$  &  \eqref{D14a}-\eqref{D14z} \\
    \hline
  \end{tabular}
\caption{Overview of the estimates used in the proof of Lemma~\ref{Dmu1mu2_bound}.}
\label{tab:1}
\end{center}
\end{table}

\textbf{Region 5:}
By symmetry of the integrand under $(p_1,q_1)\to -(p_1,q_1)$ we have
\begin{align*}
\lVert D_{\mu_1,\mu_2}^{5} \rVert & \leq \sup_{-2<p_1<2} h(p_1)\int_{-2}^2 \frac{\chi_{5}}{p_1^2+q_1^2-s_1^2-s_2^2} \frac{1}{h(q_1)}  \dd q_1\\
&= \sup_{\sqrt{\mu_1}+\frac{\sqrt{\mu_2}}{2}<p_1<2} \vert p_1-s_1 \vert^{1/2}\Bigg[\int_{\sqrt{\mu_1}-p_1}^{-\frac{\sqrt{\mu_2}}{2}} \frac{1}{p_1^2+q_1^2-s_1^2-s_2^2} \frac{1}{\vert q_1+s_1 \vert^{1/2}}  \dd q_1 \\
& \qquad+\int_{\sqrt{\mu_2}/2}^{p_1-\sqrt{\mu_2}} \frac{1}{p_1^2+q_1^2-s_1^2-s_2^2} \frac{1}{\vert q_1-s_1 \vert^{1/2}}  \dd q_1 \Bigg]\\
&\leq 2 \sup_{\sqrt{\mu_1}+\frac{\sqrt{\mu_2}}{2}<p_1<2} \vert p_1-s_1 \vert^{1/2} \int_{\sqrt{\mu_2}/2}^{p_1-\sqrt{\mu_1}} \frac{1}{p_1^2+q_1^2-s_1^2-s_2^2} \frac{1}{\vert q_1-s_1 \vert^{1/2}}  \dd q_1\\
&\leq 2\sup_{\sqrt{\mu_1}+\frac{\sqrt{\mu_2}}{2}<p_1<2}  \frac{\vert p_1-s_1 \vert^{1/2}}{p_1^2+\frac{\mu_2}{4}-s_1^2-s_2^2}\int_{\sqrt{\mu_2}/2}^{p_1-\sqrt{\mu_1}}  \frac{1}{\vert q_1-s_1 \vert^{1/2}}  \dd q_1 \nr \label{D5a}
\end{align*}
Note that $p_1^2+\frac{\mu_2}{4}-s_1^2-s_2^2=p_1^2-\frac{\mu_1}{2}-\frac{\mu_2}{4} \geq \frac{\sqrt{\mu_2}}{2} (p_1-\sqrt{\frac{\mu_1}{2}+\frac{\mu_2}{4}})$.
Carrying out the integration, \eqref{D5a} is bounded above by
\begin{equation}\label{D5b}
\frac{8}{\sqrt{\mu_2}} \sup_{\sqrt{\mu_1}+\frac{\sqrt{\mu_2}}{2}<p_1<2}  \frac{\vert p_1-s_1 \vert^{1/2}}{p_1-\sqrt{\frac{\mu_1}{2}+\frac{\mu_2}{4}}} \Bigg(\left( \frac{\sqrt{\mu_1}}{2}\right)^{1/2}+\chi_{p_1>s_1+\sqrt{\mu_1}}\vert p_1-s_1-\sqrt{\mu_1} \vert^{1/2}\Bigg)
\end{equation}
Note that $s_1>\sqrt{\frac{\mu_1}{2}+\frac{\mu_2}{4}}$.
Using that for $x\geq a\geq b$, $(x-a)/(x-b)\leq 1$ we bound \eqref{D5b} above by
\begin{equation}\label{D5z}
\frac{8}{\sqrt{\mu_2}} \Bigg(\frac{ \left( \frac{\sqrt{\mu_1}}{2}\right)^{1/2}}{|\sqrt{\mu_1}+\frac{\sqrt{\mu_2}}{2}-\sqrt{\frac{\mu_1}{2}+\frac{\mu_2}{4}}|^{1/2}} +1\Bigg)
\leq \frac{8}{\sqrt{\mu_2}}\left(\frac{ \sqrt{\mu_1}+\frac{\sqrt{\mu_2}}{2}+\sqrt{\frac{\mu_1}{2}+\frac{\mu_2}{4}}}{\sqrt{\mu_1}+2\sqrt{\mu_2}}\right)^{1/2} +\frac{8}{\sqrt{\mu_1}}
\leq \frac{16}{\sqrt{\mu_1}}.
\end{equation}

\textbf{Region 6:}
By symmetry under $(p_1,q_1)\to -(p_1,q_1)$, we obtain
\begin{align*}
\lVert D_{\mu_1,\mu_2}^{6} \rVert &\leq \sup_{-2<p_1<2} h(p_1)\int_{-2}^2 \frac{1}{2}\frac{\chi_{6}}{-p_1q_1-s_1s_2} \frac{1}{h(q_1)}  \dd q_1\\
&\leq \sup_{-2<p_1<-s_2} h(p_1) \int_{\max\{-\sqrt{\mu_1}-p_1,\frac{\sqrt{\mu_2}}{2},\sqrt{\mu_2}+p_1\}}^{\min\{-p_1+\sqrt{\mu_1},2\}} \frac{1}{-p_1q_1-s_1s_2} \frac{1}{\vert  q_1-s_1 \vert^{1/2}}  \dd q_1 \nr \label{D6a}
\end{align*}
We split the integral into the sum of the integral over $q_1>s_1$ and $q_1<s_1$.
For $p_1<-s_2$ and $q_1>s_1$ we have $-p_1q_1-s_1s_2>-(p_1+s_2)s_1$.
Hence,
\begin{multline}
\sup_{-2<p_1<-s_2} h(p_1) \int_{s_1}^{\min\{-p_1+\sqrt{\mu_1},2\}} \frac{1}{-p_1q_1-s_1s_2} \frac{1}{\vert  q_1-s_1 \vert^{1/2}}  \dd q_1\\
\leq \sup_{-2<p_1<-s_2} \frac{1}{\vert p_1+s_2 \vert^{1/2}s_1}\int_{s_1}^{-p_1+\sqrt{\mu_1}} \frac{1}{\vert  q_1-s_1 \vert^{1/2}}  \dd q_1= \frac{2}{s_1} \leq \frac{2}{\sqrt{\mu_1}}
\end{multline}
The case $q_1<s_1$ only occurs for $p_1>-s_1-\sqrt{\mu_1}$.
For $ -\frac{\sqrt{\mu_2}}{2}<p_1<-s_2$ and $\sqrt{\mu_2}+p_1<q_1<s_1$ note that $-p_1q_1-s_1s_2\geq -p_1 (\sqrt{\mu_2}+p_1)-s_1s_2=\vert p_1+s_2\vert(p_1+s_1)\geq \vert p_1+s_2\vert \frac{\sqrt{\mu_1}}{2}$.
Hence,
\begin{multline}
\sup_{-\frac{\sqrt{\mu_2}}{2}<p_1<-s_2} h(p_1) \int_{\sqrt{\mu_2}+p_1}^{s_1} \frac{1}{-p_1q_1-s_1s_2} \frac{1}{\vert  q_1-s_1 \vert^{1/2}}  \dd q_1\\
\leq \sup_{-\frac{\sqrt{\mu_2}}{2}<p_1<-s_2}\frac{2}{\sqrt{\mu_1}\vert p_1+s_2\vert^{1/2}} \int_{\sqrt{\mu_2}+p_1}^{s_1} \frac{1}{\vert  q_1-s_1 \vert^{1/2}}  \dd q_1= \frac{4}{\sqrt{\mu_1}}
\end{multline}
For $-s_1-\frac{\sqrt{\mu_1}}{2}<p_1< -\frac{\sqrt{\mu_2}}{2}$ and $\frac{\sqrt{\mu_2}}{2}<q_1<s_1$, we have $-p_1q_1-s_1s_2\geq \frac{\mu_2}{4}-s_1 s_2 =\frac{\mu_1}{4}$.
Therefore,
\begin{multline}
\sup_{-s_1-\frac{\sqrt{\mu_1}}{2}<p_1<-\frac{\sqrt{\mu_2}}{2}} h(p_1) \int_{\frac{\sqrt{\mu_2}}{2}}^{s_1} \frac{1}{-p_1q_1-s_1s_2} \frac{1}{\vert  q_1-s_1 \vert^{1/2}}  \dd q_1\\
\leq \sup_{-s_1-\frac{\sqrt{\mu_1}}{2}<p_1<-\frac{\sqrt{\mu_2}}{2}} \frac{4\vert p_1+s_1 \vert^{1/2}}{\mu_1} \int_{\frac{\sqrt{\mu_2}}{2}}^{s_1} \frac{1}{\vert  q_1-s_1 \vert^{1/2}}  \dd q_1
\leq \frac{8 \left(\frac{\sqrt{\mu_1}}{2}\right)^{1/2} }{\mu_1} \left(\frac{\sqrt{\mu_1}}{2}\right)^{1/2}=\frac{4}{\mu_1^{1/2}}
\end{multline}
For $-s_1-\sqrt{\mu_1}<p_1<-s_1-\frac{\sqrt{\mu_1}}{2}$ and $-p_1-\sqrt{\mu_1}<q_1<s_1$, we have $-p_1q_1-s_1s_2\geq p_1(p_1+\sqrt{\mu_1})-s_1 s_2 =-(p_1+s_1)(s_2-p_1)$.
Hence,
\begin{multline}
\sup_{-s_1-\sqrt{\mu_1}<p_1<-s_1-\frac{\sqrt{\mu_1}}{2}} h(p_1) \int_{-p_1-\sqrt{\mu_1}}^{s_1} \frac{1}{-p_1q_1-s_1s_2} \frac{1}{\vert  q_1-s_1 \vert^{1/2}}  \dd q_1\\
\leq \sup_{-s_1-\sqrt{\mu_1}<p_1<-s_1-\frac{\sqrt{\mu_1}}{2}} \frac{2\vert p_1+\sqrt{\mu_1}+s_1 \vert^{1/2}}{\vert p_1+s_1 \vert^{1/2}(s_2-p_1)}
= \frac{2}{s_2+s_1+\frac{\sqrt{\mu_1}}{2}}\leq \frac{4}{\sqrt{\mu_1}}
\end{multline}
In total, summing the contributions from $q_1>s_1$ and $q_1<s_1$ gives
\begin{equation}\label{D6z}
\lVert D_{\mu_1,\mu_2}^{6} \rVert \leq \frac{6}{\sqrt{\mu_1}}
\end{equation}

\textbf{Region 7:}
By symmetry of the two components of region 7 we have
\begin{align*}
\lVert D_{\mu_1,\mu_2}^7 \rVert &\leq \sup_{-2<p_1<2} h(p_1)\int_{-2}^2 \frac{\chi_7}{p_1^2+q_1^2-s_1^2-s_2^2} \frac{1}{h(q_1)}  \dd q_1\\
&\leq 2 \sup_{-2<p_1<2} \vert \vert p_1\vert-s_2 \vert^{1/2} \int_{\max\{\sqrt{\mu_1}-p_1, \sqrt{\mu_2}+p_1\}}^2 \frac{1}{p_1^2+q_1^2-s_1^2-s_2^2} \frac{1}{\vert q_1-s_1 \vert^{1/2}}  \dd q_1 \nr \label{D7a}
\end{align*}
For $\vert p_1 \vert>s_2$, $q_1>s_1$ we observe $p_1^2+q_1^2-s_1^2-s_2^2 \geq (q_1+s_1)(q_1-s_1) \geq 2s_1 (q_1-s_1)$.
Therefore,
\begin{multline}
\sup_{s_2<\vert p_1 \vert<2} \vert \vert p_1\vert-s_2 \vert^{1/2} \int_{\max\{\sqrt{\mu_1}-p_1, \sqrt{\mu_2}+p_1\}}^2 \frac{1}{p_1^2+q_1^2-s_1^2-s_2^2} \frac{1}{\vert q_1-s_1 \vert^{1/2}}  \dd q_1\\
\leq \sup_{s_2<\vert p_1 \vert<2} \frac{\vert \vert p_1\vert-s_2 \vert^{1/2}}{2s_1} \int_{\max\{\sqrt{\mu_1}-p_1, \sqrt{\mu_2}+p_1\}}^\infty \frac{1}{(q_1-s_1)^{3/2}} \dd q_1\\
=\sup_{s_2<\vert p_1 \vert<2}  \frac{\vert \vert p_1\vert-s_2 \vert^{1/2}}{s_1 (\max\{\sqrt{\mu_1}-p_1, \sqrt{\mu_2}+p_1\}-s_1)^{1/2}}=\frac{1}{s_1}\leq \frac{1}{\sqrt{\mu_1}}.
\end{multline}
For  $\vert p_1 \vert<s_2$, $q_1>s_1$ we have  $(p_1^2+q_1^2-s_1^2-s_2^2)(q_1-s_1)^{1/2} \geq (q_1+\sqrt{s_1^2+s_2^2-p_1^2})(q_1-\sqrt{s_1^2+s_2^2-p_1^2})^{3/2} \geq 2s_1 (q_1-\sqrt{s_1^2+s_2^2-p_1^2})^{3/2} $.
Hence,
\begin{align*}
\sup_{\vert p_1 \vert<s_2} &\vert \vert p_1\vert-s_2 \vert^{1/2} \int_{\max\{\sqrt{\mu_1}-p_1, \sqrt{\mu_2}+p_1\}}^2 \frac{1}{p_1^2+q_1^2-s_1^2-s_2^2} \frac{1}{\vert q_1-s_1 \vert^{1/2}}  \dd q_1\\
&\leq \sup_{\vert p_1 \vert<s_2} \frac{\vert p_1+s_2 \vert^{1/2}}{2s_1} \int_{\sqrt{\mu_2}+p_1}^\infty \frac{1}{(q_1-\sqrt{s_1^2+s_2^2-p_1^2})^{3/2} }\dd q_1\\
&= \sup_{\vert p_1 \vert<s_2} \frac{\vert p_1+s_2 \vert^{1/2}}{s_1} \frac{1}{(\sqrt{\mu_2}+p_1-\sqrt{s_1^2+s_2^2-p_1^2})^{1/2} }\\
&=\sup_{\vert p_1 \vert<s_2} \frac{1}{s_1} \frac{\vert p_1+s_2 \vert^{1/2}(\sqrt{\mu_2}+p_1+\sqrt{s_1^2+s_2^2-p_1^2})^{1/2} }{(p_1+s_1)^{1/2}(p_1+s_2)^{1/2}}\\
&=\sup_{\vert p_1 \vert<s_2} \frac{1}{s_1} \frac{(\sqrt{\mu_2}+p_1+\sqrt{s_1^2+s_2^2-p_1^2})^{1/2} }{(p_1+s_1)^{1/2}}\leq \frac{(\frac{3}{2}+\sqrt{2})^{1/2} s_1^{1/2}}{s_1 \mu_1^{1/4}}\leq  \frac{(\frac{3}{2}+\sqrt{2})^{1/2}}{\mu_1^{1/2}} \nr \label{D7z}
\end{align*}
In total, we obtain $\lVert D^7 \rVert \leq \frac{(6+2\sqrt{2})^{1/2}}{\sqrt{\mu_1}}$.

\textbf{Region 8:}
Taking the supremum separately over the two symmetric components of region 8, we have
\begin{align*}
\lVert D_{\mu_1,\mu_2}^{8} \rVert &\leq \sup_{-2<p_1<2} h(p_1)\int_{-2}^2 \frac{\chi_{8}}{s_1^2+s_2^2-p_1^2-q_1^2} \frac{1}{h(q_1)}  \dd q_1\\
&\leq 2 \sup_{- \frac{\sqrt{\mu_2}}{2}<p_1<\sqrt{\mu_1}-\frac{\sqrt{\mu_2}}{2}} h(p_1) \int_{\sqrt{\mu_2}/2}^{\min\{\sqrt{\mu_2}+p_1,\sqrt{\mu_1}-p_1\}} \frac{1}{s_1^2+s_2^2-p_1^2-q_1^2} \frac{1}{\vert s_1-q_1 \vert^{1/2}}  \dd q_1\\
&\leq 2\sup_{- \frac{\sqrt{\mu_2}}{2}<p_1<\sqrt{\mu_1}-\frac{\sqrt{\mu_2}}{2}}\frac{ h(p_1)}{\sqrt{\mu_2}} \int_{\sqrt{\mu_2}/2}^{\min\{\sqrt{\mu_2}+p_1,\sqrt{\mu_1}-p_1\}} \frac{1}{\sqrt{s_1^2+s_2^2-p_1^2}-q_1} \frac{1}{\vert s_1-q_1 \vert^{1/2}}  \dd q_1, \nr \label{D8a}
\end{align*}
since $\sqrt{s_1^2+s_2^2-p_1^2}+q_1> \sqrt{\frac{\mu_1}{2}+\frac{\mu_2}{2}-\frac{\mu_2}{4}}+\frac{\sqrt{\mu_2}}{2} \geq \sqrt{\mu_2}$.
For $\vert p_1 \vert >s_2$ we have $s_1>\sqrt{s_1^2+s_2^2-p_1^2}$, whereas for $\vert p_1 \vert<s_2$, $s_1<\sqrt{s_1^2+s_2^2-p_1^2}$.
For $p_1<-s_2$ we obtain
\begin{align*}
\sup_{- \frac{\sqrt{\mu_2}}{2}<p_1<-s_2}&\frac{2 h(p_1)}{\sqrt{\mu_2}} \int_{\sqrt{\mu_2}/2}^{\min\{\sqrt{\mu_2}+p_1,\sqrt{\mu_1}-p_1\}} \frac{1}{\sqrt{s_1^2+s_2^2-p_1^2}-q_1} \frac{1}{\vert s_1-q_1 \vert^{1/2}}  \dd q_1\\
&\leq \sup_{- \frac{\sqrt{\mu_2}}{2}<p_1<-s_2}\frac{2 \vert p_1+s_2 \vert^{1/2}}{\sqrt{\mu_2}} \int_{-\infty}^{\sqrt{\mu_2}+p_1} \frac{1}{(\sqrt{s_1^2+s_2^2-p_1^2}-q_1)^{3/2} }\dd q_1\\
&=\sup_{- \frac{\sqrt{\mu_2}}{2}<p_1<-s_2}\frac{4 \vert p_1+s_2 \vert^{1/2}}{\sqrt{\mu_2}(\sqrt{s_1^2+s_2^2-p_1^2}-\sqrt{\mu_2}-p_1)^{1/2}} \\
&\leq \sup_{- \frac{\sqrt{\mu_2}}{2}<p_1<-s_2}\frac{4 (\sqrt{s_1^2+s_2^2-p_1^2}+\sqrt{\mu_2}+p_1)^{1/2}}{2^{1/2}\sqrt{\mu_2}(p_1+s_1)^{1/2}}
\leq \frac{2^{1/2} 4 s_1^{1/2}}{\sqrt{\mu_2}\mu_1^{1/4}}\leq \frac{2^{1/2} 4}{\mu_1^{1/2}} \nr \label{D8b}
\end{align*}
Similarly, for $p_1>s_2$ (which only occurs if $2\sqrt{\mu_2}<3\sqrt{\mu_1}$),
\begin{multline}
\sup_{s_2<p_1<\sqrt{\mu_1}-\frac{\sqrt{\mu_2}}{2}}\frac{2 h(p_1)}{\sqrt{\mu_2}} \int_{\sqrt{\mu_2}/2}^{\min\{\sqrt{\mu_2}+p_1,\sqrt{\mu_1}-p_1\}} \frac{1}{\sqrt{s_1^2+s_2^2-p_1^2}-q_1} \frac{1}{\vert s_1-q_1 \vert^{1/2}}  \dd q_1\\
\leq \sup_{s_2<p_1<\frac{\sqrt{\mu_2}}{2}}\frac{2 \vert p_1-s_2 \vert^{1/2}}{\sqrt{\mu_2}} \int_{-\infty}^{\sqrt{\mu_2}-p_1} \frac{1}{(\sqrt{s_1^2+s_2^2-p_1^2}-q_1)^{3/2} }\dd q_1 \leq \frac{2^{1/2} 4}{\mu_1^{1/2}},
\end{multline}
by \eqref{D8b}.
For $\vert p_1 \vert<s_2$,
\begin{multline}
\sup_{- s_2<p_1<s_2}\frac{2 h(p_1)}{\sqrt{\mu_2}} \int_{\sqrt{\mu_2}/2}^{\sqrt{\mu_1}-p_1} \frac{1}{\sqrt{s_1^2+s_2^2-p_1^2}-q_1} \frac{1}{\vert s_1-q_1 \vert^{1/2}}\dd q_1\\
\leq \sup_{- s_2<p_1<s_2}\frac{2 \vert \vert p_1\vert-s_2 \vert^{1/2}}{\sqrt{\mu_2}} \int_{-\infty}^{\sqrt{\mu_1}-p_1}   \frac{1}{\vert s_1-q_1 \vert^{3/2}}\dd q_1\\
= \sup_{- s_2<p_1<s_2}\frac{4 \vert \vert p_1\vert-s_2 \vert^{1/2}}{\sqrt{\mu_2}\vert s_2+p_1 \vert^{1/2}}=\frac{4}{\sqrt{\mu_2}}
\end{multline}
In total, we have
\begin{equation}\label{D8z}
\lVert D_{\mu_1,\mu_2}^{8} \rVert \leq \frac{2^{1/2} 4}{\mu_1^{1/2}}.
\end{equation}

\textbf{Region 9:}
By taking the supremum separately over the two components of region 9 and using the symmetry in $(p_1,q_1)\to -(p_1,q_1)$, we obtain
\begin{multline}\label{D9a}
\lVert D_{\mu_1,\mu_2}^{9} \rVert \leq \sup_{-2<p_1<2} h(p_1)\int_{-2}^2 \frac{1}{2}\frac{\chi_{9}}{p_1q_1+s_1s_2} \frac{1}{h(q_1)}  \dd q_1\\
\leq \sup_{-s_2<p_1<2} h(p_1) \int_{\max\{\sqrt{\mu_1}-p_1,\sqrt{\mu_2}/2,p_1-\sqrt{\mu_2}\}}^{\min\{p_1+\sqrt{\mu_2},2\}} \frac{1}{p_1q_1+s_1s_2} \frac{1}{\vert  q_1-s_1 \vert^{1/2}}  \dd q_1
\end{multline}
For $p_1>-s_2$ and $\max\{\sqrt{\mu_1}-p_1, \frac{\sqrt{\mu_2}}{2}\}<q_1<\sqrt{\mu_2}+p_1$ note that
\begin{multline}
p_1 q_1+s_1 s_2 \geq \left\{\begin{matrix} p_1(\sqrt{\mu_2}+p_1)+s_1 s_2 = (p_1+s_2)(p_1+s_1) & {\rm if }\ p_1\leq 0\\ p_1(\sqrt{\mu_1}-p_1)+s_1 s_2 = (p_1+s_2)(s_1-p_1) & {\rm if }\ \sqrt{\mu_1}-\frac{\sqrt{\mu_2}}{2}\geq p_1\geq 0\\
 p_1\frac{\sqrt{\mu_2}}{2}+s_1s_2 & {\rm if }\ p_1\geq \max\{\sqrt{\mu_1}-\frac{\sqrt{\mu_2}}{2},0\}\end{matrix} \right\} \\
\geq \frac{\sqrt{\mu_1}}{2}(p_1+s_2)
\end{multline}
Hence,
\begin{equation}\label{D9z}
\lVert D_{\mu_1,\mu_2}^{9} \rVert \leq \sup_{-s_2<p_1<2} \frac{2}{\sqrt{\mu_1}(p_1+s_2)^{1/2}} \int_{\sqrt{\mu_1}-p_1}^{p_1+\sqrt{\mu_2}} \frac{1}{\vert  q_1-s_1 \vert^{1/2}}  \dd q_1=\frac{8}{\sqrt{\mu_1}}
\end{equation}

\textbf{Region 10:}
By symmetry in $p_1$, we have
\begin{multline}\label{D10a}
\lVert D_{\mu_1,\mu_2}^{10} \rVert \leq \sup_{-2<p_1<2} h(p_1)\int_{-2}^2 \frac{\chi_{10}}{p_1^2+q_1^2-s_1^2-s_2^2} \frac{1}{h(q_1)}  \dd q_1\\
= \sup_{s_1<p_1<2} \vert p_1-s_1 \vert^{1/2} \int_{\max\{\sqrt{\mu_1}-p_1,-\frac{\sqrt{\mu_2}}{2}\}}^{\min\{p_1-\sqrt{\mu_2},\frac{\sqrt{\mu_2}}{2}\}} \frac{1}{p_1^2+q_1^2-s_1^2-s_2^2} \frac{1}{\vert \vert q_1\vert-s_2 \vert^{1/2}}  \dd q_1
\end{multline}
If we mirror the part of region 10 with $p_1>0,q_1<0$ along $q_1=0$, its image contains the part of region 10 with $p_1>0,q_1>0$.
Since the integrand is symmetric in $q_1$, we can thus bound
\begin{equation}
\lVert D_{\mu_1,\mu_2}^{10} \rVert \leq \sup_{s_1<p_1<2} 2\vert p_1-s_1 \vert^{1/2} \int_{\max\{\sqrt{\mu_2}-p_1,0\}}^{\min\{p_1-\sqrt{\mu_1},\frac{\sqrt{\mu_2}}{2}\}} \frac{1}{p_1^2+q_1^2-s_1^2-s_2^2} \frac{1}{\vert q_1-s_2 \vert^{1/2}}  \dd q_1
\end{equation}
Note that for $q_1\geq\sqrt{\mu_2}-p_1$, $p_1>s_1$ we have
\begin{multline}
p_1^2+q_1^2-s_1^2-s_2^2=(p_1-s_1)^2+(q_1-s_2)^2+2s_1 (p_1-s_1)+2s_2(q_1-s_2) \\
\geq 2s_1 (p_1-s_1)+2s_2(s_1-p_1) = 2\sqrt{\mu_1}(p_1-s_1).
\end{multline}
Therefore,
\begin{equation}\label{D10z}
\lVert D_{\mu_1,\mu_2}^{10} \rVert \leq \sup_{s_1<p_1<2} \frac{1}{\sqrt{\mu_1}\vert p_1-s_1 \vert^{1/2}} \int_{\sqrt{\mu_2}-p_1}^{p_1-\sqrt{\mu_1}} \frac{1}{\vert q_1-s_2 \vert^{1/2}}  \dd q_1=\frac{4}{\sqrt{\mu_1}}.
\end{equation}

\textbf{Region 11:}
By symmetry in $p_1$, we obtain
\begin{multline}\label{D11a}
\lVert D_{\mu_1,\mu_2}^{11} \rVert \leq \sup_{-2<p_1<2} h(p_1)\int_{-2}^2 \frac{1}{2}\frac{\chi_{11}}{-p_1q_1-s_1s_2} \frac{1}{h(q_1)}  \dd q_1\\
= \sup_{-\mu_1-\frac{\sqrt{\mu_2}}{2}<p_1<-\frac{\sqrt{\mu_2}}{2}}\frac{1}{2} \vert p_1+s_1 \vert^{1/2} \int_{\max\{-\sqrt{\mu_1}-p_1,\sqrt{\mu_2}+p_1\}}^{\frac{\sqrt{\mu_2}}{2}} \frac{1}{-p_1q_1-s_1s_2} \frac{1}{\vert  q_1-s_2 \vert^{1/2}}  \dd q_1
\end{multline}
For $p_1<-s_1$ we have $-p_1q_1-s_1s_2>s_1(q_1-s_2)$.
Hence,
\begin{multline}
\sup_{-\mu_1-\frac{\sqrt{\mu_2}}{2}<p_1<-s_1} \frac{1}{2}\vert p_1+s_1 \vert^{1/2} \int_{-\sqrt{\mu_1}-p_1}^{\frac{\sqrt{\mu_2}}{2}} \frac{1}{-p_1q_1-s_1s_2}\frac{1}{\vert  q_1-s_2 \vert^{1/2}}  \dd q_1\\
\leq \sup_{-\mu_1-\frac{\sqrt{\mu_2}}{2}<p_1<-s_1} \frac{\vert p_1+s_1 \vert^{1/2}}{2s_1} \int_{-\sqrt{\mu_1}-p_1}^{\infty} \frac{1}{\vert  q_1-s_2 \vert^{3/2}}  \dd q_1
=\frac{1}{s_1} \leq \frac{1}{\sqrt{\mu_1}}
\end{multline}
For $p_1>-s_1$, we carry out the integration
\begin{multline}
\sup_{-s_1<p_1<-\frac{\sqrt{\mu_2}}{2}} \frac{1}{2}\vert p_1+s_1 \vert^{1/2} \int_{\sqrt{\mu_2}+p_1}^{\frac{\sqrt{\mu_2}}{2}} \frac{1}{-p_1q_1-s_1s_2}\frac{1}{\vert  q_1-s_2 \vert^{1/2}}  \dd q_1\\
\leq \sup_{-s_1<p_1<-\frac{\sqrt{\mu_2}}{2}} \frac{1}{\vert p_1 \vert^{1/2} s_2^{1/2}} \artanh\left(\frac{ s_2^{1/2}}{\vert p_1 \vert^{1/2}}\right)=\frac{2^{1/2}}{\mu_2^{1/4}s_2^{1/2}} \artanh\left(\frac{2^{1/2} s_2^{1/2}}{\mu_2^{1/4}}\right)
\end{multline}
With $\artanh(x)\leq \frac{x}{1-x}$, we obtain
\begin{multline}\label{D11z}
\frac{2^{1/2}}{\mu_2^{1/4}s_2^{1/2}} \artanh\left(\frac{2^{1/2} s_2^{1/2}}{\mu_2^{1/4}}\right)\leq \frac{2^{1/2}}{\mu_2^{1/4}s_2^{1/2}} \frac{s_2^{1/2}}{\frac{\mu_2^{1/4}}{2^{1/2}}-s_2^{1/2}}
=\frac{2^{1/2}}{\mu_2^{1/4}} \frac{\frac{\mu_2^{1/4}}{2^{1/2}}+s_2^{1/2}}{\frac{\mu_1^{1/2}}{2}}\leq \frac{4}{\mu_1^{1/2}}
\end{multline}
Therefore, $\lVert D_{\mu_1,\mu_2}^{11} \rVert\leq \frac{4}{\mu_1^{1/2}}$.

\textbf{Region 12:}
By symmetry in $p_1$, we obtain
\begin{multline}\label{D12a}
\lVert D_{\mu_1,\mu_2}^{12} \rVert \leq \sup_{-2<p_1<2} h(p_1)\int_{-2}^2 \frac{1}{2}\frac{\chi_{12}}{p_1q_1+s_1s_2} \frac{1}{h(q_1)}  \dd q_1\\
= \sup_{-\sqrt{\mu_2}<p_1<-\sqrt{\mu_1}} \frac{1}{2} h(p_1) \int_{0}^{\min\{p_1+\sqrt{\mu_2},-\sqrt{\mu_1}-p_1\}} \frac{1}{p_1q_1+s_1s_2} \frac{1}{\vert  s_2-q_1 \vert^{1/2}}  \dd q_1
\end{multline}
For $p_1\geq -s_1$ note that $p_1 q_1+s_1 s_2 \geq  s_1(s_2-q_1) \geq  \frac{\sqrt{\mu_1}}{2}( s_2-q_2)$.
For $p_1\leq -s_1$ and $q_1<p_1+\sqrt{\mu_2}$ observe that
\begin{multline}
p_1 q_1+s_1 s_2 = (-p_1-s_1)(s_2-q_1)+s_1(s_2-q_1)+s_2(p_1+s_1) \\
\geq \frac{\sqrt{\mu_1}}{2}(s_2-q_1)+\frac{\sqrt{\mu_2}}{2}(s_2-q_1)+s_2(q_1-\sqrt{\mu_2}+s_1)=\frac{\sqrt{\mu_1}}{2}(s_2-q_1)+\frac{\sqrt{\mu_2}}{2}(s_2-q_1)-s_2(s_2-q_1)\\
\geq \frac{\sqrt{\mu_1}}{2}(s_2-q_1)
\end{multline}
Therefore,
\begin{equation}\label{D12z}
\lVert D_{\mu_1,\mu_2}^{12} \rVert \leq \sup_{-\sqrt{\mu_2}<p_1<-\sqrt{\mu_1}} \frac{\vert p_1+s_1 \vert^{1/2}}{\sqrt{\mu_1}} \int_{-\infty}^{\min\{p_1+\sqrt{\mu_2},-\sqrt{\mu_1}-p_1\}}  \frac{1}{\vert  s_2-q_1 \vert^{3/2}}  \dd q_1
=\frac{2}{\sqrt{\mu_1}}
\end{equation}

\textbf{Region 13:}
By symmetry under $(p_1,q_1)\to -(p_1,q_1)$, we obtain
\begin{multline}\label{D13a}
\lVert D_{\mu_1,\mu_2}^{13} \rVert \leq \sup_{-2<p_1<2} h(p_1)\int_{-2}^2 \frac{1}{2}\frac{\chi_{13}}{p_1q_1+s_1s_2} \frac{1}{h(q_1)}  \dd q_1\\
= \sup_{-\frac{\sqrt{\mu_2}}{2}+\sqrt{\mu_1}<p_1<\frac{3\sqrt{\mu_2}}{2}} h(p_1) \int_{\max\{\sqrt{\mu_1}-p_1,0, -\sqrt{\mu_2}+p_1\}}^{\frac{\sqrt{\mu_2}}{2}} \frac{1}{p_1q_1+s_1s_2} \frac{1}{\vert  s_2-q_1 \vert^{1/2}}  \dd q_1
\end{multline}
For $p_1>\sqrt{\mu_1}, q_1>0$, we have $p_1q_1+s_1 s_2 \geq \sqrt{\mu_1}(q_1+s_2)$.
Therefore,
\begin{multline}
\sup_{\sqrt{\mu_1}<p_1<\sqrt{\mu_2}+s_2} h(p_1) \int_{\max\{\sqrt{\mu_1}-p_1,0, -\sqrt{\mu_2}+p_1\}}^{\frac{\sqrt{\mu_2}}{2}} \frac{1}{p_1q_1+s_1s_2} \frac{1}{\vert  s_2-q_1 \vert^{1/2}}  \dd q_1\\
\leq \sup_{\sqrt{\mu_1}<p_1<\sqrt{\mu_2}+s_2} \frac{ \vert p_1-s_1 \vert^{1/2}}{\sqrt{\mu_1}} \int_{0}^{\infty} \frac{1}{q_1+s_2} \frac{1}{\vert  s_2-q_1 \vert^{1/2}}  \dd q_1\\
=\frac{2^{1/2}}{\sqrt{\mu_1}} \int_{0}^{\infty} \frac{1}{q_1+1} \frac{1}{\vert  1-q_1 \vert^{1/2}}  \dd q_1
\leq \frac{2^{1/2}}{\sqrt{\mu_1}}\left[  \int_{0}^{2}\frac{1}{\vert  1-q_1 \vert^{1/2}}  \dd q_1+\int_{2}^{\infty} \frac{1}{\vert  q_1-1 \vert^{3/2}}  \dd q_1 \right]
=\frac{2^{1/2}6}{\sqrt{\mu_1}}
\end{multline}
and
\begin{align*}
&\sup_{\sqrt{\mu_2}+s_2<p_1<\frac{3\sqrt{\mu_2}}{2}} h(p_1) \int_{\max\{\sqrt{\mu_1}-p_1,0, -\sqrt{\mu_2}+p_1\}}^{\frac{\sqrt{\mu_2}}{2}} \frac{1}{p_1q_1+s_1s_2} \frac{1}{\vert  s_2-q_1 \vert^{1/2}}  \dd q_1\\
&\leq \sup_{\sqrt{\mu_2}+s_2<p_1<\frac{3\sqrt{\mu_2}}{2}} \frac{ \vert p_1-s_1 \vert^{1/2}}{\sqrt{\mu_1}} \int_{-\sqrt{\mu_2}+p_1}^{\infty} \frac{1}{q_1+s_2}  \frac{1}{\vert  q_1-s_2 \vert^{1/2}}  \dd q_1\\
&=\sup_{2s_2<x<\mu_2-\frac{\sqrt{\mu_1}}{2}} \frac{ \vert x \vert^{1/2}}{\sqrt{\mu_1}} \int_{x}^{\infty} \frac{1}{y}  \frac{1}{\vert  y-2s_2 \vert^{1/2}}  \dd y
=\sup_{2s_2<x<\mu_2-\frac{\sqrt{\mu_1}}{2}} \frac{1}{\sqrt{\mu_1}} \int_{1}^{\infty} \frac{1}{y}  \frac{1}{\vert  y-\frac{2s_2}{x} \vert^{1/2}}  \dd y\\
&=\frac{1}{\sqrt{\mu_1}} \int_{1}^{\infty} \frac{1}{y}  \frac{1}{\vert  y-1 \vert^{1/2}}  \dd y\leq \frac{1}{\sqrt{\mu_1}}\left[\int_{1}^{2}\frac{1}{\vert  y-1 \vert^{1/2}}\dd y+ \int_{2}^{\infty} \frac{1}{\vert  y-1 \vert^{3/2}}\dd y\right]=\frac{4}{\sqrt{\mu_1}}, \nr
\end{align*}
where we substituted $x=p_1-s_1$ and $y=q_1+s_2$.
Next, we consider the case $p_1<\frac{\sqrt{\mu_1}}{2}$.
For $\frac{\sqrt{\mu_2}}{2}\geq q_1 \geq \sqrt{\mu_1}-p_1$ and $-s_2<p_1<\frac{\sqrt{\mu_1}}{2}$ we have
\begin{multline}
p_1 q_1+s_1 s_2
\geq\left\{\begin{matrix} \frac{\sqrt{\mu_1}}{2}(p_1+s_2)& {\rm if}\ p_1>0\\(s_1-q_1)(p_1+s_2)-p_1(s_1-q_1)+q_1(p_1+s_2)   & {\rm if}\ p_1<0
\end{matrix}\right.\\
\geq \left\{\begin{matrix}\frac{\sqrt{\mu_1}}{2}(p_1+s_2) & {\rm if}\ p_1>0\\ (s_1-q_1)(p_1+s_2) & {\rm if}\ p_1<0
\end{matrix}\right.
\geq \frac{\sqrt{\mu_1}}{2}(p_1+s_2)
\end{multline}
Therefore,
\begin{multline}\label{D13b}
\sup_{-\frac{\sqrt{\mu_2}}{2}+\sqrt{\mu_1}<p_1<\frac{\sqrt{\mu_1}}{2}} h(p_1) \int_{\max\{\sqrt{\mu_1}-p_1,0, -\sqrt{\mu_2}+p_1\}}^{\frac{\sqrt{\mu_2}}{2}} \frac{1}{p_1q_1+s_1s_2} \frac{1}{\vert  s_2-q_1 \vert^{1/2}}  \dd q_1\\
\leq \sup_{-\frac{\sqrt{\mu_2}}{2}+\sqrt{\mu_1}<p_1<\frac{\sqrt{\mu_1}}{2}} \frac{2 h(p_1)}{\sqrt{\mu_1}(p_1+s_2)} \int_{\sqrt{\mu_1}-p_1}^{\frac{\sqrt{\mu_2}}{2}} \frac{1}{\vert  s_2-q_1 \vert^{1/2}}  \dd q_1\\
\leq\sup_{-\frac{\sqrt{\mu_2}}{2}+\sqrt{\mu_1}<p_1<\frac{\sqrt{\mu_1}}{2}} \frac{4}{\sqrt{\mu_1}(p_1+s_2)^{1/2}} \left\{\begin{matrix}\left(\frac{\sqrt{\mu_1}}{2}\right)^{1/2} & {\rm if}\ \sqrt{\mu_1}-p_1>s_2\\
\left(\frac{\sqrt{\mu_1}}{2}\right)^{1/2}+ (s_2-\sqrt{\mu_1}+p_1)^{1/2}& {\rm if}\ \sqrt{\mu_1}-p_1<s_2\end{matrix} \right.%\\
\end{multline}
Note that
$%\begin{equation}
\sup_{-\frac{\sqrt{\mu_2}}{2}+\sqrt{\mu_1}<p_1<\frac{\sqrt{\mu_1}}{2}} (p_1+s_2)^{-1/2} \left(\frac{\sqrt{\mu_1}}{2}\right)^{1/2} =1
$ %\end{equation}
and
that for $p_1>\sqrt{\mu_1}-s_2$ we have $\left \vert  \frac{s_2-\sqrt{\mu_1}+p_1}{p_1+s_2} \right \vert\leq 1$.
One can hence bound \eqref{D13b} above by $\frac{8}{\sqrt{\mu_1}}$.

For $q_1 \geq 0$ and $p_1>\frac{\sqrt{\mu_1}}{2}$ we have $p_1 q_1+s_1 s_2 \geq \frac{\sqrt{\mu_1}}{2}(q_1+s_2) $.
Therefore,
\begin{multline}
\sup_{\frac{\sqrt{\mu_1}}{2}<p_1<\sqrt{\mu_1}} h(p_1) \int_{\max\{\sqrt{\mu_1}-p_1,0, -\sqrt{\mu_2}+p_1\}}^{\frac{\sqrt{\mu_2}}{2}} \frac{1}{p_1q_1+s_1s_2} \frac{1}{\vert  s_2-q_1 \vert^{1/2}}  \dd q_1\\
\leq \sup_{\frac{\sqrt{\mu_1}}{2}<p_1<\sqrt{\mu_1}} \frac{2h(p_1)}{\sqrt{\mu_1}} \int_{\sqrt{\mu_1}-p_1}^{\infty} \frac{1}{(q_1+s_2)\vert  s_2-q_1 \vert^{1/2}}  \dd q_1\\
=\sup_{\frac{\sqrt{\mu_1}}{2}<p_1<\sqrt{\mu_1}} \frac{4h(p_1)}{\sqrt{\mu_1} \sqrt{2s_2}} \left\{\begin{matrix}  \artanh \left(\sqrt{\frac{s_2-\sqrt{\mu_1}+p_1}{2 s_2}}\right)+\pi&{\rm if}\ s_2>\sqrt{\mu_1}-p_1\\ \arctan \left(\sqrt{\frac{2 s_2}{\sqrt{\mu_1}-p_1-s_2}}\right) & {\rm if}\ s_2<\sqrt{\mu_1}-p_1\end{matrix}\right.
\end{multline}
We estimate the two cases separately:
\begin{multline}
\sup_{\sqrt{\mu_1}-s_2<p_1<\sqrt{\mu_1}} \frac{4h(p_1)}{\sqrt{\mu_1} \sqrt{2s_2}} \left[\artanh \left(\sqrt{\frac{s_2-\sqrt{\mu_1}+p_1}{2 s_2}}\right)+\pi\right]\\
\leq \frac{4 \vert s_1-\sqrt{\mu_1}+s_2 \vert^{1/2}}{\sqrt{\mu_1} \sqrt{2s_2}} \left[\artanh \left(\frac{1}{\sqrt{2}}\right)+\pi\right]=4\frac{\artanh \left(\frac{1}{\sqrt{2}}\right)+\pi}{\sqrt{\mu_1}}
\end{multline}
and
\begin{align*}
\sup_{\frac{\sqrt{\mu_1}}{2}<p_1<\sqrt{\mu_1}-s_2}& \frac{4h(p_1)}{\sqrt{\mu_1} \sqrt{2s_2}} \arctan \left(\sqrt{\frac{2 s_2}{\sqrt{\mu_1}-p_1-s_2}}\right) \\
&\leq \frac{4}{\sqrt{\mu_1}}\sup_{\frac{\sqrt{\mu_1}}{2}<p_1<\sqrt{\mu_1}-s_2}\Bigg[ \frac{\vert s_1-p_1 \vert^{1/2}-\vert \sqrt{\mu_1}-p_1-s_2 \vert^{1/2}}{ \sqrt{2s_2}}\frac{\pi}{2} \\
&\qquad + \frac{\vert \sqrt{\mu_1}-p_1-s_2 \vert^{1/2}}{\sqrt{2s_2}} \arctan \left(\sqrt{\frac{2 s_2}{\sqrt{\mu_1}-p_1-s_2}}\right)\Bigg]\\
&\leq \frac{4}{\sqrt{\mu_1}}\sup_{\frac{\sqrt{\mu_1}}{2}<p_1<\sqrt{\mu_1}-s_2} \frac{\sqrt{2s_2}}{\vert s_1-p_1 \vert^{1/2}+\vert \sqrt{\mu_1}-p_1-s_2 \vert^{1/2}}\frac{\pi}{2} +1 \leq \frac{4 (\frac{\pi}{2}+1)}{\sqrt{\mu_1}} \nr
\end{align*}

In total, we obtain
\begin{multline}\label{D13z}
\lVert D_{\mu_1,\mu_2}^{13} \rVert \leq \max\left\{\frac{2^{1/2} 6}{\mu_1^{1/2}},\frac{4}{\mu_1^{1/2}},\frac{8}{\mu_1^{1/2}},\frac{4(\artanh(1/\sqrt{2})+\pi)}{\mu_1^{1/2}},\frac{4 (\pi/2+1)}{\mu_1^{1/2}}\right\}\\
= \frac{4(\artanh(1/\sqrt{2})+\pi)}{\mu_1^{1/2}}
\end{multline}

\textbf{Region 14:}
By symmetry in $p_1$, we have
\begin{align*}
\lVert D_{\mu_1,\mu_2}^{14} \rVert &\leq \sup_{-2<p_1<2} h(p_1)\int_{-2}^2 \frac{\chi_{14}}{s_1^2+s_2^2-p_1^2-q_1^2} \frac{1}{h(q_1)}  \dd q_1\\
&= \sup_{0<p_1<s_1}h(p_1) \int_{\max\{-\sqrt{\mu_1}-p_1,-\sqrt{\mu_2}/2,-\sqrt{\mu_2}+p_1\}}^{\min\{\sqrt{\mu_1}-p_1,\frac{\sqrt{\mu_2}}{2}\}} \frac{1}{s_1^2+s_2^2-p_1^2-q_1^2} \frac{1}{\vert \vert q_1\vert-s_2 \vert^{1/2}}  \dd q_1\\
&\leq \sup_{0<p_1<s_1}2 h(p_1) \int_{\max\{0,p_1-\sqrt{\mu_1}\}}^{\min\{\sqrt{\mu_1}+p_1,\frac{\sqrt{\mu_2}}{2},\sqrt{\mu_2}-p_1\}} \frac{1}{s_1^2+s_2^2-p_1^2-q_1^2} \frac{1}{\vert \vert q_1\vert-s_2 \vert^{1/2}}  \dd q_1, \nr\label{D14a}
\end{align*}
where in the last inequality we increased the domain to be symmetric in $q_1$ and used the symmetry of the integrand.

For $p_1 \leq s_2$ and $\sqrt{\mu_1}+p_1>q_1$ we have $s_1^2+s_2^2-p_1^2-q_1^2 \geq s_1^2+s_2^2-p_1^2-(\sqrt{\mu_1}+p_1)^2=2 (s_2-p_1)(p_1+s_1)$.
Hence,
\begin{multline}
\sup_{0<p_1<\frac{\sqrt{\mu_2}}{2}-\sqrt{\mu_1}}2 h(p_1) \int_{0}^{\sqrt{\mu_1}+p_1} \frac{1}{s_1^2+s_2^2-p_1^2-q_1^2} \frac{1}{\vert \vert q_1\vert-s_2 \vert^{1/2}}  \dd q_1\\
\leq \sup_{0<p_1<\frac{\sqrt{\mu_2}}{2}-\sqrt{\mu_1}} \frac{1}{(s_2-p_1)^{1/2}(p_1+s_1)}  \int_{0}^{\sqrt{\mu_1}+p_1} \frac{1}{\vert \vert q_1\vert-s_2 \vert^{1/2}}  \dd q_1\\
=\sup_{0<p_1<\frac{\sqrt{\mu_2}}{2}-\sqrt{\mu_1}}\frac{2(s_2^{1/2}+(p_1+\sqrt{\mu_1}-s_2)^{1/2})}{(s_2-p_1)^{1/2}(p_1+s_1)}  \leq \frac{2(s_2^{1/2}+\left(\frac{\sqrt{\mu_1}}{2}\right)^{1/2})}{\left(\frac{\sqrt{\mu_1}}{2}\right)^{1/2}s_1}\leq \frac{4\left(\frac{\sqrt{\mu_2}}{2}\right)^{1/2}}{\left(\frac{\sqrt{\mu_1}}{2}\right)^{1/2}\frac{\sqrt{\mu_2}}{2}}\leq \frac{8}{\sqrt{\mu_1}}
\end{multline}
Similarly, for $p_1 \geq s_2$ and $\sqrt{\mu_2}-p_1>q_1$ we have $s_1^2+s_2^2-p_1^2-q_1^2 \geq s_1^2+s_2^2-p_1^2-(\sqrt{\mu_2}-p_1)^2=2 (s_1-p_1)(p_1-s_2)$.
Therefore,
\begin{multline}
\sup_{\frac{\sqrt{\mu_2}}{2}<p_1<s_1}2 h(p_1) \int_{p_1-\sqrt{\mu_1}}^{\sqrt{\mu_2}-p_1} \frac{1}{s_1^2+s_2^2-p_1^2-q_1^2} \frac{1}{\vert  q_1-s_2 \vert^{1/2}}  \dd q_1 \\
\leq \sup_{\frac{\sqrt{\mu_2}}{2}<p_1<s_1} \frac{1}{(s_1-p_1)^{1/2}(p_1-s_2)}  \int_{p_1-\sqrt{\mu_1}}^{\sqrt{\mu_2}-p_1} \frac{1}{\vert  q_1-s_2 \vert^{1/2}}  \dd q_1
=\sup_{\frac{\sqrt{\mu_2}}{2}<p_1<s_1} \frac{4}{p_1-s_2} =\frac{8}{\sqrt{\mu_1}}.
\end{multline}
For $\frac{\sqrt{\mu_2}}{2}-\sqrt{\mu_1}\leq p_1\leq \frac{\sqrt{\mu_2}}{2}$ and $q_1<\frac{\sqrt{\mu_2}}{2}$, we have $s_1^2+s_2^2-p_1^2-q_1^2\geq \frac{\mu_1}{2}$.
Thus,
\begin{multline}
\sup_{\frac{\sqrt{\mu_2}}{2}-\sqrt{\mu_1}< p_1<\frac{\sqrt{\mu_2}}{2}}2 h(p_1) \int_{\max\{0,p_1-\sqrt{\mu_1}\}}^{\frac{\sqrt{\mu_2}}{2}} \frac{1}{s_1^2+s_2^2-p_1^2-q_1^2} \frac{1}{\vert  q_1-s_2 \vert^{1/2}}  \dd q_1 \\
\leq \frac{4}{\mu_1}  \left(\frac{\sqrt{\mu_1}}{2}\right)^{1/2} \int_{\frac{\sqrt{\mu_2}}{2}-2\sqrt{\mu_1}}^{\frac{\sqrt{\mu_2}}{2}}\frac{1}{\vert  q_1-s_2 \vert^{1/2}}  \dd q_1
=\frac{8 \left(\left(\frac{3\sqrt{\mu_1}}{2}\right)^{1/2}+\left(\frac{\sqrt{\mu_1}}{2}\right)^{1/2}\right)}{2^{1/2}\mu_1^{3/4}}%\leq \frac{8}{\mu_1^{3/4}}
\leq  \frac{4}{\mu_1^{1/2}}\left(\sqrt{3}+1\right)
\end{multline}
In total, we have
\begin{equation}\label{D14z}
\lVert D_{\mu_1,\mu_2}^{14} \rVert \leq \frac{4}{\mu_1^{1/2}}\left(\sqrt{3}+1\right)% \frac{16}{\mu_1^{3/4}}.
\end{equation}
\end{proof}

\subsection{Proof of Lemma~\ref{mu1mu2_bound}}\label{pf:mu1mu2_bound}
\begin{proof}[Proof of Lemma~\ref{mu1mu2_bound}]
The integral in \eqref{mu1mu2_bound_eq} is invariant under rotations of $\tilde q$.
Therefore, it suffices to take the supremum over $\tilde q=q_2\geq 0$ for $d=2$ and $\tilde q =(q_2,0)$ with $q_2\geq 0$ for $d=3$.
Furthermore, it suffices to restrict to $p_2\geq 0$ since the integrand is invariant under $\tilde p \to -\tilde p$.
Note that under these conditions $\mu_1 \leq \mu_2$.
We split the domain of integration in \eqref{mu1mu2_bound_eq} into two regions according to $\mu_1=\min\{\mu_1,\mu_2\} \lessgtr 0$.

{\bf Dimension three:}
We first consider the case $\mu_1<0$, i.e.~$\vert p_2+q_2\vert^2>1-p_3^2$.
In this case,
\begin{multline}\label{pf:mu1mu2_bound.1}
\sup_{\tilde q=(q_2,0), q_2\geq 0}\int_{\BR^{2}} \chi_{\vert \tilde p \vert<2 } \chi_{p_2\geq 0}\frac{\chi_{\min\{\mu_1,\mu_2\}<0}}{(-\min\{\mu_1,\mu_2\})^{\alpha}}\dd \tilde p\\
=\sup_{q_2\geq 0}\Bigg[\int_{-1}^1\dd p_3  \int_{\max\{\sqrt{1-p_3^2}-q_2,0\}}^{\sqrt{4-p_3^2}} \frac{1}{((p_2+q_2)^2+p_3^2-1)^{\alpha}}\dd p_2 \\
+\int_{1<|p_3|<2}\dd p_3 \int_{0}^{\sqrt{4-p_3^2}} \frac{1}{((p_2+q_2)^2+p_3^2-1)^\alpha}\dd p_2\Bigg]
\end{multline}
Let $q_2$ and $|p_3|\leq 1$ be fixed.
By substituting $x=p_2+q_2-\sqrt{1-p_3^2}$ if $q_2\leq\sqrt{1-p_3^2}$ one obtains
\begin{multline}\label{pf:mu1mu2_bound.2}
\int_{\max\{\sqrt{1-p_3^2}-q_2,0\}}^{2} \frac{1}{((p_2+q_2)^2+p_3^2-1)^{\alpha}}\dd p_2\\
\leq\int_{0}^{2} \frac{ \chi_{q_2\leq\sqrt{1-p_3^2}}}{(x+\sqrt{1-p_3^2})^2+p_3^2-1)^{\alpha}}\dd x
+\int_{0}^{2} \frac{\chi_{q_2>\sqrt{1-p_3^2}}}{((p_2+\sqrt{1-p_3^2})^2+p_3^2-1)^{\alpha}}\dd p_2\\
\leq
\int_{0}^{2} \frac{1}{(2p_2\sqrt{1-p_3^2})^{\alpha}}\dd p_2 \leq \frac{C}{(1-p_3^2)^{\alpha/2}}
\end{multline}
for some finite constant $C$.
Since $\int_{-1}^1(1-p_3^2)^{-\alpha/2}\dd p_3<\infty$, the first term in \eqref{pf:mu1mu2_bound.1} is bounded.
The second term is bounded by
\begin{equation}
\int_{1<|p_3|<2}\dd p_3 \int_{0}^{2} \frac{1}{(p_3^2-1)^\alpha}\dd p_2<\infty.
\end{equation}

For the case $\mu_1>0$ we have $\vert p_2+q_2\vert^2<1-p_3^2$.
Hence,
\begin{multline}\label{pf:mu1mu2_bound.3}
\sup_{{\tilde q=(q_2,0), q_2\geq 0}} \int_{\BR^{2}} \chi_{\vert \tilde p \vert<2 }\chi_{p_2\geq 0} \frac{\chi_{0<\min\{\mu_1,\mu_2\}}}{\min\{\mu_1,\mu_2\}^{\alpha}}\dd \tilde p\\
=\sup_{q_2\geq 0} \int_{-1}^1 \dd p_3 \chi_{q_2\leq \sqrt{1-p_3^2}} \int_0^{\sqrt{1-p_3^2}-q_2} \frac{1}{(1-(p_2+q_2)^2-p_3^2)^\alpha} \dd p_2
\end{multline}
For fixed $|p_3|<1$ and $q_2\leq \sqrt{1-p_3^2}$ substituting $x=\sqrt{1-p_3^2}-q_2-p_2$ gives
\begin{multline}\label{pf:mu1mu2_bound.4}
\int_0^{\sqrt{1-p_3^2}-q_2} \frac{1}{(1-(p_2+q_2)^2-p_3^2)^\alpha} \dd p_2
=\int_0^{\sqrt{1-p_3^2}-q_2} \frac{1}{(1-(\sqrt{1-p_3^2}-x)^2-p_3^2)^\alpha} \dd x\\
=\int_0^{\sqrt{1-p_3^2}-q_2} \frac{1}{x^\alpha (2\sqrt{1-p_3^2}-x)^\alpha} \dd x.
\end{multline}
Thus the expression in \eqref{pf:mu1mu2_bound.3} is bounded by
\begin{multline}
\sup_{q_2\geq 0} \int_{-1}^1 \dd p_3 \chi_{q_2\leq \sqrt{1-p_3^2}}  \int_0^{\sqrt{1-p_3^2}-q_2} \frac{1}{x^\alpha (\sqrt{1-p_3^2}+q_2)^\alpha} \dd x\\
\leq \int_{-1}^1 \dd p_3  \int_0^{1} \frac{1}{x^\alpha (\sqrt{1-p_3^2})^\alpha}\dd x<\infty
\end{multline}

{\bf Dimension two:}
For the case $\mu_1<0$ we have $\vert p_2+q_2\vert>1$.
Hence,
\begin{equation}
\sup_{q_2\geq 0} \int_{0}^2\frac{\chi_{\min\{\mu_1,\mu_2\}<0}}{(-\min\{\mu_1,\mu_2\})^{\alpha}}\dd p_2=\sup_{q_2\geq 0}\int_{\max\{1-q_2,0\}}^{2} \frac{1}{((p_2+q_2)^2-1)^\alpha} \dd p_2.
\end{equation}
This is finite according to \eqref{pf:mu1mu2_bound.2}.

For the case $\mu_1>0$,
\begin{equation}\label{dj_bound_2}
\sup_{q_2\geq 0}\int_{0}^2\frac{\chi_{0<\min\{\mu_1,\mu_2\}}}{\min\{\mu_1,\mu_2\}^{\alpha}}\dd p_2=\sup_{0 \leq q_2\leq 1}\int_{0}^{1-q_2}\frac{1}{(1-(p_2+q_2)^2)^\alpha}\dd p_2
=\int_{0}^{1}\frac{1}{x^\alpha(2-x)^\alpha}\dd x <\infty,
\end{equation}
where we used  \eqref{pf:mu1mu2_bound.4} in the second equality.
\end{proof}

\subsection{Proof of Lemma~\ref{lea:IGIto0}}\label{sec:pf_lea:IGIto0}
\begin{proof}[Proof of Lemma~\ref{lea:IGIto0}]
The proof follows from elementary computations.
We carry out the case $d=2$ and leave the case $d=3$, where  one additional integration over $q_3$ needs to be performed, to the reader.

By symmetry, we may restrict to $p_1,q_1,p_2\geq 0$.
Furthermore, we will partition the remaining domain of $p_2,q_2$ into nine subdomains.
Let $\chi_j$ be the characteristic function of domain $j$.
Since $(a+b)^2\leq 2(a^2+b^2)$, there is a constant $C$ such that the expression in \eqref{pf_2d_w_lea4.2} is bounded above by $C \sum_{j=1}^9 \lim_{\eps\to0} I_j$, where
\begin{equation}
I_j= \sup_{0\leq p_2<\eps}\int_{\BR^2} \chi_{0<p_1,q_1<\eps} \left[\int_{-\sqrt{2}}^{\sqrt{2}} \frac{2\chi_j}{\vert (p+q)^2-1 \vert+ \vert (p-q)^2-1 \vert} \dd q_2 \right]^2 \dd p_1 \dd q_1.
\end{equation}
Hence, we can consider the domains case by case and prove that $\lim_{\eps\to0}  I_j=0$ for each of them.

\begin{figure}
\centering
\begin{tikzpicture}[scale=0.35]
\tikzmath{\smu1=3.5;\smu2=8.5;\l=15;\t=0.3;}

\draw (0,\l) node[left]{$q_2$};
\draw (\l,0) node[below]{$p_2$};

%region labels
\draw (\smu2/2-\smu1/2,1.75*\smu2-\smu1/4) node{1};
\draw (\smu2/4-\smu1/4,-\smu2-\smu1) node{2};
\draw (\smu2/2-\smu1/2+0.5*\smu1,-\smu2-\smu1) node{3};
\draw (1.5*\smu2,-0.5*\smu2) node{4};
\draw (\smu2,\smu2) node{5};
\draw (\smu2/4-\smu1/4,-0.5*\smu2-0.5*\smu1) node{6};
\draw (\smu2/2+\smu1/2,-0.5*\smu2-0.5*\smu1) node{7};
\draw (\smu2/2-\smu1/2-0.5*\smu1,-0.5*\smu2+0.5*\smu1) node{8};
\draw (\smu2/2-\smu1/2+0.5*\smu1,-0.5*\smu2+0.5*\smu1) node{9};

 \draw[black, dashed, decoration={markings, mark=at position 1 with {\arrow[scale=2,>=stealth]{>}}},
        postaction={decorate}]
 (0,0)--(\l,0);
 \draw[black, dashed, decoration={markings, mark=at position 1 with {\arrow[scale=2,>=stealth]{>}}},
        postaction={decorate}]
 (0,-\l)--(0,\l);

%lines
 \draw[black, thick]
(\smu2/2-\smu1/2,3*\smu1/2-\smu2/2)--(\smu2/2-\smu1/2,-\l)
(0,\smu1)--(\l,\smu1-\l)
(0,-\smu1)--(-\smu1+\l,-\l)
(0,-\smu2)--(\l,\l-\smu2)
(0,\smu2)--(-\smu2+\l,\l);

%line labels
\draw (\l,\smu1-\l) node[right]{$q_2=\sqrt{\mu_1}-p_2$};
\draw (\l-\smu1,-\l) node[right]{$q_2=-\sqrt{\mu_1}-p_2$};
\draw (-\smu2+\l,\l) node[right]{$q_2=\sqrt{\mu_2}+p_2$};
\draw (\l,\l-\smu2) node[right]{$q_2=-\sqrt{\mu_2}+p_2$};

\end{tikzpicture}
\caption{Domains occurring in the proof of Lemma~\ref{lea:IGIto0}.}
\label{IGIto0_domains}
\end{figure}
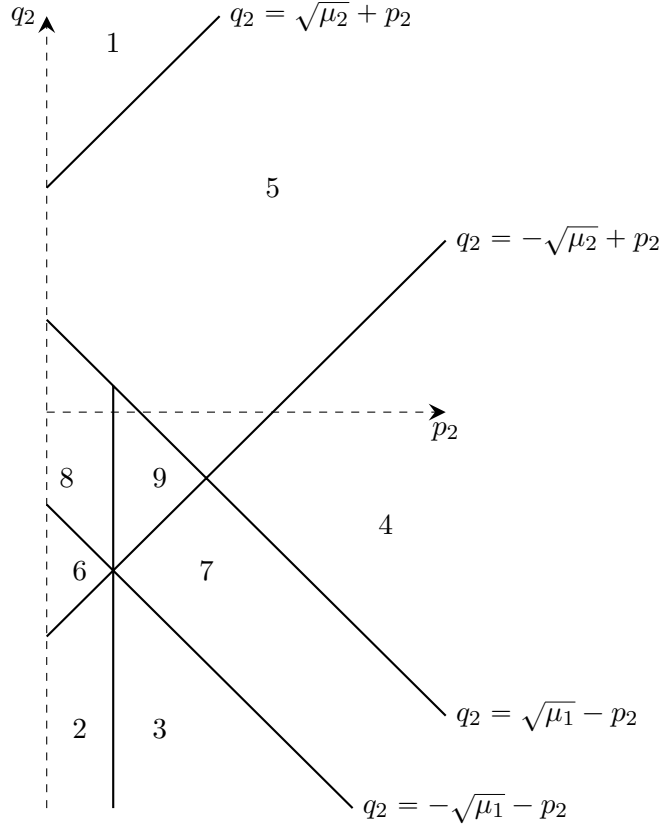

We use the notation $\mu_1=1-(p_1+q_1)^2$ and $\mu_2=1-(p_1-q_1)^2$. (Note that this differs from the notation in Lemma~\ref{Dmu1mu2_bound}).
Since $p_1,q_1\geq 0$ we have $\mu_2\geq \mu_1$.
We assume that $\eps<1/4$, and thus for $p_1,q_1<\eps$ we have $\mu_1,\mu_2>1-4\eps^2>3/4$.

For fixed $0<p_1,q_1<\eps$, we choose the subdomains for $p_2,q_2$ as sketched in Figure~\ref{IGIto0_domains}.
The subdomains are chosen according to the signs of $(p_2+q_2)^2-\mu_1$ and $(p_2-q_2)^2-\mu_2$, and to distinguish which of $-\sqrt{\mu_1}-p_2,-\sqrt{\mu_2}+p_2$ is larger.

We start with domains 1 to 4, where $(p+q)^2-1=(p_2+q_2)^2-\mu_1>0$ and $(p-q)^2-1=(p_2-q_2)^2-\mu_2>0$.
Note that in domain 4, $p_2\geq \frac{\sqrt{\mu_1}+\sqrt{\mu_2}}{2}\geq \sqrt{1-4\eps^2}$, which is larger than $\eps$.
Hence $\chi_4=0$ for $p_2<\sqrt{1-4\eps^2}$, giving $I_4=0$.
For domains 2 and 3, we have
\begin{align}
I_2&=\sup_{0\leq p_2<\eps}\int_{\BR^2} \chi_{0<p_1,q_1<\eps}\chi_{2p_2<\sqrt{\mu_2}-\sqrt{\mu_1}} \left[\int_{a_2}^{2} \frac{1}{p_1^2+q_1^2+p_2^2+q_2^2-1} \dd q_2 \right]^2 \dd p_1 \dd q_1,\\
I_3&=\sup_{0\leq p_2<\eps}\int_{\BR^2} \chi_{0<p_1,q_1<\eps}\chi_{2p_2>\sqrt{\mu_2}-\sqrt{\mu_1}} \left[\int_{a_3}^{2} \frac{1}{p_1^2+q_1^2+p_2^2+q_2^2-1} \dd q_2 \right]^2 \dd p_1 \dd q_1,
\end{align}
where $a_2=\sqrt{\mu_2}-p_2$ and $a_3=\sqrt{\mu_1}+p_2$.
Since $0\leq p_1,q_1,p_2<\eps$ we have $1-p_1^2-q_1^2-p_2^2>3/4$ and thus
\begin{multline}
\int_{a_j}^{2} \frac{1}{p_1^2+q_1^2+p_2^2+q_2^2-1} = \frac{\artanh \frac{\sqrt{1-p_1^2-q_1^2-p_2^2}}{a_j}-\artanh \frac{\sqrt{1-p_1^2-q_1^2-p_2^2}}{2}}{\sqrt{1-p_1^2-q_1^2-p_2^2}}\\
\leq C \artanh \frac{\sqrt{1-p_1^2-q_1^2-p_2^2}}{a_j}.
\end{multline}
Since $\artanh(x/y)=\ln((y+x)^2/(y^2-x^2))/2$ and $\sqrt{1-p_1^2-q_1^2-p_2^2}+a_j\leq 3$ we get
\begin{equation}\label{pf_2d_w_lea4.3}
I_2\leq \frac{C}{2} \sup_{0\leq p_2<\eps}\int_{\BR^2} \chi_{0<p_1,q_1<\eps}\chi_{2p_2<\sqrt{\mu_2}-\sqrt{\mu_1}} \left[ \ln \frac{9}{2(p_1 q_1-p_2(\sqrt{\mu_2}-p_2))} \right]^2 \dd p_1 \dd q_1
\end{equation}
and
\begin{equation}\label{pf_2d_w_lea4.4}
I_3\leq  \frac{C}{2} \sup_{0\leq p_2<\eps}\int_{\BR^2} \chi_{0<p_1,q_1<\eps}\chi_{2p_2>\sqrt{\mu_2}-\sqrt{\mu_1}}  \left[ \ln \frac{9}{2(p_1 q_1+p_2(\sqrt{\mu_1}+p_2))} \right]^2 \dd p_1 \dd q_1.
\end{equation}
For domain 2, we substitute $z=p_1+q_1$ and $r=p_1-q_1$ and obtain the bound
\begin{equation}
I_2\leq \frac{C}{4}  \sup_{0\leq p_2<\eps}\int_{\BR^2} \chi_{\vert r\vert <z<2\eps}\chi_{2p_2<\sqrt{\mu_2}-\sqrt{\mu_1}} \left[ \ln \frac{18}{z^2-r^2-4p_2(\sqrt{1-r^2}-p_2)} \right]^2 \dd r \dd z
\end{equation}
The condition $2p_2<\sqrt{\mu_2}-\sqrt{\mu_1}$ implies that $x:=z^2-r^2-4p_2(\sqrt{1-r^2}-p_2)\geq 0$.
Substituting $z$ by $x$ gives
\begin{multline}
I_2\leq \frac{C}{4} \sup_{0\leq p_2<\eps}\int_{-2\eps}^{2\eps}\dd r \int_{0}^{\eps^2} \dd x \left[ \ln \frac{18}{x} \right]^2 \frac{1}{2\sqrt{x+r^2+4p_2(\sqrt{1-r^2}-p_2)}}\\
\leq \frac{C}{2}\eps \int_{0}^{\eps^2} \left[ \ln \frac{18}{x} \right]^2 \frac{1}{\sqrt{x}}\dd x
\end{multline}
This vanishes as $\eps\to 0$.
For domain 3 we bound \eqref{pf_2d_w_lea4.4} by
\begin{equation}
I_3\leq C \int_{\BR^2} \chi_{0<p_1,q_1<\eps}  \left[ \ln \frac{9}{2 p_1 q_1} \right]^2 \dd p_1 \dd q_1,
\end{equation}
which vanishes in the limit $\eps \to 0$.
For domain 1 note that since $\sqrt{\mu_2}+p_2\geq a_2,a_3$, we have $I_1\leq I_2+I_3$.

Now consider domain 5, where $(p+q)^2-1=(p_2+q_2)^2-\mu_1>0$ and $(p-q)^2-1=(p_2-q_2)^2-\mu_2<0$.
We have
\begin{equation}
I_5= \sup_{0\leq p_2<\eps}\int_{\BR^2} \chi_{0<p_1,q_1<\eps} \left[\int_{\sqrt{\mu_1}-p_2}^{\sqrt{\mu_2}+p_2} \frac{1}{2(p_1q_1+p_2 q_2)} \dd q_2 \right]^2 \dd p_1 \dd q_1.
\end{equation}
Integration over $q_2$ gives
\begin{equation}
\int_{\sqrt{\mu_1}-p_2}^{\sqrt{\mu_2}+p_2} \frac{1}{2(p_1q_1+p_2 q_2)} \dd q_2 =\frac{1}{2p_2}\ln\left(1+ \frac{(\sqrt{\mu_2}-\sqrt{\mu_1})p_2+2p_2^2}{p_1 q_1+(\sqrt{\mu_1}-p_2)p_2}\right).
\end{equation}
Note that $\sqrt{\mu_2}-\sqrt{\mu_1}= 4p_1q_1/(\sqrt{\mu_2}+\sqrt{\mu_1})\leq 2 p_1 q_1 /\sqrt{1-4\eps^2}$ and $\sqrt{\mu_1}-p_2\geq \sqrt{1-4\eps^2}-\eps$.
We can therefore bound the previous expression from above by
\begin{equation}
\frac{1}{2p_2}\ln\left(1+ \frac{2 p_2}{\sqrt{1-4\eps^2}}+\frac{2p_2}{\sqrt{1-4\eps^2}-\eps}\right) \leq \frac{1 }{\sqrt{1-4\eps^2}}+\frac{1}{\sqrt{1-4\eps^2}-\eps}<C,
\end{equation}
where we used that $\ln(1+ x)/x \leq 1$ for $x\geq 0$.
Therefore $I_5\leq C^2 \eps^2$ vanishes as $\eps\to0$.

For region 6 we have
\begin{equation}
I_6= \sup_{0\leq p_2<\eps}\int_{\BR^2} \chi_{0<p_1,q_1<\eps}\chi_{2p_2\leq \sqrt{\mu_2}-\sqrt{\mu_1}}\left[\int_{-\sqrt{\mu_2}+p_2}^{-\sqrt{\mu_1}-p_2} \frac{1}{2(p_1 q_1+p_2 q_2)} \dd q_2 \right]^2 \dd p_1 \dd q_1.
\end{equation}
Integration over $q_2$ gives
\begin{equation}\label{pf_2d_w_lea4.5}
\int_{-\sqrt{\mu_2}+p_2}^{-\sqrt{\mu_1}-p_2} \frac{1}{2(p_1q_1+p_2 q_2)} \dd q_2 =\frac{1}{2p_2}\ln\left(1+ p_2\frac{(\sqrt{\mu_2}-\sqrt{\mu_1}-2p_2)}{p_1 q_1-(\sqrt{\mu_2}-p_2)p_2}\right).
\end{equation}
One can compute that
\begin{equation}\label{pf_2d_w_lea4.6}
\frac{\p}{\p p_2}\frac{\sqrt{\mu_2}-\sqrt{\mu_1}-2p_2}{p_1 q_1-(\sqrt{\mu_2}-p_2)p_2}=\frac{8}{(\sqrt{\mu_2}+\sqrt{\mu_1}-2p_2)^2}>0.
\end{equation}
Thus, for $\chi_{2p_2\leq \sqrt{\mu_2}-\sqrt{\mu_1}}$ we have
\begin{equation}
\frac{\sqrt{\mu_2}-\sqrt{\mu_1}-2p_2}{p_1 q_1-(\sqrt{\mu_2}-p_2)p_2}\leq \lim_{p_2\to (\sqrt{\mu_2}-\sqrt{\mu_1})/2}\frac{\sqrt{\mu_2}-\sqrt{\mu_1}-2p_2}{p_1 q_1-(\sqrt{\mu_2}-p_2)p_2}=\frac{2}{\sqrt{\mu_1}}.
\end{equation}
The expression in \eqref{pf_2d_w_lea4.5} is thus bounded above by
\begin{equation}
\frac{1}{2p_2}\ln\left(1+ p_2 \frac{2}{\sqrt{\mu_1}}\right)\leq \frac{1}{\sqrt{\mu_1}}\leq \frac{1}{\sqrt{1-4\eps^2}},
\end{equation}
which is bounded as $\eps\to0$.
In total, we have $I_6 \leq C \eps^2$, which vanishes in the limit $\eps\to0$.

For region 7,
\begin{equation}
I_7= \sup_{0\leq p_2<\eps}\int_{\BR^2} \chi_{0<p_1,q_1<\eps}\chi_{2p_2\geq \sqrt{\mu_2}-\sqrt{\mu_1}}\left[\int_{-\sqrt{\mu_1}-p_2}^{-\sqrt{\mu_2}+p_2} \frac{1}{-2(p_1 q_1+p_2 q_2)} \dd q_2 \right]^2 \dd p_1 \dd q_1.
\end{equation}
Integration over $q_2$ gives
\begin{equation}
\int_{-\sqrt{\mu_1}-p_2}^{-\sqrt{\mu_2}+p_2} \frac{1}{-2(p_1 q_1+p_2 q_2)} \dd q_2 =\frac{1}{2p_2}\ln\left(1+ p_2\frac{(\sqrt{\mu_2}-\sqrt{\mu_1}-2p_2)}{p_1 q_1-(\sqrt{\mu_2}-p_2)p_2}\right).
\end{equation}
According to \eqref{pf_2d_w_lea4.6}, for $(\sqrt{\mu_2}-\sqrt{\mu_1})/2\leq  p_2<\eps$ this is bounded by
\begin{equation}
\frac{1}{2p_2}\ln\left(1+ p_2\frac{(\sqrt{\mu_2}-\sqrt{\mu_1}-2\eps)}{p_1 q_1-(\sqrt{\mu_2}-\eps)\eps}\right)\leq \frac{1}{2} \frac{2\eps-(\sqrt{\mu_2}-\sqrt{\mu_1})}{(\sqrt{\mu_2}-\eps)\eps-p_1 q_1}.
\end{equation}
For $p_1,q_1<\eps$ this can be further estimated by
\begin{equation}
\frac{1}{2} \frac{2\eps}{(\sqrt{\mu_2}-\eps)\eps-\eps^2} \leq \frac{1}{\sqrt{1-4\eps^2}-2\eps},
\end{equation}
which is bounded for $\eps \to 0$.
Hence, $I_7 \leq C \eps^2$ vanishes for $\eps\to0$.

For domains 8 and 9, we have
\begin{align}
I_8&=\sup_{0\leq p_2<\eps}\int_{\BR^2} \chi_{0<p_1,q_1<\eps}\chi_{2p_2<\sqrt{\mu_2}-\sqrt{\mu_1}} \left[\int_{-\sqrt{\mu_1}-p_2}^{\sqrt{\mu_1}-p_2} \frac{1}{1-p_1^2-q_1^2-p_2^2-q_2^2} \dd q_2 \right]^2 \dd p_1 \dd q_1,\\
I_9&=\sup_{0\leq p_2<\eps}\int_{\BR^2} \chi_{0<p_1,q_1<\eps}\chi_{2p_2>\sqrt{\mu_2}-\sqrt{\mu_1}} \left[\int_{-\sqrt{\mu_2}+p_2}^{\sqrt{\mu_1}-p_2} \frac{1}{1-p_1^2-q_1^2-p_2^2-q_2^2} \dd q_2 \right]^2 \dd p_1 \dd q_1.
\end{align}
We bound
\begin{multline}\label{pf_2d_w_lea4.7}
\int_{-\sqrt{\mu_1}-p_2}^{\sqrt{\mu_1}-p_2} \frac{1}{1-p_1^2-q_1^2-p_2^2-q_2^2} \dd q_2\leq 2\int_{0}^{\sqrt{\mu_1}+p_2} \frac{1}{1-p_1^2-q_1^2-p_2^2-q_2^2} \dd q_2\\
=\frac{1}{\sqrt{1-p_1^2-q_1^2-p_2^2}}\ln \left(  \frac{\sqrt{1-p_1^2-q_1^2-p_2^2}+\sqrt{\mu_1}+p_2}{\sqrt{1-p_1^2-q_1^2-p_2^2}-\sqrt{\mu_1}-p_2} \right) \\
=\frac{1}{\sqrt{1-p_1^2-q_1^2-p_2^2}}\ln \left(  \frac{(\sqrt{1-p_1^2-q_1^2-p_2^2}+\sqrt{\mu_1}+p_2)^2}{2(p_1 q_1-p_2(\sqrt{\mu_1}+p_2))} \right) \\
\leq \frac{1}{\sqrt{1-3\eps^2}}\ln \left(  \frac{9}{2(p_1 q_1-p_2(\sqrt{\mu_1}+p_2))} \right)
\end{multline}
Substituting $z=p_1+q_1$ and $r=p_1-q_1$ we obtain
\begin{multline}
I_8\leq C \sup_{0\leq p_2<\eps}\int_{\BR^2} \chi_{0<p_1,q_1<\eps}\chi_{2p_2<\sqrt{\mu_2}-\sqrt{\mu_1}} \ln \left(  \frac{9}{2(p_1 q_1-p_2(\sqrt{\mu_1}+p_2))} \right)^2 \dd p_1 \dd q_1\\
\leq \frac{C}{2}\sup_{0\leq p_2<\eps}\int_{0}^{2\eps} \dd z \int_{-\eps}^\eps \dd r  \chi_{|r|<z}\chi_{2p_2<\sqrt{1-r^2}-\sqrt{1-z^2}} \ln \left(  \frac{18}{z^2-r^2-4p_2(\sqrt{1-z^2}+p_2)} \right)^2.
\end{multline}
Substituting $r$ by $x=z^2-r^2-4p_2(\sqrt{1-z^2}+p_2)$ and using H\"older's inequality we obtain
\begin{multline}
I_8\leq \frac{C}{4}\sup_{0\leq p_2<\eps}\int_{4p_2-4p_2^2}^{2\eps} \dd z \int_{0}^{z^2-4p_2(p_2+\sqrt{1-z^2})}   \ln \left(  \frac{18}{x} \right)^2 \frac{1}{\sqrt{z^2-4p_2(\sqrt{1-z^2}+p_2)-x}}\dd x\\
\leq  \frac{C}{4}\sup_{0\leq p_2<\eps}\int_{4p_2-4p_2^2}^{2\eps} \dd z \Bigg[\left (\int_{0}^{z^2-4p_2(p_2+\sqrt{1-z^2})}   \ln \left(  \frac{18}{x} \right)^6 \dd x \right)^{1/3}\times \\
\left (\int_{0}^{z^2-4p_2(p_2+\sqrt{1-z^2})} \frac{1}{(z^2-4p_2(\sqrt{1-z^2}+p_2)-x)^{3/4}}\dd x\right)^{2/3}\Bigg].
\end{multline}
In the last line we substitute $y=z^2-4p_2(\sqrt{1-z^2}+p_2)-x$, and then we use $z^2-4p_2(\sqrt{1-z^2}+p_2)-x\leq 4\eps^2$ to arrive at the bound
\begin{multline}
I_8\leq  \frac{C}{4}\sup_{0\leq p_2<\eps}\int_{4p_2-4p_2^2}^{2\eps} \dd z \Bigg[\left (\int_{0}^{4\eps^2}   \ln \left(  \frac{18}{x} \right)^6 \dd x \right)^{1/3}\left (\int_{0}^{4\eps^2} \frac{1}{y^{3/4}}\dd y\right)^{2/3}\Bigg] \\
\leq \frac{C}{2} \eps \left (\int_{0}^{4\eps^2}   \ln \left(  \frac{18}{x} \right)^6 \dd x \right)^{1/3}\left (\int_{0}^{4\eps^2} \frac{1}{y^{3/4}}\dd y\right)^{2/3} ,
\end{multline}
which vanishes as $\eps \to 0$.
For $I_9$ we bound (analogously to \eqref{pf_2d_w_lea4.7})
\begin{multline}
\int_{-\sqrt{\mu_2}+p_2}^{\sqrt{\mu_1}-p_2} \frac{1}{1-p_1^2-q_1^2-p_2^2-q_2^2} \dd q_2 \leq 2 \int_{0}^{\sqrt{\mu_2}-p_2} \frac{1}{1-p_1^2-q_1^2-p_2^2-q_2^2}\dd q_2 \\
=\frac{1}{\sqrt{1-p_1^2-q_1^2-p_2^2}}\ln \left(  \frac{(\sqrt{1-p_1^2-q_1^2-p_2^2}+\sqrt{\mu_2}-p_2)^2}{2(p_2(\sqrt{\mu_2}-p_2)-p_1 q_1)} \right) \\
\leq \frac{1}{\sqrt{1-3\eps^2}}\ln \left(  \frac{4}{2(p_2(\sqrt{\mu_2}-p_2)-p_1 q_1)} \right)
\end{multline}
Substituting $z=p_1+q_1$ and $r=p_1-q_1$ we obtain
\begin{multline}
I_9\leq C \sup_{0\leq p_2<\eps}\int_{\BR^2} \chi_{0<p_1,q_1<\eps}\chi_{2p_2>\sqrt{\mu_2}-\sqrt{\mu_1}} \ln \left(  \frac{4}{2(p_2(\sqrt{\mu_2}-p_2)-p_1 q_1)} \right)^2 \dd p_1 \dd q_1\\
\leq \frac{C}{2}\sup_{0\leq p_2<\eps}\int_{-\eps}^\eps \dd r \int_{0}^{2\eps} \dd z  \chi_{|r|<z}\chi_{2p_2>\sqrt{1-r^2}-\sqrt{1-z^2}} \ln \left( \frac{8}{4p_2(\sqrt{1-r^2}-p_2) -z^2+r^2} \right)^2.
\end{multline}
Substituting $z$ by $x=4p_2(\sqrt{1-r^2}-p_2) -z^2+r^2$ and using H\"older's inequality we obtain
\begin{multline}
I_9\leq \frac{C}{4}\sup_{0\leq p_2<\eps}\int_{-\eps}^\eps \dd r \int_{0}^{4p_2(\sqrt{1-r^2}-p_2)+r^2} \ln \left( \frac{8}{x} \right)^2 \frac{1}{\sqrt{4p_2(\sqrt{1-r^2}-p_2) +r^2-x}} \dd x \\
\leq  \frac{C}{4}\sup_{0\leq p_2<\eps}\int_{-\eps}^\eps \dd r \left(\int_{0}^{4p_2(\sqrt{1-r^2}-p_2)+r^2} \ln \left( \frac{8}{x} \right)^6 \dd x\right)^{1/3}\times \\
\left(\int_{0}^{4p_2(\sqrt{1-r^2}-p_2)+r^2} \frac{1}{(4p_2(\sqrt{1-r^2}-p_2) +r^2-x)^{3/4}} \dd x\right)^{2/3} \\
\leq \frac{C}{2} \eps\left(\int_{0}^{4\eps+\eps^2}  \ln \left( \frac{8}{x} \right)^6 \dd x\right)^{1/3}\left(\int_{0}^{4\eps+\eps^2} \frac{1}{y^{3/4}} \dd y\right)^{2/3},
\end{multline}
which vanishes for $\eps \to0$.
\end{proof}

\subsection{Proof of Lemma~\ref{B-Q_gen}}\label{sec:pf_B-Q}
\begin{proof}[Proof of Lemma~\ref{B-Q_gen}.]
To prove Lemma~\ref{B-Q_gen} we show that the following expressions are finite.
\begin{enumerate}[(i)]
\item $\sup_{T>\mu/2} \sup_{ q \in \BR^d} \lVert V^{1/2} B_{T}(\cdot , q)  |V |^{1/2}\rVert $  \label{B-Q_0_g}
\item $\sup_T \sup_{ q \in \BR^d} \lVert V^{1/2} B_{T}(\cdot , q) \chi_{|\cdot|^2>3\mu}  |V |^{1/2}\rVert $  \label{B-Q_1_g}
\item $\sup_T \sup_{q\in \BR^d} \lVert V^{1/2} B_{T}(\cdot, q) \chi_{((\cdot+q)^2-\mu)((\cdot-q)^2-\mu)<0}  |V |^{1/2}\rVert $  \label{B-Q_3_g}
\item $\sup_T \sup_{\vert q \vert >\frac{\sqrt{\mu}}{2}} \lVert V^{1/2} B_{T}(\cdot, q) \chi_{p^2<3\mu} \chi_{((\cdot+q)^2-\mu)((\cdot-q)^2-\mu)>0} |V |^{1/2}\rVert $  \label{B-Q_2_g}
\item $\sup_T \sup_{\vert q \vert <\frac{\sqrt{\mu}}{2}} \left\lVert V^{1/2} \left[B_{T}(\cdot, q) \chi_{|\cdot|^2<3\mu} \chi_{((\cdot+q)^2-\mu)((\cdot-q)^2-\mu)>0}-Q_{T}(q) \right]  |V |^{1/2}\right\rVert $ \label{B-Q_4_g}
\end{enumerate}
In combination, they prove \eqref{B-Q_eq}.

We start with \eqref{B-Q_0_g} and \eqref{B-Q_1_g}.
By Lemma~\ref{KT-Laplace} there is a constant $c_0$ depending only on $\mu$, such that $B_T(p,q)\leq c_0/(1+p^2)$ for all $T>\mu/2$ and $p,q\in \BR^d$.
Similarly, using \eqref{BT_bound} one sees that there is a constant $c_1$ depending only on $\mu$ such that $B_T(p,q)\leq c_1/(1+p^2)$ for all $T>0$ and $p,q\in \BR^d$ with $q^2>3\mu$.
The claim follows since $\lVert |V|^{1/2}\frac{1}{1-\Delta} |V |^{1/2}\rVert$ is bounded \cite{lieb_analysis_2001,henheik_universality_2023,hainzl_bardeencooperschrieffer_2016}.

For \eqref{B-Q_3_g}, it suffices to prove that
\begin{equation}
Y= \sup_{T} \sup_{q\in \BR^d} \int_{\BR^d} B_{T}(p, q)\chi_{((p+q)^2-\mu)((p-q)^2-\mu)<0} \dd p <\infty
\end{equation}
since \eqref{B-Q_3_g} is bounded by $\lVert V \rVert_1 Y$.
The integrand is invariant under rotation of $(p,q)\to (R p, Rq)$ around the origin.
Hence, the integral only depends on the absolute value of $q$ and we may take the supremum over $q$ of the form $q=(\vert q \vert,0)$ only.
For $p,(q_1,0)$ satisfying $((p+(q_1,0))^2-\mu)((p-( q_1,0))^2-\mu)<0$, we can estimate by \cite[Lemma 4.7]{hainzl_boundary_nodate}
\begin{equation}
B_{T}(p,(q_1,0))\leq \frac{2}{T} \exp\left(-\frac{1}{T}\min\{(\vert p_1\vert + \vert q_1 \vert)^2+\tilde p^2-\mu, \mu-(\vert p_1\vert - \vert q_1 \vert)^2-\tilde p^2 \}\right)
\end{equation}
Note that $(\vert p_1\vert + \vert q_1 \vert)^2+\tilde p^2-\mu<\mu-(\vert p_1\vert - \vert q_1 \vert)^2-\tilde p^2 \leftrightarrow p^2+q_1^2<\mu$.
We can therefore further estimate
\begin{multline}
B_{T}(p,(q_1,0)) \chi_{(\vert p_1\vert + \vert q_1 \vert)^2+\tilde p^2>\mu>(\vert p_1\vert - \vert q_1 \vert)^2+\tilde p^2} \\
\leq \frac{2}{T} \exp\left(-\frac{1}{T}((\vert p_1\vert + \vert q_1 \vert)^2+\tilde p^2-\mu)\right)  \chi_{(\vert p_1\vert + \vert q_1 \vert)^2+\tilde p^2>\mu} \chi_{p^2+q_1^2<\mu}\\
+\frac{2}{T} \exp\left(-\frac{1}{T}(\mu-(\vert p_1\vert - \vert q_1 \vert)^2-\tilde p^2)\right) \chi_{\mu>(\vert p_1\vert - \vert q_1 \vert)^2+\tilde p^2}
\end{multline}
We now integrate the bound over $p$ and use the symmetry in $p_1$ to restrict to $p_1>0$, replace $\vert p_1\vert$ by $p_1$ and then extend the domain to $p_1\in \BR$.
We obtain
\begin{multline}
Y \leq \sup_{T} \sup_{q_1\in \BR} \frac{4}{T} \left[\int_{\BR^d} \exp\left(-\frac{1}{T}(( p_1 + \vert q_1 \vert)^2+\tilde p^2-\mu)\right)  \chi_{( p_1 + \vert q_1 \vert)^2+\tilde p^2>\mu} \chi_{p^2+q_1^2<\mu}\dd p \right.\\
+\left. \int_{\BR^d} \exp\left(-\frac{1}{T}(\mu-( p_1 - \vert q_1 \vert)^2-\tilde p^2)\right) \chi_{\mu>( p_1 - \vert q_1 \vert)^2+\tilde p^2} \dd p\right].
\end{multline}
Now we substitute $p_1 \pm \vert q_1 \vert $ by $p_1$ and obtain
\begin{multline}
Y \leq \sup_{T} \sup_{\vert q_1\vert<\sqrt{\mu}}\frac{4}{T} \int_{\BR^d} \exp\left(-\frac{1}{T}(p_1^2+\tilde p^2-\mu)\right)  \chi_{p_1^2+\tilde p^2>\mu} \chi_{(p_1-\vert q_1 \vert)^2+\tilde p^2+q_1^2<\mu}\dd p\\
+\sup_{T}\frac{4}{T} \int_{\BR^d} \exp\left(-\frac{1}{T}(\mu- p_1^2-\tilde p^2)\right) \chi_{\mu>p_1^2+\tilde p^2} \dd p\\
\leq  \sup_{T} \frac{4 \vert \BS^{d-1}\vert (2\sqrt{\mu})^{d-1}e^{\mu/T}}{T} \int_{\sqrt{\mu}}^{\infty} e^{-r^2/T} \dd r+\sup_{T}\frac{4 \vert \BS^{d-1}\vert \sqrt{\mu}^{d-1}e^{-\mu/T}}{T} \int_{0}^{\sqrt{\mu}}  e^{r^2/T} \dd r,
\end{multline}
where we used that $(p_1-\vert q_1 \vert)^2+\tilde p^2+q_1^2<\mu\Rightarrow p^2<2\mu$.
Note that
\begin{equation}
 \frac{  \sqrt{\mu}e^{\mu/T}}{T} \int_{\sqrt{\mu}}^{\infty} e^{-r^2/T} \dd r= \frac{\pi^{1/2}}{2}  \sqrt{\frac{\mu}{T}}e^{\mu/T} \text{erfc}\left(\sqrt{\frac{\mu}{T}}\right)
\end{equation}
and
\begin{equation}
\frac{ \sqrt{\mu}e^{-\mu/T}}{T} \int_{0}^{\sqrt{\mu}}  e^{r^2/T} \dd r=  \frac{\pi^{1/2}}{2}  \sqrt{\frac{\mu}{T}}e^{-\mu/T} \text{erfi}\left(\sqrt{\frac{\mu}{T}}\right)
\end{equation}
As in the proof of \cite[Lemma 4.4]{hainzl_boundary_nodate}, we conclude that $Y<\infty$ since the functions $xe^{x^2} \text{erfc}(x)$ and $xe^{-x^2} \text{erfi}(x)$ are bounded for $x\geq0$.

For \eqref{B-Q_2_g}, it again suffices to prove that
\begin{equation}
X=\sup_T \sup_{\vert q \vert> \frac{\sqrt{\mu}}{2}}\int_{\BR^d} B_{T}(p,q) \chi_{p^2<3\mu} \chi_{((p+q)^2-\mu)((p-q)^2-\mu)>0} \dd p <\infty,
\end{equation}
since \eqref{B-Q_2_g} is bounded by $\lVert V \rVert_1 X $.
Again we can restrict to $q$ of the form $q=(\vert q \vert,0)$.
The idea is to split the integrand in $X$ into four terms localized in different regions.
The integrand is supported on the intersection and the complement of the two disks/balls with radius $\sqrt{\mu}$ centered at $(\pm q_1,0)$. (For $d=2$ this is the white region in Figure~\ref{B_q_aspt_domains}).
\begin{itemize}
\item The first term covers the domain with $\vert \tilde p \vert > \sqrt{\mu}$ outside the disks/balls:\\
 $X_1=\sup_T \sup_{\vert q_1 \vert> \frac{\sqrt{\mu}}{2}}\int_{\BR^d} B_{T}(p,(q_1,0)) \chi_{p^2<3\mu} \chi_{ \tilde p^2 >\mu} \dd p $
\item The second term covers the remaining domain with $\vert p_1 \vert > \vert q_1 \vert $ outside  the two disks/balls:\\
$X_2=\sup_T \sup_{\vert q_1 \vert>  \frac{\sqrt{\mu}}{2}} \int_{\tilde p^2 <\mu} \dd \tilde p  \int_{\vert p_1 \vert>\sqrt{\mu-\tilde p^2}+\vert q_1 \vert} \dd p_1 B_{T}(p,(q_1,0)) \chi_{p^2<3\mu} $
\item The third term covers the remaining domain with $\vert p_1 \vert < \vert q_1 \vert $ outside  the two disks/balls:\\
$X_3=\sup_T \sup_{\vert q_1 \vert>  \frac{\sqrt{\mu}}{2}} \int_{\mu-q_1^2<\tilde p^2 <\mu} \dd \tilde p \int_{\vert p_1 \vert<-\sqrt{\mu-\tilde p^2}+\vert q_1 \vert} \dd p_1 B_{T}(p,(q_1,0)) \chi_{p^2<3\mu} $
\item The fourth term covers the domain in the intersection of the two disks/balls:\\
$X_4=\sup_T \sup_{\vert q_1 \vert>  \frac{\sqrt{\mu}}{2}} \int_{\tilde p^2<\mu-q_1^2} \dd \tilde p \int_{\vert p_1 \vert<\sqrt{\mu-\tilde p^2}-\vert q_1 \vert} \dd p_1 B_{T}(p,(q_1,0)) \chi_{p^2<3\mu} $
\end{itemize}
We prove that each $X_j$ is finite.
It then follows that $X \leq X_1+X_2+X_3+X_4 $ is finite.
We use the bounds
\begin{equation}
B_{T}(p,(q_1,0))\leq \left\{ \begin{matrix} \frac{1}{p^2+q_1^2-\mu} & \text{if}\ (\vert p_1\vert - \vert q_1 \vert)^2+\tilde p^2 >\mu, \\  \frac{1}{\mu- p^2-q_1^2} & \text{if}\ (\vert p_1\vert +\vert q_1 \vert)^2+\tilde p^2<\mu,  \end{matrix}\right.
\end{equation}
which follow from \eqref{pf:3d_2_lea3.4}.
The first line applies to $X_1,X_2, X_3$, the second line to $X_4$.
For $X_1$, we have $p^2+q_1^2-\mu>q_1^2>\mu/4$ and thus $X_1<\infty$.
Similarly, for $X_2$, we have $p^2+q_1^2-\mu= (\sqrt{ q_1^2+p_1^2} +\sqrt{\mu-\tilde p^2})(\sqrt{ q_1^2+p_1^2} -\sqrt{\mu-\tilde p^2}) \geq \vert q_1\vert(\vert p_1\vert -\sqrt{\mu-\tilde p^2})\geq q_1^2\geq \mu/4 $ and thus $X_2<\infty $.
For $X_3$, we have $p^2+q_1^2-\mu\geq \vert q_1\vert(\vert q_1\vert -\sqrt{\mu-\tilde p^2})\geq \frac{\sqrt{\mu}}{2}(\vert q_1\vert -\sqrt{\mu-\tilde p^2}) $.
Hence, $X_3 \leq \sup_{\vert q_1 \vert>  \frac{\sqrt{\mu}}{2}}\frac{4}{\sqrt{\mu}} \int_{\mu-q_1^2<\tilde p^2 <\mu} \dd \tilde p <\infty$.
For $X_4$ we have $\mu-p^2-q_1^2\geq \mu-(\sqrt{\mu-\tilde p^2}-\vert q_1 \vert)^2- \tilde p^2-q_1^2 =2 \vert q_1 \vert (\sqrt{\mu-\tilde p^2}-|q_1|)$.
Thus,
\begin{equation}
X_4\leq \sup_{\vert q_1 \vert>  \frac{\sqrt{\mu}}{2}}  \int_{\tilde p^2 <\mu-q_1^2} \frac{1}{|q_1|} \dd \tilde p<\infty.
\end{equation}

To prove that \eqref{B-Q_4_g} is finite, let $S_{T,d}(q):L^1(\BR^d)\to L^\infty(\BR^d)$ be the operator with integral kernel
\begin{equation}
S_{T,d}(q)(x,y)=\frac{1}{(2\pi)^d} \int_{\BR^d} \left[e^{i (x-y)\cdot p } -e^{i \sqrt{\mu} (x-y)\cdot p/\vert p \vert  } \right]B_{T}(p,q) \chi_{((p+q)^2-\mu)((p-q)^2-\mu)>0} \chi_{p^2<3\mu}  \dd p
\end{equation}
Then \eqref{B-Q_4_g} equals $\sup_T \sup_{\vert q \vert <\frac{\sqrt{\mu}}{2}} \left\lVert V^{1/2} S_{T,d}(q) |V |^{1/2}\right\rVert $.
With \eqref{BT_bound} and $\vert e^{i x}- e^{i y}\vert \leq \min\left\{\vert x-y\vert,2\right\}$ we obtain
\begin{multline}
\vert  S_{T,d}(q)(x,y) \vert \leq \frac{1}{(2\pi)^d} \int_{\BR^d} \frac{\min\left\{\vert  (\vert p \vert-\sqrt{\mu}) (x-y) \cdot p/\vert p \vert\vert,2\right\}}{\vert p^2+q^2-\mu \vert} \chi_{((p+q)^2-\mu)((p-q)^2-\mu)>0} \chi_{p^2<3\mu}  \dd p\\
\leq \frac{1}{(2\pi)^d} \int_{\BR^d} \frac{\min\left\{\vert \vert p\vert -\sqrt{\mu}\vert \vert x-y \vert,2\right\}}{\vert p^2+q^2-\mu \vert} \chi_{((p+q)^2-\mu)((p-q)^2-\mu)>0} \chi_{p^2<3\mu}  \dd p
\end{multline}
Again, the integral only depends on $\vert q \vert$, so we may restrict to $q=(\vert q \vert,0)$.
We now switch to angular coordinates.
Recall the notation $r_\pm$ and $e_\varphi$ introduced before \eqref{Q-est-1} and that $(\vert r \cos \varphi \vert \mp \vert q_1 \vert)^2+r^2 \sin \varphi^2\gtrless \mu \leftrightarrow r\gtrless r_\pm(e_\varphi)$.
For $d=2$ we have
\begin{multline}
\vert   S_{T,2}((q_1,0))(x,y) \vert
\leq \frac{1}{(2\pi)^2} \int_0^{2\pi} \Bigg[\int_{r_+(e_\varphi)}^{\sqrt{3\mu}}  \frac{\min\left\{ \vert r-\sqrt{\mu} \vert \vert x-y \vert,2\right\}}{r^2+q_1^2-\mu}r\dd r\\
+ \int_{0}^{r_-(e_\varphi)} \frac{\min\left\{(\sqrt{\mu}-r) \vert x-y \vert,2\right\}}{\mu-r^2-q_1^2} r\dd r\Bigg] \dd \varphi =: g(x,y,q_1)
\end{multline}
and for $d=3$
\begin{multline}
\vert   S_{T,3}((q_1,0))(x,y) \vert
\leq \frac{1}{(2\pi)^2} \int_0^{\pi} \Bigg[\int_{r_+(e_\theta)}^{\sqrt{3\mu}}  \frac{\min\left\{ \vert r-\sqrt{\mu} \vert \vert x-y \vert,2\right\}}{r^2+q_1^2-\mu}\sin \theta r^2\dd r\\
+ \int_{0}^{r_-(e_\theta)} \frac{\min\left\{(\sqrt{\mu}-r) \vert x-y \vert,2\right\}}{\mu-r^2-q_1^2} \sin \theta r^2\dd r\Bigg] \dd \theta
\leq \frac{\sqrt{3\mu}}{2}g(x,y,q_1).
\end{multline}
We bound $g$ by
\begin{multline}
\vert  g(x,y,q_1) \vert \\
\leq  \frac{1}{(2\pi)^2} \int_0^{2\pi}  \left[\int_{r_+(e_\varphi)}^{\sqrt{3\mu}} \frac{\min\left\{ (r-r_+(e_\varphi) ) \vert x-y \vert,2\right\}+\min\left\{\vert\sqrt{\mu}- r_+(e_\varphi)\vert \vert x-y \vert,2\right\}}{r^2+q_1^2-\mu} r \dd r \right.\\
\left. + \int_{0}^{r_-(e_\varphi)} \frac{\min\left\{(r_-(e_\varphi)-r) \vert x-y \vert,2\right\} +\min\left\{(\sqrt{\mu}-r_-(e_\varphi))\vert x-y \vert,2\right\}}{\mu-r^2-q_1^2} r \dd r \right] \dd \varphi
\end{multline}
Note that $r_+(e_\varphi)$ attains the minimal value $\sqrt{\mu-q_1^2}$ at $\vert \varphi\vert =\frac{\pi}{2}$ and the maximal value $\sqrt{\mu}+\vert q_1 \vert$ at $\vert \varphi\vert =0$.
Similarly, $r_-(e_\varphi)$ attains the maximal value $\sqrt{\mu-q_1^2}$ at $\vert \varphi\vert =\frac{\pi}{2}$ and the minimal value $\sqrt{\mu}-\vert q_1 \vert$ at $\vert \varphi\vert =0$.
For the first summand in both integrals we take the supremum over the angular variable.
For the second summand in both integrals, we carry out the integration over $r$ and use that $\vert \sqrt{\mu}-r_-(e_\varphi)\vert, \vert\sqrt{\mu}- r_+(e_\varphi)\vert \leq \vert q_1 \vert$.
We obtain the bound
\begin{multline}
\vert g(x,y,q_1) \vert \leq  \frac{1}{2\pi}  \int_{0}^{\sqrt{3\mu}} \frac{\min\left\{ \vert r-\sqrt{\mu-q_1^2} \vert \vert x-y \vert,2\right\} r}{\vert r^2+q_1^2-\mu \vert }\dd r \\
+\frac{\min\left\{\vert q_1 \vert \vert x-y \vert,2\right\} }{2(2\pi)^2} \int_0^{2\pi}  \left[ \ln\left(\frac{2\mu+q_1^2}{r_+(e_\varphi)^2+q_1^2-\mu}\right) +\ln\left(\frac{\mu-q_1^2}{\mu-q_1^2 -r_-(e_\varphi)^2}\right)  \right] \dd \varphi
\end{multline}
Recall that we are only interested in $\vert q_1 \vert<\sqrt{\mu}/2$.
For the first term, we use that $r\leq \sqrt{3\mu}$ and $\vert r^2+q_1^2-\mu \vert  =  \vert r-\sqrt{\mu-q_1^2} \vert  \vert r+\sqrt{\mu-q_1^2} \vert \geq  \vert r-\sqrt{\mu-q_1^2} \vert \sqrt{\mu-q_1^2}$.
This gives the following bound, where we first carry out the $r$-integration and then use that $\sqrt{\mu-q_1^2}\geq \sqrt{3\mu}/2$:
\begin{multline}
 \frac{\sqrt{3\mu}}{\pi\sqrt{\mu-q_1^2}}  \int_{0}^{\sqrt{3\mu}}\min\left\{ \frac{\vert x-y \vert}{2}, \frac{1}{\vert r-\sqrt{\mu-q_1^2}\vert}\right\}\dd r\\
\leq\frac{\sqrt{3\mu}}{\pi \sqrt{\mu-q_1^2}} \left[\ln \left(\max\left\{1, \frac{\sqrt{\mu-q_1^2}\vert x-y\vert}{2}\right\}\right)+2+\ln\left(\max\left\{1, \frac{(\sqrt{3\mu}-\sqrt{\mu-q_1^2})\vert x-y \vert}{2}\right\}\right)\right]\\
\leq C \left[1+\ln\left(1+ \frac{\sqrt{3\mu}\vert x-y \vert}{2}\right)\right]
\end{multline}
For the second term, we use that
\begin{equation}
\frac{2\mu+q_1^2}{r_+(e_\varphi)^2+q_1^2-\mu}\frac{\mu-q_1^2}{\mu-q_1^2 -r_-(e_\varphi)^2}= \frac{2\mu+q_1^2}{4 \vert e_{\varphi,1} \vert^2 \vert q_1 \vert^2}
\end{equation}
and $\vert q_1 \vert <\sqrt{\mu}/2$ as well as $\vert e_{\varphi,1}\vert=\vert\cos \varphi\vert \geq \frac{1}{2}\min\{\vert \frac{\pi}{2}-\varphi \vert,\vert \frac{3\pi}{2}-\varphi \vert\}$ to arrive at the bound
\begin{multline}
\frac{\min\left\{\vert q_1 \vert \vert x-y \vert,2\right\}}{(2\pi)^2} \int_0^{2\pi}  \ln\left(\frac{\sqrt{3\mu}}{2\vert e_{\varphi,1} q_1\vert }\right) \dd \varphi
\leq \frac{4 \min\left\{\vert q_1 \vert \vert x-y \vert,2\right\}}{(2\pi)^2} \int_0^{\pi/2}  \ln\left(\frac{\sqrt{3\mu}}{\vert \varphi q_1\vert }\right) \dd \varphi\\
 =\frac{\min\left\{\vert q_1 \vert \vert x-y \vert,2\right\}}{2\pi} \left(1+\ln\left(\frac{2\sqrt{3\mu}}{\pi \vert q_1\vert }\right) \right)\\
=\frac{\min\left\{\vert q_1 \vert \vert x-y \vert,2\right\}}{2\pi} \left(1+\ln\left(\sqrt{3\mu}\vert x-y \vert\right)+\ln\left(\frac{2\pi}{\vert x- y \vert\vert q_1\vert }\right)\right),
\end{multline}
where we used $\int \ln(1/x) \dd x =x+ x \ln(1/x)$.
Since $ x \ln(1/x) \leq C$, this is bounded above by
\begin{equation}
\frac{1}{\pi} \left(1+\max\left\{\ln\left(\sqrt{3\mu}\vert x-y \vert\right),0\right\}\right)+C.
\end{equation}
In total, we obtain the bound
\begin{equation}
\sup_{\vert q_1 \vert<\frac{\sqrt{\mu}}{2}} \vert g(x,y,q_1) \vert \leq C\left[1+\ln\left(1+ \sqrt{\mu}\vert x-y \vert\right)\right].
\end{equation}
Let $M:L^2(\BR^d)\to L^2(\BR^d)$ be the operator with integral kernel $M(x,y)=|V|^{1/2} (x) (1+\ln \left( 1+\sqrt{\mu}\vert x-y \vert\right)) |V|^{1/2}(y)$.
We have
\begin{equation}
\sup_T \sup_{\vert q \vert <\frac{\sqrt{\mu}}{2}} \left\lVert V^{1/2} S_{T,d}(q) |V|^{1/2}\right\rVert \leq C(\mu,d) \lVert M \rVert
\end{equation}
for some constant $C(\mu,d)<\infty$.
The operator $M$ is Hilbert-Schmidt since the function $x\mapsto (1+\ln(1+\vert x\vert)^2)|V(x)|$ is in $L^1(\BR^d).$
\end{proof}

%
%\section*{Acknowledgments}
%Financial support by the Austrian Science Fund (FWF) through project number I 6427-N (as part of the SFB/TRR 352)
%is gratefully acknowledged.

%\renewcommand{\bibsection}{\section{References}}
%\bibliographystyle{dcu}
\bibliographystyle{abbrv}
%\bibliography{half-space_bib.bib}

\begin{thebibliography}{10}

\bibitem{barkman_elevated_2022}
M.~Barkman, A.~Samoilenka, A.~Benfenati, and E.~Babaev.
\newblock Elevated critical temperature at {BCS} superconductor-band insulator interfaces.
\newblock {\em Physical Review B}, 105(22):224518, Jun. 2022.

\bibitem{benfenati_boundary_2021}
A.~Benfenati, A.~Samoilenka, and E.~Babaev.
\newblock Boundary effects in two-band superconductors.
\newblock {\em Physical Review B}, 103(14):144512, Jun. 2021.

\bibitem{frank_bcs_2019}
R.~L. Frank, C.~Hainzl, and E.~Langmann.
\newblock The {BCS} critical temperature in a weak homogeneous magnetic field.
\newblock {\em Journal of Spectral Theory}, 9(3):1005--1062, Mar. 2019.
%\newblock arXiv: 1706.05686.

\bibitem{frank_critical_2007}
R.~L. Frank, C.~Hainzl,  S.~Naboko, and R.~Seiringer.
\newblock The critical temperature for the {BCS} equation at weak coupling.
\newblock {\em Journal of Geometric Analysis}, 17(4):559--567, Dec. 2007.


\bibitem{hainzl_bcs_2008}
C.~Hainzl, E.~Hamza, R.~Seiringer, and J.~P. Solovej.
\newblock The {BCS} {Functional} for {General} {Pair} {Interactions}.
\newblock {\em Communications in Mathematical Physics}, 281(2):349--367, Jul. 2008.
%\newblock arXiv: math-ph/0703086.

\bibitem{hainzl_boundary_nodate}
C.~Hainzl, B.~Roos, and R.~Seiringer.
\newblock Boundary {Superconductivity} in the {BCS} {Model}.
\newblock  {\em Journal of Spectral Theory}, 12:1507--1540, 2022.
%\newblock {\em arXiv: 2201.08090 [math-ph, cond-mat.supr-con]}.

\bibitem{hainzl_critical_2008}
C.~Hainzl and R.~Seiringer.
\newblock Critical temperature and energy gap for the {BCS} equation.
\newblock {\em Physical Review B}, 77(18):184517, May 2008.

\bibitem{hainzl_bardeencooperschrieffer_2016}
C.~Hainzl and R.~Seiringer.
\newblock The {Bardeen}–{Cooper}–{Schrieffer} functional of superconductivity and its mathematical properties.
\newblock {\em Journal of Mathematical Physics}, 57(2):021101, Feb. 2016.

%\bibitem{henheik_universality_2023}
%J.~Henheik, A.~Lauritsen, and B.~Roos.
%\newblock Universality in low-dimensional {BCS} theory.
%\newblock {\em arXiv:2301.05621  [cond-mat, physics:math-ph]}, Jan. 2023.

\bibitem{henheik_universality_2023}
J.~Henheik, A.~B. Lauritsen, and B.~Roos.
\newblock Universality in low-dimensional {BCS} theory.
\newblock {\em Reviews in Mathematical Physics}, page 2360005, Oct. 2023.

\bibitem{lauritsen_masters_2020}
A.~Lauritsen.
\newblock Master’s thesis in mathematics.
\newblock University of Copenhagen, Jun 2020.
%\newblock \url{https://ablauritsen.github.io/assets/files/masters-thesis.pdf}
\newblock \url{https://drive.google.com/file/d/1pUePzhCyweJAePyJrKPdmA6OCsLPUkMr/view}

\bibitem{lieb_analysis_2001}
E.~Lieb and M.~Loss.
\newblock Analysis.
\newblock volume 14 of {\em Graduate {Studies} in {Mathematics}}.
\newblock American Mathematical Society, 2001.

\bibitem{roos_linear_2024}
B.~Roos.
\newblock {Linear} {Criterion} for an {Upper} {Bound} on the {BCS} {Critical} {Temperature}
\newblock {\em arXiv:2407.00796 [math-ph, cond-mat.supr-con]}, Jul. 2024.

\bibitem{roos_two-particle_2022}
B.~Roos and R.~Seiringer.
\newblock Two-{Particle} {Bound} {States} at {Interfaces} and {Corners}.
\newblock {\em Journal of Functional Analysis}, 282(2):109455, Jun. 2022.
%\newblock arXiv: 2105.04874.

\bibitem{samoilenka_boundary_2020}
A.~Samoilenka and E.~Babaev.
\newblock Boundary states with elevated critical temperatures in
  {Bardeen}-{Cooper}-{Schrieffer} superconductors.
\newblock {\em Physical Review B}, 101(13):134512, Apr. 2020.

\bibitem{samoilenka_microscopic_2021}
A.~Samoilenka and E.~Babaev.
\newblock Microscopic derivation of superconductor-insulator boundary
  conditions for {Ginzburg}-{Landau} theory revisited: {Enhanced}
  superconductivity at boundaries with and without magnetic field.
\newblock {\em Physical Review B}, 103(22):224516, Jun. 2021.

\bibitem{talkachov_microscopic_2022}
A.~Talkachov, A.~Samoilenka, and E.~Babaev.
\newblock A microscopic study of boundary superconducting states on a honeycomb lattice.
%\newblock {\em arXiv:2212.14711 [cond-mat]}, Dec. 2022.
\newblock {\em Physical Review B}, 108(13):134507, Oct. 2023.

\bibitem{inc_mathematica_2022}
 Wolfram Research{,} Inc.
\newblock Mathematica, {Version} 13.1.
\newblock Champaign, IL, 2022.
%\newblock arXiv: 2105.04874.


\end{thebibliography}

\end{document}